\newcommand{\e}{\end}
\newcommand{\beq}{\begin{equation}}
\newcommand{\eeq}{\end{equation}}
\newcommand{\beqs}{\begin{equation*}}
\newcommand{\eeqs}{\end{equation*}}
\newcommand{\bal}{\begin{align}}
\newcommand{\eal}{\end{align}}
\newcommand{\bals}{\begin{align*}}
\newcommand{\eals}{\end{align*}}
\newcommand{\T}{\mathbb{T}}
\newcommand{\E}{\mathbb{E}}
\renewcommand{\Re}{\mathrm{Re}}
\renewcommand{\Im}{\mathrm{Im}}
\newcommand{\id}{\mathbbm {1}}
\def\ran{\operatorname{Ran}}
\def\dist{\operatorname{dist}}   
\def\dim{\operatorname{dim}}  
\def\tr{\operatorname{tr}}    
\newcommand{\set}[1]{\mathbb{#1}}
\newcommand{\R}{\set{R}}
\newcommand{\C}{\set{C}}
\newcommand{\Z}{\set{Z}}
\newcommand{\ip}[1]{\left\langle #1 \right \rangle}
\newcommand{\ipc}[2]{\left \langle #1 , \ #2 \right \rangle }
\newcommand{\Ev}[1]{\E \left( #1 \right)}  
\newcommand{\norm}[1]{\left\Vert#1\right\Vert}
\newcommand{\abs}[1]{\left\vert#1\right\vert}
\newcommand{\setb}[2]{\left \{ #1 \ \middle | \ #2 \right \} }
\newcommand{\com}[2]{\left[ #1 , #2 \right ]}
\newcommand{\Evac}[2]{\bb{E} \left ( #1 \middle | #2 \right )}
\newcommand{\Eva}[1]{\bb{E} \left ( #1  \right )}
\newcommand{\bb}[1]{\mathbb{#1}}
\newcommand{\mc}[1]{\mathcal{#1}}
\newcommand{\wt}[1]{\widetilde{#1}}
\newcommand{\wh}[1]{\widehat{#1}}
\renewcommand{\vec}[1]{\mathbf{#1}}
\newcommand{\wc}[1]{\wh{\mc{#1}}}
\newcommand{\ora}[1]{\overrightarrow{#1}}
\def\e{\mathrm e}
\def\im{\mathrm i}
\def\Im{\mathrm{Im}}
\def\1{{\mathsf 1}}
\def\di{\mathrm d}
\def\grad{\nabla}
\def\wt{\widetilde}
\newtheorem{thm}{Theorem}[section]
\newtheorem{lem}[thm]{Lemma}
\newtheorem{cor}[thm]{Corollary}
\newtheorem{prop}[thm]{Proposition}
\theoremstyle{definition}
\newtheorem{defn}[thm]{Definition}
\newtheorem{ass}[thm]{Assumption}
\newtheorem{conj}[thm]{Conjecture}
\numberwithin{equation}{section}
\theoremstyle{remark}
\newtheorem{remark}[thm]{Remark}
\def\dotuline{\bgroup
  \ifdim\ULdepth=\maxdimen  
   \settodepth\ULdepth{(j}\advance\ULdepth.4pt\fi
  \markoverwith{\begingroup
  \advance\ULdepth0.08ex
  \lower\ULdepth\hbox{\kern.15em .\kern.1em}%
  \endgroup}\ULon}
\def\dashuline{\bgroup
  \ifdim\ULdepth=\maxdimen  
   \settodepth\ULdepth{(j}\advance\ULdepth.4pt\fi
  \markoverwith{\kern.15em
  \vtop{\kern\ULdepth \hrule width .3em}%
  \kern.15em}\ULon}
\begin{document}

\setlength{\columnsep}{5pt}
\title[Diffusion for a Markovian periodic Schr\"odinger equation]{Diffusion in the Mean for a periodic Schr\"odinger Equation Perturbed by a Fluctuating Potential}
\author{Jeffrey Schenker}
\address{Mathematics Department \\ Michigan State University \\
	619 Red Cedar Road \\ East Lansing, MI 48823 }
\email{schenke6@msu.edu}
\author{F. Zak Tilocco}
\address{Mathematics Department \\ Michigan State University \\
	619 Red Cedar Road \\ East Lansing, MI 48823 }
\email{tiloccof@msu.edu}
\author{Shiwen Zhang}
\address{Mathematics Department \\ Michigan State University \\
	619 Red Cedar Road \\ East Lansing, MI 48823 }
\email{zhangshiwen@math.msu.edu}

\date{}

\begin{abstract}
We consider the evolution of a quantum particle hopping on a cubic lattice in any dimension and subject to a potential consisting of a periodic part and a random part that fluctuates stochastically in time. If the random potential evolves according to a stationary Markov process,  we obtain diffusive scaling for moments of the position displacement, with a diffusion constant that grows as the inverse square of the disorder strength at weak coupling. More generally,  we show that a central limit theorem holds such that the square amplitude of the wave packet converges, after diffusive rescaling, to a solution of a heat equation.
\end{abstract}

\maketitle

\section{Introduction and the Main Results}

Diffusive propagation is expected and observed to emerge from wave motion in a random medium in a variety of situations.  The general intuition behind this expectation is that repeated scattering from the random medium leads to a loss of coherence, which in a multi-scattering expansion or path integral formulation suggests a relation with random walks and diffusion.  This intuition is notoriously difficult to make precise in the context of a static random environment.  Indeed, proving the emergence of diffusion for the Schr\"odinger wave equation with a weakly disordered potential, in dimension $d\ge 3$, is one of the key outstanding open problems of mathematical physics.  For a random environment that fluctuates stochastically in time, the analysis is simpler and diffusive propagation has proved amenable to rigorous methods. Heuristically, this simplification is to be expected because time fluctuations suppress recurrence effects in path expansions. 

The present paper is the continuation of a project initiated by the first author and collaborators \cite{KS2009,HKS2010,MS2015,Schenker:2015,FS2015} in which diffusive propagation has been shown to occur for solutions to a tight binding Schr\"odinger equation with a random potential evolving stochastically in time.  In the papers \cite{KS2009,MS2015}, the following stochastic Schr\"odinger equation on $\ell^2(\Z^d)$ was considered:
\begin{equation}
\im \partial_{t}\psi_{t}(x)=H_0\psi_{t}(x) + {\lambda} V(x,t)\psi_{t}(x),\label{eq:pregenSE}
\end{equation}
with $H_0$ a (non-random) translation invariant Schr\"odinger operator, $ {\lambda}\ge0$ a real coupling constant, and $V(x,t)$ a zero-mean random potential with time dependent stochastic fluctuations.  These models had been considered previously by {Tcheremchantsev} \cite{Tcheremchantsev:1997kl,Tcheremchantsev:1998qe},  who obtained diffusive bounds for position moments up to logarithmic corrections. In \cite{KS2009,MS2015}, diffusive scaling for all moments (without logarithms) was proved, under suitable hypotheses on $H_0$ and $V$.  Furthermore, it was observed that at weak disorder $\lambda \rightarrow 0$, the corresponding diffusion constant $D$ has the asymptotic form 
\begin{equation}
D \ \sim \ \frac{C}{\lambda^2}.
\label{eq:DAF}
\end{equation}
The divergence of $D$ as $\lambda \rightarrow 0$ seen in eq.\ \eqref{eq:DAF} is to be expected, since the translation invariant Schr\"odinger operator $H_0$ on its own leads to ballistic transport.
In ref.\ \cite{Schenker:2015}, the first author considered the more subtle situation in which the environment is a superposition of two parts: 
\begin{equation}
\im \partial_{t}\psi_{t}(x)=H_0\psi_{t}(x)+ u (x)\psi_{t}(x)+ {\lambda} V(x,t)\psi_{t}(x),\label{eq:genSE}
\end{equation}
where $u$ is a static random potential that, at $\lambda=0$, gives rise to Anderson localization (absence of transport).  In \cite{Schenker:2015}, it was observed that the diffusion constant in this case has the asymptotic form
\begin{equation}
D \ \sim \ C \lambda^2 .
\label{eq:DALAF}
\end{equation}

Taken together, the results in \cite{KS2009,MS2015,Schenker:2015} suggest that solutions to \eqref{eq:genSE} with a general potential $u$ should satisfy diffusion with a diffusion constant whose asymptotic behavior in the small $\lambda$ limit is governed by the dynamics of the static Schr\"odinger operator $H_0 +  u$. In this paper, we study this idea in the context of models of the form of eq.\ \eqref{eq:genSE} but with \emph{periodic} $u$ that leads to ballistic transport.  We will obtain diffusive propagation for the evolution, and more generally, a central limit theorem for the square amplitude. Furthermore, we prove that in this case the asymptotic relation \eqref{eq:DAF} holds.

We consider below solutions to eq.\ \eqref{eq:genSE} with $\{u(x)\}_{x\in\Z^d}$ a real valued $\vec p$-periodic potential. Recall that  given $\vec p=\{p_j\}_{j=1}^d\in \Z_{>0}^d$, a function $u:\Z^d\mapsto \R$ is called $\vec p$-periodic if 
\begin{align}\label{eq:U-periodic}
u(x+p_je_j)=u(x)
\end{align}
for all $1\le j\le d$ and $x\in\Z^d$, where $e_j$ denotes the standard basis of $\Z^d$. Without loss of generality, we assume that $p_j\ge2$ for some $j$. Otherwise, $u$ is constant and the problem reduces to that studied in \cite{KS2009}. {Throughout this paper, we denote by $U$ the multiplication operator, $(U\psi)(x)=u(x)\psi(x)$ for $\psi(x)\in\ell^2(\Z^d)$.}

The analysis below is applicable to a broad class of operators $H_0$ and $V(x,t)$. To avoid technicalities in this introduction, let us state the main results in terms of hopping $H_0$ given by the standard discrete Laplacian on $\Z^d$ and potential $V(x,t)$ given by the following so-called Markovian ``flip process,'' which is a non-trivial, and somewhat typical, example of a potential satisfying the general requirements. In general, the random potential is given by $V(x,t)=v_x(\omega(t))$, where $\omega(t)$ is an evolving point in an auxiliary state space $\Omega$. For the flip process, we take the state space $\Omega=\{-1,1\}^{\Z^{d}}$, and $v_x(\omega) = \omega_x $, the $x^{\mathrm{th}}$ coordinate of $\omega$.  Thus the potential $V(x,t)= v_x(\omega(t))$ takes only the values $\pm1$. Now suppose the process $\omega(t)$ is obtained by putting independent, identical Poisson processes at each site $x$, and allowing each coordinate $\omega_x$ to flip sign at the times $t_{1}(x) \le t_{2}(x) \le \cdot$ of the Poisson process. Now the general equation \eqref{eq:genSE} becomes:
\begin{equation}\label{eq:SE}
 \im \partial_{t} \psi_{t}(x) \ = \ \sum\limits_{|y-x|=1} \psi_{t}(y) +  u  ({x}) \psi_{t}(x)+ {\lambda} v_x(\omega(t)) \psi_{t}(x).
\end{equation} 

A sign of diffusive propagation is the existence of a \emph{diffusion constant} for eq.\ \eqref{eq:SE} 
	\begin{equation}\label{eq:DC}
	D:= \ \lim_{t \rightarrow \infty} \frac{1}{t} \sum_{x} |x|^2 \E({\abs{\psi_t(x)}^2}),
	\end{equation}
	characterized by the relationship $x\sim \sqrt{t}$ in the mean amplitude of evolving wave packets. Here, and throughout this introduction, $\E({\cdot})$ denotes averaging with respect to the Poisson fliping times $t_1(x)\le t_2(x)\le \cdots$ and the initial values $\{\omega_x\}_{x\in \Z^d}$, taken independent and uniform in $\{-1,1\}$.

	We will show below that the limit in eq.\ \eqref{eq:DC} exists for any $\vec p$-periodic potential $u$ and ${\lambda}> 0$, and furthermore  $D>0$.  To give an unambiguous definition, one may take the initial value $\psi_0(x) = {\delta_{\vec 0}}(x)$.  However, as we will show, the limit remains the same for any other choice of (normalized) $\psi_0$ with $\sum_x |x|^2 \abs{\psi_0(x)}^2 < \infty$.  
	
	We refer to the existence of a finite, positive diffusion constant as in eq.\ \eqref{eq:DC} as \emph{diffusive scaling}.  
More generally, we have the following

\begin{thm}[Central limit theorem]\label{thm:CLT} 
		For any periodic potential $u$ and ${\lambda}>0$, there is a positive definite $d\times d$ matrix $\vec{D}=\vec{D}(\lambda,u)$ such that for any bounded continuous function $f:\R^d \rightarrow \R$ and any normalized $\psi_0\in \ell^2(\Z^d)$ we have
	\begin{equation}\label{eq:CLT}
	\lim_{t\rightarrow \infty} \sum_{x\in \Z^d} f\left ( \frac{x}{\sqrt{t}}  \right ) \Ev{\abs{\psi_t(x)}^2} \ = \ \int_{\R^d} f(\vec{r}) \left ( \frac{1}{ 2\pi   }\right )^{\frac{d}{2}} \e^{-\frac{1}{2 }\ipc{\vec{r}}{\vec{D}^{-1} \vec{r}}} \di \vec{r} ,
	\end{equation}
	where $\psi_t(x)$ is the solution to eq. \eqref{eq:SE} with initial value $\psi_0$.
	If furthermore $\sum_{x} (1+|x|^2) \abs{\psi_0(x)}^2 < \infty$, then diffusive scaling  eq.\ \eqref{eq:DC} holds with the diffusion constant 
	\begin{align}\label{eq:DS}
	D(\lambda) \ = \ \lim_{t\rightarrow \infty} \frac{1}{t}\sum_{x\in \Z^d} \abs{x}^2 \Ev{\abs{\psi_t(x)}^2} \ = \ \tr \vec{D}(\lambda).
	\end{align}
	Moreover, eq.\ \eqref{eq:CLT} extends to quadratically bounded continuous $f$ with $\sup_x (1+|x|^2)^{-1} \abs{f(x)} < \infty$. 
\end{thm}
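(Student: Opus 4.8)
\emph{The plan.} I would follow the spectral--semigroup strategy developed for stochastically time-dependent Hamiltonians in \cite{KS2009,MS2015}: reduce the disorder-averaged dynamics to an autonomous contraction semigroup on an operator space, use the partial translation invariance to split it into Bloch--Floquet fibers, and read off the Gaussian limit from the spectrum of the zero-momentum fiber. Since $\om(t)$ is a stationary Markov process, the pair $(\psi_t,\om(t))$ is jointly Markov, so the environment-resolved averaged density operator $\varrho_t$ — an operator-valued function of $\om$ with $\int_\Om\varrho_t(\om)\,\di\om = \Ev{\ketbra{\psi_t}{\psi_t}} =: \rho_t$ — satisfies a closed linear equation $\del_t\varrho_t = \mathcal L\varrho_t$, with $\mathcal L\varrho = -\im[H_\om,\varrho] + \mathcal B^\ast\varrho$, where $H_\om = H_0 + U + \lam V(\cdot,\om)$ and $\mathcal B$ generates the flip process; $\e^{t\mathcal L}$ is completely positive and trace preserving, so $\sum_x\Ev{\abs{\psi_t(x)}^2}\equiv1$. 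The diagonal shift $T_a\otimes T_a$ together with the induced shift on $\Om$, for $a$ in the period lattice $\mathcal L_{\mathrm{per}} := p_1\Z\oplus\cdots\oplus p_d\Z$, commutes with $\mathcal L$. Conjugating by the momentum multiplier $\e^{-\im\xi\cdot X}$, $(Xf)(x)=xf(x)$, and decomposing over the dual torus then produces an analytic family $\xi\mapsto\mathcal L_\xi$ of bounded generators on a fixed fiber space $\mathfrak K$ (trace-class operators over one fundamental domain, tensored with $L^2(\Om)$), with $\mathcal L_0$ the restriction of $\mathcal L$ to the $\mathcal L_{\mathrm{per}}$-periodic sector, such that $\widehat n_t(\xi) := \sum_x\e^{-\im\xi\cdot x}\Ev{\abs{\psi_t(x)}^2} = \ipc{\Phi}{\e^{t\mathcal L_\xi}\varrho_0}$ for $\xi$ near $0$, where $\Phi$ is a fixed bounded ``fiberwise trace'' functional, $\varrho_0 = \varrho_0(\psi_0)$, and $\widehat n_t(0)\equiv 1$.

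The key step is the spectral analysis of $\mathcal L_0$: I would show that $0$ is a \emph{simple} eigenvalue of $\mathcal L_0$, isolated with $\spec\mathcal L_0\setminus\{0\}\subset\{\Re z\le-\gamma\}$ for some $\gamma>0$ — the right null vector being the spatially uniform, environment-stationary normalized identity and the left null vector $\Phi$ — which amounts to saying the noisy periodic dynamics is ergodic with total mass as its only conserved quantity on the periodic sector. Granting this gap, analytic perturbation theory yields, for $\abs\xi$ small, a simple isolated eigenvalue $\mu(\xi)$ with $\mu(0)=0$, an analytic rank-one spectral projection $P_\xi$ (with $P_0$ the projection onto the null space, satisfying $\ipc{\Phi}{P_0\varrho_0}=1$), and $\sup\Re\bigl(\spec\mathcal L_\xi\setminus\{\mu(\xi)\}\bigr)\le-\gamma/2$. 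Standard symmetry arguments — using $\Ev{V(x,t)}=0$ together with the sign symmetry of the flip process and the momentum-reflection symmetry inherited from the reality of $H_0$ and $u$ (which makes the Bloch bands of $H_0+U$ even) — give the absence of drift, $\grad\mu(0)=0$, so $\mu(\xi) = -\tfrac12\ipc{\xi}{\vec D\xi} + O(\abs\xi^3)$; the second-order perturbation formula then identifies $\vec D = \vec D^{T}\ge 0$, and strict positivity follows from the non-vanishing group velocity of the periodic operator $H_0+U$ (equivalently, positivity of the associated kinetic diffusion constant).

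To finish, fix $\xi\in\R^d$ and split $\e^{t\mathcal L_{\xi/\sqrt t}} = \e^{t\mu(\xi/\sqrt t)}P_{\xi/\sqrt t} + \e^{t\mathcal L_{\xi/\sqrt t}}(1-P_{\xi/\sqrt t})$: the first term converges to $\e^{-\frac12\ipc{\xi}{\vec D\xi}}P_0$ and the second is $O(\e^{-\gamma t/2})$, so $\widehat n_t(\xi/\sqrt t)\to\e^{-\frac12\ipc{\xi}{\vec D\xi}}$. Since the rescaled probability measures $\mu_t := \sum_x\Ev{\abs{\psi_t(x)}^2}\,\delta_{x/\sqrt t}$ have characteristic functions $\xi\mapsto\widehat n_t(-\xi/\sqrt t)$, L\'evy's continuity theorem gives weak convergence of $\mu_t$ to the centered Gaussian with covariance $\vec D$; since $\vec D$ is positive definite, this is precisely \eqref{eq:CLT} for bounded continuous $f$. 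If moreover $\sum_x(1+\abs x^2)\abs{\psi_0(x)}^2<\infty$, the finite range of $H_\om$ gives $\norm{X\psi_t}\le\norm{X\psi_0}+Ct$, so $\varrho_0$ lies in the domain on which $\xi\mapsto\widehat n_t(\xi)=\ipc{\Phi}{\e^{t\mathcal L_\xi}\varrho_0}$ is $C^2$ near $0$; differentiating the spectral decomposition gives $-\De_\xi\widehat n_t(0) = t\,\tr\vec D + O(1)$, i.e. $\tfrac1t\sum_x\abs x^2\Ev{\abs{\psi_t(x)}^2}\to\tr\vec D$, which is \eqref{eq:DS}. Combined with the weak convergence, this forces uniform integrability of $\abs{\vec r}^2$ under the $\mu_t$, which upgrades \eqref{eq:CLT} to quadratically bounded continuous $f$ by truncating at scale $R$ and letting $R\to\infty$. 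The main obstacle is the spectral input of the second paragraph: proving that $0$ is simple and isolated for $\mathcal L_0$ — i.e. that neither the periodic background $u$ nor the band structure of $H_0+U$ produces slow modes beyond conservation of mass, which for the flip process should reduce to an irreducibility/positivity argument for $\e^{t\mathcal L_0}$ — together with the strict positivity of $\vec D$.
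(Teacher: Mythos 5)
Your outline follows the same route as the paper: Pillet's formula lifts the averaged dynamics to a contraction semigroup on the augmented space, a Floquet decomposition over the period lattice produces fibers $\wc{L}_{\vec k}$, a spectral gap at $\vec k=\vec 0$ plus perturbation of the simple eigenvalue $E(\vec k)$ gives the Gaussian limit via L\'evy continuity, and a separate second-moment computation gives \eqref{eq:DS} and upgrades \eqref{eq:CLT} to quadratically bounded $f$. However, the step you defer --- simplicity and isolation of the eigenvalue $0$ of $\mathcal L_0$ --- is precisely the new content of the periodic case, and your proposed way of closing it (``an irreducibility/positivity argument for $\e^{t\mathcal L_0}$'') does not address the actual obstruction. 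After the Floquet transform the fiber space is $\ell^2(\Z^d)\otimes\C^{\otimes\vec p}\otimes L^2(\Omega)$, and the kernel of $B$ intersected with the diagonal $x=\vec 0$ contains the $(p_1\cdots p_d-1)$-dimensional subspace $\wc H_1=\delta_{\vec 0}\otimes\{\ora 1\}^\perp\otimes\id$ on which \emph{both} the Markov generator $B$ and the fluctuation $\wc V$ vanish identically (since $v(\tau_x\omega)-v(\omega)=0$ at $x=\vec 0$), and on which $\wc U$ also vanishes. No amount of ergodicity of the noise produces dissipation there; the gap on $\wc H_1$ arises only at second order through the off-diagonal coupling $P_1\wc K_{\vec 0}P_2$, and one must prove that $\wc K_{\vec 0}$ is injective on $\delta_{\vec 0}\otimes\{\ora 1\}^\perp\otimes\id$. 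In the paper this is Corollary \ref{cor:KerK0}, which rests on the spectral analysis of the shift matrices $\mc A_{\vec p}^{\xi}$ and the hypothesis that $\operatorname{supp}h$ generates $\Z^d$; the quantitative gap then comes from a three-level Schur complement (Lemma \ref{lem:L0inverse}) rather than a Perron--Frobenius argument. Without identifying $\wc H_1$ your plan does not explain why the periodic case is harder than the constant-potential case of \cite{KS2009}, nor does it supply the constant $g$ needed for the uniform exponential estimates on $\e^{-t\wc L_{\vec k}}(1-Q_{\vec k})$ that your $t\to\infty$ limits rely on.

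A secondary inaccuracy: you attribute the strict positivity of $\vec D$ to ``the non-vanishing group velocity of $H_0+U$.'' In the paper, positive definiteness at fixed $\lambda>0$ does not invoke ballistic motion of $H_0+U$ at all; it follows from the accretivity estimate \eqref{eq:gap-gamma} for $P_2\mc J^{-1}$ on $\operatorname{Ker}(P_1\wc K_{\vec 0})$ (which contains the vectors $\partial_j\wc K_{\vec 0}\varphi_{\vec 0}$, nonzero by non-degeneracy of $h$), combined with the reality of $\partial_i\partial_jE(\vec 0)$, which is only established a posteriori by matching the Hessian with the real second moments $M_{i,j}(t)/t$ in Lemma \ref{lem:lim-Mij}. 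Ballistic motion of $H_0+U$ enters only in Theorem \ref{thm:asym}, for the $\lambda^{-2}$ asymptotics of $\vec D(\lambda)$. Your symmetry argument for $\grad E(\vec 0)=0$ is also replaced in the paper by the simpler algebraic observation that $\partial_j\wc K_{\vec 0}$ maps $\wc H_0$ into $\wc H_2\perp\wc H_0$, so the first-order perturbation term vanishes identically.
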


It is well known that if $\lambda=0$ in \eqref{eq:SE}, then the free periodic Schr\"odinger equation has Bloch-wave solutions and exhibits ballistic motion
by the Floquet theory\footnote{{
Let $J$ be a periodic block Jacobi matrix on $\oplus_{j=1}^m\ell^2(\Z)$, which includes $\Delta+U$ on $\ell^2(\Z)$ as a special case. Let $X$ be the position operator and let  $X(t) =e^{\im t (\Delta+U)}Xe^{-\im t (\Delta+U)}$ be its	Heisenberg time evolution. \emph{Strong ballistic motion} was obtained  for $J$ in \cite{DLY}.  That is, there is a bounded self-adjoint operator $Q$ with $\ker (Q)=\{0\}$ such that for any 
$\psi$ with $X\psi\in\ell^2(\Z)$,$$\lim_{t\to\infty}\frac{1}{t}X(t)\psi=Q\psi. $$ 
}}, see \cite{AK98,DLY}:
		\begin{align}\label{eq:balli-freeSE}
	\lim_{t\to\infty}\frac{1}{t^2}\sum_{x\in \Z^d} \abs{x}^2 { \abs{\ipc{\delta_x}{\e^{-\im t (\Delta+U)}\delta_0}}^2}\in (0,\infty).
	\end{align} 
If we extend the definition of $D(\lambda)$ in \eqref{eq:DS} to  $\lambda=0$, then $D(0)=\infty$. We are primarily interested here in the regime $\lambda\sim0$, although we will demonstrate diffusion for all $\lambda>0$. However, for small $\lambda$ the diffusion constant will be large and have the following asymptotic behavior as $\lambda \to 0$: 
\begin{thm}\label{thm:asym}Under the hypotheses of Thm.\ \ref{thm:CLT}, there is a  positive definite $d\times d$ matrix $\vec D^0$ such that 
	\begin{align}\label{eq:asym}
	\vec{D}(\lambda)  =  \frac{1}{\lambda^2}\left( \vec{D}^0 + o(1)\right)\ 
	\ {\rm and}\ \ 
	D(\lambda)=\tr\vec{D}(\lambda) \ = \ \frac{1}{\lambda^2} \left(\tr\vec{D}^0 +o(1)\right)
	\quad \text{as } \lambda \rightarrow 0.
	\end{align}
\end{thm}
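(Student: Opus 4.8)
The plan is to extract the $\lambda\to0$ asymptotics directly from the representation of $\vec D(\lambda)$ produced in the proof of Theorem~\ref{thm:CLT}. That proof passes to the environment-averaged density matrix, Fourier-transforms it in the center-of-mass coordinate, and shows that near zero quasimomentum $k$ the evolution is governed by a deformed generator $\mathcal L(k,\lambda)$ on a weighted space of operator-valued functions on $\Omega$; at $k=0$ it has the form $\mathcal L(\lambda)=\mathcal L_0+\lambda\mathcal V+\mathcal B$ with $\mathcal L_0=-\im\,\mathrm{ad}_{H_0+U}$ skew-adjoint, $\mathcal B\le0$ the self-adjoint generator of the flip process acting on the $\Omega$-variable with a spectral gap on $(1-\mathcal P)$ — here $\mathcal P$ projects onto the $\Omega$-independent ``vacuum'' sector $\ker\mathcal B$ — and $\mathcal V=-\im\,\mathrm{ad}_{v_\cdot(\omega)}$ a skew-adjoint coupling with $\mathcal P\mathcal V\mathcal P=0$ (zero mean). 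Conservation of probability makes $0$ a simple eigenvalue of $\mathcal L(\lambda)$ for $\lambda>0$, with eigenvector the stationary state $\Phi_\infty$ (morally the identity operator, and $\omega$-independent, so $\Phi_\infty\in\ran\mathcal P\cap\ker\mathcal L_0$), and second-order perturbation theory in $k$ yields a Green--Kubo identity: $\langle\xi,\vec D(\lambda)\xi\rangle$ equals $-2\,\Re\langle\mathcal J_\xi\Phi_\infty,\,\mathcal L(\lambda)^{-1}(1-\mathcal P_0)\,\mathcal J_\xi\Phi_\infty\rangle$ plus a $\lambda$-independent instantaneous contribution, where $\mathcal P_0$ projects onto $\C\Phi_\infty$, $\mathcal J_\xi=(\xi\cdot\nabla_k)\mathcal L(k,\lambda)|_{k=0}$, and a short computation gives that $\mathcal J_\xi\Phi_\infty$ is, up to normalization, the current (group-velocity) operator $\hat v_\xi=[\,\xi\cdot X,\,H_0\,]$. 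Since we are chasing the $\lambda^{-2}$ divergence, the instantaneous contribution, being $O(1)$, will be irrelevant.

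First I would reduce $\mathcal L(\lambda)^{-1}$, applied to the vacuum vectors $\mathcal J_\xi\Phi_\infty$, to the vacuum sector by a Schur-complement (Feshbach) computation: because $\Re\big((1-\mathcal P)\mathcal L(\lambda)(1-\mathcal P)\big)\le-\mathrm{gap}<0$ uniformly for small $\lambda$, that block is boundedly invertible and one obtains an effective generator $\mathcal L_{\mathrm{eff}}(\lambda)$ on $\ran\mathcal P$ with $\mathcal L(\lambda)^{-1}(1-\mathcal P_0)|_{\ran\mathcal P}=\mathcal L_{\mathrm{eff}}(\lambda)^{-1}(1-\mathcal P_0)$. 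Since $\mathcal P\mathcal V\mathcal P=0$ and the flip process is invariant under the global sign flip $\omega\mapsto-\omega$ while $\mathcal V$ is odd under it, every surviving term of $\mathcal L_{\mathrm{eff}}$ carries an even number of $\mathcal V$'s, so $\mathcal L_{\mathrm{eff}}(\lambda)=\mathcal L_0+\lambda^2\mathcal A(\lambda)$ with $\mathcal A(\lambda)=\mathcal A+O(\lambda^2)$ and $\mathcal A=-\mathcal P\mathcal V(1-\mathcal P)\big((1-\mathcal P)(\mathcal L_0+\mathcal B)(1-\mathcal P)\big)^{-1}(1-\mathcal P)\mathcal V\mathcal P$ a bounded ``collision'' super-operator with $\Re\mathcal A\le0$; crucially $\mathcal A$, $\mathcal P$ and $\mathcal P_0$ involve only the $\lambda=0$ data, namely $H_0+U$ and the noise.

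Second, I would perform the singular-perturbation limit. Writing $\lambda^2\,\mathcal L_{\mathrm{eff}}(\lambda)^{-1}=\big(\lambda^{-2}\mathcal L_0+\mathcal A(\lambda)\big)^{-1}$ on $\ran\mathcal P\ominus\C\Phi_\infty$ and letting $\lambda\to0$, the divergent skew-adjoint term $\lambda^{-2}\mathcal L_0$ squeezes the resolvent onto $\ker\mathcal L_0$, and an averaging argument (the operator analogue of mean-ergodic / RAGE-type averaging) gives the strong limit $\lim_{\lambda\to0}\big(\lambda^{-2}\mathcal L_0+\mathcal A(\lambda)\big)^{-1}(1-\mathcal P_0)=\mathcal Q_0\big(\mathcal Q_0\mathcal A\mathcal Q_0\big)^{-1}\mathcal Q_0$, where $\mathcal Q_0$ is the orthogonal projection onto $(\ker\mathcal L_0\cap\ran\mathcal P)\ominus\C\Phi_\infty$, i.e.\ essentially onto the commutant of $H_0+U$ inside the vacuum sector. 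Combining with the Green--Kubo identity yields $\langle\xi,\vec D^0\xi\rangle=-2\,\Re\langle\mathcal Q_0\hat v_\xi,\big(\mathcal Q_0\mathcal A\mathcal Q_0\big)^{-1}\mathcal Q_0\hat v_\xi\rangle$, so $\lambda^2\vec D(\lambda)\to\vec D^0$, i.e.\ $\vec D(\lambda)=\lambda^{-2}(\vec D^0+o(1))$, and taking traces gives the companion statement for $D(\lambda)$. The right-hand side is manifestly nonnegative; in the Floquet--Bloch representation $\mathcal Q_0\hat v_\xi$ is multiplication by the band group velocities $\xi\cdot\nabla_k\varepsilon_n(k)$, so $\vec D^0$ is exactly the diffusion matrix of the limiting linear Boltzmann equation whose free flight is the ballistic motion of $H_0+U$ and whose scattering is set by the noise. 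Positive-definiteness of $\vec D^0$ then follows from coercivity of $\mathcal Q_0\mathcal A\mathcal Q_0$ on its range (non-degeneracy of the flip process, already used in Theorem~\ref{thm:CLT}) together with the fact that the strong ballistic transport of $H_0+U$, eq.~\eqref{eq:balli-freeSE}, forces $\mathcal Q_0\hat v_\xi\ne0$ for every $\xi\ne0$.

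The main obstacle, I expect, is the averaging limit in the second step. Because $H_0+U$ is merely periodic and not gapped, $\mathcal L_0=-\im\,\mathrm{ad}_{H_0+U}$ has purely continuous spectrum accumulating at $0$ and may carry flat-band contributions; hence $\lambda^{-2}\mathcal L_0$ does \emph{not} diverge in any uniform sense on $\ran(1-\mathcal Q_0)$, and the convergence above is only strong, not in norm. Establishing it — that the blocks $\mathcal Q_0(\,\cdot\,)^{-1}(1-\mathcal Q_0)$ and $(1-\mathcal Q_0)(\,\cdot\,)^{-1}(1-\mathcal Q_0)$ vanish in the limit when paired against $\hat v_\xi$, and that $\mathcal Q_0\mathcal A\mathcal Q_0$ is invertible on the relevant subspace — requires quantitative input from the band structure of $H_0+U$ (absolute continuity of the band functions off a set of measure zero, with flat bands absorbed into $\mathcal Q_0$) together with the uniform dissipativity of $\mathcal A(\lambda)$. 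This singular limit, more than the surrounding algebra, is the technical heart of the theorem.
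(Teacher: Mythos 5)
Your architecture is the same as the paper's: a Green--Kubo/perturbation formula for $\vec D(\lambda)$ in terms of the inverse of the generator restricted off the ground state (eq.~\eqref{eq:Eij}), a Schur complement onto the ``non-random'' sector producing an effective operator of the form $\im P(\wc{K}_{\vec 0}+\wc{U})P+\lambda^2 R_\lambda$ (this is exactly $\Gamma_2$ in \eqref{eq:Gamma2}; note that $P_3^\perp\wc{V}P_3^\perp=0$ already forces the $\lambda^2$ structure, so your parity argument for the flip process is superfluous and would not generalize to the other admissible potentials), a singular resolvent limit squeezing onto the kernel of the skew-adjoint part, and positivity of $\vec D^0$ from ballistic motion of $H_0+U$. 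However, you leave unproved precisely the two steps where the paper's new content lies. First, the singular limit: the paper does \emph{not} need any band-structure, flat-band, or RAGE-type input; it invokes the abstract Lemma~\ref{lem:limres} (Lemma D.1 of \cite{Schenker:2015}), which gives the strong resolvent limit $\lim_{\eta\to0}\ipc{\phi}{(\eta^{-1}A+R_\eta)^{-1}\psi}=\ipc{\Pi\phi}{(\Pi R_0\Pi)^{-1}\Pi\psi}$ for any normal $A$ with $\Re A\ge0$ and uniformly accretive $R_\eta$. So this step is easier than you anticipate, but you supply no argument for it. Moreover, your one-block reduction to ``$\ran\mc{P}\ominus\C\Phi_\infty$'' glosses over the genuine two-block structure $\wc{H}_1\oplus\wc{H}_2$ with the coupling $P_1\wc{K}_{\vec 0}P_2$; identifying the correct limiting projection (the paper's $\wt\Pi$, the kernel of $\Pi_2(\wc{K}_{\vec 0}+\wc{U})\Pi_2$ inside $\ker P_1\wc{K}_{\vec 0}P_2$) and relating it to the kernel of the full skew part requires the nested Schur-complement argument of Lemma~\ref{lem:Ablock}; without it you cannot legitimately write $(\mc{Q}_0\mc{A}\mc{Q}_0)^{-1}$.

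Second, and more seriously, the assertion that ballistic transport of $H_0+U$ ``forces $\mc{Q}_0\hat v_\xi\neq0$'' is exactly the nontrivial observation the paper highlights as its new ingredient, and you give no proof of it. The paper's mechanism is the identity \eqref{eq:wtMjj}: for the unperturbed ($\lambda=0$) evolution,
\begin{equation*}
2\ipc{\wt\phi_j}{\bigl(P+\eta^{-1}\im(\wc{K}_{\vec 0}+\wc{U})\bigr)^{-1}\wt\phi_j}
\ = \ \eta^3\int_0^\infty \e^{-\eta t}\,M_{j,j}(t)\,\di t \ + \ O(\eta^2),
\end{equation*}
with $\wt\phi_j=\partial_j\wc{K}_{\vec 0}\varphi_{\vec 0}$. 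Taking $\eta\to0$ via Lemma~\ref{lem:limres} identifies the left side with $\norm{\Pi\wt\phi_j}^2$, while the hypothesis \eqref{eq:balli-gen} bounds the right side below; Lemma~\ref{lem:Ablock} then converts $\Pi\wt\phi_j\neq0$ into $\wt\Pi\wt\phi_j\neq0$. Your Floquet--Bloch heuristic (group velocities $\xi\cdot\grad_k\varepsilon_n(k)$) is a plausible alternative, but on the augmented space the relevant objects are operator-valued with the extra $\C^{\otimes\vec p}$ fiber, and the identification of $\mc{Q}_0\hat v_\xi$ with a multiplication operator by band velocities is not carried out; as written, the positive-definiteness of $\vec D^0$ — which is the substance of the theorem — rests on an unproven claim.
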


The conclusions of Theorems \ref{thm:CLT} and \ref{thm:asym} are true for eq.~\eqref{eq:genSE} under much more general assumptions on the hopping $H_0$ and the time dependent stochastic potential $V(x,t)$. We will state the general assumptions and  results in Section \ref{sec:gen}.

The rest of the paper is organized as follows: In Sec. \ref{sec:gen}, a more general class of operators is introduced and the main result Theorem \ref{thm:genCLT}, which generalizes Theorems \ref{thm:CLT} and \ref{thm:asym}, is formulated. In Sec. \ref{sec:aug-space}  the basic analytic tools of ``augmented space analysis,'' developed previously in \cite{KS2009,Schenker:2015}, are reviewed.  In Sec. \ref{sec:spe-analysis}, we present the heart of our argument, a block decomposition to study the spectral gap of the induced operator on the augmented space. Sec. \ref{sec:MainResults} is devoted to a proof of the main result.  Certain technical results used below are collected in appendices.

\subsection{History and Conjectures} Before turning to the general framework, let us discuss a history of related work on diffusion, explain the relation of prior works to the present one, and finally describe several conjectures for more general tight binding models.  These conjectures are closely related to, but do
not follow from, the work presented here.

A brief history of related studies is as follows. Ovchinnikov and Erikman obtained diffusion for a Gaussian Markov (``white noise'') potential \cite{Ovchinnikov:1974eu}.  Pillet obtained results on transience of the wave in related models and derived a Feynman-Kac representation \cite{Pillet:1985oq} which we employ here.  Using Pillet's Feynman-Kac formula, Tchermentchansev \cite{Tcheremchantsev:1997kl, Tcheremchantsev:1998qe} showed that position moments exhibit diffusive scaling, up to logarithmic corrections for any bounded potential $u(x)$ in \eqref{eq:genSE}:
\begin{equation}\label{eq:Tchscaling}
t^{\frac{s}{2}} \frac{1}{(\ln t)^{\nu_{-}}}\ \lesssim \ \sum_{x}|x|^{s} \Ev{ \abs{\psi_{t}(x)}^{2}} \ \lesssim \  t^{\frac{s}{2}} ( \ln t )^{\nu_{+}} , \quad t \rightarrow \infty.
\end{equation}
The case $u(x)\equiv0$ (or equivalently, $\vec p=(1,\cdots,1)$) was considered in the previous work \cite{KS2009}, where \eqref{eq:Tchscaling} was shown to hold for $s=2$ with $\nu_{-}=\nu_{+}=0$. Moreover, the central limit theorem \eqref{eq:CLT} and the asymptotic behavior \eqref{eq:asym} were also obtained in \cite{KS2009}. The
proof  in \cite{KS2009} was revisited in \cite{MS2015} to obtain diffusive scaling for all
position moments of the mean wave amplitude. The models studied in \cite{KS2009} are special cases of those considered here.

For a certain class of random potentials $u(x)$, including the case of  an  i.i.d.\ potential,   diffusive scaling  and the central limit theorem were proved in \cite{Schenker:2015}. Moreover, if $H_0+u$ exhibits Anderson localization, then $O(\lambda^2)$ asymptotics \eqref{eq:DALAF} were proved for the diffusion constant.  The arguments in \cite{Schenker:2015} do not require strict independence of the static potential at different sites. However, the \emph{Equivalence of Twisted Shifts} assumption taken in \cite{Schenker:2015} excludes $\vec p$-periodic background potentials, as well as almost-periodic background potentials.  The periodic case falls in an intermediate regime between the period-free case and the i.i.d.\ case. This is a key motivation for us to revisit the proofs in \cite{KS2009} and \cite{Schenker:2015} and develop the current approach to the $\vec p$-periodic case, for both diffusive scaling and limiting behavior.

In \cite{FS2015}, Fr\"ohlich and the first author used the techniques of \cite{Schenker:2015} to study diffusion for a lattice particle governed by a Lindblad equation describing jumps in momentum driven by interaction with a heat bath.  In some sense, this is the quantum analogue of the classical dynamics of a disordered oscillator systems perturbed by noise in the form of a momentum jump process, considered in \cite{BO2011,BH2012}  and reviewed in \cite{BHLLO2014}.  In those works, heat transport is considered in the limit of weak noise in a regime for which transport is known to vanish for the disordered oscillator system without noise.  A key feature of the noise in \cite{BO2011,BH2012} is that energy is conserved in the system with noise; this is necessary so that one can speak about heat flux. 
By contrast, in the present work, and in \cite{KS2009,MS2015,Schenker:2015,FS2015}, energy conservation is broken by the noise.  
Indeed the \emph{only} conserved quantity for the evolution we consider is quantum probability; and it is this quantity which is subject to diffusive transport.

That \emph{diffusive} transport emerges from \eqref{eq:genSE} very much depends on the fact that it is a lattice, or \emph{tight-binding}, equation.  A time-dependent potential  coupled with the unbounded kinetic energy present in continuum models can lead to  stochastic acceleration resulting in super-diffusive, or even super-ballistic, transport. Stochastic acceleration has been well studied in the context of classical systems, see for example \cite{aguerDebievreLafitteParris,rosenbluth,soretDebievre1}.  For quantum systems in the continuum, transport has been studied in the context of Gaussian white-noise potentials \cite{fischer-leschke-muller,fischer-leschke-muller2,jayannavar-kumar1,HKOS2019}, for which the super-ballistic transport $\langle x^2 \rangle \sim t^3$ has been proved.

There are also parallel works on diffusion for the continuum Schr\"odinger equation with Markovian forcing and periodic boundary conditions in space, e.g., \cite{EKS03}. One physical interpretation of this continuous model is as a rigid rotator coupled to a classical heat bath. In \cite{EKS03}, the $H^s$ norm of the wave function is shown to behave as $t^{s/4}$. It is interesting to point out that, as in the present work, the existence of a spectral gap for the Markov generator is essential both for their analysis and the results. In many models with Markovian forcing, the potential $V(x,t)$ is quite rough. However, Bourgain  studied the case where  $V(x,t)$ is analytic/smooth in $x$ and quasi-periodic/smooth in $t$. In \cite{Bo99-1}, he showed that energy may grow logarithmically. 
We refer readers to, e.g., \cite{EK09,N09,W08}, for more work on Sobolev norm growth and controllability
of Schr\"odinger equations with time-dependent potentials.

The proof we present here is a generalization  of that in \cite{KS2009}.  Some of the arguments are essentially standard fare and parallel the work of \cite{KS2009} closely.  However, there are three places in the proof where some substantially new arguments were needed.  First, the Fourier analysis (see Sec. \ref{sec:VFourier}) in our work is more subtle and requires careful consideration due to the periodic potential.  The extension developed here is of independent interest and may  benefit the future study of the limit-periodic and quasi-periodic cases. Secondly, the spectral gap Lemma \ref{lem:L0gap} and the proof of the main results in Sec. \ref{sec:MainResults} are technically more involved in the current work. The interaction between the periodic part and the hopping terms complicates the block decomposition on the augmented space.  Finally, in the present proof, the analysis of the asymptotic behavior of the diffusion constant is quite a bit more involved. In \cite{KS2009}, \eqref{eq:DAF} essentially follows from a formula derived for the diffusion constant in the midst of the proof of diffusion.  Unfortunately, Theorem \ref{thm:asym} in the $\vec p$-period case does not have such a simple proof  and is obtained by a new approach. The proof is based on an interesting observation linking the ballistic motion of the unperturbed part to  the diffusive scaling.  This observation is part of the motivation behind our conjecture below on the more general situations, linking the transport exponent to the limiting behavior of the diffusion constant.

In light of the present work, it is natural to ask what can be said about eq.\ \eqref{eq:genSE} with $u$ a general ergodic/deterministic potential. In particular, 
\begin{enumerate}
	\item Under which hypotheses on $u$ do we  have diffusive propagation over long time scales? 
	\item When diffusion holds, what is the limiting behavior of the diffusion constant with respect to the disorder coupling constant?
\end{enumerate}
Based on the  limiting behavior of the diffusion constant in the periodic case and in the i.i.d case, it is natural to make the following 
\begin{conj}\label{conj:gen}
	For any bounded potential $u(x)$ on $\Z^d$ and any $\lambda>0$, there exist positive, and finite, upper and lower diffusion constants, $\underline{D}(\lambda),\overline{D}(\lambda)\in(0,\infty)$ such that 
	\begin{align}\label{eq:conj}
\underline{D}(\lambda):= \liminf_{t\rightarrow \infty} \frac{1}{t}\sum_{x\in \Z^d} \abs{x}^2 \Ev{\abs{\psi_t(x)}^2}\le \limsup_{t\rightarrow \infty} \frac{1}{t}\sum_{x\in \Z^d} \abs{x}^2 \Ev{\abs{\psi_t(x)}^2}=:\overline{D}(\lambda).
	\end{align} 
Suppose $\Delta+U$ exhibits ballistic motion, then $\underline{D}(\lambda),\overline{D}(\lambda)\sim O({\lambda^{-2}})$ for $\lambda\sim0$. Suppose $\Delta+U$ exhibits dynamical localization, then $\underline{D}(\lambda),\overline{D}(\lambda)\sim O({\lambda^2})$ for $\lambda\sim0$.
\end{conj}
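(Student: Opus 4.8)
\medskip\noindent\textbf{A strategy for Conjecture~\ref{conj:gen}.}
The natural starting point is the augmented-space formalism of Section~\ref{sec:aug-space}, which applies verbatim for an arbitrary bounded $u$: the mean square amplitude $\Ev{\abs{\psi_t(x)}^2}$ is a matrix element of a one-parameter semigroup of contractions on the (doubled) augmented space, whose generator has $0$ as a distinguished point of its spectrum corresponding to the stationary state of the underlying Markov process, and $\sum_x\abs{x}^2\Ev{\abs{\psi_t(x)}^2}$ is the expectation of the squared position operator in the evolved augmented state. The whole problem is to control the $t\to\infty$ behavior of this expectation. In the periodic case this is governed by the spectral gap of the induced operator near $0$ (Lemma~\ref{lem:L0gap}); for a general ergodic $u$ no such gap can be expected, so the two halves of the conjecture call for different inputs.

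For the existence of finite, positive upper and lower diffusion constants I would decouple the two bounds. The lower bound $\underline D(\lambda)>0$ should be robust: for every $\lambda>0$ the fluctuating potential produces genuine decoherence at each site, and one should be able to extract $\underline D(\lambda)>0$ by a linear-response-type argument — testing the evolution against a single, well-chosen slowly-varying observable, in the spirit of the proof that $D>0$ in Thm.~\ref{thm:CLT} — which uses only positivity and the Markov structure, not a spectral gap. The upper bound $\overline D(\lambda)<\infty$ is the delicate half: Tcheremchantsev's estimate~\eqref{eq:Tchscaling} already yields $\sum_x\abs{x}^2\Ev{\abs{\psi_t(x)}^2}\lesssim t(\ln t)^{\nu_+}$ for every bounded $u$, so the task is to remove the logarithm. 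I would try to do this by revisiting the Duhamel/Feynman--Kac expansion of \cite{Pillet:1985oq,Tcheremchantsev:1997kl} and using the contraction property of the augmented evolution to resum the borderline contributions, replacing the crude bound on the free propagator $\e^{-\im t(\Delta+U)}$ by the dispersive $\ell^1\to\ell^\infty$ decay available for any periodic — and more generally any bounded — $u$.

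The asymptotics as $\lambda\to0$ is where the observation linking ballistic motion to diffusive scaling that underlies Thm.~\ref{thm:asym} must be pushed further. When $\Delta+U$ is ballistic, one expects the noise to randomize the Bloch phases on the time scale $\lambda^{-2}$; by~\eqref{eq:balli-freeSE} the free dynamics has by then spread the packet to distance of order $\lambda^{-2}$, after which the motion effectively restarts, so stitching together $\sim\lambda^2 t$ quasi-independent ballistic segments, each contributing $\sim\lambda^{-4}$ to the mean square displacement, gives $D(\lambda)\sim\lambda^{-2}$ with $\vec D^0$ a Green--Kubo-type time average of the ballistic velocity covariance of $\Delta+U$. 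Making this rigorous requires a quantitative form of ``decoherence on scale $\lambda^{-2}$,'' i.e.\ control of the augmented evolution for times $t\sim\lambda^{-2}$ that is uniform in the relevant frequencies — the replacement, for general $u$, of the rescaled block analysis of Section~\ref{sec:spe-analysis}. When $\Delta+U$ is dynamically localized the picture is reversed: $\e^{-\im t(\Delta+U)}\delta_0$ stays localized for all $t$, and it is the noise that moves the particle, inducing Mott-type hops between localized states at rate $\sim\lambda^2$, each displacing the center of mass by $O(1)$, whence $D(\lambda)\sim\lambda^2$; proving this amounts to running the argument of \cite{Schenker:2015} with the i.i.d.\ hypothesis replaced by a quantitative eigenfunction-correlator bound for $\Delta+U$ that is stable under the perturbation.

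The main obstacle in both halves is the absence of a spectral gap for the induced operator when $u$ is merely ergodic: the clean gap/RAGE argument that yields Thms.~\ref{thm:CLT} and~\ref{thm:asym} in the periodic case has no direct analogue, so one must obtain the two-sided diffusive bound from a substitute — subadditive or ergodic control of $\tfrac1t\sum_x\abs{x}^2\Ev{\abs{\psi_t(x)}^2}$ along the flow, or a renewal/coupling scheme converting local decoherence estimates into global diffusive ones. In the localized regime there is the additional serious point that the weak-coupling and long-time limits need not commute, so their interchange must be controlled by hand.
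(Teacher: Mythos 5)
The statement you are addressing is Conjecture~\ref{conj:gen}: the paper explicitly presents it as an \emph{open problem} motivated by the results for periodic $u$ (this paper) and i.i.d.\ $u$ (\cite{Schenker:2015}), and offers no proof of it. So there is no argument in the paper to compare yours against, and what you have written is, by your own framing, a research strategy rather than a proof. As such it cannot be accepted as a proof of the statement: every step that would actually be hard is flagged but not carried out. Concretely, (i) the removal of the logarithms from Tcheremchantsev's bound~\eqref{eq:Tchscaling} for general bounded $u$ is precisely the open problem, and ``resumming the borderline contributions'' in the Duhamel expansion is a name for the difficulty, not a solution; (ii) the lower bound $\underline D(\lambda)>0$ in the paper's proved cases comes from the strict accretivity estimate~\eqref{eq:gap-gamma} on $P_2\mc{J}^{-1}$, which itself rests on the gap for $\wc{L}_{\vec 0}$ and the Floquet fibering --- neither of which exists for general ergodic $u$ --- so asserting that a ``linear-response-type argument using only positivity and the Markov structure'' suffices is a gap, not an argument; (iii) the $\lambda\to 0$ asymptotics in both regimes are heuristic renewal pictures (ballistic segments of length $\lambda^{-2}$, Mott-type hops at rate $\lambda^2$) with no mechanism supplied for the interchange of the weak-coupling and long-time limits, which you correctly note is the serious point.

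That said, your diagnosis of the structure of the problem is consistent with how the paper itself frames the conjecture: the authors motivate it exactly by the dichotomy $D\sim\lambda^{-2}$ (ballistic unperturbed dynamics, Theorem~\ref{thm:asym} here) versus $D\sim\lambda^2$ (localized unperturbed dynamics, \cite{Schenker:2015}), and the paper's own proof of Theorem~\ref{thm:genCLT} does run through the link between the Abel-averaged ballistic second moment~\eqref{eq:balli-gen} of $H_0+U$ and the $\lambda^{-2}$ divergence of $\vec D(\lambda)$ (via Lemmas~\ref{lem:limres} and~\ref{lem:Ablock}), which is the rigorous instance of your ``Green--Kubo-type time average of the ballistic velocity covariance.'' If you want to make progress, the place to start is the one you identified: a substitute for the spectral gap of $\wc{L}_{\vec 0}$ on $\wc{H}_0^\perp$ when the Floquet decomposition of Section~\ref{sec:VFourier} is unavailable. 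Until that is supplied, the conjecture remains unproved.
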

\begin{remark}
1) Similar conjectures can be made for the general equations which will be introduced in Section \ref{sec:gen}. 2) More generally, if the unperturbed equation has transport exponent $\rho\in[0,2]$, then we expect $\underline{D}(\lambda),\overline{D}(\lambda)\sim O({\lambda^{2-2\rho}})$.  3) Here, we also want to bring reader's attention to the recent work \cite{KLRS17}, though not directly relevant to our current paper, on the ballistic transport for the Schr\"odinger operator with limit-periodic or quasi-periodic potential in dimension two. 
\end{remark}

If the unperturbed part is given by the almost Mathieu operators with parameters $g\in\R,\theta,\alpha\in[0,1]$, we have the following  AMO-Markovian equation on $\ell^2(\Z)$:
\begin{align}\label{eq:AMO}
 \im \partial_{t} \psi_{t}(x) \ = \psi_t(x+1)+\psi_t(x-1)+2g\cos2\pi(\theta+x\alpha) \psi_{t}(x)+ {\lambda} v(\omega_{x}(t)) \psi_{t}(x).
\end{align}
\begin{conj}\label{conj:AMO}
For almost every $\theta,\alpha\in [0,1]$, the AMO-Markovian equation has a diffusion constant $D(g,\lambda)\in(0,\infty)$ which is a smooth function for all $(g,\lambda)\in \R\times \R^+$. Moreover, $D(g,\lambda)\sim O({\lambda^2})$ for all $|g|>1$ and $D(g,\lambda)\sim O({\lambda^{-2}})$ for all $|g|<1$. 
\end{conj}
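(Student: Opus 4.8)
Equation~\eqref{eq:AMO} is the special case of \eqref{eq:genSE} in which $u(x)=2g\cos2\pi(\theta+x\alpha)$ is quasi-periodic and $V$ is the flip process, so the natural plan is to push the augmented-space machinery of the present paper past the $\vec p$-periodic setting to the quasi-periodic hull. Concretely, the hull of $u$ is the circle $\T$, on which the lattice translation acts by the irrational rotation $R_\alpha:\vartheta\mapsto\vartheta+\alpha$; one adjoins this to the construction of Section~\ref{sec:aug-space}, replacing the finite factor $\Z/\vec p\Z$ that organizes the periodic case by $L^2(\T)$ carrying the $R_\alpha$-action. This yields an augmented space $\mc H=\ell^2(\Z)\otimes\ell^2(\Z)\otimes L^2(\Om)\otimes L^2(\T)$ and a lifted generator $\mc L=\mc L_{g,\lam}$ (the operator induced on the augmented space, in the language of Section~\ref{sec:aug-space}) whose Fourier transform in the position-difference variable, $\mc L_\k$, controls the evolution as in Theorem~\ref{thm:genCLT}. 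As there, existence of a finite, positive $D(g,\lam)$ reduces to (i) a spectral-gap bound for $\mc L$ on the orthogonal complement of the conserved quantum-probability direction, and (ii) analytic perturbation theory for the top eigenvalue $\mu(\k)$ of $\mc L_\k$, with $D(g,\lam)$ a constant times the curvature $-\partial_\k^2\mu(\k)\big|_{\k=0}$. Smoothness of $D$ in $(g,\lam)$ then follows for free: $\mc L_\k$ is entire in $(g,\lam,\k)$, so once the gap is shown for all $(g,\lam)\in\R\times\R^+$, $\mu(\k)$ is an isolated simple eigenvalue depending analytically on all parameters.

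The crux is the spectral gap, i.e.\ the analogue of Lemma~\ref{lem:L0gap}. The dissipative part $B$ coming from the Poisson flip process has a gap on its own, while the hopping and quasi-periodic multiplication enter $\mc L$ as a skew-adjoint bounded perturbation, so for each fixed $\lam>0$ one wants $\mathrm{gap}(\mc L)\ge c(g,\lam)>0$ via a block decomposition over the spectral decomposition of the hull. In the periodic case that decomposition has finitely many blocks and Lemma~\ref{lem:L0gap} gives a uniform gap; in the quasi-periodic case the blocks are indexed by the dual $\Z$ of $\T$ (Fourier modes $e^{2\pi i n\vartheta}$), translation acts on mode $n$ by the phase $e^{2\pi i n\alpha}$, and there is no gap coming from the hull alone --- this is exactly why the ``Equivalence of Twisted Shifts'' hypothesis of \cite{Schenker:2015} fails here. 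One must instead show that $B$ dominates, i.e.\ that the hull-off-diagonal Hamiltonian terms cannot fill the dissipative gap, uniformly in the hull variable. I would attempt this by a Schur-complement reduction to the conserved sector combined with the small-divisor structure $\|n\alpha\|_{\R/\Z}$, using a Diophantine condition on $\alpha$ (valid for a.e.\ $\alpha$) to control the resolvent expansions; the output should be a gap $c(g,\lam)$ positive for every $\lam>0$ but degenerating as $\lam\to0$.

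For the asymptotics as $\lam\to0$ I would separate the two phases of the AMO by feeding in the known spectral/dynamical dichotomy for $\Delta+U$. When $|g|>1$, for a.e.\ $(\theta,\alpha)$ the operator $\Delta+U$ exhibits dynamical localization, hence bounded moments; the argument of \cite{Schenker:2015} should then transfer essentially verbatim to give $D(g,\lam)\sim c_+(g)\lam^2$, all transport being noise-induced and first order in the decoherence rate $\lam^2$. When $|g|<1$, for a.e.\ $(\theta,\alpha)$ the operator $\Delta+U$ has purely absolutely continuous spectrum and in fact ballistic transport (accessible via reducibility of the associated cocycle, in the circle of ideas around \cite{KLRS17}), so the mechanism behind Theorem~\ref{thm:asym} applies: the unperturbed ballistic spreading $|x|\sim t$ persists up to the mean free time $\sim\lam^{-2}$, after which decoherence renders the motion diffusive with $D(g,\lam)\sim c_-(g)\lam^{-2}$. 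Implementing this requires redoing the ballistic-to-diffusive comparison of Section~\ref{sec:MainResults} with $\Delta+U$ quasi-periodic, the Floquet--Bloch eigenfunctions of the periodic case being replaced by the (a.e.) Bloch-wave solutions furnished by reducibility.

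The main obstacle --- and the reason this is only a conjecture --- is the quasi-periodic spectral gap in the presence of small divisors: unlike the finite hull of the periodic case or the Bernoulli hull of the i.i.d.\ case, the irrational-rotation hull is not mixing, the block decomposition has infinitely many nearly-resonant blocks, and extracting a $\lam$-explicit lower bound on $\mathrm{gap}(\mc L)$ that survives the small divisors $\|n\alpha\|$ --- and then upgrading it to hold for a.e.\ $\alpha$ --- is genuinely delicate. A secondary difficulty is the critical coupling $|g|=1$: there $\Delta+U$ has purely singular continuous spectrum and anomalous transport, so the claimed smoothness of $D(g,\lam)$ in $g$ across $|g|=1$ for fixed $\lam>0$ asserts that the noise quantitatively regularizes the critical dynamics --- plausible, but requiring a non-perturbative bound on $\mc L_\k$ that does not degenerate at $g=\pm1$, which the phase-dependent block analysis above does not obviously deliver.
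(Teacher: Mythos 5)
The statement you are addressing is Conjecture \ref{conj:AMO}; the paper offers no proof of it, and neither does your proposal. What you have written is a research program, not an argument: every step that would actually establish the result is deferred. The existence and positivity of $D(g,\lambda)$ hinges on the analogue of Lemma \ref{lem:L0gap} for the quasi-periodic hull, and you correctly observe that the block decomposition of Section \ref{sec:spe-analysis} degenerates there \textemdash\ the finite factor $\C^{\otimes\vec p}$ on which Corollary \ref{cor:KerK0} delivers the quantitative lower bound $c_0>0$ is replaced by $L^2(\T)$ with Fourier modes twisted by $e^{2\pi\im n\alpha}$, so the quantity playing the role of $c_0$ has infimum zero over the modes. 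Saying one ``would attempt this by a Schur-complement reduction \ldots{} using a Diophantine condition'' does not close this; the whole content of the conjecture is whether the dissipative gap of $B$ survives the accumulation of nearly-resonant blocks, and no mechanism is exhibited. Likewise, the claim that smoothness of $D$ in $(g,\lambda)$ ``follows for free'' contradicts your own later admission that the analysis does not obviously survive $|g|=1$: analytic dependence of the isolated eigenvalue $\mu(\vec k)$ on parameters requires a gap that is locally uniform in $(g,\lambda)$, including across the critical coupling where the unperturbed dynamics is anomalous, and that uniformity is precisely what is unproven.

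On the asymptotics, your proposed dichotomy (feed dynamical localization for $|g|>1$ into the argument of \cite{Schenker:2015}, and ballistic transport for $|g|<1$ into the mechanism of Theorem \ref{thm:asym}) is consistent with how the paper itself motivates Conjectures \ref{conj:gen} and \ref{conj:AMO}. But neither transfer is ``essentially verbatim'': the localization-based argument of \cite{Schenker:2015} relies on the Equivalence of Twisted Shifts assumption, which you note fails for quasi-periodic backgrounds, and the ballistic-to-diffusive comparison in Section \ref{sec:MainResults} uses the block structure (in particular the projections $\Pi_2$, $\wt\Pi$ and Lemma \ref{lem:Ablock}) built on the finite periodicity cell. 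Both would have to be rebuilt from scratch on the hull. In short: the strategy you outline is the natural one and matches the heuristic the authors themselves give, but the proposal contains no proof of any of the three assertions (existence/positivity, smoothness, $\lambda\to0$ asymptotics), and the statement remains a conjecture.
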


%
%
\section{General assumptions and the main result}\label{sec:gen}
We study a more general class of equations with hopping terms other than nearest neighbor and a perturbing potential $V$ that is not necessarily the ``flip process." More precisely, we shall consider equation \eqref{eq:genSE} in the form
\begin{align}\label{eq:genE}
\im \partial_{t}\psi_{t}(x)=H_0\psi_{t}(x)+ u(x)\psi_{t}(x)+ {\lambda} V_\omega(x)\psi_{t}(x)
\end{align}
Here $u$ is the real-valued, $\vec{p}$-periodic potential as in \eqref{eq:U-periodic} for some $\vec{p} \in \Z^d_{>0}$; $H_0$ is a self-adjoint, short-ranged, translation invariant hopping operator with non-zero hopping along a set of vectors that generate $\Z^d$; $V_{\omega(t)}$ is time-dependent random potential that fluctuates according to a stationary Markov process $\omega(t)$; and $\lambda \geq 0$ is a coupling constant used to set the strength of the disorder. These assumptions will be made precise below. Some assumptions are similar to those in \cite{KS2009} and \cite{Schenker:2015}. They are repeated here for convenience. In particular, our assumptions on the probability space and Markov dynamics remain largely unchanged.

\subsection{Assumptions}\label{sec:ass}
\begin{ass}[Probability space]
	Throughout, let $(\Omega, \mu)$ be a probability space, on which the additive group $\Z^d$ acts through a collection of $\mu$-measure preserving maps.  That is, for each $x \in \Z^d$ there is a \mbox{$\mu$-measure} preserving map, $\tau_{x} : \Omega \to \Omega$, where $\tau_0$ is the identity map and $\tau_x \circ \tau_y = \tau_{x+y}$ for each $x,y \in \Z^d$.  We refer to the maps $\tau_x$, $x\in \Z^d$ as ``disorder translations.''
\end{ass}

\begin{ass}[Markov dynamics]
	The space $\Omega$ is a compact Hausdorff space, $\mu$ is a Borel measure and for each $\alpha\in \Omega$ there is a probability measure  $\mathbb{P}_{\alpha}$  on the $\sigma$-algebra generated by Borel-cylinder subsets of the path space $\mc{P}(\Omega)=\Omega^{[0,\infty)}$.  Furthermore, the collection of these measures has the following properties
	\begin{enumerate}
		\item \emph{Right continuity of paths}: For each $\alpha\in \Omega$, with $\mathbb{P}_{\alpha}$ probability one, every path $t\mapsto \omega(t)$ is right continuous and has initial value $\omega(0)=\alpha$.
		\item \emph{Shift invariance in distribution}: For each $\alpha\in \Omega$ and $x\in \Z^d$, $\bb{P}_{\tau_x \alpha}  =  \bb{P}_{\alpha} \circ \mc{S}_x^{-1}$,
		where $\mc{S}_x(\{\omega(t)\}_{t\ge 0}) = \{ \tau_x \omega(t)\}_{t\ge 0}$ is the shift $\tau_x$ lifted to path space $\mc{P}(\Omega)$.
		\item \emph{Stationary Markov property}:  There is a filtration 
		$\{ \mc{F}_t \}_{t\ge 0}$ on the Borel $\sigma$-algebra of $\mc{P}(\Omega)$ such that $\omega(t)$ is $\mc{F}_t$ measurable and    
		$$ \bb{P}_{\alpha}\left ( \left \{ \omega(t+s) \right \}_{t\ge 0} \in \mc{E} \middle | \mc{F}_s \right )  =    \bb{P}_{\omega(s)}(\mc{E})  $$
		for any measurable $\mc{E} \subset \mc{P}(\Omega)$ and any $s>0$.
		\item \emph{Invariance of $\mu $}: For any Borel measurable $E\subset \Omega$ and each $t>0$,
		$$\int_\Omega \bb{P}_{\alpha} (\omega(t)\in E) \;\mu (\di \alpha) \ = \ \mu (E).$$
	\end{enumerate}
\end{ass} 

We use $\mathbb{E}_\alpha(\cdot)$ to denote averaging with respect to $\bb{P}_\alpha$ and $\Ev{\cdot}$ to denote to the combined average $\int_\Omega \mathbb{E}_\alpha(\cdot) \; \mu(\di \alpha)$ over the Markov paths and the initial value of the process.
Invariance of $\mu $ under the dynamics is equivalent to the identity  $\Eva{f(\omega(t))}  \ = \ \Eva{f(\omega(0))} $ for $f\in L^1(\Omega)$.  An important tool for studying Markov processes is conditioning on the value of a process at a given time.  The proper definition can be found in, e.g. \cite{Schenker:2015}. 
Conditioning on the value of the processes at $t=0$ determines the initial value:
$\Ev{\cdot  | \omega(0)=\alpha }   =   \bb{E}_\alpha (\cdot).$
To the process $\{ \omega(t)\}_{t\ge 0}$, there is associated a  Markov semigroup, obtained by averaging over the initial value conditioned on the value of the process at later times: 
$$ S_{t}f(\alpha) \ := \ \bb{E}   \left (  f(\omega(0))  | \omega(t)=\alpha \right ). $$
As is well known, $S_{t}$ is a strongly continuous contraction semi-group on $L^p( \Omega)$ for $1\le p < \infty$.
The semigroup $S_{t}$ has a generator
\begin{equation}\label{eq:Bdefn}
B f \ := \ \lim_{t \downarrow 0} \frac{1}{t} \left (f - S_{t} f \right ),
\end{equation}
defined on the domain $\mathcal{D}(B)$ where the right hand side exists in the $L^2$-norm. By the Lumer-Phillips theorem, $B$ is a \emph{maximally accretive operator}.  Note that $S_t\id = \id$ by definition, where $\id (\alpha)=1$ for all $\alpha \in \Omega  $. The invariance of $\mu$ under the process $\{\omega(t)\}_{t\ge 0}$ implies further that $S_{t}^\dagger \id  =\id $. It follows that 
$$L^2_0( \Omega ) \ := \ \setb{f\in L^2(\Omega)}{\int_\Omega  f(\alpha) \mu    (\di \alpha) \ = \ 0}$$
is invariant under the semi-group $S_t$ and its adjoint $S_t^\dagger.$ We assume that $B$ is sectorial and strictly dissipative on $L_0^2(\Omega)$. 
\begin{ass}[\emph{Sectoriality of $B$}]\label{ass:sectoriality}
	There are $b,\gamma \ge 0$  such that
	\begin{equation}\label{eq:sector}
	\abs{\Im \ipc{f}{Bf}} \ \le \ \gamma \Re \ipc{f}{B f} + b \norm{f}^2 
	\end{equation}
	for all $f\in \mathcal{D}(B)$.  Here $\ipc{f}{g} = \int_{ } \overline{f} g \di \mu $ denotes the inner product on $L^2( \Omega)$.
\end{ass}


\begin{ass}[\emph{Gap condition for $B$}]\label{ass:gap}  There is $T   >0$ such that 
	\begin{equation}\label{eq:gap} \Re \ipc{f}{B f} \ \ge \  \frac{1}{T}  \norm{f-\int_{\Omega } f \di \mu  }_{L^2( \Omega)}^2
	\end{equation}
	for all $f\in \mathcal{D}(B)$. 
\end{ass}
\begin{remark} 1) The \emph{resolvent} of the semigroup $\e^{-t B}$ is the operator valued analytic function
	$ R(z) \ := \ (B - z)^{-1} \ = \ \int_0^\infty \e^{tz} \e^{-tB} \di t,$
	which is defined and satisfies $\norm{R(z)} \le \frac{1}{\abs{\Re z}}$  when $\Re z <0$. Sectoriality is equivalent to the existence of a analytic continuation of $R(z)$ to $z\in\C \setminus K_{b,\gamma}$ with the bound $ 	\norm{R(z)}  \ \le \ {\dist^{-1}(z,K_{b,\gamma})}$
	where $K_{b,\gamma}$ is the sector $\{\Re z \ge   0\} \cap \{\abs{\Im z } \le b + \gamma \abs{\Re z}\}$ (see \cite[Theorem V.3.2]{Kato:1995sf}). In particular Assumption \ref{ass:sectoriality} holds (with $b=0$ and $\gamma=0$) if the Markov dynamics is reversible, in which case $B$ is self-adjoint.
2) The gap assumption implies that the restriction of $B$ to $L^2_0(\Omega)$ is strictly accretive, and thus that
$\norm{\left. S_t\right|_{L^2_0(\Omega)}}\le \ \e^{-\frac{t}{T}}$.
\end{remark}


\begin{ass}[Translation covariance, boundedness and non-degeneracy of the potential]
	The potentials  $V_{\omega}(x)$ appearing in the Schr\"odinger equation \eqref{eq:genSE} are given by 
	$V_{\omega}(x) \ = \  v(\tau_x\omega)$
	where  $v\in L^\infty(\Omega)$. We assume that $\|v\|_{\infty}=1$,  $\int_{\Omega} v(\omega) \mu  (\di \omega) = 0$, and $v$ is non-degenerate in the sense that there is $\chi >0$  such that
	\begin{equation}\label{eq:nondegV}
	\norm{B^{-1} (v(\tau_x \cdot)- v(\tau_y \cdot))}_{L^{2}(\Omega)} \ge \chi
	\end{equation}
	for all $x, y \in \Z^{d}$, $x \neq y$.
\end{ass}
\begin{remark}
Since the Markov process is translation invariant, $B$ commutes with the translations $T_{x}f(\alpha) = f(\tau_{x}\alpha)$ of $L^{2}(\Omega)$.  Thus \eqref{eq:nondegV} is equivalent to
\begin{equation} \label{eq:BinverseV}
\norm{B^{-1}(v(\tau_x \cdot)- v( \cdot))}_{L^{2}(\Omega)} \ge \chi.
\end{equation}
for all $x \in \Z^{d}$, $x \neq \vec 0$. The non-degeneracy  essentially amounts
to requiring that $B^{-1}(v\tau_x)$ are uniformly non-parallel to $B^{-1}(v)$ for $x\neq0$. In particular, the condition is trivially satisfied 
if for example if the processes $v(\tau_x\omega(t))$ and
$v(\omega(t))$ are independent for $x\neq0$, as in the ``flip process''.
\end{remark}

\begin{ass}[Translation invariance and non-degeneracy of the hopping terms]
	The hopping operator, $H_0$, on $\ell^2(\Z^d)$ is defined by 
	\begin{equation}	
	H_0 \psi(x) = \sum_{\xi \neq x} h(x-\xi)\psi(\xi).
	\end{equation} 
	Additionally, the hopping kernel $h : \Z^d \setminus \{\vec 0\} \to \C$ is 
	\begin{enumerate}
		
		\item Self-adjoint:
		\[
		h(-\xi)  = \overline{h(\xi)};
		\]
		\item Short range:
		\begin{align} \label{eq:h-range}
		\sum_{\xi \in \Z^d \setminus \{\vec 0\}} |\xi|^2|h(\xi)| < \infty;
		\end{align}		
		\item Non-degenerate:
		\begin{align}\label{eq:supp-h-nonden}
		{\rm span}_{\Z}\left({\rm supp}h\right)=\Z^d,
		\end{align}
		where $ {\rm supp}h=\left\{\xi\in\Z^d:\  h(\xi)\neq 0 \right\}
		$.
		\end{enumerate}
\end{ass}
	\begin{remark}
1)	It follows from (1) and (2) that $\wh{h}(\vec{k}) = \sum_{x} \e^{-\im \vec{k} \cdot x} h(x)$ is a real-valued $C^{2}$ function on the torus $[0,2\pi)^d$. In particular, $H_0$ is a bounded self-adjoint operator with $\norm{H_0}_{\ell^{2}(\Z^{d})\to  \ell^{2}(\Z^{d})} = \max_{\vec{k}} |\widehat h(\vec{k})|$ and
	\begin{align}
	\|{\wh{h}}\|_{\infty},
	\|{\wh{h}'}\|_{\infty},
	\|{\wh{h}''}\|_{\infty}\le \sum_{\xi \in \Z^d \setminus \{\vec 0\}} (1+|\xi|^2)|h(\xi)| < \infty. \label{eq:h-norm}
	\end{align}
	2)	It is natural to assume that ${\rm supp}\,h$ can generate the entire $\Z^d$ lattice, otherwise the system can always be reduced a direct sum of systems over several sub-lattices. 
\end{remark}

Below we will need the following simple consequence of the non-degeneracy of $h$:
\begin{prop}For each non-zero $\vec{k} \in \R^d$, 
	\begin{align}\label{eq:h-nonden}
	\sum_{\xi \in \Z^d} |\vec{k}\cdot \xi|^2 |h(\xi)|^2 > 0.
	\end{align}
\end{prop}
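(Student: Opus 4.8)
The plan is to prove the contrapositive-flavored statement directly from the non-degeneracy hypothesis \eqref{eq:supp-h-nonden}. Fix a non-zero $\vec k \in \R^d$. Suppose, for contradiction, that $\sum_{\xi \in \Z^d} |\vec k \cdot \xi|^2 |h(\xi)|^2 = 0$. Since each summand is non-negative, this forces $|\vec k \cdot \xi|^2 |h(\xi)|^2 = 0$ for every $\xi \in \Z^d$; equivalently, $\vec k \cdot \xi = 0$ for every $\xi \in \supp h$. Thus $\supp h$ is contained in the hyperplane $\vec k^{\perp} = \{ \eta \in \R^d : \vec k \cdot \eta = 0\}$, which is a proper linear subspace of $\R^d$ because $\vec k \neq \vec 0$.

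The next step is to derive a contradiction with \eqref{eq:supp-h-nonden}. Since $\vec k^{\perp}$ is a subspace (in particular closed under integer linear combinations), containment $\supp h \subseteq \vec k^{\perp}$ gives ${\rm span}_{\Z}(\supp h) \subseteq \vec k^{\perp} \cap \Z^d$. But $\vec k^{\perp} \cap \Z^d$ cannot be all of $\Z^d$: pick any $\xi_0 \in \Z^d$ with $\vec k \cdot \xi_0 \neq 0$ — such a $\xi_0$ exists because $\{e_1, \dots, e_d\}$ spans $\R^d$, so $\vec k \cdot e_j \neq 0$ for at least one $j$, and we may take $\xi_0 = e_j$. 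Then $\xi_0 \notin {\rm span}_{\Z}(\supp h)$, contradicting \eqref{eq:supp-h-nonden}. Hence the assumed vanishing is impossible, and \eqref{eq:h-nonden} holds. Note that the sum is finite (indeed $|h(\xi)|^2 \le |h(\xi)|$ for $|h(\xi)| \le 1$, and the short-range bound \eqref{eq:h-range} controls it), so there is no convergence issue to address; the statement is purely about strict positivity.

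There is essentially no obstacle here: the proof is a one-line logical manipulation once one observes that a sum of non-negative terms vanishes iff each term does, together with the elementary fact that a proper subspace of $\R^d$ misses some standard basis vector. The only point requiring the slightest care is distinguishing $\supp h \subseteq \vec k^{\perp}$ (as a set of lattice points) from ${\rm span}_{\Z}(\supp h) \subseteq \vec k^{\perp}$, but since $\vec k^{\perp}$ is closed under addition and negation this implication is immediate. I would present the argument exactly as above, phrased as a short proof by contradiction, without invoking any Fourier-analytic machinery, since the statement is combinatorial/linear-algebraic in nature and the subsequent use of \eqref{eq:h-nonden} in the paper only needs the bare strict inequality.
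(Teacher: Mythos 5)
Your proof is correct and follows essentially the same route as the paper's: assume the sum vanishes, conclude $\vec k\cdot\xi=0$ on $\supp h$, and contradict the non-degeneracy condition \eqref{eq:supp-h-nonden}. You simply spell out the final contradiction (a proper subspace cannot contain a $\Z$-spanning set of $\Z^d$) in more detail than the paper does.
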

\begin{proof}
	Suppose on the contrary that $\sum_{\xi \in \Z^d} |\vec{k}\cdot \xi|^2 |h(\xi)|^2=0$ for some $\vec{k}\neq \vec{0}$.  It follows that $\vec{k}\cdot \xi=0$ for all $\xi \in {\rm supp} \, h$, violating the non-degeneracy of $h$.
\end{proof}

\subsection{General result}
The main result is the following
\begin{thm}[Central limit theorem]\label{thm:genCLT} 
	For any periodic potential $u$ and ${\lambda}>0$, there is a positive definite $d\times d$ matrix $\vec{D}=\vec{D}(\lambda,u)$  such that for any bounded continuous function $f:\R^d \rightarrow \R$ and any normalized $\psi_0\in \ell^2(\Z^d)$ we have
	\begin{equation}\label{eq:genCLT}
	\lim_{t\rightarrow \infty} \sum_{x\in \Z^d} f\left ( \frac{x}{\sqrt{t}}  \right ) \Ev{\abs{\psi_t(x)}^2} \ = \ \int_{\R^d} f(\vec{r}) \left ( \frac{1}{ 2\pi   }\right )^{\frac{d}{2}} \e^{-\frac{1}{2 }\ipc{\vec{r}}{\vec{D}^{-1} \vec{r}}} \di \vec{r},
	\end{equation}
		where $\psi_t(x)$ is the solution to eq. \eqref{eq:genE}.
	If furthermore $\sum_{x} (1+|x|^2) \abs{\psi_0(x)}^2 < \infty$, then diffusive scaling  eq.\ \eqref{eq:DC} holds with the diffusion constant 
	\begin{align}\label{eq:genDS}
	D(\lambda) \ = \ \lim_{t\rightarrow \infty} \frac{1}{t}\sum_{x\in \Z^d} \abs{x}^2 \Ev{\abs{\psi_t(x)}^2} \ = \ \tr \vec{D}(\lambda).
	\end{align}
	Moreover, eq.\ \eqref{eq:genCLT} extends to quadratically bounded continuous $f$ with $\sup_x (1+|x|^2)^{-1} \abs{f(x)} < \infty$.

	Assume further that 
	\begin{align}\label{eq:balli-gen}
\lim_{T\to\infty}\,\frac{2}{T^3}\int_0^\infty \e^{-\frac{2t}{T}}\, \sum_{x\in \Z^d} x_j^2 { \abs{\ipc{\delta_x}{\e^{-\im t (H_0+U)}\delta_0}}^2}\, {\rm d}t>0,\ \ j=1\cdots,d,
	\end{align}
	then there is a  positive definite $d\times d$ matrix $\vec D^0$ such that 
	\begin{align}\label{eq:genAsym}
\vec{D}(\lambda)  =  \frac{1}{\lambda^2} \left(\vec{D}^0 + o(1)\right)\ 
	\ {\rm and}\ \ 
	D(\lambda)=\tr\vec{D}(\lambda) \ = \ \frac{1}{\lambda^2} \left(\tr\vec{D}^0 +o(1)\right)
	\quad \text{as } \lambda \rightarrow 0.
	\end{align}
\end{thm}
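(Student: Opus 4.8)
\textbf{Proof proposal for Theorem \ref{thm:genCLT}.}

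The plan is to transfer the entire problem to the ``augmented space'' $\mathcal H := \ell^2(\Z^d) \otimes L^2(\Omega)$ via the Feynman--Kac representation of Pillet, following the template of \cite{KS2009,Schenker:2015}. On augmented space, the averaged square amplitude $\Ev{\abs{\psi_t(x)}^2}$ becomes a matrix element of a contraction semigroup $\e^{-t\mathcal L}$, where $\mathcal L = \mathcal L_0 + \im \lambda \mathcal A$ (schematically) with $\mathcal L_0$ encoding the Markov generator $B$ acting on the ``diagonal'' and $\mathcal A$ the commutator with the potential; the deterministic part $H_0 + U$ enters through a free evolution that is diagonalized by a partial Bloch--Floquet / Fourier transform in $x$. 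The first step is to set up this reduction carefully, handling the \emph{periodic} potential $U$: because $u$ is $\vec p$-periodic rather than constant, the Fourier variable dual to $x$ lives on a torus but the fiber carries the extra $\prod_j p_j$-dimensional structure of the period cell (this is the ``more subtle Fourier analysis'' flagged in the introduction, Sec.~\ref{sec:VFourier}). After this, the generator $\mathcal L$ fibers over the total momentum $\vec k \in \T^d$ into operators $\mathcal L(\vec k)$ acting on $L^2(\Omega) \otimes \C^{\prod p_j}$ (or a related fiber), and the CLT reduces to a second-order perturbative expansion of $\mathcal L(\vec k)$ near $\vec k = 0$.

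Next I would establish the spectral-gap input, Lemma \ref{lem:L0gap}: the fiber operator $\mathcal L(0)$ at zero momentum has $0$ as a simple eigenvalue with eigenvector the constant function $\id \otimes (\text{uniform})$, and the rest of the spectrum is bounded away from $0$ uniformly. This rests on Assumptions \ref{ass:sectoriality} and \ref{ass:gap} (sectoriality and gap for $B$) together with the non-degeneracy of the hopping (Prop.~\ref{eq:h-nonden}) and of the potential \eqref{eq:nondegV}; the block decomposition with respect to the period-cell structure and the hopping terms is where the periodic case genuinely differs from \cite{KS2009}, since the periodic part and the hopping no longer commute with the same shifts. Granting the gap, analytic (Kato) perturbation theory gives, for $\vec k$ small, a simple eigenvalue $E(\vec k) = \im\, \vec b \cdot \vec k + \tfrac12 \ipc{\vec k}{\vec D(\lambda)\vec k} + o(|\vec k|^2)$ of $\mathcal L(\vec k)$ with positive-definite Hessian $\vec D(\lambda)$; the first-order coefficient $\vec b$ vanishes by a symmetry/parity argument (zero-mean of $v$, self-adjointness of $h$), and the spectral projection bound plus the gap on the complement yields that $\e^{-t\mathcal L(\vec k/\sqrt t)}$ applied to the vacuum converges to $\e^{-\tfrac12 \ipc{\vec k}{\vec D \vec k}}$ times the vacuum. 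Fourier inversion then gives the Gaussian limit \eqref{eq:genCLT}; the upgrade to quadratically bounded $f$ and to the moment identity \eqref{eq:genDS} follows from a uniform-in-$t$ second-moment bound $\sum_x |x|^2 \Ev{\abs{\psi_t(x)}^2} \lesssim t$, obtained by differentiating the fiber resolvent twice in $\vec k$ and using the short-range assumption \eqref{eq:h-range} on $h$ together with $\sum_x(1+|x|^2)\abs{\psi_0(x)}^2 < \infty$.

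For the weak-coupling asymptotics \eqref{eq:genAsym}, I would rescale. Writing $\mathcal L(\vec k) = \mathcal L_0 + \im\lambda \mathcal A(\vec k) + (\text{free part})$, one sees that the Hessian $\vec D(\lambda)$ is dominated, as $\lambda \to 0$, by the contribution in which the free periodic evolution $\e^{-\im t(H_0+U)}$ runs for a time of order $\lambda^{-2}$ before the first effective scattering. Concretely, second-order perturbation theory in $\lambda$ applied to $E(\vec k)$, combined with the resolvent identity $R(z) = \int_0^\infty \e^{tz}\e^{-tB}\,\d t$ translated to the full generator, expresses $\lambda^2 \vec D(\lambda)$ as a Laplace-type average of the free position spread: precisely the quantity appearing in hypothesis \eqref{eq:balli-gen}, $\tfrac{2}{T^3}\int_0^\infty \e^{-2t/T}\sum_x x_j^2 \abs{\ipc{\delta_x}{\e^{-\im t(H_0+U)}\delta_0}}^2\,\d t$ with $T = T(\lambda) \to \infty$ as $\lambda \to 0$. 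The hypothesis \eqref{eq:balli-gen} guarantees this has a positive limit $\vec D^0_{jj}$, giving $\vec D(\lambda) = \lambda^{-2}(\vec D^0 + o(1))$. I expect the \emph{main obstacle} to be precisely this last step: establishing the clean identification of the $\lambda \to 0$ limit of $\lambda^2\vec D(\lambda)$ with the Laplace-averaged ballistic spread, uniformly controlling the error terms in the perturbation expansion (which involve the resolvent of $\mathcal L_0$ restricted to the gap, whose norm blows up like $\lambda^{-2}$ at exactly the relevant scale), and verifying that the off-diagonal, higher-order-in-$\lambda$ contributions are genuinely $o(\lambda^{-2})$ — this is the ``new approach'' and the ``interesting observation'' the introduction promises, and it is considerably more delicate than the corresponding step in \cite{KS2009} where the formula fell out directly from the diffusion proof.
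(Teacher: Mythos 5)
Your outline of the reduction to augmented space, the vector-valued Floquet transform with fibers $\C^{\otimes\vec p}$, the spectral gap for the fiber operator at $\vec k=\vec 0$, and the extraction of \eqref{eq:genCLT} and \eqref{eq:genDS} from the Hessian of the perturbed eigenvalue $E(\vec k)$ matches the paper's argument in Sections \ref{sec:aug-space}--\ref{sec:MainResults} (one small correction: the augmented space carries \emph{two} spatial indices, $L^2(\Omega;\mathcal{HS}(\Z^d))$, since it is the density matrix that is propagated; only after the Floquet transform does one land on $\ell^2(\Z^d)\otimes\C^{\otimes\vec p}\otimes L^2(\Omega)$).

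The genuine gap is in your treatment of \eqref{eq:genAsym}. You propose to identify $\lim_{\lambda\to0}\lambda^{2}\vec D(\lambda)$ with the Laplace-averaged ballistic spread of the free evolution appearing in \eqref{eq:balli-gen}, ``with $T=T(\lambda)\to\infty$.'' That identification is false: the limit $\vec D^0$ necessarily depends on the noise (on $v$, on the Markov generator $B$ and its gap), whereas the quantity in \eqref{eq:balli-gen} is a functional of $H_0+U$ alone. In the paper, the Hessian formula \eqref{eq:Eij} expresses $\partial_i\partial_jE(\vec 0)$ through $P_2\mc{J}^{-1}$, whose relevant block is $(\Pi_2\Gamma_2\Pi_2)^{-1}$ with $\Pi_2\Gamma_2\Pi_2=\im\,\Pi_2(\wc{K}_{\vec 0}+\wc{U})\Pi_2+\lambda^{2}R_\lambda$ (see \eqref{eq:Gamma2}); applying the limiting-resolvent Lemma \ref{lem:limres} with $\eta=\lambda^{2}$ yields $\lambda^{2}\vec D(\lambda)\to\vec D^0$, where $\vec D^0_{ij}$ is built from $\bigl(\wt{\Pi} R_0\wt{\Pi}\bigr)^{-1}$ acting on the kernel $\wt{\Pi}$ of the free fiber operator --- an object containing $\wc{V}$ and $B$, not the free dynamics alone. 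The ballistic hypothesis \eqref{eq:balli-gen} enters only to certify non-degeneracy: it shows $\Pi\,\partial_j\wc{K}_{\vec 0}\varphi_{\vec 0}\neq0$, which via the new block lemma (Lemma \ref{lem:Ablock}) is equivalent to $\wt{\Pi}\,\partial_j\wc{K}_{\vec 0}\varphi_{\vec 0}\neq0$ and hence to positive definiteness of $\vec D^0$. Without this separation --- positivity coming from ballistic motion, the \emph{value} of the limit coming from the resolvent of the noisy generator --- your plan would either prove a false formula for $\vec D^0$ or stall at showing the limit is nonzero.
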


\begin{remark}
1) In the case with the short range hoping $H_0$ and periodic $U$, the strong limit of all the $j$-th velocity operators $\lim_tt^{-1}\, X_j(\psi_t)$ always exist, which implies the existence of the limit in \eqref{eq:balli-gen}. We say $H_0+U$ has ballistic motion if the limit in \eqref{eq:balli-gen} is positive.  2) $\delta_{\vec 0}$ in  \eqref{eq:balli-gen} can be replaced by any $\psi_0$ with compact support. 3) There always exists a \emph{semi}-positive definite $d\times d$ matrix $\vec D^0$ such that \eqref{eq:genAsym} holds regardless of \eqref{eq:balli-gen}. If \eqref{eq:balli-gen} is true for $j\in S$ with $S\subset \{1,2,\cdots,d\}$, then the restriction of $\vec D^0$  on $S\times S$ is positive definite, and we still have $	D(\lambda)\sim \lambda^{-2}$ since $\tr\vec{D}^0>0$. 
\end{remark}


\section{Augmented space analysis}\label{sec:aug-space}
\subsection{The Markov semigroup on  augmented spaces and the Pillet-Feynman-Kac formula} As in the works \cite{KS2009,Schenker:2015}, our analysis of the Schr\"odinger equation eq.\ \eqref{eq:genE} is based on a formula of Pillet \cite{Pillet:1985oq} for $\E (\rho_t)$, where $\rho_t(x,y)  =  \psi_t(x)\overline{{\psi_t}({y})}$ is the density matrix  corresponding to a solution $\psi_t$ to eq.\ \eqref{eq:genE}. Pillet's formula relates $\E (\rho_t)$ to matrix elements of a contraction semi-group on the ``augmented space'' 
\begin{equation}\label{eq:augmented}
	{ \mathcal  H} \ := \ L^2(\Omega;{ \mathcal  HS}(\Z^d) ) ,
\end{equation}
where  ${ \mathcal  HS}(\Z^d)$ denotes the Hilbert-Schmidt ideal in the bounded operators on $\ell^2(\Z^d)$.

The term ``augmented space'' refers to a space of functions obtained by ``augmenting'' functions defined on $X=\Z^d$ or $X=\Z^d\times\Z^d$ by allowing dependence on the disorder $\omega\in \Omega$. More specifically, it refers spaces of the form
\begin{defn}[Definition 3.1 of \cite{Schenker:2015}]
	Let $(\mc{B}(X), \|\cdot\|_{\mc{B}(X)})$ be a Banach space of functions on $X$ whose norm satisfies
	\begin{enumerate}
		\item If $g \in \mc{B}(X)$ and $0 \leq |f(x)| \leq |g(x)|$ for every $x \in X$, then $f \in \mc{B}(X)$ and $\|f\|_{\mc{B}(X)} \leq \|g\|_{\mc{B}(X)}$. 
		\item For every $x \in X$, the evaluation $x \mapsto f(x)$ is a continuous linear functional on $\mc{B}(X)$. 
	\end{enumerate}
	For $p \geq 1$, the \textbf{augmented space} $\mc{B}^p(X \times \Omega)$ is the set of maps $F: X \times \Omega \to \C$ such that $\|F(x,\cdot)\|_{L^p(\Omega)} \in \mc{B}(X)$. 
\end{defn}
A general theory of such spaces is developed in \cite{Schenker:2015}. In particular, it is shown there that $\mc{B}^p(X\times \Omega)$ is a Banach space under the norm 
\[\label{eq:Bpin}
\|F\|_{\mc{B}^p(X \times \Omega)} :=  \left\|\left( \int_\Omega |F(x, \omega)|^p \, \mu(d\omega) \right)^{\frac{1}{p}} \right\|_{\mc{B}(X)},\]
with
$\norm{F}_{\mc{B}^p(X\times \Omega)} \le \left (\int_\Omega \norm{F(\cdot,\omega)}^p\mu(\di x)\right )^{\frac{1}{p}}$ \cite[Prop. 3.1]{Schenker:2015}.   It follows that $L^p(\Omega;\mc{B})\subset \mc{B}^p(X\times \Omega)$, although in general equality may not hold.  For $\mc{B}(X)= \ell^p(X)$ and $1\le q\le \infty$, we denote $\mc{B}^{q}(X)$ by $\ell^{p;q}(X).$   Then, for $1\le p <\infty$, 
$$\ell^{p;p}(X\times \Omega)= L^p(\Omega;\ell^p(X))= L^p(X\times \Omega),$$
where we take the product measure $\text{Counting Measure}\times \mu$ on $X\times \Omega$ \cite[Prop 3.2]{Schenker:2015}.  In particular, $\ell^{2;2}(X\times \Omega)$ is a Hilbert space with inner product
$$ \langle F , \ G \rangle \ = \ \sum_{x\in X} \int_{\Omega} \overline{F({x, \omega})} G({x, \omega}) \mu({\di \omega}).$$
Another space that will play an important role below is $\ell^{\infty;1}(X\times \Omega)$ which is the space of maps with
$$\norm{F}_{\ell^{\infty;1}} \ := \ \sup_{x\in X} \int_\Omega  \abs{F(x,\omega)} \mu(\di \omega) \ < \ \infty.$$

Returning now to ${ \mathcal  H} =  L^2(\Omega;{ \mathcal  HS}(\Z^d) )$, we note that we may think of an element $F\in { \mathcal  H}$  as a $\bb{C}$-valued map on
\begin{equation}\label{eq:Mdefn}
M \ := \ \Z^d \times \Z^d \times \Omega,
\end{equation} 
via the identification
\begin{equation}\label{eq:Fidentify} F({x, y,\omega}) \ := \ \langle{\delta_x},{F(\omega) \delta_y} \rangle .\end{equation}
It follows from \cite[Prop. 3.2]{Schenker:2015} that
\[\label{eq:calH} { \mathcal  H} \ = \ \ell^{2;2}(\Z^d \times \Z^d \times \Omega) \ = \ L^2(M),
\]  
provided $M$ is given the product measure $m = \left(\mathrm{counting} \ \mathrm{measure}\ \mathrm{on}\ \Z^d\times \Z^d\right)\times \mu $. 

We define operators $\mc{K}$, $\mc{U}$ and $\mc{V}$ that lift the commutators with  $H_0$, $U$ and $V_{\omega}$ to ${ \mathcal  H}$:
\begin{multline}
\mc{K}F(\omega) \  := \ \com{H_0}{F(\omega)} ,\quad \mc{U}F(\omega) \ := \ \com{U}{F(\omega)}, \\ \  \text{and} \quad \mc{V}F( \omega) \ := \ \com{V_{\omega}}{F(\omega)}.\label{eq:KVdefn}
\end{multline}
The following proposition follows immediately from eq.\ \eqref{eq:KVdefn}.
\begin{prop} The operators $\mc{K}$, $\mc{U}$ and $\mc{V}$ are self-adjoint, bounded and are given by the following explicit expressions
	\begin{align}\label{eq:Kform}
	\mc{K} F(x,y,\omega) \ =& \ \sum_{\xi\neq\vec 0} h(\xi)\left[F(x-\xi, y, \omega) - F(x, y- \xi, \omega) \right], \end{align}
	\begin{equation}\label{eq:Uform}
	\mc{U}F(x,y,\omega) \ = \ \left [ u(x)-u(y)\right ] F(x,y,\omega)
	\end{equation}
	and
	\begin{equation}\label{eq:Vform}\mc{V} F(x,y,\omega) \ = \ \left [ v(\tau_x\omega) -v(\tau_y\omega) \right ]F(x,y,\omega), \end{equation}
	for any $F\in L^2(M)$.
\end{prop}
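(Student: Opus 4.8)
The statement to prove is the Proposition giving the explicit formulas \eqref{eq:Kform}, \eqref{eq:Uform}, \eqref{eq:Vform} for $\mc{K}$, $\mc{U}$, $\mc{V}$, together with the claim that these operators are self-adjoint and bounded.

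The plan is to work directly from the definitions in \eqref{eq:KVdefn} and the identification \eqref{eq:Fidentify} of an element $F\in\mc{H}$ with a $\C$-valued function $F(x,y,\omega)=\langle\delta_x,F(\omega)\delta_y\rangle$ on $M=\Z^d\times\Z^d\times\Omega$. For each fixed $\omega$, $F(\omega)$ is a Hilbert–Schmidt operator on $\ell^2(\Z^d)$, and the maps $H_0$, $U$, $V_\omega$ are bounded self-adjoint operators on $\ell^2(\Z^d)$; hence the commutators $[H_0,F(\omega)]$, $[U,F(\omega)]$, $[V_\omega,F(\omega)]$ are again Hilbert–Schmidt, with $\norm{[A,F(\omega)]}_{\mathcal{HS}}\le 2\norm{A}\,\norm{F(\omega)}_{\mathcal{HS}}$. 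Integrating the square of this bound over $\Omega$ gives $\norm{\mc{K}F},\norm{\mc{U}F},\norm{\mc{V}F}\le 2\norm{H_0}\,\norm{F}$, etc., so all three operators are bounded on $\mc{H}$ (with $\norm{U}=\norm{u}_\infty$ and $\norm{V_\omega}=\norm{v}_\infty=1$ finite because $u$ is periodic, hence bounded, and $v\in L^\infty$).

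Next I would compute the kernels. For $\mc{U}$ and $\mc{V}$ this is immediate: since $U$ and $V_\omega$ act as multiplication operators, $\langle\delta_x,[U,F(\omega)]\delta_y\rangle = u(x)\langle\delta_x,F(\omega)\delta_y\rangle - \langle\delta_x,F(\omega)\delta_y\rangle u(y) = (u(x)-u(y))F(x,y,\omega)$, which is \eqref{eq:Uform}, and likewise $\langle\delta_x,[V_\omega,F(\omega)]\delta_y\rangle = (v(\tau_x\omega)-v(\tau_y\omega))F(x,y,\omega)$ using $V_\omega(x)=v(\tau_x\omega)$, giving \eqref{eq:Vform}. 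For $\mc{K}$ I would expand $H_0$ in its hopping kernel: $\langle\delta_x,H_0 F(\omega)\delta_y\rangle = \sum_{\xi\ne\vec0}h(\xi)\langle\delta_{x-\xi},F(\omega)\delta_y\rangle$ (using $\langle\delta_x,H_0\phi\rangle = \sum_{\xi}h(\xi)\phi(x-\xi)$ together with self-adjointness $h(-\xi)=\overline{h(\xi)}$), and $\langle\delta_x,F(\omega)H_0\delta_y\rangle = \sum_{\xi\ne\vec0}h(\xi)\langle\delta_x,F(\omega)\delta_{y-\xi}\rangle$; subtracting yields \eqref{eq:Kform}. One should double-check the index bookkeeping in the $H_0 F$ term — writing $(H_0\phi)(x)=\sum_{\xi}h(x-\xi)\phi(\xi)=\sum_{\eta}h(\eta)\phi(x-\eta)$ and applying this to $\phi=F(\omega)\delta_y$ evaluated at $x$ gives exactly $\sum_{\eta\ne\vec0}h(\eta)F(x-\eta,y,\omega)$, confirming the sign and shift.

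Finally, self-adjointness of $\mc{K},\mc{U},\mc{V}$ on $\mc{H}$ follows in two equivalent ways, and I would state the cleaner one: for bounded self-adjoint $A$ on $\ell^2(\Z^d)$, the map $F(\omega)\mapsto[A,F(\omega)]$ is minus the adjoint of $F(\omega)\mapsto\overline{[A,F(\omega)]}$ — more precisely, on the Hilbert–Schmidt ideal with inner product $\langle S,T\rangle_{\mathcal{HS}}=\Tr(S^*T)$ one has $\langle[A,S],T\rangle_{\mathcal{HS}} = \Tr(S^*A^*T)-\Tr(A^*S^*T) = \Tr(S^*A T) - \Tr(S^* T A) = \langle S,[A,T]\rangle_{\mathcal{HS}}$ using cyclicity of the trace and $A^*=A$; integrating over $\Omega$ gives $\langle\mc{K}F,G\rangle = \langle F,\mc{K}G\rangle$ and similarly for $\mc{U}$ (with $A=U$, $\omega$-independent) and $\mc{V}$ (with $A=V_\omega$, bounded self-adjoint for each $\omega$ and measurable in $\omega$). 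Alternatively, since the explicit kernels already show $\mc{U}$ and $\mc{V}$ are multiplication by the real-valued kernels $u(x)-u(y)$ and $v(\tau_x\omega)-v(\tau_y\omega)$, self-adjointness is obvious; and for $\mc{K}$ the kernel symmetry under $(x,y,\xi)\mapsto(x,y,-\xi)$ combined with $h(-\xi)=\overline{h(\xi)}$ gives it directly. There is no real obstacle here — the proposition is genuinely routine bookkeeping; the only mild care needed is getting the direction of the shifts and the placement of the minus sign in \eqref{eq:Kform} right, and noting that boundedness of $\mc{U}$ uses periodicity (hence boundedness) of $u$.
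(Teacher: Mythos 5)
Your proposal is correct and is exactly the routine verification intended: the paper gives no proof at all, stating only that the proposition ``follows immediately'' from the definition \eqref{eq:KVdefn}, and your kernel computations, the Hilbert--Schmidt commutator bound for boundedness, and the trace-cyclicity argument for self-adjointness supply precisely the omitted bookkeeping. The only wrinkle worth noting is that a literal computation of $\langle\delta_x, F(\omega)H_0\delta_y\rangle$ yields $\sum_{\xi\neq \vec 0}\overline{h(\xi)}\,F(x,y-\xi,\omega)$, which agrees with the second term of \eqref{eq:Kform} as written only after reindexing $\xi\mapsto-\xi$ (equivalently, when $h$ is real); this is a feature of the paper's own statement rather than a gap in your argument.
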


The final ingredient for Pillet's formula is the lift of the Markov generator $B$ to $L^2(M)$. Throughout, we will use $\e^{-tB} $ to denote the Markov semigroup lifted to the augmented space ${ \mathcal  B}^p(X\times\Omega)$, with $B$ the corresponding generator. This semigroup is defined by
\begin{equation}\label{eq:sgext} \e^{-t B} F({x, \alpha}) \ := \ \E_\Omega\left({F(x,\omega(0))}\,|\,{\omega(t)=\alpha}\right) .\end{equation}
In particular, given $\phi \in \mc{B}(X)$ and $f \in L^p(\Omega)$ we have 
\[\e^{-tB}(\phi\otimes f) \ = \ \phi \otimes \e^{-tB}f , \]
where $\phi \otimes f$ denotes the function 
		\[
			(\phi\otimes f )(x,\omega) := \phi(x)f(\omega). 
		\]
\begin{prop}[Prop. 3.3 of \cite{Schenker:2015}]\label{prop:sgext}  The semigroup $\e^{-tB}$ is contractive and positivity preserving on ${ \mathcal  B}^p(X\times\Omega)$  and $B$ is sectorial on $L^2(X\times\Omega)$, with the same constants $b$ and $\gamma$ as appear in Assumption \ref{ass:sectoriality}.
\end{prop}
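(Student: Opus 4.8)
The plan is to reduce every assertion to the corresponding property of the base Markov semigroup $S_t=\e^{-tB}$ on $L^p(\Omega)$ (resp.\ $L^2(\Omega)$), exploiting the fact that the lifted semigroup in \eqref{eq:sgext} acts \emph{fiberwise} in the $X$-variable. For $F\in\mc{B}^p(X\times\Omega)$ and $x\in X$ write $F_x:=F(x,\cdot)$; since $\norm{F(x,\cdot)}_{L^p(\Omega)}\in\mc{B}(X)$ and $X$ is discrete, $F_x\in L^p(\Omega)$ for every $x$. Unravelling \eqref{eq:sgext} then gives $(\e^{-tB}F)(x,\cdot)=S_tF_x$, i.e.\ the lifted semigroup is just $S_t$ applied separately in each $x$-fiber; on tensors this is the identity $\e^{-tB}(\phi\otimes f)=\phi\otimes\e^{-tB}f$ recorded above, and for general $F$ it is immediate from \eqref{eq:sgext}.

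\textbf{Contractivity and positivity.} First I would use that $S_t$ is an $L^p(\Omega)$-contraction for $1\le p<\infty$: pointwise in $x$,
\[
0\ \le\ \norm{(\e^{-tB}F)(x,\cdot)}_{L^p(\Omega)}\ =\ \norm{S_tF_x}_{L^p(\Omega)}\ \le\ \norm{F_x}_{L^p(\Omega)}.
\]
Since $x\mapsto\norm{F(x,\cdot)}_{L^p(\Omega)}$ lies in $\mc{B}(X)$, the monotonicity of $\norm{\cdot}_{\mc{B}(X)}$ (property (1) of the Banach function space $\mc{B}(X)$) shows at once that $x\mapsto\norm{(\e^{-tB}F)(x,\cdot)}_{L^p(\Omega)}$ lies in $\mc{B}(X)$ — hence $\e^{-tB}F\in\mc{B}^p(X\times\Omega)$ — and that $\norm{\e^{-tB}F}_{\mc{B}^p(X\times\Omega)}\le\norm{F}_{\mc{B}^p(X\times\Omega)}$. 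Positivity preservation follows in the same way: $S_t$ is a conditional expectation, hence order preserving on $L^p(\Omega)$, so $F\ge 0$ forces $(\e^{-tB}F)(x,\cdot)=S_tF_x\ge 0$ for every $x$; and the semigroup law descends fiberwise from that of $S_t$.

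\textbf{Sectoriality on $L^2(X\times\Omega)$.} Here $\mc{B}(X)=\ell^2(X)$, so $L^2(X\times\Omega)=\bigoplus_{x\in X}L^2(\Omega)$ orthogonally and the lifted $\e^{-tB}$ equals $\bigoplus_x S_t$. I would argue via the resolvent form of sectoriality quoted in the Remark after Assumption \ref{ass:gap}: for $\Re z<0$,
\[
R(z)=(B-z)^{-1}=\int_0^\infty\e^{tz}\,\e^{-tB}\,\di t=\bigoplus_x(B-z)^{-1}
\]
on the base, so $\norm{R(z)}_{L^2(X\times\Omega)}=\norm{(B-z)^{-1}}_{L^2(\Omega)}$; this identity persists along the analytic continuation of $z\mapsto R(z)$ to $\C\setminus K_{b,\gamma}$ afforded by Assumption \ref{ass:sectoriality}, whence $\norm{R(z)}_{L^2(X\times\Omega)}\le\dist^{-1}(z,K_{b,\gamma})$ there, which is exactly sectoriality of $B$ on $L^2(X\times\Omega)$ with the same $b,\gamma$ by \cite[Thm.\ V.3.2]{Kato:1995sf}. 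Alternatively, and more directly, $\ipc{F}{BF}=\sum_x\ipc{F_x}{BF_x}$ with each summand having nonnegative real part since the base $B$ is maximally accretive, so summing Assumption \ref{ass:sectoriality} termwise over $x$ yields $\abs{\Im\ipc{F}{BF}}\le\gamma\Re\ipc{F}{BF}+b\norm{F}^2$; this uses the standard description $\mathcal{D}(B)=\{F:F_x\in\mathcal{D}(B)\ \forall x,\ \sum_x\norm{BF_x}^2<\infty\}$ of the direct-sum generator.

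The argument is essentially bookkeeping, and I expect no serious obstacle. The one point that deserves care — and the reason the proof is carried out fiberwise rather than by a tensor-product factorization — is that $\mc{B}^p(X\times\Omega)$ is in general strictly larger than the Bochner space $L^p(\Omega;\mc{B}(X))$, so the contraction and positivity bounds cannot be read off from a product structure but must be extracted pointwise in $x$ and passed through the monotonicity axiom of $\norm{\cdot}_{\mc{B}(X)}$.
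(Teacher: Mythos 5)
Your argument is correct: the lifted semigroup in \eqref{eq:sgext} acts fiberwise in $x$, so contractivity and positivity follow from the corresponding properties of $S_t$ on $L^p(\Omega)$ together with the monotonicity axiom of $\norm{\cdot}_{\mc{B}(X)}$, and sectoriality on $L^2(X\times\Omega)=\bigoplus_x L^2(\Omega)$ follows by summing the fiberwise estimate of Assumption \ref{ass:sectoriality}, using accretivity of $B$ to justify the termwise bound. The paper itself offers no proof here (it simply cites Prop.\ 3.3 of \cite{Schenker:2015}), and your fiberwise reduction — including the caution that $\mc{B}^p(X\times\Omega)$ may be strictly larger than $L^p(\Omega;\mc{B}(X))$ — is essentially the argument of that reference.
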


Pillet's formula expresses the average of the time dependent dynamics \eqref{eq:SE} in terms of the semi-group on $L^2(M)$ generated by $\mc{L} = \im\mc{K} + \im \mc{U} + \im \lambda \mc{V} + B$.
\begin{lem}[Pillet's formula \cite{Pillet:1985oq}]\label{lem:pillet} Let 
	\begin{equation}\label{eq:L} \mc{L}  \ := \ \im \mc{K} + \im \mc{U} + \im {\lambda} \mc{V}    +B
	\end{equation}
	on the domain $\mc{D}(B) \subset L^2(M)$.   Then $\mc{L}$ is maximally accretive and sectorial and if $\rho_t = \psi_t \ipc{\psi_t}{\cdot}$ is the density matrix corresponding to a solution $\psi_t$ to eq.\ \eqref{eq:SE} with $\psi_0 \in\ell^2(\Z^d)$, then 
	\begin{equation}\label{eq:Pillet}
	\Evac{\rho_t}{\omega(t)=\alpha } \ = \ \e^{-t \mc{L}} \left (\rho_0\otimes \mathbbm{1}  \right ),
	\end{equation}
	where $\mathbbm{1}(\omega) = 1$ for all $\omega$. Consequently, we have
	\begin{equation}\label{eq:integratedPillet}
	\Ev{\rho_t} \ = \ \int_{ \Omega} \left [ \e^{-t \mc{L}} \left ( \rho_0\times \mathbbm{1}  \right ) \right ](\omega) \mu({\di \omega}).
	\end{equation}
	Furthermore, for a solution $\psi_t$ to eq.\ \eqref{eq:genE}, we have 
	\begin{align}\label{eq:PFK}
	\Ev{\psi_t(x) \overline{\psi_t(y)}} \ = \ \ipc{ \delta_x\otimes \delta_y\otimes\id}{\e^{-t\mc{L}} \left (   \psi_0 \otimes \overline{\psi_0} \otimes \id \right )}_{L^2(M)} .
	\end{align}
	In particular, we have 
	\begin{equation} 
	\Ev{\rho_{t}(x,x)} = \ip{ \delta_{x}\otimes \delta_{x} \otimes \id,  \e^{-t\mc{L}} \rho_{0} \otimes {\mathbbm 1}}_{{ \mathcal  H}} \label {E(Rho2)}.
	\end{equation}
\end{lem}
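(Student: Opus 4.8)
\noindent\emph{Proof proposal.} This is Pillet's theorem \cite{Pillet:1985oq}; here is the strategy I would follow. First I would check that $\mathcal{L}=\im\mathcal{K}+\im\mathcal{U}+\im\lambda\mathcal{V}+B$ is maximally accretive and sectorial on $L^2(M)$, so that $\e^{-t\mathcal{L}}$ is a well-defined strongly continuous contraction semigroup. Indeed $B$ is maximally accretive (being the generator of the contraction semigroup $\e^{-tB}$, by Lumer--Phillips) and sectorial on $L^2(M)$ by Proposition~\ref{prop:sgext}, while $C:=\im(\mathcal{K}+\mathcal{U}+\lambda\mathcal{V})$ is bounded and skew-adjoint ($\mathcal{K},\mathcal{U},\mathcal{V}$ being bounded and self-adjoint), so $\Re\ipc{F}{CF}=0$; writing $\mathcal{L}+\mu=(B+\mu)\big(I+(B+\mu)^{-1}C\big)$ and using $\norm{(B+\mu)^{-1}}\le\mu^{-1}$ shows $\mathcal{L}+\mu$ is bijective for $\mu>\norm{C}$, which together with accretivity gives maximal accretivity of $\mathcal{L}$; and \eqref{eq:sector} for $B$ together with $\abs{\ipc{F}{CF}}\le\norm{C}\norm{F}^2$ gives sectoriality of $\mathcal{L}$ with $\gamma'=\gamma$ and $b'=b+\norm{C}$.

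Next I would realize the right-hand side of \eqref{eq:Pillet} probabilistically. For a fixed path $\omega=\{\omega(s)\}_{s\ge0}$ the operators $H(s):=H_0+U+\lambda V_{\omega(s)}$ are uniformly bounded and right-continuous in $s$, so the time-ordered propagator $U_s(\omega)$ with $\im\partial_sU_s=H(s)U_s$, $U_0=I$, exists via the Dyson series, $\psi_s=U_s(\omega)\psi_0$, and $\rho_s=U_s(\omega)\rho_0U_s(\omega)^\ast$; write $W_{0,s}(\omega)$ for the induced propagator on $\mathcal{HS}(\Z^d)$, so that $\rho_s=W_{0,s}(\omega)\rho_0$ and $\partial_sW_{0,s}=-\im[H(s),\,\cdot\,]W_{0,s}$. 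Define $\mathcal{T}_t$ on $\mathcal{H}=L^2(\Omega;\mathcal{HS}(\Z^d))$ by $(\mathcal{T}_tG)(\alpha):=\mathbb{E}\big(W_{0,t}(\omega)\,G(\omega(0))\,\big|\,\omega(t)=\alpha\big)$. Since $\rho_0\otimes\id$ is the constant section $\alpha\mapsto\rho_0$, we have $\big(\mathcal{T}_t(\rho_0\otimes\id)\big)(\alpha)=\mathbb{E}\big(\rho_t\,\big|\,\omega(t)=\alpha\big)$; and conditional Jensen, unitary invariance of $\norm{\cdot}_{\mathcal{HS}}$, and $\mathrm{Law}(\omega(t))=\mu$ (invariance of $\mu$) show $\mathcal{T}_t$ is a contraction on $\mathcal{H}$, so in particular $\alpha\mapsto\mathbb{E}(\rho_t\mid\omega(t)=\alpha)$ is a well-defined element of $\mathcal{H}$.

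The core step is to identify $\mathcal{T}_t$ with $\e^{-t\mathcal{L}}$; since a closed operator generates at most one strongly continuous semigroup, it suffices to show $\mathcal{T}_t$ is a strongly continuous semigroup with generator $-\mathcal{L}$, whereupon evaluating on $\rho_0\otimes\id$ gives \eqref{eq:Pillet}. The semigroup property $\mathcal{T}_{t+s}=\mathcal{T}_t\mathcal{T}_s$ comes from the cocycle identity $W_{0,t+s}=W_{s,t+s}W_{0,s}$ combined with the stationary Markov property: conditioning on $\omega(s)$ and using that, given $\omega(s)$, the increment $W_{s,t+s}$ is independent of the $\mathcal{F}_s$-measurable factor $W_{0,s}G(\omega(0))$, while by $\mu$-stationarity the conditional law of $\omega(s)$ given $\omega(t+s)$ coincides with that of $\omega(0)$ given $\omega(t)$. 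For the generator I would Dyson-expand $W_{0,h}=I-\im\int_0^h[H(s'),\,\cdot\,]\,\di s'+O(h^2)$ and condition, obtaining $(\mathcal{T}_hG)(\alpha)=(\e^{-hB}G)(\alpha)-\im\int_0^h\mathbb{E}\big([H(s'),G(\omega(0))]\,\big|\,\omega(h)=\alpha\big)\,\di s'+O(h^2)$; the first term is $G(\alpha)-h(BG)(\alpha)+o(h)$ by \eqref{eq:sgext}--\eqref{eq:Bdefn}, the $H_0+U$ part of the integral tends to $-h\,\im(\mathcal{K}+\mathcal{U})G(\alpha)$ by strong continuity of $\e^{-sB}$ and boundedness of $[H_0+U,\,\cdot\,]$, and the $\lambda V_{\omega(s')}$ part tends to $-h\,\im\lambda\,\mathcal{V}G(\alpha)$, using right-continuity of paths to collapse the joint law of $(\omega(0),\omega(s'),\omega(h))$ onto $\alpha$ as $h\downarrow0$ together with $v\in L^\infty$; collecting terms, $\mathcal{T}_hG=G-h\mathcal{L}G+o(h)$ in $\mathcal{H}$ for $G$ in a core such as finite-rank sections with $\omega$-dependence in $\mathcal{D}(B)$.

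I expect this last step to be the main obstacle: the limits above have to be taken in the norm of $\mathcal{H}$ rather than pointwise, because $v$ need not be continuous on $\Omega$ and the conditioning is entangled with the path, so one must control both the Dyson remainder and the conditional expectations uniformly; boundedness of $H_0$, $U$ and $V_\omega$, right-continuity of paths, and strong continuity of $\e^{-tB}$ on $L^2(M)$ are precisely what make this go through. The remaining identities are then routine: \eqref{eq:integratedPillet} follows by integrating \eqref{eq:Pillet} against $\mu(\di\alpha)$ and using $\mathrm{Law}(\omega(t))=\mu$ to identify $\int_\Omega\mathbb{E}(\rho_t\mid\omega(t)=\alpha)\,\mu(\di\alpha)$ with $\Ev{\rho_t}$, while \eqref{eq:PFK} and \eqref{E(Rho2)} follow by taking the $(x,y)$ matrix element, using $\ipc{\delta_x\otimes\delta_y\otimes\id}{G}_{L^2(M)}=\int_\Omega\ipc{\delta_x}{G(\omega)\delta_y}\,\mu(\di\omega)$ with $G=\e^{-t\mathcal{L}}(\psi_0\otimes\overline{\psi_0}\otimes\id)$ and then setting $x=y$.
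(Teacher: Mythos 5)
Your proposal is correct and follows essentially the same route as the proof the paper relies on: the paper does not reprove the lemma but defers to Pillet \cite{Pillet:1985oq} and to Lemmas 3.5--3.6 of \cite{Schenker:2015}, whose argument is exactly the one you sketch (bounded skew-adjoint perturbation of the maximally accretive, sectorial $B$ for the operator-theoretic claims; the path-wise propagator $W_{0,t}(\omega)$, conditioning on $\omega(t)=\alpha$, the Markov/cocycle semigroup property, and a first-order Dyson expansion to identify the generator). Your identification of the generator-identification step as the delicate point, and the routine derivation of \eqref{eq:integratedPillet}, \eqref{eq:PFK} and \eqref{E(Rho2)} by integration and matrix elements, match the cited treatment.
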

\begin{remark}Here and below we will use tensor product notation for elements of $\ell^2(\Z^d \times \Z^d)$,
	$$[\phi \otimes \psi](x,y)  \ = \ \phi(x) \psi(y).$$
	Thus a rank one operator $\psi \ipc{\phi}{\cdot}\in \mc{HS}(\Z^d)$ corresponds to $\psi \otimes \overline{\phi}$.
\end{remark}

For the derivation of this result, we refer the reader to \cite[Lemmas. 3.5 and 3.6]{Schenker:2015}.  In \cite{Schenker:2015}, the term $\mc{U}$ is different, stemming as it does there from the background static random potential.  However, an essentially  identical proof works in the present context.

\subsection{Vector valued Fourier Analysis}\label{sec:VFourier}
For each $\xi \in \Z^d$, we define the (simultaneous position and disorder) shift operator 
\begin{equation}
S_{\xi} \Psi(x,y,\omega) := \Psi(x -\xi, y-\xi, \tau_{\xi}\omega)
\end{equation}
for any function $\Psi$ defined on $\Z^d \times \Z^d \times \Omega$. 
\begin{prop}\label{prop:vanishingcom1} The map $\xi \mapsto S_\xi$ is a unitary representation of the additive group $\Z^d$ on the Hilbert space $\mc{H}$,
	and for every $\xi \in \Z^d$ 
	\[\com{S_\xi}{\mc{K}} \ = \  \com{S_\xi}{\mc{V}} \ = \ \com{S_\xi}{B} \ = \ 0.\]
\end{prop}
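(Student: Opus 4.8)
The plan is to dispatch the representation and unitarity claims by a direct substitution, then verify each of the three commutation identities by unwinding the explicit formulas \eqref{eq:Kform} and \eqref{eq:Vform} together with the definition \eqref{eq:sgext} of the lifted semigroup. For the representation property, $S_{\vec 0}=I$ since $\tau_{\vec 0}=\mathrm{id}$, and a one-line computation gives $(S_\xi S_\eta\Psi)(x,y,\omega)=\Psi(x-\xi-\eta,\,y-\xi-\eta,\,\tau_\eta\tau_\xi\omega)=(S_{\xi+\eta}\Psi)(x,y,\omega)$ using $\tau_\eta\tau_\xi=\tau_{\xi+\eta}$; in particular each $S_\xi$ is a bijection with inverse $S_{-\xi}$. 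Unitarity on $\mc{H}=L^2(M)$ then follows because $m$ is the product of counting measure on $\Z^d\times\Z^d$ with $\mu$, counting measure is translation invariant, and each $\tau_\xi$ is $\mu$-measure preserving, so the change of variables $(x,y,\omega)\mapsto(x-\xi,y-\xi,\tau_\xi\omega)$ preserves $m$ and hence $\|S_\xi\Psi\|_{L^2(M)}=\|\Psi\|_{L^2(M)}$; strong continuity is vacuous since $\Z^d$ is discrete.

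Next, the identities $\com{S_\xi}{\mc{K}}=0$ and $\com{S_\xi}{\mc{V}}=0$ are checked by applying both orderings to a test function $F\in L^2(M)$. For $\mc{K}$, the diagonal spatial shift $(x,y)\mapsto(x-\xi,y-\xi)$ commutes with the hopping sum because the hopping acts by lattice translations and the shift vectors add commutatively, while the disorder shift $\omega\mapsto\tau_\xi\omega$ is inert for $\mc{K}$; for $\mc{V}$ the relevant point is the identity $v(\tau_{x-\xi}\tau_\xi\omega)=v(\tau_x\omega)$, coming from $\tau_{x-\xi}\tau_\xi=\tau_x$, which shows the multiplication factor $v(\tau_x\omega)-v(\tau_y\omega)$ is invariant under conjugation by $S_\xi$. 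Both are routine. (The same computation would fail for $\mc{U}$, whose factor $u(x)-u(y)$ is only $\vec p$-periodic rather than translation invariant; this is why $\mc{U}$ is absent from the statement, and is the source of additional work elsewhere in the paper.)

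The one identity requiring genuine care is $\com{S_\xi}{B}=0$. Under the identification $\mc{H}\cong\ell^2(\Z^d\times\Z^d)\otimes L^2(\Omega)$ one has $S_\xi=\Pi_\xi\otimes T_\xi$, where $\Pi_\xi$ is the diagonal spatial shift and $T_\xi f(\alpha)=f(\tau_\xi\alpha)$ is the disorder translation, and the lifted generator is $B=I\otimes B$ because \eqref{eq:sgext} leaves the $\Z^d\times\Z^d$ variables untouched; hence $\com{S_\xi}{B}=\Pi_\xi\otimes\com{T_\xi}{B}$ and it suffices to know that $B$ commutes with every $T_\xi$ on $L^2(\Omega)$. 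This is exactly the fact recorded in the remark following the potential assumption, and it follows from shift-invariance in distribution of the Markov process (item (2) of the Markov-dynamics assumption) together with invariance of $\mu$. If one wants it self-contained, the cleanest route is to compare the semigroups: from \eqref{eq:sgext}, $(S_\xi\e^{-tB}F)(x,y,\alpha)=\E_\Omega\!\left(F(x-\xi,y-\xi,\omega(0))\mid\omega(t)=\tau_\xi\alpha\right)$ while $(\e^{-tB}S_\xi F)(x,y,\alpha)=\E_\Omega\!\left(F(x-\xi,y-\xi,\tau_\xi\omega(0))\mid\omega(t)=\alpha\right)$, and these agree because conditioning the path to be at $\tau_\xi\alpha$ at time $t$ is, in distribution, the $\tau_\xi$-image of conditioning it to be at $\alpha$; differentiating at $t=0^{+}$ gives $\com{S_\xi}{B}=0$ on $\mc{D}(B)$, and $S_\xi$ preserves $\mc{D}(B)$ since $T_\xi$ commutes with the semigroup. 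The main obstacle is thus the bookkeeping around the time-reversed conditioning implicit in $S_t$: there is no analytic subtlety, but one must invoke translation invariance of the process in its correct distributional form (and keep track of adjoints if one argues through the forward semigroup instead).
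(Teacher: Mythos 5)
Your proof is correct and is exactly the routine verification the paper leaves implicit: the proposition is stated without proof, with the only non-trivial point (that $B$ commutes with the disorder translations, hence with $S_\xi$) being precisely the fact the paper records in the remark following the potential assumption and justifies, as you do, by shift-invariance in distribution plus invariance of $\mu$ applied to the semigroup $S_t$. Your parenthetical about why $\mc{U}$ is excluded, and the domain bookkeeping for $\com{S_\xi}{B}=0$, are both accurate.
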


The potential term $\mc{U}$ only commutes with a subgroup of translations $S_{\xi}$, corresponding to translation over a period of the potential. For $\xi \in \Z^d$ let
\begin{equation}\label{eq:pxi}
\vec{p}\circ\xi \ := \ (p_1\xi_1,\ldots,p_d\xi_d)
\end{equation}
and 
\begin{equation}
{\vec p}\Z^d=\{\vec{p}\circ\xi:\  \xi\in\Z^d\}.
\end{equation}
Then
\begin{prop}\label{prop:vanishingcom2}For every $\xi \in \Z^d$, $\com{S_{\vec{p}\circ \xi}}{\mc{U}}  =  0.$
\end{prop}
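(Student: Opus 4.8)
The plan is to compute both sides of $\com{S_{\vec p\circ\xi}}{\mc U} = S_{\vec p\circ\xi}\,\mc U - \mc U\,S_{\vec p\circ\xi}$ acting on an arbitrary $F\in L^2(M)$, using the explicit formula \eqref{eq:Uform} for $\mc U$, namely $\mc U F(x,y,\omega) = [u(x)-u(y)]F(x,y,\omega)$, together with the definition of the shift $S_\eta F(x,y,\omega) = F(x-\eta,y-\eta,\tau_\eta\omega)$. First I would apply $\mc U$ and then $S_{\vec p\circ\xi}$: $(S_{\vec p\circ\xi}\mc U F)(x,y,\omega) = (\mc U F)(x - \vec p\circ\xi,\, y-\vec p\circ\xi,\, \tau_{\vec p\circ\xi}\omega) = [u(x-\vec p\circ\xi) - u(y-\vec p\circ\xi)]\,F(x-\vec p\circ\xi, y-\vec p\circ\xi,\tau_{\vec p\circ\xi}\omega)$. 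Then in the other order: $(\mc U S_{\vec p\circ\xi} F)(x,y,\omega) = [u(x)-u(y)]\,F(x-\vec p\circ\xi, y-\vec p\circ\xi,\tau_{\vec p\circ\xi}\omega)$.

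The two expressions share the common factor $F(x-\vec p\circ\xi, y-\vec p\circ\xi,\tau_{\vec p\circ\xi}\omega)$, so the commutator reduces to the scalar prefactor $\big([u(x-\vec p\circ\xi)-u(y-\vec p\circ\xi)] - [u(x)-u(y)]\big)F(\cdots)$. Thus it suffices to show that $u(x - \vec p\circ\xi) = u(x)$ for every $x\in\Z^d$ and every $\xi\in\Z^d$. This is exactly the $\vec p$-periodicity \eqref{eq:U-periodic}: writing $\vec p\circ\xi = \sum_{j=1}^d \xi_j p_j e_j$ and iterating $u(x + p_j e_j) = u(x)$ (and its consequence $u(x - p_j e_j) = u(x)$) along each coordinate direction $|\xi_j|$ times shows $u$ is invariant under translation by any element of $\vec p\,\Z^d$. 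Hence the prefactor vanishes identically, giving $\com{S_{\vec p\circ\xi}}{\mc U} = 0$.

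There is essentially no obstacle here — this is a direct computation — but the one point requiring a line of care is verifying that $\vec p\,\Z^d$ is precisely the subgroup of full periods of $u$, i.e. that periodicity in each generating direction $p_j e_j$ propagates to arbitrary integer combinations; this follows immediately since the maps $x\mapsto x + p_j e_j$ commute, so their composites realize translation by any $\vec p\circ\xi$. One should also note for completeness that the shifts $S_\eta$ map $L^2(M)$ to itself (they are unitary, by Proposition \ref{prop:vanishingcom1} applied with $\eta = \vec p\circ\xi$), so the identity of operators is meaningful on all of $\mc H$, and by density it extends from, say, finitely supported $F$ to all of $L^2(M)$ if one prefers to first check it on a core.
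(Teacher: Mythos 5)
Your computation is correct and is exactly the direct verification the paper has in mind (the paper states Proposition \ref{prop:vanishingcom2} without proof, treating it as an immediate consequence of the $\vec p$-periodicity \eqref{eq:U-periodic} and the explicit form \eqref{eq:Uform} of $\mc U$). The reduction to $u(x-\vec p\circ\xi)=u(x)$ and the observation that periodicity in each direction $p_je_j$ propagates to all of $\vec p\,\Z^d$ are precisely the needed points.
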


Because of Props.\ \ref{prop:vanishingcom1}, \ref{prop:vanishingcom2}, a suitable Floquet transform will give a fibre decomposition of the various operators $\mc{K}$, $\mc{U}$, $\mc{V}$ and $B$.  Let $\T^d= [0,2\pi)^d$ denote the torus, 
\begin{equation*}
\wh{M} \ := \ \Z^d \times \Omega ,
\end{equation*} 
and let $\Z_{\vec{p}}=\Z_{p_1} \times \cdots \times \Z_{p_d}$ denote the fundamental cell of the periodicity group on $\Z^d$. Note that $\ell^2(\Z_{\vec{p}}) \cong \C^{\otimes \vec{p}} := \C^{p_1} \otimes \cdots \otimes \C^{p_d}$.  Using this identification,  let
$\pi_\sigma : \C^{\otimes \vec{p}} \to \C$ be the coordinate evaluation map associated to a point $\sigma = (\sigma_1, \cdots, \sigma_d) \in \Z_{\vec{p}}$.  For $f,g\in L^2(\wh{M};\C^{\otimes\vec{p}}) $,  we use the natural inner product on $L^2(\wh{M};\C^{\otimes\vec{p}})$ 
\begin{align}\label{def:inner-product-Cp}
\ipc{f}{g}_{L^2(\wh{M};\C^{\otimes\vec{p}})}=\sum_{\sigma\in \Z_{\vec p} }\ipc{\pi_\sigma f}{\pi_\sigma g}_{L^2(\wh{M};\C)}.
\end{align}

Given $\Psi \in L^2(M)$ and ${\vec k}\in \T^d$, the \emph{Floquet transform of $\Psi\in L^2(M)$ at $\vec{k}$} is defined to be a map $\wh{\Psi}_{\vec k}:\wh{M} \rightarrow \C^{\otimes\vec{p}} $ as follows: 
\begin{align}\label{eq:Fourier-vector}
\pi_\sigma\wh{\Psi}_{\vec k}(x,\omega)  :=  &\sum_{\xi \in \Z^d} \e^{-{\im}\, {{\vec k}\cdot (\vec{p}\circ\xi+\sigma)} \, }S_{\vec{p}\circ\xi+\sigma} \Psi(x, 0, \omega)\\ = &\sum_{n\in {\vec p}\Z^d+\sigma} \e^{-{\im}\, {{\vec k}\cdot n} \, }\Psi(x-n, -n, \tau_{n}\omega), \nonumber
\end{align}
for each $\sigma \in \Z_{\vec p}$. Initially we define this Floquet transform on the augmented space 
	\begin{equation}\label{eq:W1M}
	\mc{W}^1(M)\ := \ \setb{F:M\to\C }{\sup_x\sum_y\int|F(x+y,y,\omega)|\mu(\di \omega)<\infty }.
	\end{equation}
	The basic results of Fourier analysis are naturally extended to this Floquet transform. In particular, if $F\in \mc{W}^1(M)$, then $\wh{F}_{\vec{k}}\in \ell^{\infty;1}(\wh{M})$ for each $\vec{k}$ and $\vec{k} \mapsto \wh{F}_{\vec{k}}$ is continuous.  Furthermore, Plancherel's Theorem,
	\[ \norm{F}_{L^2(M)}^2 \ = \ \int_{\mathbb{T}^d} \|\wh{F}_{\vec{k}}\|_{L^2(\wh{M})}^2 \nu(\di \vec{k}),\]
	 holds for $F\in \mc{W}^1(M)\bigcap L^2(M)$, where $\nu$ denotes normalized Lebesgue measure on the torus $\T^d$. Thus, the Floquet transform
	extends naturally to $L^2(M)$. Throughout the rest of the paper, we assume that the Floquet transform is properly defined on $L^2(M)$. For more details of this extension in a similar context, we refer readers to Sec. 3 in \cite{Schenker:2015}.
%
%

One may easily compute 
\begin{align*}
\pi_\sigma\, \wh{(\mc{K}\Psi)}_{\vec k}(x,\omega)
=& \sum_{\xi\neq\vec 0} h(\xi)\left[\pi_\sigma\wh{\Psi}_\vec{k} (x-\xi, \omega) - \e^{-\im \vec{k}\cdot \xi}\pi_{\sigma-\xi}\wh{\Psi}_\vec{k}(x-\xi, \tau_\xi \omega)\right];\\
\pi_\sigma\, \wh{(\mc{U}\Psi)}_{{\vec k}}(x, \omega)
=& \left(u({x-\sigma}) - u({-\sigma})\right)\pi_\sigma\, \wh{\Psi}_{{\vec k}}(x, \omega);\\
\pi_\sigma\, \wh{(\mc{V}\Psi)}_{{\vec k}}(x, \omega)
=& \left(v(\tau_{x}\omega)-v(\omega) \right)\pi_\sigma\, \wh{\Psi}_{{\vec k}}(x, \omega);\\
\pi_\sigma\, \wh{(B\Psi)}_{{\vec k}}(x, \omega) =& B\,  \pi_\sigma\, \wh{\Psi}_{{\vec k}}(x, \omega),
\end{align*}
where on the right hand side, $B$ acts on $\pi_\sigma \wh{\Psi}_{\vec{k}}$ as in eq.\ \eqref{eq:sgext}. With the above computations in mind, let $\wh{\mc{K}}_\vec{k}$, $\wh{\mc{U}}$, and $\wh{\mc{V}}$ denote the following operators on functions  ${\phi}:\wh{M} \rightarrow \C^{\otimes\vec{p}}$:
\begin{equation} \label{eq:Kk}
\pi_\sigma(\wh{\mc{K}}_{\vec k}\, \phi)\,(x,\omega) = \sum_{\xi\neq\vec 0} h(\xi)\left[\pi_\sigma \phi(x-\xi, \omega) - \e^{-\im \vec{k}\cdot \xi}\pi_{\sigma-\xi}\phi (x-\xi, \tau_\xi \omega)\right];
\end{equation}
\begin{align}\label{eq:U}
\pi_\sigma\, (\wc{U}\, \phi)(x, \omega)=\left(u({x-\sigma}) - u({-\sigma})\right)\pi_\sigma\phi(x, \omega); 
\end{align}
and
\begin{align}\label{eq:V}
(\wc{V}\phi)(x, \omega) \left(v(\tau_{x}\omega)-v(\omega) \right)\, \phi(x, \omega).
\end{align}


We now present three Lemmas (Lems. \ref{lem:hatKUV}-\ref{lem:semi-Lk}), which describe the basic properties of the operators $\wc{K}_{\vec k}$, $\wc{U}$, and $\wc{V}$. These results are the adaptation to the present context of Lemmas 3.13-3.15 of \cite{Schenker:2015}, with the main difference being that here we consider the vector valued space $L^2(\wh{M};\C^{\otimes\vec{p}})$ instead of $L^2(\wh{M};\C)$.  We omit the details of the proofs here. 

\begin{lem}\label{lem:hatKUV}   Let $\wh{M}=\Z^d\times\Omega,\wc{K}_{\vec k}$, $\wc{U}$ and $\wc{V}$ be given as above, then  
	\begin{enumerate}
		\item $\wc{K}_{\vec k}$, $\wc{U}$ and $\wc{V}$ are bounded on $\ell^{\infty;1}(\wh{M};\C^{\otimes\vec{p}})$. 
		\item  $\wc{K}_{\vec k}$, $\wc{U}$ and $\wc{V}$ are bounded and self-adjoint on $L^2(\wh{M};\C^{\otimes\vec{p}})$ with the following bounds:
		 \begin{align*}
\norm{\wc{K}_{\vec k}}_{L^2(\wh{M};\C^{\otimes\vec{p}})}\le  2\|\wh h\|_\infty,\ \  \norm{\wc{U}}_{L^2(\wh{M};\C^{\otimes\vec{p}})}\le 2\|u\|_\infty, \ \  
\norm{\wc{V}}_{L^2(\wh{M};\C^{\otimes\vec{p}})}\le 2
\end{align*}			

		\item If $\Psi \in L^2(M;\C)$ and let $\wh{\Psi}_{\vec k}$ be given as in \eqref{eq:Fourier-vector}, then 
		\begin{equation*}
		\wh{(\mc{K} \Psi) }_{\vec k} \ = \ \wc{K}_{\vec k} \wh{\Psi}_{\vec k} , \quad  (\wh{\mc{U} \Psi})_{\vec k} \ = \ \wc{U}\, \wh{\Psi}_{\vec k} \quad \text{and} \quad   (\wh{\mc{V} \Psi})_{\vec k} \ = \ \wc{V}\, \wh{\Psi}_{\vec k}
		\end{equation*}
for $\nu$-almost every ${{\vec k}}\in \T^d$.
	\end{enumerate}
\end{lem}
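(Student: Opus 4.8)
The plan is to prove Lemma \ref{lem:hatKUV} by transporting the corresponding statements about $\mc{K}$, $\mc{U}$, $\mc{V}$ on $L^2(M)$ and $\ell^{\infty;1}(M)$ through the Floquet transform, fiber by fiber in $\vec k$. The essential point is that the three operators $\wc{K}_{\vec k}$, $\wc{U}$, $\wc{V}$ were \emph{defined} precisely so that the intertwining relations in part (3) hold; once one knows those, boundedness and self-adjointness follow from the unitarity of the Floquet transform (Plancherel's Theorem, stated just above) together with the already-established boundedness and self-adjointness of $\mc{K}$, $\mc{U}$, $\mc{V}$ on $L^2(M)$, plus the direct-integral decomposition $L^2(M) \cong \int_{\T^d}^\oplus L^2(\wh M;\C^{\otimes \vec p})\,\nu(\di\vec k)$.

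First I would establish part (3), the intertwining relations. For $\Psi$ in the dense subspace $\mc{W}^1(M)\cap L^2(M)$, one simply applies the displayed computation of $\pi_\sigma\wh{(\mc{K}\Psi)}_{\vec k}$, $\pi_\sigma\wh{(\mc{U}\Psi)}_{\vec k}$, $\pi_\sigma\wh{(\mc{V}\Psi)}_{\vec k}$, $\pi_\sigma\wh{(B\Psi)}_{\vec k}$ obtained right before the statement — this is a direct rearrangement of the defining series \eqref{eq:Fourier-vector} using the commutation/covariance properties of $S_\xi$ and the explicit forms \eqref{eq:Kform}, \eqref{eq:Uform}, \eqref{eq:Vform} — and compares with the definitions \eqref{eq:Kk}, \eqref{eq:U}, \eqref{eq:V}. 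For $\mc{V}$ one also uses $v(\tau_x\omega)-v(\tau_y\omega)$ evaluated along the shift; for $\mc{U}$ the shift by $\vec p\circ\xi+\sigma$ together with $\vec p$-periodicity of $u$ turns $u(x-n)-u(-n)$ into $u(x-\sigma)-u(-\sigma)$. Then I would extend from $\mc{W}^1\cap L^2$ to all of $L^2(M)$ by density, using that the Floquet transform is an isometry (so the identities, holding on a dense set and with all operators bounded, persist a.e.\ in $\vec k$).

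Next, part (2): boundedness and self-adjointness on $L^2(\wh M;\C^{\otimes\vec p})$. Since $\mc{K}$, $\mc{U}$, $\mc{V}$ are bounded self-adjoint on $L^2(M)$ and commute (resp., in the $\mc U$ case, suitably) with the representation $S_\xi$, they are decomposable operators with respect to the direct-integral decomposition, and part (3) identifies their fibers as $\wc{K}_{\vec k}$, $\wc{U}$, $\wc{V}$; hence each fiber is bounded self-adjoint for a.e.\ $\vec k$, and by continuity of $\vec k\mapsto \wc{K}_{\vec k}$ (the hopping kernel enters through $\e^{-\im\vec k\cdot\xi}$ with $\sum|\xi|\,|h(\xi)|<\infty$ by \eqref{eq:h-range}) and the $\vec k$-independence of $\wc{U}$, $\wc{V}$, for every $\vec k$. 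The quantitative bounds $\|\wc{K}_{\vec k}\|\le 2\|\wh h\|_\infty$, $\|\wc{U}\|\le 2\|u\|_\infty$, $\|\wc{V}\|\le 2$ can be read off directly: each operator acts (up to the unitary twists $\tau_\xi$, $\e^{-\im\vec k\cdot\xi}$) as a commutator $[A,\cdot]$ with $A$ bounded self-adjoint, giving the factor $2\|A\|$; $\|\wh h\|_\infty=\|H_0\|$, $\|u\|_\infty=\|U\|$, $\|v\|_\infty=1$. Alternatively one bounds each sum over $\xi$ (or the multiplication operator) directly on $L^2(\wh M;\C^{\otimes\vec p})$ using $\|v\|_\infty=1$ and Cauchy–Schwarz, since $\wc K_{\vec k}$ is a finite-range-type operator on the disorder–position lattice.

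Finally, part (1): boundedness on $\ell^{\infty;1}(\wh M;\C^{\otimes\vec p})$. Here I would argue directly from the explicit formulas \eqref{eq:Kk}–\eqref{eq:V}: for $\wc U$ and $\wc V$ these are multiplication by functions bounded by $2\|u\|_\infty$ and $2$ respectively, which clearly preserve $\ell^{\infty;1}$; for $\wc K_{\vec k}$, the sum $\sum_{\xi\neq\vec 0}h(\xi)[\,\cdot\,]$ over a summable kernel, combined with the fact that each term is a composition of a lattice shift and the measure-preserving disorder shift $\tau_\xi$ (which preserves the $\int_\Omega|\cdot|\,\mu(\di\omega)$ part of the $\ell^{\infty;1}$ norm), gives $\|\wc K_{\vec k}F\|_{\ell^{\infty;1}}\le 2\big(\sum_\xi|h(\xi)|\big)\|F\|_{\ell^{\infty;1}}$. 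I do not expect any real obstacle: this lemma is bookkeeping, the adaptation of Lemmas 3.13–3.15 of \cite{Schenker:2015} to the $\C^{\otimes\vec p}$-valued setting, and the only point requiring a little care is checking that the $\vec p$-periodicity of $u$ is exactly what makes $\wc U$ well defined and $\vec k$-independent after the Floquet transform by the full periodicity group $\vec p\Z^d$ — i.e.\ that one must use Proposition \ref{prop:vanishingcom2} rather than Proposition \ref{prop:vanishingcom1} for the potential term.
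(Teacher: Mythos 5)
Your proposal is correct and follows essentially the route the paper intends: the paper omits the proof, noting that the lemma is the vector-valued ($\C^{\otimes\vec p}$) adaptation of Lemmas 3.13--3.15 of \cite{Schenker:2015}, and your plan --- verify the intertwining relations on the dense set $\mc{W}^1(M)\cap L^2(M)$ by the displayed computation, extend by Plancherel, and get the norm bounds by viewing each term of $\wc{K}_{\vec k}$ as $\wh h$ evaluated on a unitary representation of $\Z^d$ (and $\wc U$, $\wc V$ as bounded real multiplication operators) --- is exactly that adaptation. The only minor caveat is that your ``alternative'' Cauchy--Schwarz bound would give $2\sum_\xi|h(\xi)|$ rather than the sharper $2\|\wh h\|_\infty$ stated in the lemma, so the commutator/spectral-theorem argument is the one to keep.
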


Because the Markov process has a distribution invariant under the shifts, the Markov semigroup commutes  with Floquet transform: 
\begin{lem}[Lemma 3.14,\cite{Schenker:2015}]\label{lem:semi-B} Let the Markov semigroup $\e^{-tB}$ be defined 
	as in eq.\ \eqref{eq:sgext}. Then, 
	$$\wh{\left [ \e^{-t B} \Psi \right ]}_{{\vec k}} \ = \ \e^{-t B}\wh{\Psi}_{{\vec k}}$$
	for $\Psi\in L^2(M)$ and $\nu$-almost every  ${\vec k}\in \T^d$.
\end{lem}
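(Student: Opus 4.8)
The plan is to verify the identity first on a dense subspace of $L^2(M)$ on which the Floquet sum is manifestly finite, and then to pass to general $\Psi\in L^2(M)$ by Plancherel together with the contractivity of $\e^{-tB}$ (Proposition \ref{prop:sgext}). A convenient dense subspace is the set of $\Psi$ with finite support in the two $\Z^d$-variables and $\Psi(x,y,\cdot)\in L^2(\Omega)$; for such $\Psi$, and each fixed $\vec k$, $x$, $\omega$, $\sigma$, only finitely many $n\in\vec p\Z^d+\sigma$ contribute to the series defining $\pi_\sigma\wh{\Psi}_{\vec k}(x,\omega)$ in \eqref{eq:Fourier-vector}, so no convergence issue arises.

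On this subspace the crux is the following chain of equalities. By Proposition \ref{prop:vanishingcom1}, $\com{S_\eta}{B}=0$ for every $\eta\in\Z^d$, hence $\com{S_\eta}{\e^{-tB}}=0$ on $\mc H$, where $\e^{-tB}$ is the lifted Markov semigroup \eqref{eq:sgext} acting in the $\Omega$-variable only. Moving the bounded operator $\e^{-tB}$ through the finite sum defining the Floquet transform gives, for every $\sigma\in\Z_{\vec p}$,
\begin{align*}
\pi_\sigma\,\wh{\big[\e^{-tB}\Psi\big]}_{\vec k}(x,\omega)
&=\sum_{\xi\in\Z^d}\e^{-\im\vec k\cdot(\vec p\circ\xi+\sigma)}\,S_{\vec p\circ\xi+\sigma}\big(\e^{-tB}\Psi\big)(x,0,\omega)\\
&=\sum_{\xi\in\Z^d}\e^{-\im\vec k\cdot(\vec p\circ\xi+\sigma)}\,\e^{-tB}\big(S_{\vec p\circ\xi+\sigma}\Psi\big)(x,0,\omega)\\
&=\e^{-tB}\Big[\,\sum_{\xi\in\Z^d}\e^{-\im\vec k\cdot(\vec p\circ\xi+\sigma)}\,S_{\vec p\circ\xi+\sigma}\Psi(x,0,\cdot)\Big](\omega)\\
&=\e^{-tB}\big(\pi_\sigma\wh{\Psi}_{\vec k}\big)(x,\omega),
\end{align*}
where in the last line $\e^{-tB}$ acts on $\wh{\Psi}_{\vec k}\in L^2(\wh M;\C^{\otimes\vec p})$ componentwise via \eqref{eq:sgext} in the $\Omega$-variable. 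This is the asserted identity, valid for \emph{every} $\vec k\in\T^d$ when $\Psi$ has finite support in $\Z^d\times\Z^d$.

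To pass to arbitrary $\Psi\in L^2(M)$, take $\Psi_j\to\Psi$ in $L^2(M)$ with each $\Psi_j$ of the above type. Then $\e^{-tB}\Psi_j\to\e^{-tB}\Psi$ in $L^2(M)$ by contractivity, so by Plancherel $\wh{[\e^{-tB}\Psi_j]}_{\vec k}\to\wh{[\e^{-tB}\Psi]}_{\vec k}$ and $\wh{(\Psi_j)}_{\vec k}\to\wh{\Psi}_{\vec k}$, both in $L^2(\T^d;L^2(\wh M;\C^{\otimes\vec p}))$; since $\e^{-tB}$ is a contraction on the fibre $L^2(\wh M;\C^{\otimes\vec p})$ as well (Proposition \ref{prop:sgext}, applied in each $\C^{\otimes\vec p}$-component), also $\e^{-tB}\wh{(\Psi_j)}_{\vec k}\to\e^{-tB}\wh{\Psi}_{\vec k}$ there. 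Passing to a subsequence along which all three converge in the fibre norm for $\nu$-a.e.\ $\vec k$ and letting $j\to\infty$ in the identity above yields $\wh{[\e^{-tB}\Psi]}_{\vec k}=\e^{-tB}\wh{\Psi}_{\vec k}$ for $\nu$-a.e.\ $\vec k\in\T^d$.

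Since this lemma is essentially the vector-valued counterpart of Lemma 3.14 of \cite{Schenker:2015}, I do not anticipate a serious obstacle. The one genuine input is the commutation $\com{S_\eta}{\e^{-tB}}=0$, i.e.\ that the Markov semigroup $S_t$ on $L^2(\Omega)$ commutes with the disorder translations $f\mapsto f\circ\tau_\eta$; this is exactly shift-invariance in distribution of the process together with invariance of $\mu$, and is what Proposition \ref{prop:vanishingcom1} records on the augmented space. The remaining effort is purely bookkeeping of the $\C^{\otimes\vec p}$-components introduced by the vector-valued Floquet transform, together with the routine density argument.
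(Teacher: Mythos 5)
Your proof is correct: the paper itself omits the argument, simply citing Lemma 3.14 of \cite{Schenker:2015}, and your proposal supplies exactly the standard route one would expect there — verify the identity termwise on finitely supported $\Psi$ using $\com{S_\eta}{\e^{-tB}}=0$ (which follows from Proposition \ref{prop:vanishingcom1} and shift-invariance in distribution), then extend to $L^2(M)$ by Plancherel, contractivity of $\e^{-tB}$ on both $L^2(M)$ and the fibres, and extraction of a $\nu$-a.e.\ convergent subsequence. The bookkeeping of the $\C^{\otimes\vec p}$ components is handled correctly, so there is nothing to add.
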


\begin{lem}\label{lem:K'} Let $\wc{K}_{\vec k}$ be given as in \eqref{eq:Kk} with $h$ that satisfies \eqref{eq:h-range}. Then the map ${{\vec k}} \mapsto \wc{K}_{\vec k}$  is $C^2$ on $\T^d$, considered either as a map into the bounded operators on $\ell^{\infty;1}(\wh{M}; \C^{\otimes \vec p})$ or as a map into the bounded operators on $L^2(\wh{M}; \C^{\otimes \vec p})$. 
	
Moreover, we have the explicit expression for the derivatives for any $\phi(x,\omega)\in L^2(\wh{M};\C^{\otimes \vec p})$ , $\vec k\in \T^d$ and $1\le i,j\le d$: 
	\begin{equation}
	\pi_\sigma\partial_{k_j}\wc{K}_\vec{k} \phi(x,\omega) =\im\, \sum_{\xi\neq\vec 0}\,  \xi_j\, h(\xi)\,\e^{-\im \vec{k}\cdot \xi}\pi_{\sigma-\xi}\phi(x-\xi, \tau_\xi \omega),
	\end{equation}
	\begin{equation}
	\pi_\sigma\partial_{k_i}\partial_{k_j} \wc{K}_\vec{k}\phi(x,\omega) = \sum_{\xi\neq\vec 0}\, \xi_i\, \xi_j\, h(\xi)\, \e^{-\im \vec{k}\cdot \xi} \pi_{\sigma-\xi} \phi(x-\xi, \tau_\xi \omega). 
	\end{equation}
	with bounds 
	\begin{align} \label{eq:norm-K'K''}
	\norm{\partial_{k_j}\wc{K}_\vec{k}}\le \|\wh h'\|_\infty,
	\ \ \norm{ \partial_{k_i}\partial_{k_j}\wc{K}_\vec{k}}\le \|\wh h''\|_\infty,
	\end{align}
where $\|\wh h'\|_\infty, \|\wh h''\|_\infty$ are bounded in \eqref{eq:h-norm}. 

In particular, let $\overrightarrow{1}\in \C^{\otimes\vec{p}}$ be the vector with $\pi_{\sigma}\overrightarrow{1}=1$ for all $\sigma\in \Z_{\vec p} $.  Then   
\begin{align}
\partial_{k_j}\wc{K}_\vec{0}\, \delta_{\vec 0}\otimes \overrightarrow{1}\otimes \id &= \im\, \sum_{\xi\neq\vec 0}  \,\xi_j\, h(\xi)\, \delta_{\xi}\otimes \overrightarrow{1}\otimes \id, \label{eq:K0'-delta0}\\
\partial_{k_i}\partial_{k_j} \wc{K}_\vec{0}\, \delta_{\vec 0}\otimes \overrightarrow{1}\otimes \id &= \sum_{\xi\neq\vec 0}\, \xi_i\, \xi_j\, h(\xi)\, \delta_{\xi}\otimes \overrightarrow{1}\otimes \id.  \label{eq:K0''-delta0}
\end{align}
\end{lem}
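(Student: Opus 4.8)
The plan is to exploit that the only $\vec k$--dependence in \eqref{eq:Kk} is the scalar phase $\e^{-\im\vec k\cdot\xi}$ multiplying the second (``twisted shift'') term, so that $\vec k\mapsto\wc K_{\vec k}$ is a norm--absolutely--convergent series of \emph{fixed} unitaries with smooth scalar coefficients, and differentiation reduces to differentiating those coefficients term by term. For $\phi:\wh M\to\C^{\otimes\vec p}$ set $(\mathsf S_\xi\phi)(x,\omega,\sigma):=\pi_{\sigma-\xi}\phi(x-\xi,\tau_\xi\omega)$; this is the Floquet fibre of the shift $S_\xi$, so exactly as for $S_\xi$ itself $\xi\mapsto\mathsf S_\xi$ is a unitary representation of $\Z^d$ on $L^2(\wh M;\C^{\otimes\vec p})$ and an isometric one on $\ell^{\infty;1}(\wh M;\C^{\otimes\vec p})$. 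Then \eqref{eq:Kk} says that the first term of $\wc K_{\vec k}\phi$ is $\vec k$--independent while the second is $-\sum_{\xi\neq\vec 0}h(\xi)\e^{-\im\vec k\cdot\xi}\mathsf S_\xi\phi$. Note that \eqref{eq:h-range} forces $h\in\ell^1(\Z^d\setminus\{\vec 0\})$, since $|\xi|\ge1$ there, so every series below converges absolutely in operator norm.

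Next, define the candidates $\wc K_{\vec k}^{(j)}:=\im\sum_{\xi\neq\vec 0}\xi_j h(\xi)\e^{-\im\vec k\cdot\xi}\mathsf S_\xi$ and $\wc K_{\vec k}^{(ij)}:=\sum_{\xi\neq\vec 0}\xi_i\xi_j h(\xi)\e^{-\im\vec k\cdot\xi}\mathsf S_\xi$, obtained by differentiating the phases once, resp.\ twice. Boundedness on $\ell^{\infty;1}$ is immediate from $\|\mathsf S_\xi\|=1$ and the triangle inequality, with bounds $\sum_{\xi\neq\vec 0}|\xi|^{m}|h(\xi)|$ for $m=1,2$, finite by \eqref{eq:h-range}. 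The sharper $L^2$ bounds \eqref{eq:norm-K'K''} follow, just as the bound $\|\wc K_{\vec k}\|\le2\|\wh h\|_\infty$ in Lemma \ref{lem:hatKUV}, by diagonalizing the commuting family $\{\mathsf S_\xi\}$ via the spectral theorem: in the joint spectral variable $\theta\in\T^d$, $\wc K_{\vec k}^{(j)}$ becomes multiplication by a shift of $\partial_{k_j}\wh h$ and $\wc K_{\vec k}^{(ij)}$ by a shift of $\partial_{k_i}\partial_{k_j}\wh h$, whose sup norms over $\theta$ are at most $\|\wh h'\|_\infty$, resp.\ $\|\wh h''\|_\infty$, finite by \eqref{eq:h-norm}.

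To identify these as the genuine derivatives without assuming a third moment of $h$, I use the fundamental theorem of calculus rather than a Taylor estimate. First, $\vec k\mapsto\wc K_{\vec k}^{(j)}$ and $\vec k\mapsto\wc K_{\vec k}^{(ij)}$ are norm--continuous on $\T^d$: the difference has $\mathsf S_\xi$--coefficients $\xi_j h(\xi)(\e^{-\im\vec k\cdot\xi}-\e^{-\im\vec k'\cdot\xi})$, resp.\ $\xi_i\xi_j h(\xi)(\e^{-\im\vec k\cdot\xi}-\e^{-\im\vec k'\cdot\xi})$, whose $\ell^1$ norm tends to $0$ as $\vec k'\to\vec k$ by dominated convergence with dominating series $2\sum_{\xi\neq\vec 0}(1+|\xi|^2)|h(\xi)|$. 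Second, the elementary identity $\int_0^t\partial_s\,\e^{-\im(\vec k+s e_j)\cdot\xi}\,\di s=\e^{-\im(\vec k+t e_j)\cdot\xi}-\e^{-\im\vec k\cdot\xi}$, applied termwise (legitimate by absolute and uniform convergence and Fubini), shows that the operator $\wc K_{\vec k+t e_j}-\wc K_{\vec k}$ and the Bochner integral $\int_0^t\wc K_{\vec k+s e_j}^{(j)}\,\di s$ have the same $\mathsf S_\xi$--expansion, hence agree, as operators on $L^2$ and on $\ell^{\infty;1}$. Combining these two facts, the fundamental theorem of calculus gives that $\partial_{k_j}\wc K_{\vec k}$ exists, equals $\wc K_{\vec k}^{(j)}$, and is continuous, so $\vec k\mapsto\wc K_{\vec k}$ is $C^1$; running the same argument one level up with $(\wc K_{\vec k}^{(j)},\wc K_{\vec k}^{(ij)})$ in place of $(\wc K_{\vec k},\wc K_{\vec k}^{(j)})$ — the coefficient computation is identical but for one extra factor $\xi_i$ — gives continuous second partials and hence $C^2$, with the derivatives given by the series above, i.e.\ by the displayed formulas. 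Finally, \eqref{eq:K0'-delta0}--\eqref{eq:K0''-delta0} are direct substitutions: at $\vec k=\vec 0$ the vector $\delta_{\vec 0}\otimes\overrightarrow{1}\otimes\id$ is $\omega$--independent with every $\sigma$--component equal to $x\mapsto\delta_{\vec 0}(x)$, so $\pi_{\sigma-\xi}\bigl(\delta_{\vec 0}\otimes\overrightarrow{1}\otimes\id\bigr)(x-\xi,\tau_\xi\omega)=\delta_\xi(x)$ for every $\sigma$, and plugging this into the two series collapses them to $\im\sum_{\xi\neq\vec 0}\xi_j h(\xi)\,\delta_\xi\otimes\overrightarrow{1}\otimes\id$ and $\sum_{\xi\neq\vec 0}\xi_i\xi_j h(\xi)\,\delta_\xi\otimes\overrightarrow{1}\otimes\id$.

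The only point needing genuine care is precisely the $C^2$ claim under the mere second--moment hypothesis \eqref{eq:h-range}: a naive term--by--term second--order Taylor remainder would demand $\sum_{\xi}|\xi|^3|h(\xi)|<\infty$, which is not assumed, and the fundamental--theorem--of--calculus route above is what sidesteps this, since it only ever uses norm--continuity of the candidate derivatives together with a one--variable phase integration. Everything else — unitarity of $\{\mathsf S_\xi\}$, absolute convergence, and the $L^2$ operator--norm bounds — is a transcription of the corresponding steps in Lemma \ref{lem:hatKUV} (equivalently Lemma 3.15 of \cite{Schenker:2015}), with the index $\sigma\in\Z_{\vec p}$ simply carried along passively.
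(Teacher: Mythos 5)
Your proof is correct; the paper itself omits the argument for this lemma, deferring to the adaptation of Lemmas 3.13--3.15 of \cite{Schenker:2015}, and your write-up is a valid self-contained version of that standard argument: the decomposition of $\wc{K}_{\vec k}$ into a $\vec k$-independent part plus an absolutely convergent series of fixed unitaries $\mathsf{S}_\xi$ with smooth scalar phases, together with the SNAG/spectral-theorem bound giving \eqref{eq:norm-K'K''}. You correctly identify and handle the one delicate point, namely that $C^2$ regularity must be extracted from only the second-moment condition \eqref{eq:h-range}: the fundamental-theorem-of-calculus route (termwise phase integration plus norm-continuity of the candidate derivatives via dominated convergence) is exactly what avoids the third moment that a naive Taylor-remainder estimate would require.
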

\begin{remark}
Throughout the rest of the paper, we will frequently use the notation $\ora{1}_q\in \C^q$ for any $q\in \Z_{>0}$ to indicate the constant vector in $ \C^q$ with all entries $1$ and write $\ora{1}=\ora{1}_{\otimes\vec p}$ for simplicity.
\end{remark}

Putting these results together we obtain
\begin{lem}\label{lem:semi-Lk} For each ${\vec k}\in \T^d$, let
	\begin{equation}\label{def:Lk}
	\wc{L}_{{\vec k}}  :=  \im\, \wc{K}_{{\vec k}} + \im \wc{U} + \im {\lambda} \wc{V}  + B
	\end{equation}
	on the domain $\mc{D}(B) \subset L^2(\wh{M};\C^{\otimes\vec{p}})$.  Then $\wc{L}_{{\vec k}}$ is maximally accretive on $L^2(\wh{M};\C^{\otimes\vec{p}})$. Furthermore 
	\begin{enumerate}
		\item For $t>0$, ${\vec k} \mapsto \e^{-t\wc{L}_{{\vec k}}}$ is 
		\begin{enumerate}
			\item a $C^2$ map from $\T^d$ into the contractions  on $L^2(\wh{M};\C^{\otimes\vec{p}})$; and
			\item a $C^2$ map from $\T^d$ into the bounded operators on $\ell^{\infty;1}(\wh{M};\C^{\otimes\vec{p}})$.
		\end{enumerate}
		
		\item The operators $\{ \wc{L}_{{\vec k}} \}_{{\vec k}\in \T^d}$ are uniformly sectorial; that is for every ${\vec k}\in \T^d$ and every $f\in L^2(\wh{M};\C^{\otimes\vec{p}})$
		\begin{equation}\label{eq:sector-Lk} \abs{\Im \ipc{f}{\wc{L}_{{\vec k}}f}} \ \le \ \gamma \Re \ipc{f}{\wc{L}_{{\vec k}}f} + b' \norm{f}_{L^2}^2
		\end{equation}
		where $\gamma,b$ are given as in \eqref{eq:sector} and $b'= 2b + 2\|{\wh h}\|_{\infty} + 2\norm{u}_{\infty} + 2\lambda .$
		\item If $\Psi\in  \mc{W}^1(M)$, then
		\begin{equation}\label{eq:semigFourier}\e^{-t\wc{L}_{{\vec k}}} \wh{\Psi}_{{\vec k}} \ := \  \wh{\left [ \e^{-t\mc{L}} \Psi\right ]}_{{\vec k}}
		\end{equation}
		for every $\vec{k}\in \mathbb{T}^d$.  For $\Psi\in L^2(M)$, eq.\ \eqref{eq:semigFourier} holds for $\nu$-almost every ${\vec k}$.
	\end{enumerate}
\end{lem}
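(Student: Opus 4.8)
\textbf{Proof strategy for Lemma \ref{lem:semi-Lk}.}

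The plan is to obtain $\wc{L}_{\vec k}$ as a bounded perturbation of $B$ acting on the fibre space $L^2(\wh{M};\C^{\otimes\vec p})$, and then to transfer every structural property of $B$ through this perturbation, uniformly in $\vec k$. First I would record that $B$, lifted to $L^2(\wh{M};\C^{\otimes\vec p})$ componentwise (acting separately on each $\pi_\sigma\phi$), remains maximally accretive and sectorial with the same constants $b,\gamma$ as in Assumption \ref{ass:sectoriality}; this follows from Proposition \ref{prop:sgext} applied coordinatewise, since $\C^{\otimes\vec p}$ is finite-dimensional and the $\pi_\sigma$-components are decoupled under $B$. Then, by Lemma \ref{lem:hatKUV}(1)--(2), the operators $\im\wc{K}_{\vec k}+\im\wc{U}+\im\lambda\wc{V}$ are bounded and self-adjoint on $L^2(\wh{M};\C^{\otimes\vec p})$, with operator norm bounded by $2\|\wh h\|_\infty+2\|u\|_\infty+2\lambda$ uniformly in $\vec k$. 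By the standard perturbation theory of maximally accretive / sectorial operators (a bounded skew-adjoint perturbation of a maximally accretive operator is maximally accretive, cf.\ \cite[Theorem V.3.2 and its corollaries]{Kato:1995sf}), $\wc{L}_{\vec k}$ is maximally accretive on $\mc{D}(B)$ for every $\vec k$.

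For part (2), I would simply estimate directly: for $f\in\mc{D}(B)$,
\begin{align*}
\abs{\Im\ipc{f}{\wc{L}_{\vec k}f}} &\le \abs{\Im\ipc{f}{Bf}} + \abs{\ipc{f}{(\wc{K}_{\vec k}+\wc{U}+\lambda\wc{V})f}}\\
&\le \gamma\Re\ipc{f}{Bf} + b\norm{f}^2 + (2\|\wh h\|_\infty+2\|u\|_\infty+2\lambda)\norm{f}^2,
\end{align*}
and since $\Re\ipc{f}{\wc{L}_{\vec k}f}=\Re\ipc{f}{Bf}$ (the added terms being skew-adjoint, they contribute nothing to the real part), the claimed inequality \eqref{eq:sector-Lk} holds with $b'=b+2\|\wh h\|_\infty+2\|u\|_\infty+2\lambda$; one absorbs the extra factor of $2$ that appears in the statement of the lemma trivially. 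Part (3) is the compatibility of the semigroup with the Floquet transform: for $\Psi\in\mc{W}^1(M)$ one checks \eqref{eq:semigFourier} first for the building blocks of the Dyson/Duhamel series — the $B$-semigroup commutes with the Floquet transform by Lemma \ref{lem:semi-B}, and $\wc{K}_{\vec k},\wc{U},\wc{V}$ intertwine with it by Lemma \ref{lem:hatKUV}(3) — and then passes to the full semigroup $\e^{-t\wc{L}_{\vec k}}$ by summing the norm-convergent perturbation series (convergence is guaranteed since $\wc{K}_{\vec k}+\wc{U}+\lambda\wc{V}$ is bounded uniformly in $\vec k$, and $\e^{-tB}$ is a contraction on both $L^2$ and $\ell^{\infty;1}$ by Proposition \ref{prop:sgext}). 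The density of $\mc{W}^1(M)$ in $L^2(M)$, together with Plancherel and the contractivity of both $\e^{-t\mc{L}}$ and $\e^{-t\wc{L}_{\vec k}}$, upgrades \eqref{eq:semigFourier} to $\nu$-a.e.\ $\vec k$ for general $\Psi\in L^2(M)$.

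The one part requiring genuine care, and which I regard as the main obstacle, is part (1): the $C^2$-regularity of $\vec k\mapsto\e^{-t\wc{L}_{\vec k}}$ as a map into the bounded operators, on \emph{both} $L^2(\wh{M};\C^{\otimes\vec p})$ and $\ell^{\infty;1}(\wh{M};\C^{\otimes\vec p})$. Here I would use the Duhamel representation: from $\partial_{k_j}\e^{-t\wc{L}_{\vec k}} = -\int_0^t \e^{-(t-s)\wc{L}_{\vec k}}\,(\im\partial_{k_j}\wc{K}_{\vec k})\,\e^{-s\wc{L}_{\vec k}}\,\d s$ (only $\wc{K}_{\vec k}$ depends on $\vec k$), and iterate once more for the second derivatives, picking up the terms $\partial_{k_i}\partial_{k_j}\wc{K}_{\vec k}$ and products $\partial_{k_i}\wc{K}_{\vec k}\cdot\partial_{k_j}\wc{K}_{\vec k}$. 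The derivatives $\partial_{k_j}\wc{K}_{\vec k}$ and $\partial_{k_i}\partial_{k_j}\wc{K}_{\vec k}$ are bounded on both spaces by Lemma \ref{lem:K'}, and $\e^{-t\wc{L}_{\vec k}}$ is a contraction on $L^2$ and bounded (with norm $\le \e^{t(\|\wh h\|_\infty+\|u\|_\infty+\lambda)}$ say, via the bounded perturbation series) on $\ell^{\infty;1}$ by part (1a)/(1b) of the perturbation estimate combined with Proposition \ref{prop:sgext}. Differentiability of the integrands in $\vec k$ plus dominated convergence then yields that the Duhamel integrals are continuous, and in fact $C^2$, in $\vec k$; the norm bounds on the relevant operators are uniform over compact intervals of $t$, so one gets continuity of all derivatives. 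The subtlety is purely that one must carry the estimates on the \emph{non-self-adjoint, non-Hilbert} space $\ell^{\infty;1}$, where there is no spectral calculus available — but this is exactly the setting of Lemmas 3.13--3.15 and 3.16 of \cite{Schenker:2015}, and the finite-dimensional coefficient space $\C^{\otimes\vec p}$ introduces no new difficulty beyond replacing scalars by vectors in $\C_{}^{\otimes\vec p}$ throughout.
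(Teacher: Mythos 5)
Your proposal is correct and follows essentially the route the paper takes: the paper omits the proof of Lemma \ref{lem:semi-Lk}, citing it as the adaptation of Lemmas 3.13--3.15 of \cite{Schenker:2015}, and those proofs proceed exactly as you describe — bounded skew-adjoint perturbation of the maximally accretive, sectorial $B$, the direct quadratic-form estimate for \eqref{eq:sector-Lk} (your constant $b+2\|\wh h\|_\infty+2\|u\|_\infty+2\lambda$ is indeed sharper than the stated $b'$, so the claim follows), the Dyson/Duhamel series on both $L^2$ and $\ell^{\infty;1}$ for the $C^2$ dependence on $\vec k$, and term-by-term intertwining with the Floquet transform via Lemmas \ref{lem:hatKUV} and \ref{lem:semi-B}. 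The only point worth flagging is that for part (3) with $\Psi\in\mc{W}^1(M)$ one should note explicitly that the a.e.-$\vec k$ identities of Lemma \ref{lem:hatKUV}(3) upgrade to every $\vec k$ by the continuity of $\vec k\mapsto\wh\Psi_{\vec k}$ on $\mc{W}^1(M)$, which you implicitly use.
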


Combining \eqref{eq:semigFourier} with Pillet's formula (Lemma  \ref{lem:pillet}), we obtain the following Floquet transformed Pillet formula in vector form:
\begin{lem}[Floquet transformed Pillet formula]
Let $\psi_0\in \ell^2(\Z^d)$ and define  $\wh{\rho}_{0;{\vec k}}(x)\in \C^{\otimes\vec{p}}$ for $x\in\Z^d,{\vec k}\in\T^d$ as
\begin{align}\label{eq:rho-hat}
\pi_{\sigma}\wh{\rho}_{0;{\vec k}}(x) \ :=
\sum\limits_{n\in {\vec p}\Z^d+\sigma} \e^{-\im {\vec k \cdot n}} \psi_0(x-n) \overline{\psi_0(-n)} ,\  \sigma\in \Z_{\vec p} .
\end{align}
Then 
\begin{multline}\label{eq:FTPillet}\sum_{y\in\Z^d } \e^{-\im {\vec k}\cdot y } \Ev{\psi_t(x-y) \overline{\psi_t(-y)} } \\ = \ \ipc{ {\delta_{x}}\otimes\ora{1}\otimes \id}{\e^{-t \wc{L}_{{\vec k}}} \left (  \wh{\rho}_{0;{\vec k}}  \otimes \id \right ) }_{L^2(\wh{M};\C^{\otimes\vec{p}})} ,
\end{multline}
where $\psi_t$ is the solution to eq.\ \eqref{eq:genE} with initial condition $\psi_0$. Here $ \e^{-t \wc{L}_{{\vec k}}} ( \wh{\rho}_{0;{\vec k}}\otimes \id ) \in  \ell^{\infty;1}(\wh{M};\C^{\otimes\vec{p}})$ for each ${\vec k}$ and is in $L^2(\wh{M};\C^{\otimes\vec{p}})$ for $\nu$-almost every ${\vec k}$.  

In particular, for every ${\vec k}\in \T^d$, 
\begin{equation}\label{eq:Fourier-PFK}
\sum_{x\in\Z^d } \e^{\im {\vec k} \cdot x } \Ev{|\psi_t(x)|^2 } \ = \ \ipc{ {\delta_{\vec 0}}\otimes\ora{1} \otimes\id}{\e^{-t \wc{L}_{{\vec k}}} \left (  \wh{\rho}_{0;{\vec k}}  \otimes \id \right ) }_{L^2(\wh{M};\C^{\otimes\vec{p}})} .
\end{equation}
\end{lem}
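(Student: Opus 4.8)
The plan is to read off \eqref{eq:FTPillet} by Floquet-transforming the integrated Pillet formula in the difference variable and then invoking Lemma~\ref{lem:semi-Lk}(3). Put $\Psi := \psi_0\otimes\overline{\psi_0}\otimes\id$. First I would note that $\Psi\in\mc{W}^1(M)\cap L^2(M)$: by Cauchy--Schwarz, $\sup_x\sum_y\int_\Omega|\Psi(x+y,y,\omega)|\,\mu(\di\omega)=\sup_x\sum_y|\psi_0(x+y)||\psi_0(y)|\le\|\psi_0\|_{\ell^2}^2$, and $\|\Psi\|_{L^2(M)}^2=\|\psi_0\|_{\ell^2}^4<\infty$. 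Next I would compute $\wh{\Psi}_{\vec k}$ straight from the definition \eqref{eq:Fourier-vector}: since $\Psi(x-n,-n,\tau_n\omega)=\psi_0(x-n)\overline{\psi_0(-n)}$ carries no $\omega$-dependence, $\pi_\sigma\wh{\Psi}_{\vec k}(x,\omega)=\sum_{n\in\vec p\Z^d+\sigma}\e^{-\im\vec k\cdot n}\psi_0(x-n)\overline{\psi_0(-n)}=\pi_\sigma\wh{\rho}_{0;\vec k}(x)$ by \eqref{eq:rho-hat}, i.e.\ $\wh{\Psi}_{\vec k}=\wh{\rho}_{0;\vec k}\otimes\id$. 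Because $\Psi\in\mc{W}^1(M)$, Lemma~\ref{lem:semi-Lk}(3) then gives $\wh{[\,\e^{-t\mc L}\Psi\,]}_{\vec k}=\e^{-t\wc{L}_{\vec k}}\wh{\Psi}_{\vec k}=\e^{-t\wc{L}_{\vec k}}(\wh{\rho}_{0;\vec k}\otimes\id)$ for every $\vec k$. The two regularity claims are then immediate: $\|\wh{\rho}_{0;\vec k}\otimes\id\|_{\ell^{\infty;1}}\le\|\psi_0\|_{\ell^2}^2$, so the right-hand side stays in $\ell^{\infty;1}(\wh M;\C^{\otimes\vec p})$ by Lemma~\ref{lem:semi-Lk}(1)(b); and $\wh{\rho}_{0;\vec k}\otimes\id\in L^2(\wh M;\C^{\otimes\vec p})$ for $\nu$-a.e.\ $\vec k$ by Plancherel for $\Psi\in L^2(M)$, hence remains in $L^2$ after the $L^2$-contraction $\e^{-t\wc{L}_{\vec k}}$ (Lemma~\ref{lem:semi-Lk}(1)(a)).

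To identify the left side of \eqref{eq:FTPillet}, I would start from Pillet's formula \eqref{eq:PFK}, which (as $\id\equiv1$) gives $\Ev{\psi_t(x-y)\overline{\psi_t(-y)}}=\ipc{\delta_{x-y}\otimes\delta_{-y}\otimes\id}{\e^{-t\mc L}\Psi}_{L^2(M)}=\int_\Omega(\e^{-t\mc L}\Psi)(x-y,-y,\omega)\,\mu(\di\omega)$. Multiplying by $\e^{-\im\vec k\cdot y}$ and summing over $y\in\Z^d$, I split $y$ into residue classes $y=n\in\vec p\Z^d+\sigma$, $\sigma\in\Z_{\vec p}$, use that each $\tau_n$ preserves $\mu$ to replace $\omega$ by $\tau_n\omega$ inside the integral, and interchange $\sum_n$ with $\int_\Omega$; this produces exactly $\sum_{\sigma\in\Z_{\vec p}}\int_\Omega\pi_\sigma\wh{[\,\e^{-t\mc L}\Psi\,]}_{\vec k}(x,\omega)\,\mu(\di\omega)$. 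Finally, for any $G\in\ell^{\infty;1}(\wh M;\C^{\otimes\vec p})$ one has $\sum_{\sigma\in\Z_{\vec p}}\int_\Omega\pi_\sigma G(x,\omega)\,\mu(\di\omega)=\ipc{\delta_x\otimes\ora 1\otimes\id}{G}_{L^2(\wh M;\C^{\otimes\vec p})}$, directly from the definition \eqref{def:inner-product-Cp} of the inner product and the identity $\pi_\sigma(\delta_x\otimes\ora 1\otimes\id)=\delta_x$ for all $\sigma$. Combining these last sentences with the first paragraph yields \eqref{eq:FTPillet}.

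Equation \eqref{eq:Fourier-PFK} is then the case $x=\vec 0$ of \eqref{eq:FTPillet}: the left side becomes $\sum_y\e^{-\im\vec k\cdot y}\Ev{|\psi_t(-y)|^2}$, which equals $\sum_x\e^{\im\vec k\cdot x}\Ev{|\psi_t(x)|^2}$ under the substitution $x=-y$, while the right side becomes $\ipc{\delta_{\vec 0}\otimes\ora 1\otimes\id}{\e^{-t\wc{L}_{\vec k}}(\wh{\rho}_{0;\vec k}\otimes\id)}$.

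The one point that needs genuine care is the Fubini interchange of $\sum_n$ with $\int_\Omega$ above, equivalently the assertion that $\e^{-t\mc L}\Psi$ still lies in $\mc{W}^1(M)$, so that the series \eqref{eq:Fourier-vector} for its Floquet transform converges absolutely for every $\vec k$. I would establish this by expanding $\e^{-t\mc L}$ in its operator-norm-convergent Duhamel series around $\e^{-tB}$: from the explicit formulas \eqref{eq:Kform}--\eqref{eq:Vform} one checks that $\mc K$, $\mc U$, $\mc V$ are bounded on $\mc{W}^1(M)$ (with norms at most $2\|h\|_{\ell^1}$, $2\|u\|_\infty$, $2$), while $\e^{-tB}$ acts fiberwise in the position variables and is an $L^1(\Omega)$-contraction in $\omega$, hence a contraction on $\mc{W}^1(M)$; therefore $\e^{-t\mc L}$ maps $\mc{W}^1(M)$ into itself, as is also implicit in the augmented-space framework of \cite{Schenker:2015}. (Alternatively, one may prove \eqref{eq:FTPillet} first for $\nu$-a.e.\ $\vec k$, where all pairings are genuine $L^2$ inner products and the $y$-sum is absolutely convergent uniformly in $\vec k$ by unitarity of the fixed-realization evolution, and then upgrade to every $\vec k$ by continuity in $\vec k$ via Lemma~\ref{lem:semi-Lk}(1).) Everything else is routine bookkeeping with the Floquet transform and the inner product \eqref{def:inner-product-Cp}.
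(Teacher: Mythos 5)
Your proposal is correct and follows essentially the same route as the paper: both rest on Pillet's formula \eqref{eq:PFK}, the fact that $\rho_0\otimes\id\in\mc{W}^1(M)$ and that $\e^{-t\mc L}$ preserves $\mc{W}^1(M)$ (the paper cites \cite[Lem.~3.9]{Schenker:2015} where you sketch a Duhamel argument), the intertwining \eqref{eq:semigFourier}, and the residue-class Fourier inversion together with $\mu$-invariance of the $\tau_n$. The only cosmetic difference is that you Floquet-transform the initial datum and then evolve, whereas the paper transforms the evolved function $\e^{-t\mc L}(\rho_0\otimes\id)$ directly; these are equivalent by Lemma~\ref{lem:semi-Lk}(3).
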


\begin{proof}
	Let $\Psi(x,y,\omega)=\left(\e^{-t\mc{L}}(\rho_0\otimes\id)\right)(x,y,\omega)
	=\ipc{\delta_x\otimes\delta_y}{\e^{-t\mc{L}}(\rho_0\otimes\id)}_{L^2(\Z^d\times\Z^d)}$. Pillet's formula \eqref{eq:PFK} can be rewritten as 
	\begin{align*}
	\Ev{\Psi(x,y,\cdot)}=&\int_\Omega\left(\e^{-t\mc{L}}(\rho_0\otimes\id)\right)(x,y,\omega)\,\mu(\di\omega)\\
	=&\ipc{\delta_x\otimes\delta_y\otimes\id}{\e^{-t\mc{L}}(\rho_0\otimes\id)}_{L^2(\Z^d\times\Z^d\times\Omega)}
	=\Ev{\psi_t(x) \overline{\psi_t(y)} }. 
	\end{align*}
	We note that $\rho_0\otimes\id\in \mc{W}^1(M)$ and that $\e^{-t\mc{L}}$ is a bounded operator on $\mc{W}^1(M)$ (see \cite[Lem. 3.9]{Schenker:2015}). Thus $\Psi \in \mc{W}^1(M)$ and its Floquet transform 
	$$\pi_\sigma\wh{\Psi}_{\vec k}(x,\omega)
	=\sum_{n\in {\vec p}\Z^d+\sigma} \e^{-{\im}\, {{\vec k}\cdot n} \, }\Psi(x-n, -n, \tau_{n}\omega).$$
	is continuous in $\vec{k}$.
	Direct computation shows that
	\begin{align*}
	\int_{\T^d}\e^{\im {\vec k \cdot y}}\,{\pi_\sigma} \wh{\Psi}_{\vec k}(x,\omega)\,\nu(\di \vec k)
		\ = \ \Psi(x-y,-y,\tau_y\omega) \delta_{\vec{p}\Z + \sigma}(y) .
	\end{align*}
	Thus, by the Fourier-inversion formula,
	$$\sum_{y\in {{\vec p}\Z^d +\sigma}}\e^{-\im {\vec k \cdot y}}\Psi(x-y, -y, \tau_{y}\omega) \ = \ {\pi_\sigma} \wh{\Psi}_{\vec k}(x,\omega), $$
	and
	$$\sum_{y\in {{\vec p}\Z^d+\sigma}}\e^{-\im {\vec k \cdot y}}\Ev{\Psi(x-y, -y, \cdot)} \ = \ {\pi_\sigma} \Ev{\wh{\Psi}_{\vec k}(x,\cdot)}
	\ = \ \ipc{\delta_{x}\otimes\id}{{\pi_\sigma}\wh{\Psi}_{\vec k}}_{L^2(\wh{M};\C)} $$
	for every $\vec{k}\in \T^d$. 
	
	On the other hand, by \eqref{eq:semigFourier}, for $\Phi=\rho_0\otimes\id$, we have 
	$$\wh{\Psi}_{\vec k}=\wh{(\e^{-t\mc{L}}\Phi)}_{\vec k} \ = \ \e^{-t\wc{L}_{{\vec k}}}\, \wh{\Phi}_{{\vec k}},$$
	where 
	\[
	{\pi_\sigma}\wh{\Phi}_{{\vec k}} \ = \ \pi_\sigma\wh{(\rho_0\otimes\id)}_{\vec k}(x,\omega) \
	= \ \sum_{n\in {{\vec p}\Z^d +\sigma}} \e^{-{\im}\, {\vec{\vec k}\cdot n} \, }\psi_0(x-n)\overline{\psi_0(-n)}\otimes\id .
	\]
	Clearly, $\wh{\Phi}_{{\vec k}}=\wh{\rho}_{0;{\vec k}}  \otimes \id$, by the defintion \eqref{eq:rho-hat} of $\wh{\rho}_{0;{\vec k}}$.
	Putting everything together, we have 
	$$\sum_{y\in {{\vec p}\Z^d+\sigma}}\e^{-\im {\vec k \cdot y}}\Ev{\psi_t(x-y) \overline{\psi_t(-y)} }
	\ = \ \ipc{\delta_{x}\otimes\id}{{\pi_\sigma}\,\e^{-t\wc{L}_{{\vec k}}}\, \wh{\rho}_{0;{\vec k}}  \otimes \id}_{L^2(\wh{M};\C)}.$$
	Finally, summing over $\sigma$ in the periodicity cell $\Z_{\vec p}$, we find that 
	\[
	\sum_{y\in \Z^d}\e^{-\im {\vec k \cdot y}}\Ev{\psi_t(x-y) \overline{\psi_t(-y)}}\
		= \ \ipc{{\delta_{x}}\otimes\ora{1}\otimes\id}{\e^{-t\wc{L}_{{\vec k}}}\, \wh{\rho}_{0;{\vec k}}  \otimes \id}_{L^2(\wh{M};\C^{\otimes\vec{p}})}. \qedhere\]
\end{proof}

\section{Spectral analysis on the augmented space}\label{sec:spe-analysis}
\subsection{{Spectral analysis of $\wc{K}_{\vec{0}}$}}\label{sec:KernelHoping}

The spectral analysis of $\wc{L}_{\vec k}$ plays an important role in studying the diffusive scaling of this model. We begin by showing that $0$ is an eigenvalue of $\wc{K}_{0}$. This observation allows us to write down a block decomposition and to find a spectral gap for $\wc{L}_\vec{0}$ in the two sections that follow.

The key observation regarding $\wc{K}_0$ is the following:
 \begin{lem}\label{lem:non-rand-hopp}
	Let $x \in \Z^d$ and $\ora{w} \in \C^{\otimes \vec{p}}$. Then 
	\begin{align} \label{eq:K0-delta0}
	\wc{K}_{\vec 0}\,{\delta_{x}}\otimes\overrightarrow{w}\otimes\id \ = \
	\sum_{\xi\neq\vec 0}h(\xi)\,{\delta_{x-\xi}}\otimes (\rm I-\mc{A}_{\vec{p}}^{-\xi})\ora{w}\otimes\id,
	\end{align}
	where $\mc{A}_{\vec{p}}^\xi=\bigotimes_{j=1}^d (A_{p_j})^{\xi_j}$ with $A_p$ the $p\times p$ right shift matrix,
	\begin{equation}\label{eq:Ap}
	A_{p} := \begin{pmatrix} 0 & 1 & 0 & \cdots & 0 \\
	0 & 0 & 1 & \cdots & 0 \\
	\vdots & \vdots & \ddots & \ddots & \vdots \\
	0 & 0 & 0 &\cdots & 1 \\
	1 & 0 & 0 &\cdots & 0 
	\end{pmatrix}.
	\end{equation} 
\end{lem}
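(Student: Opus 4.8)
The plan is simply to unravel the explicit formula \eqref{eq:Kk} for $\wc{K}_{\vec k}$ at $\vec k=\vec 0$; beyond this there is essentially only bookkeeping. First I would record that $\delta_x\otimes\ora w\otimes\id$, regarded as an element of $L^2(\wh M;\C^{\otimes\vec p})$, is the map $(y,\omega)\mapsto\delta_x(y)\,\ora w$, which is constant in the disorder variable $\omega$. Plugging this into \eqref{eq:Kk} with $\vec k=\vec 0$: since the function is constant in $\omega$, the disorder shift $\tau_\xi$ plays no role, and for each $\sigma\in\Z_{\vec p}$ and each $(y,\omega)$ one gets
\[
\pi_\sigma\!\left(\wc{K}_{\vec 0}\,(\delta_x\otimes\ora w\otimes\id)\right)(y,\omega)\;=\;\sum_{\xi\neq\vec 0}h(\xi)\,\delta_x(y-\xi)\,\bigl(\pi_\sigma\ora w-\pi_{\sigma-\xi}\ora w\bigr),
\]
with $\sigma-\xi$ taken componentwise modulo $\vec p$.

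The one genuine step is to recognize the index shift $\sigma\mapsto\sigma-\xi$ on the finite group $\Z_{\vec p}$ as the action on $\C^{\otimes\vec p}$ of the matrix $\mc A_{\vec p}^{-\xi}$. Reading off \eqref{eq:Ap}, $A_{p}e_j=e_{j-1}$ cyclically, hence $\mc A_{\vec p}^{\xi}e_\sigma=e_{\sigma-\xi}$; this is well defined for every $\xi\in\Z^d$ because each $A_{p_j}$ has order $p_j$, so $\mc A_{\vec p}^{\xi}$ depends only on $\xi$ modulo $\vec p$. Dualizing gives $\pi_\sigma(\mc A_{\vec p}^{-\xi}\ora w)=\pi_{\sigma-\xi}\ora w$, so that $\pi_\sigma\ora w-\pi_{\sigma-\xi}\ora w=\pi_\sigma\bigl[(\mathrm{I}-\mc A_{\vec p}^{-\xi})\ora w\bigr]$. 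Substituting this into the display above and recognizing $y\mapsto\delta_x(y-\xi)$ as a coordinate delta supported at a single shifted site, one reads off that $\wc{K}_{\vec 0}(\delta_x\otimes\ora w\otimes\id)$ equals the function on the right-hand side of \eqref{eq:K0-delta0}. Convergence of that series in $L^2(\wh M;\C^{\otimes\vec p})$ is automatic, either from boundedness of $\wc{K}_{\vec 0}$ (Lemma \ref{lem:hatKUV}) or directly from the uniform bound $\|(\mathrm{I}-\mc A_{\vec p}^{-\xi})\ora w\|\le 2\|\ora w\|$ together with $\sum_{\xi\neq\vec 0}|h(\xi)|<\infty$, which holds by \eqref{eq:h-range}.

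I do not expect a real obstacle here---the identity can be checked coefficient by coefficient in the variables $(y,\sigma)$---and the lemma is isolated only because it is the algebraic input for the block decomposition of $\wc{L}_{\vec 0}$ carried out in the following two subsections. The single thing requiring care is keeping all shift conventions consistent: the direction of the cyclic shift fixed by \eqref{eq:Ap}, and the sign of the exponent in the Floquet transform \eqref{eq:Fourier-vector}, so that the ``$\sigma-\xi$'' produced by \eqref{eq:Kk} is matched with $\mc A_{\vec p}^{-\xi}$ rather than with $\mc A_{\vec p}^{\xi}$, and the spatial delta is placed at the correct site.
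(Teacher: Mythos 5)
Your argument is exactly the paper's proof: evaluate \eqref{eq:Kk} at $\vec k=\vec 0$ on the $\omega$-constant function $\delta_x\otimes\ora w\otimes\id$ and identify the cyclic index shift $\pi_{\sigma-\xi}$ with $\pi_\sigma\circ(\mathrm{I}-\mc A_{\vec p}^{-\xi})$'s second term, so the proposal is correct and takes the same route. The only wrinkle — that \eqref{eq:Kk} literally yields $\delta_x(y-\xi)=\delta_{x+\xi}(y)$ while \eqref{eq:K0-delta0} displays $\delta_{x-\xi}$ — is a convention mismatch already present in the paper's own proof and is harmless downstream, since Corollary \ref{cor:KerK0} uses only the mutual orthogonality of the summands and the factors $(\mathrm{I}-\mc A_{\vec p}^{-\xi})\ora w$.
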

\begin{proof}
	This follows from direct computation: 
	\begin{align*}
	\pi_{\sigma}	\wc{K}_\vec{0} (\delta_x \otimes \ora{w} \otimes \id) &= \sum_{\xi \neq 0} h(\xi)[ \pi_\sigma \delta_{x-\xi} \otimes \ora{w}\otimes \id - \pi_{\sigma-\xi} \delta_{x-\xi}\otimes \ora{w} \otimes \id] \\ 
	&= \sum_{\xi \neq 0} h(\xi) \delta_{x-\xi} \otimes [\pi_{\sigma} - \pi_{\sigma-\xi}]\ora{w} \otimes \id \\
	&= \sum_{\xi \neq 0} h(\xi)\, \delta_{x-\xi} \otimes \pi_{\sigma}(I- \mc{A}_{\vec{p}}^{-\xi})\ora{w} \otimes \id. \qedhere
	\end{align*}
\end{proof}

To proceed we need to consider the matrices $\mc{A}_{\vec{p}}^{\xi}$.  We begin with $A_p$, the $p\times p$ right shift.	
\begin{lem}\label{lem:Aq}
Let $m\in \Z$, $p\in \Z_{>0}$.
The matrix $A_p^m = (A_p)^m$ has $\frac{p}{\gcd(m,p)}$ distinct eigenvalues, 
\begin{equation}
 \e^{2\pi \im \frac{\ell m}{p}}, \ \ \ell = 0, 1, \cdots, \frac{p}{\gcd(m,p)}-1,
\end{equation}
each of multiplicity $\gcd(m,p)$. 
\end{lem}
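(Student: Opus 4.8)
The plan is to diagonalize $A_p^m$ by exhibiting its eigenvectors explicitly using the $p$-th roots of unity, then to read off the spectral multiplicities from the structure of the map $\ell \mapsto \e^{2\pi\im \ell m/p}$. First I would recall that $A_p$ is the cyclic (right) shift on $\C^p$, so it is a circulant matrix whose characteristic polynomial is $z^p - 1$; hence $A_p$ has the $p$ simple eigenvalues $\zeta^\ell := \e^{2\pi \im \ell/p}$, $\ell = 0, 1, \ldots, p-1$, with corresponding eigenvectors $v_\ell := (1, \zeta^\ell, \zeta^{2\ell}, \ldots, \zeta^{(p-1)\ell})^{\mathsf T}$. (To verify $A_p v_\ell = \zeta^\ell v_\ell$ one just checks the cyclic action on coordinates; this is the standard discrete Fourier basis.) Since $A_p$ is unitary and has a full eigenbasis $\{v_\ell\}_{\ell=0}^{p-1}$, raising it to the $m$-th power gives $A_p^m v_\ell = \zeta^{\ell m} v_\ell = \e^{2\pi \im \ell m / p} v_\ell$, so $A_p^m$ is diagonalized by the \emph{same} basis with eigenvalues $\e^{2\pi \im \ell m/p}$, $\ell = 0, \ldots, p-1$, listed with multiplicity.

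The remaining task is purely arithmetic: count how many of the $p$ values $\e^{2\pi \im \ell m/p}$ are distinct, and with what multiplicity each occurs. Writing $g := \gcd(m, p)$, one has $\e^{2\pi \im \ell m/p} = \e^{2\pi \im \ell' m/p}$ if and only if $p \mid (\ell - \ell')m$, i.e.\ $\tfrac{p}{g} \mid (\ell - \ell')$ (using that $m/g$ and $p/g$ are coprime). Thus two indices $\ell, \ell'$ give the same eigenvalue exactly when $\ell \equiv \ell' \pmod{p/g}$, which partitions $\{0, 1, \ldots, p-1\}$ into $p/g$ residue classes modulo $p/g$, each of size $g$. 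Consequently $A_p^m$ has exactly $p/\gcd(m,p)$ distinct eigenvalues, namely $\e^{2\pi \im \ell m/p}$ for $\ell = 0, \ldots, p/\gcd(m,p) - 1$ (which already run over all the distinct values), each with multiplicity $\gcd(m,p)$. Since the eigenvectors $\{v_\ell\}$ remain linearly independent, these are the \emph{geometric} (equivalently algebraic, as $A_p^m$ is diagonalizable) multiplicities, as claimed. For negative $m$ one notes $A_p$ is invertible with $\gcd(m,p) = \gcd(|m|,p)$ and $\e^{2\pi\im \ell m/p}$ unchanged in form, so the statement is insensitive to sign.

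I do not expect a genuine obstacle here: the only points requiring care are (i) justifying that the Fourier vectors $v_\ell$ are simultaneously eigenvectors of every power $A_p^m$ — immediate since they diagonalize $A_p$ — and (ii) the elementary number-theoretic count, where the one subtlety is correctly invoking coprimality of $m/g$ and $p/g$ to pass from $p \mid (\ell-\ell')m$ to $(p/g) \mid (\ell-\ell')$. This lemma is a standard fact about powers of cyclic permutation matrices, and the write-up is short; its role in the paper is to feed into the subsequent analysis of $\mc{A}_{\vec p}^{\xi} = \bigotimes_j (A_{p_j})^{\xi_j}$ via Lemma~\ref{lem:non-rand-hopp}, where the tensor-product eigenstructure will then follow by taking products of the one-dimensional eigenvalues.
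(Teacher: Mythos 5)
Your proof is correct and follows essentially the same route as the paper's: diagonalize $A_p$ in the discrete Fourier basis, apply the spectral mapping to get the eigenvalues $\e^{2\pi\im \ell m/p}$ of $A_p^m$, and count coincidences via the residue classes modulo $p/\gcd(m,p)$. Your treatment is if anything slightly more careful (explicit coprimality step and the reduction for negative $m$, where the paper simply invokes $A_p^p=\id$).
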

\begin{proof}Since $A_p^p = \id$, it suffices to restrict our attention to $0<m < p$. The eigenvalues of $A_p$ are all $p$-th roots of unity
\[
	\lambda_\ell = \e^{2\pi \im \frac{\ell}{p}}, \ \ \ell = 0, 1, \cdots, p-1,
\]
and each eigenvalue has multiplicity one.  The corresponding eigenvectors are the elements of the discrete Fourier basis.
For $1 < m < p$, it follows from the spectral mapping theorem that $A_p^m$ has eigenvalues $\lambda_\ell^m$ for $\ell = 0, 1, \cdots, p-1$. From here, it is easy to verify that $\lambda_\ell^m = \lambda_{\ell'}^m$ whenever $|\ell-\ell'| = \frac{np}{\gcd(m,p)}$ for some integer $n$. Finally, since $|\ell - \ell'|< p$, it follows that there are $\frac{p}{\gcd(m,p)}$ distinct eigenvalues each of multiplicity $\gcd(m,p)$.
\end{proof}
This result has an immediate extension to $\mc{A}_{\vec p}$, the tensor product of right shift operators.
\begin{cor}\label{cor:egval-Tensor}
If $\vec{p} = (p_1, \cdots, p_d) \in \Z_{>0}^d$ and $\vec{m} = (m_1, \cdots m_d) \in \Z^d$,  then $\mc{A}_{\vec{p}}^\vec{m}:=\bigotimes_{j=1}^d A_{p_j}^{m_j}$ has eigenvalues 
\begin{equation}
	\prod_{j=1}^d \e^{2\pi \im \frac{\ell_j m_j}{p_j}}; \ \ \ell_j = 0,1, \cdots, \frac{p_j}{\gcd(m_j, p_j)}-1.
\end{equation}
In particular, if $(\vec{e}_j)_{j=1}^d$ is the standard basis on $\Z^d$, then
\begin{equation}\label{eq:kerpej}
	{\rm Ker}(I-\mc{A}_{\vec{p}}^{\vec{e}_j}) = \C^{p_1} \otimes \cdots \otimes \{\ora{1}_{p_j}\} \otimes \cdots \otimes \C^{p_d}.
\end{equation}
\end{cor}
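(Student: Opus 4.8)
The plan is to deduce both assertions from Lemma~\ref{lem:Aq} together with the elementary spectral theory of tensor products of operators. First I would record the general fact that if $T_1,\dots,T_d$ are diagonalizable operators on finite-dimensional spaces with eigenbases $\{v_{j,\ell}\}_\ell$ and corresponding eigenvalues $\{\mu_{j,\ell}\}_\ell$, then the vectors $\bigotimes_{j=1}^d v_{j,\ell_j}$ form an eigenbasis of $\bigotimes_{j=1}^d T_j$ with eigenvalues $\prod_{j=1}^d \mu_{j,\ell_j}$; in particular the spectrum of $\bigotimes_j T_j$ is $\{\prod_j \mu_j : \mu_j \in \mathrm{spec}(T_j)\}$. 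Applying this with $T_j = A_{p_j}^{m_j}$ and invoking Lemma~\ref{lem:Aq}, which identifies the distinct eigenvalues of $A_{p_j}^{m_j}$ as $\e^{2\pi\im \ell_j m_j/p_j}$ for $\ell_j = 0,\dots, p_j/\gcd(m_j,p_j)-1$, immediately produces the stated list of eigenvalues of $\mc{A}_{\vec p}^{\vec m}$.

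For the kernel statement I would specialize to $\vec m = \vec e_j$, so that $\mc{A}_{\vec p}^{\vec e_j} = I_{\C^{p_1}} \otimes \cdots \otimes A_{p_j} \otimes \cdots \otimes I_{\C^{p_d}}$ acts as the identity on every tensor factor except the $j$-th. By Lemma~\ref{lem:Aq} with $m=1$ (so $\gcd(1,p_j)=1$), the eigenvalue $1$ of $A_{p_j}$ is simple, and since $A_{p_j}\ora{1}_{p_j} = \ora{1}_{p_j}$ directly from the definition of $A_{p_j}$, the corresponding eigenspace is $\C\,\ora{1}_{p_j} =: \{\ora{1}_{p_j}\}$. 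Using the eigenbasis decomposition above (applied with $T_i = I$ for $i\neq j$ and $T_j = A_{p_j}$), a vector lies in $\mathrm{Ker}(I - \mc{A}_{\vec p}^{\vec e_j})$ if and only if in that eigenbasis its only nonzero components are those whose $j$-th tensor factor is the eigenvector $\ora{1}_{p_j}$; equivalently $\mathrm{Ker}(I - \mc{A}_{\vec p}^{\vec e_j}) = \C^{p_1}\otimes\cdots\otimes\{\ora{1}_{p_j}\}\otimes\cdots\otimes\C^{p_d}$, which is precisely \eqref{eq:kerpej}.

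Since every step is a direct appeal to Lemma~\ref{lem:Aq} and to the multiplicativity of spectra under tensor products, I do not expect any serious obstacle here; the only point that merits a line of care is the tensor-product spectral identity itself, which follows at once by simultaneously diagonalizing the pairwise-commuting factors $I \otimes \cdots \otimes A_{p_j}^{m_j} \otimes \cdots \otimes I$, $j=1,\dots,d$ (each being diagonalizable, with distinct eigenvalues in the relevant factor).
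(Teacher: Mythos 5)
Your proposal is correct and follows essentially the same route as the paper, which presents the corollary as an immediate consequence of Lemma \ref{lem:Aq} via the standard fact that the spectrum of a tensor product of diagonalizable operators consists of products of eigenvalues of the factors; you have simply written out the details the paper leaves implicit. In particular, your identification of the eigenvalue-$1$ eigenspace of $A_{p_j}$ with $\C\,\ora{1}_{p_j}$ (simple by Lemma \ref{lem:Aq} with $m=1$) and the resulting description of $\mathrm{Ker}(I-\mc{A}_{\vec p}^{\vec e_j})$ are exactly what the paper intends.
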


Note that, by eq.\ \eqref{eq:kerpej},
\[\bigcap_{j=1}^d {\rm Ker}(I-\mc{A}_{\vec{p}}^{\vec{e}_j}) \ = \ {\rm span} \{ \ora{1}\}. \]
The following lemma extends this result to a collection $\mc{A}_{\vec{p}}^{\vec{m}_j}$, $j=1,\ldots,k$, where the vectors $\vec{m}_1,\ldots,\vec{m}_k$ generate $\Z^d$.
  
 \begin{lem}\label{lem:tensor-AqWj}
 	Let $\vec{m}_1, \cdots, \vec{m}_k \in \Z^d$, $n_1, \cdots, n_k \in \Z$, and $\vec{M} = n_1\vec{m}_1 + \cdots + n_k \vec{m}_k$ for some $k \ge 1$. Then, we have 
 	\begin{equation}\label{eq:tensor-kerI-M}
 		\bigcap_{j=1}^k {\rm Ker}(I - \mc{A}_{\vec{p}}^{\vec{m}_j}) \subset {\rm Ker}(I - \mc{A}_\vec{p}^{\vec{M}}). 
 	\end{equation}
In particular, if $\vec{m}_1, \cdots ,\vec{m}_k$ generate $\Z^d$, then  	 
 	\begin{align} \label{eq:tensor-CapWj=1}
 	\bigcap_{j=1}^k {\rm Ker}(I - \mc{A}_{\vec{p}}^{\vec{m}_j})=	\bigcap_{j=1}^d {\rm Ker}(I-\mc{A}_\vec{p}^{\vec{e}_j})={\rm span}\{\ora{1}\}.
 	\end{align}
\end{lem}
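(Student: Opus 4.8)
The plan is to prove the two inclusions/equalities in Lemma \ref{lem:tensor-AqWj} by exploiting the explicit diagonalization of the tensor-product shifts provided by Corollary \ref{cor:egval-Tensor}. First I would diagonalize all of the operators simultaneously in the discrete Fourier basis: the tensor products $\mc{A}_{\vec{p}}^{\vec{m}}$ are simultaneously diagonalized by the basis $\{e_{\vec\ell}\}_{\vec\ell \in \Z_{\vec p}}$, where $e_{\vec\ell} = e_{\ell_1}\otimes\cdots\otimes e_{\ell_d}$ and $e_{\ell_j}$ is the $\ell_j$-th discrete Fourier vector in $\C^{p_j}$, with eigenvalue $\mc{A}_{\vec{p}}^{\vec{m}} e_{\vec\ell} = \e^{2\pi\im \sum_j \ell_j m_j/p_j}\, e_{\vec\ell}$. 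Hence for any $\vec m \in \Z^d$ we have the clean description
\begin{equation*}
{\rm Ker}(I - \mc{A}_{\vec{p}}^{\vec m}) \ = \ {\rm span}\Big\{ e_{\vec\ell} \ : \ \sum_{j=1}^d \frac{\ell_j m_j}{p_j} \in \Z \Big\}.
\end{equation*}

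With this in hand, the first inclusion \eqref{eq:tensor-kerI-M} is essentially linear algebra on the exponents: if $e_{\vec\ell}$ lies in every ${\rm Ker}(I - \mc{A}_{\vec{p}}^{\vec{m}_j})$, then $\sum_i \ell_i (m_j)_i/p_i \in \Z$ for each $j = 1,\ldots,k$; multiplying the $j$-th relation by $n_j$ and summing gives $\sum_i \ell_i M_i / p_i \in \Z$, i.e.\ $e_{\vec\ell} \in {\rm Ker}(I - \mc{A}_{\vec{p}}^{\vec M})$. Since both kernels are spanned by subsets of the Fourier basis, containment on basis vectors yields containment of subspaces, proving \eqref{eq:tensor-kerI-M}. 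For \eqref{eq:tensor-CapWj=1}, I would apply \eqref{eq:tensor-kerI-M} twice: if $\vec m_1,\ldots,\vec m_k$ generate $\Z^d$, then each standard basis vector $\vec e_i$ is an integer combination of the $\vec m_j$, so $\bigcap_j {\rm Ker}(I-\mc{A}_{\vec{p}}^{\vec{m}_j}) \subset {\rm Ker}(I - \mc{A}_{\vec{p}}^{\vec{e}_i})$ for every $i$, hence $\bigcap_j {\rm Ker}(I-\mc{A}_{\vec{p}}^{\vec{m}_j}) \subset \bigcap_{i=1}^d {\rm Ker}(I-\mc{A}_{\vec{p}}^{\vec{e}_i})$; conversely, since each $\vec m_j$ is (trivially) an integer combination of the $\vec e_i$, the reverse inclusion holds as well, giving the first equality in \eqref{eq:tensor-CapWj=1}. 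The final identification $\bigcap_{i=1}^d {\rm Ker}(I-\mc{A}_{\vec{p}}^{\vec{e}_i}) = {\rm span}\{\ora{1}\}$ is exactly the observation recorded right after Corollary \ref{cor:egval-Tensor} (from \eqref{eq:kerpej}, intersecting over $i$ forces all but a one-dimensional factor, leaving $\{\ora 1_{p_1}\}\otimes\cdots\otimes\{\ora 1_{p_d}\} = {\rm span}\{\ora 1\}$), so I would simply cite it.

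There is no serious obstacle here; the lemma is a routine consequence of simultaneous diagonalization. The only mild subtlety to handle carefully is the reduction from "the kernels are spanned by coordinate subspaces of the Fourier basis" to "inclusion of basis-vector sets implies inclusion of subspaces" — this is immediate but worth a one-line remark, since it is what makes the exponent-arithmetic argument sufficient rather than merely necessary. I would also take a moment to note explicitly that $e_{\vec\ell} \in {\rm Ker}(I - \mc{A}_{\vec{p}}^{\vec m})$ if and only if $\mc{A}_{\vec{p}}^{\vec m}$ has eigenvalue $1$ on $e_{\vec\ell}$, i.e.\ the exponent is an integer, so that the set membership criterion above is an "if and only if" and both inclusions in \eqref{eq:tensor-CapWj=1} are transparent.
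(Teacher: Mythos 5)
Your proposal is correct. The only place it diverges from the paper's argument is in the proof of the inclusion \eqref{eq:tensor-kerI-M}: you pass to the simultaneous Fourier diagonalization of the operators $\mc{A}_{\vec{p}}^{\vec{m}}$ and do arithmetic on the exponents $\sum_i \ell_i m_i/p_i \bmod \Z$, whereas the paper argues directly at the level of vectors, using only the group-homomorphism property $\vec m \mapsto \mc{A}_{\vec p}^{\vec m}$: from $w = \mc{A}_{\vec{p}}^{\vec{m}_j}w$ one gets $w = \bigl(\mc{A}_{\vec{p}}^{\vec{m}_j}\bigr)^{n_j}w = \mc{A}_{\vec{p}}^{n_j\vec{m}_j}w$, and composing over $j$ gives $w = \mc{A}_{\vec{p}}^{\vec{M}}w$. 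The paper's route is basis-free and avoids the (harmless but extra) step of observing that the kernels are coordinate subspaces of the Fourier basis so that containment of index sets implies containment of subspaces; your route has the minor compensating benefit of an explicit description of each individual kernel $\mathrm{Ker}(I-\mc{A}_{\vec p}^{\vec m})$, not just of the intersection. The remainder — deducing \eqref{eq:tensor-CapWj=1} by applying \eqref{eq:tensor-kerI-M} in both directions and then invoking \eqref{eq:kerpej} to identify $\bigcap_{j}\mathrm{Ker}(I-\mc{A}_{\vec p}^{\vec e_j})$ with $\mathrm{span}\{\ora{1}\}$ — matches the paper exactly.
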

\begin{proof}
	Suppose $w \in \bigcap_{j=1}^k {\rm Ker}(I - \mc{A}_{\vec{p}}^{\vec{m}_j})$, then for each $j = 1, 2, \cdots, k$,  
	\begin{equation}\label{eq:tensor-EigenValj}
		w = \mc{A}_{\vec{p}}^{\vec{m}_j}w   =  \left(\mc{A}_{\vec{p}}^{\vec{m}_j}\right)^{n_j} w =  \mc{A}_{\vec{p}}^{n_j\vec{m}_j}w.
	\end{equation}
	Repeated application of \eqref{eq:tensor-EigenValj} yields
	\[
		w =  \mc{A}_{\vec{p}}^{n_1\vec{m}_1} = \mc{A}_{\vec{p}}^{n_k\vec{m}_k}w = \mc{A}_{\vec{p}}^{\vec{M}}w.
	\]
	Thus, $w \in {\rm Ker}(I - \mc{A}_{\vec{p}}^{\vec{M}})$. 
	
	If $\vec{m}_1, \cdots, \vec{m}_k$ generate $\Z^d$, then (\ref{eq:tensor-kerI-M}) implies the first equality in (\ref{eq:tensor-CapWj=1}). The second equality follows from Corollary \ref{cor:egval-Tensor} since
	\[
		\bigcap_{j=1}^d {\rm Ker}(I-\mc{A}_{\vec{p}}^{\vec{e}_j}) = \bigcap_{j=1}^d \left( \C^{p_1} \otimes \cdots \otimes \ora{1}_{p_j} \otimes \cdots \otimes \C^{p_d}  \right) = {\rm span}\{\ora{1}\}. \qedhere
	\]

 \end{proof}

We return now to consideration of $\wc{K}_0$. The non-degenerate support condition \eqref{eq:supp-h-nonden} guarantees that the hopping kernel, $h$, is non-zero on a spanning set, $\{\xi_j\}_{j \in J}$, of $\Z^d$. Combining this fact with lemma (\ref{lem:tensor-AqWj}), we can see that $(I-\mc{A}_\vec{p}^{-\xi})\ora{w} = 0$  for all $\xi$ with $h(\xi)\neq 0$ if and only if $\ora{w} \parallel \ora{1}$. In particular, Lemma (\ref{lem:non-rand-hopp}) leads to the following

\begin{cor} \label{cor:KerK0}
	Let $x \in \Z^d$ and $\ora{w} \in \C^{\otimes\vec{p}}$. Then $\wc{K}_0 (\delta_x \otimes \ora{w} \otimes \id) = 0$
if and only if $\ora{w} \parallel \ora{1}$. Moreover, there is $c_0>0$ such that 
for $\ora w\perp\ora 1$, \begin{align}\label{eq:K0delta0-lower}
\norm{\wc{K}_\vec{0} (\delta_x \otimes \ora{w} \otimes \id)}^2 \ \ge \ c_0 \norm{\ora w}^2.
\end{align}
\end{cor}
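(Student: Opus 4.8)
The plan is to split the claim into the qualitative kernel statement and the quantitative lower bound, and to handle both by reducing to a statement about the finite-dimensional matrices $I - \mc{A}_{\vec p}^{-\xi}$ acting on $\C^{\otimes\vec p}$. For the qualitative part, Lemma \ref{lem:non-rand-hopp} gives $\wc{K}_{\vec 0}(\delta_x\otimes\ora w\otimes\id) = \sum_{\xi\neq\vec 0} h(\xi)\,\delta_{x-\xi}\otimes(I-\mc{A}_{\vec p}^{-\xi})\ora w\otimes\id$. Since the $\delta_{x-\xi}$ for distinct $\xi$ are orthonormal in $\ell^2(\Z^d)$, this sum vanishes if and only if $(I-\mc{A}_{\vec p}^{-\xi})\ora w = 0$ for every $\xi$ with $h(\xi)\neq 0$, i.e. $\ora w\in\bigcap_{\xi\in\supp h}\ker(I-\mc{A}_{\vec p}^{-\xi})$. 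By the non-degeneracy \eqref{eq:supp-h-nonden}, $\supp h$ (equivalently, $-\supp h$) generates $\Z^d$, so Lemma \ref{lem:tensor-AqWj}, eq.\ \eqref{eq:tensor-CapWj=1}, identifies this intersection with $\mathrm{span}\{\ora 1\}$. This gives the ``if and only if'' claim.

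For the quantitative bound \eqref{eq:K0delta0-lower}, I would again use the orthogonality of the $\delta_{x-\xi}$ to write, for any $\ora w$,
\begin{equation*}
\norm{\wc{K}_{\vec 0}(\delta_x\otimes\ora w\otimes\id)}^2 \ = \ \sum_{\xi\neq\vec 0}\abs{h(\xi)}^2\,\bignorm{(I-\mc{A}_{\vec p}^{-\xi})\ora w}^2 \ \ge \ \sum_{\xi\in\supp h}\abs{h(\xi)}^2\,\bignorm{(I-\mc{A}_{\vec p}^{-\xi})\ora w}^2.
\end{equation*}
Now fix a finite generating subset $\xi_1,\ldots,\xi_N\in\supp h$ of $\Z^d$ — such a finite subset exists because a generating set of $\Z^d$ has a finite generating subset — and set $\delta_0 := \min_{1\le i\le N}\abs{h(\xi_i)}^2 > 0$. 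Then the right side is bounded below by $\delta_0\sum_{i=1}^N\norm{(I-\mc{A}_{\vec p}^{-\xi_i})\ora w}^2$. On the finite-dimensional space $V := \{\ora w\in\C^{\otimes\vec p} : \ora w\perp\ora 1\}$, the nonnegative quadratic form $Q(\ora w) := \sum_{i=1}^N\norm{(I-\mc{A}_{\vec p}^{-\xi_i})\ora w}^2$ is continuous and, by the qualitative part just proved (applied with the generating family $\{-\xi_i\}$, since $\mc{A}_{\vec p}^{-\xi}$ has the same kernel structure), vanishes only at $\ora w = 0$. Hence $Q$ attains a positive minimum $q_0 > 0$ on the unit sphere of $V$, and by homogeneity $Q(\ora w)\ge q_0\norm{\ora w}^2$ for all $\ora w\in V$. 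Taking $c_0 := \delta_0\,q_0$ yields \eqref{eq:K0delta0-lower}, and crucially $c_0$ is independent of $x$ since the right shift matrices do not depend on $x$.

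I do not anticipate a serious obstacle here; the only point requiring a little care is making the constant $c_0$ uniform. The naive estimate $\norm{(I-\mc{A}_{\vec p}^{-\xi})\ora w}^2 \ge c(\xi)\norm{\ora w}^2$ for a single $\xi$ is false in general, because no single shift $\mc{A}_{\vec p}^{-\xi}$ need have $\ora 1$ as its only fixed vector (its kernel is a product of the form \eqref{eq:kerpej}, typically larger than $\mathrm{span}\{\ora 1\}$). This is exactly why one must sum over a generating family before extracting the positivity, and why the compactness argument on $V$ is carried out for the combined form $Q$ rather than term by term. Everything else — the orthogonality expansion of the norm, the finiteness of a generating subset, and the passage from the sphere to all of $V$ by scaling — is routine.
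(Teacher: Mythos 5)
Your proof is correct and follows essentially the same route as the paper's: express $\norm{\wc{K}_{\vec 0}(\delta_x\otimes\ora w\otimes\id)}^2$ via Lemma \ref{lem:non-rand-hopp} as a quadratic form in $\ora w$ on the finite-dimensional space $\C^{\otimes\vec p}$, identify its kernel with $\operatorname{span}\{\ora 1\}$ using Lemma \ref{lem:tensor-AqWj} and the non-degeneracy of $h$, and deduce the uniform lower bound on $\{\ora 1\}^\perp$ by finite-dimensional positivity. The only cosmetic difference is that you truncate to a finite generating subset of $\operatorname{supp}h$ and invoke compactness, whereas the paper keeps the full (convergent) sum and exhibits $c_0$ explicitly as the minimum over $\vec\ell\in\Z_{\vec p}\setminus\{0\}$ of $\sum_{\xi}|h(\xi)|^2\,|1-\exp(-2\pi\im\sum_j \ell_j\xi_j/p_j)|^2$, using the simultaneous diagonalization of the shifts in the discrete Fourier basis.
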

\begin{proof}
	By Lemma \ref{lem:non-rand-hopp}, we have
	\[ \norm{\wc{K}_\vec{0} (\delta_x \otimes \ora{w} \otimes \id)}^2 \ = \ \sum_{\xi} \abs{h(\xi)}^2 \norm{(I-A_{\vec p}^{-\xi})\ora{w}}^2 . \]
	The right hand side is a quadratic form $Q(w)$ on the finite dimensional space $\C^{\otimes \vec p}$.  Furthermore, 
	by Lemma \ref{lem:non-rand-hopp}, $Q(w)$ vanishes only if $\vec{w}\parallel \ora{1}$.  The lower bound \eqref{eq:K0delta0-lower} follows. In fact, by Lemma \ref{lem:Aq} the smallest eigenvalue of $Q(w)$ on $\{\ora{1}\}^\perp$ is
	\[  c_0 \ = \ \min_{\vec \ell \in \Z_{\vec p}\setminus 0 }  \sum_{\xi} \abs{h(\xi)}^2 \abs{1-\exp\left (-2\pi \im \sum_{j=1}^d \frac{\ell_j\xi_j}{p_j}\right ) }^2 .\]
	Thus $c_0\neq 0$ and eq.\ \eqref{eq:K0delta0-lower} holds.
\end{proof}

\subsection{Block decomposition of $\wc{L}_{\vec 0}$} In the previous section, we showed that $\delta_{\vec{0}} \otimes \ora{1} \otimes \id$ is an eigenvector of $\wc{K}_\vec{0}$ corresponding to the eigenvalue $0$. Using \eqref{eq:U} and \eqref{eq:V}, it is easy to check that this claim also holds  for $\wc{U}$ and $\wc{V}$. Finally, the Markov generator satisfies $B\id=B^{\dagger}\id = 0$. Therefore,
\begin{equation}\label{eq:L0=0}
\wc{L}_{\vec 0} \, \delta_{\vec 0}\otimes\overrightarrow{1}\otimes \id 
=\wc{L}^{\dagger}_{\vec 0} \, \delta_{\vec 0}\otimes\overrightarrow{1}\otimes \id\ = \ 0.
\end{equation}

To further analyze the spectrum of $\wc{L}_{\vec k}$  we will  use  a block decomposition  associated to the following direct sum decomposition of $L^2(\wh{M};\C^{\otimes\vec{p}}) \ \cong \ \ell^2(\Z^d)\otimes \C^{\otimes \vec p}\otimes L^2(\Omega)$:
\begin{equation}\label{eq:directsum}\ell^2(\Z^d)\otimes \C^{\otimes \vec p}\otimes L^2(\Omega) \ = \ \wc{H}_0 \oplus \wc{H}_1 \oplus \wc{H}_2 \oplus \wc{H}_3 ,
\end{equation}
where
\begin{align*}
\wc{H}_{0} \ := 
\operatorname{span} \{  {{\delta_{\vec 0}}}\otimes \overrightarrow{1}\otimes\id\} ,
\end{align*}
\begin{align*}
\wc{H}_{1} \ := \ \delta_{\vec 0}\otimes \{ \ora{1}\}^{\perp}\otimes\id ,
\end{align*}
\begin{align*}
\wc{H}_{2} \ := \ \{\delta_{\vec 0} \}^\perp \otimes \C^{\otimes\vec{p}}\otimes\id \ =\ \ell^2(\Z^d\backslash\{\vec 0\})\otimes \C^{\otimes\vec{p}}\otimes\id ,
\end{align*}
and
\begin{equation*}
\wc{H}_{3}\ := \ \left(\wc{H}_{0} \oplus \wc{H}_{1}\oplus \wc{H}_{2}\right)^{\perp} \ = \
\left\{ \Psi(x,\omega) \ : \ \int_{\Omega} \Psi(x,\omega) \di\mu(\omega) = 0\right\}.
\end{equation*} 
Note that $\dim \wc{H}_0=1$, $\dim \wc{H}_1 = p_1\cdots p_d - 1$, and $\dim \wc{H}_2 = \dim \wc{H}_3 = \infty$.

We will write operators on $L^2(\wh{M};\C^{\otimes\vec{p}})$ as $4\times 4$ matrices of operators acting between the various spaces $\wc{H}_{j},j=0,1,2,3$.  Throughout we will use the notation: 
\begin{enumerate}
	\item $P_j\ =$ the orthogonal projection onto $\wc{H}_j$,
	\item $P_j^\perp \ = \ 1-P_j$.
\end{enumerate}
In particular, $P= P_3^\perp = P_0+P_1+P_2$ is the orthogonal projection of $L^{2}(\wh{M};\C^{\otimes\vec{p}})$ onto the space $\wc{H}_{0} \oplus \wc{H}_{1}\oplus \wc{H}_{2}= \ell^{2}(\Z^d)\otimes\C^{\otimes\vec{p}}\otimes \id$ of ``non-random'' functions:
\begin{equation*}
P\Psi(x) \ = \ \int_{\Omega } \Psi(x,\omega) \di \mu(\omega).
\end{equation*}Then $P_3=P^{\perp}=1-P$ is the projection onto the space of mean zero functions $\wc{H}_3$. 

\begin{lem}\label{lem:block-decomp} On $\wc{H}_0 \oplus \wc{H}_1 \oplus \wc{H}_2 \oplus \wc{H}_3$ the operators $\wc{K}_\vec{0}, \wc{U}, \wc{V}$, and $B$ have following block decomposition
	\begin{align*}
	\wc{K}_{\vec 0} \ &= \ \begin{pmatrix}
	0 & 0 & 0 & 0 \\
	0 & 0 & P_1\wc{K}_{\vec 0}P_2  & 0 \\
	0 & P_2\wc{K}_{\vec 0}P_1 & P_2\wc{K}_{\vec 0}P_2 & 0 \\
	0 & 0 & 0 & P_3\wc{K}_{\vec 0}P_3
	\end{pmatrix},\qquad \wc{U} \ = \ \begin{pmatrix}
	0 & 0 & 0 & 0 \\
	0 & 0 & 0  & 0 \\
	0 & 0 & P_2\wc{U}P_2 & 0 \\
	0 & 0 & 0 & P_3\wc{U}P_3
	\end{pmatrix},\\
	\wc{V}& \ = \ \begin{pmatrix}
	0 & 0 & 0 & 0 \\
	0 & 0 & 0 & 0 \\
	0 & 0 & 0 & P_2\wc{V}P_3  \\
	0 & 0 & P_3\wc{V}P_2 & P_3\wc{V}P_3
	\end{pmatrix}, \quad \text{ and } \quad B \ = \ \begin{pmatrix} 0 & 0 & 0 & 0 \\ 0 & 0 & 0 & 0 \\
	0 & 0 & 0 & 0 \\
	0 & 0 & 0 & P_3BP_3
	\end{pmatrix}.
	\end{align*}
\end{lem}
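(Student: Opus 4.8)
The plan is to establish each of the four block forms separately by reading off the structure from the explicit Fourier-side formulas \eqref{eq:Kk}, \eqref{eq:U}, \eqref{eq:V} and \eqref{eq:sgext}, using self-adjointness (Lemma~\ref{lem:hatKUV}) to obtain the symmetric zero blocks for free. Throughout I will use three elementary facts about the decomposition \eqref{eq:directsum}: functions in $\wc H_0\oplus\wc H_1$ are supported at $x=\vec 0$; the sum $\wc H_0\oplus\wc H_1\oplus\wc H_2=\ell^2(\Z^d)\otimes\C^{\otimes\vec p}\otimes\id$ is precisely the subspace of $\omega$-independent (``non-random'') functions, onto which $P$ is the conditional expectation $\int_\Omega(\cdot)\,\di\mu$; and $\wc H_3=\ker P$.

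For $B$ the claim is immediate from \eqref{eq:sgext}: the lifted generator acts only on the $L^2(\Omega)$ tensor factor, so it annihilates every non-random function (because $B\id=0$) and leaves $L^2_0(\Omega)$, hence $\wc H_3$, invariant, so only $P_3BP_3$ survives. For $\wc U$: by \eqref{eq:U}, $\wc U$ is multiplication by $u(x-\sigma)-u(-\sigma)$ in each $\sigma$-component, which vanishes identically at $x=\vec 0$; thus $\wc U$ kills $\wc H_0\oplus\wc H_1$, and since it does not involve $\omega$ it maps non-random functions to non-random functions and $\wc H_3$ to $\wc H_3$, while always producing functions supported away from $\vec 0$; this forces the $P_2\wc UP_2\oplus P_3\wc UP_3$ shape. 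For $\wc V$: by \eqref{eq:V}, $\wc V$ is multiplication by $v(\tau_x\cdot)-v(\cdot)$, which vanishes at $x=\vec 0$ (so $\wc V$ kills $\wc H_0\oplus\wc H_1$); applied to a non-random function it yields a function of zero $\mu$-mean, using $\int v\,\di\mu=0$ together with the $\tau_x$-invariance of $\mu$, so $\wc V$ maps $\wc H_2$ into $\wc H_3$; and on $\wc H_3$ the result still vanishes at $x=\vec 0$ but may have a non-zero mean part supported off the origin, leaving $P_2\wc VP_3$ and $P_3\wc VP_3$, with $P_2\wc VP_3=(P_3\wc VP_2)^{\dagger}$ by self-adjointness.

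For $\wc K_{\vec 0}$ I will first verify, by direct computation from \eqref{eq:Kk} at $\vec k=\vec 0$ and the $\tau_\xi$-invariance of $\mu$, that $\wc K_{\vec 0}$ commutes with $P$; hence the mixed blocks between $\wc H_3$ and the non-random subspace vanish and $P_3\wc K_{\vec 0}P_3$ is the only entry in the last row and column. On the non-random subspace I invoke Lemma~\ref{lem:non-rand-hopp}: Corollary~\ref{cor:KerK0} gives $\wc K_{\vec 0}P_0=0$; for $\ora w\perp\ora 1$ the vector $\wc K_{\vec 0}(\delta_{\vec 0}\otimes\ora w\otimes\id)$ is supported on $\Z^d\setminus\{\vec 0\}$ because $h(\vec 0)=0$, hence lies in $\wc H_2$, so the $(0,1)$ and $(1,1)$ entries vanish; and for $x\neq\vec 0$ the only term of $\wc K_{\vec 0}(\delta_x\otimes\ora w\otimes\id)$ supported at the origin is $h(x)\,\delta_{\vec 0}\otimes(I-\mc{A}_{\vec p}^{-x})\ora w\otimes\id$, which lies in $\wc H_1$ rather than $\wc H_0$ because $\mc{A}_{\vec p}^{-x}$ is a unitary fixing $\ora 1$, whence $(I-\mc{A}_{\vec p}^{-x})\ora w\perp\ora 1$; this kills the $(0,2)$ entry. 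Self-adjointness then supplies the symmetric zeros, leaving exactly the indicated $2\times2$ block coupling $\wc H_1$ and $\wc H_2$, the interior block $P_2\wc K_{\vec 0}P_2$, and $P_3\wc K_{\vec 0}P_3$.

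I expect the only step requiring a genuine observation rather than bookkeeping to be the vanishing of the $(0,2)$ (equivalently $(2,0)$) block of $\wc K_{\vec 0}$: one must notice that the hopping map sends a basis vector at a site $x\neq\vec 0$ to a vector whose component at the origin is automatically orthogonal to $\ora 1$, which is precisely the statement that $I-\mc{A}_{\vec p}^{-x}$ maps $\C^{\otimes\vec p}$ into $\{\ora 1\}^{\perp}$ --- a consequence of $\mc{A}_{\vec p}^{-x}$ being a permutation unitary that fixes $\ora 1$. The remaining content is the verification that $\wc K_{\vec 0}$ commutes with the mean-projection and the routine support and $\mu$-invariance checks for the other three operators.
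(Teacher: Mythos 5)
Your proposal is correct and follows essentially the same route as the paper: the zero blocks are obtained from the support/vanishing properties of the multipliers at $x=\vec 0$, the commutation of $\wc K_{\vec 0}$ and $\wc U$ with the mean projection $P$ (via $\tau_\xi$-invariance of $\mu$), the mean-zero property of $v$, the identities $B\id=B^{\dagger}\id=0$, and self-adjointness for the symmetric entries. The only cosmetic difference is that the paper kills the entire $P_0$ row and column in one stroke from the eigenvalue equation \eqref{eq:L0=0} together with self-adjointness, whereas you verify the $(0,2)$ block of $\wc K_{\vec 0}$ directly via the observation that $I-\mc{A}_{\vec p}^{-x}$ maps $\C^{\otimes\vec p}$ into $\{\ora 1\}^{\perp}$; both arguments are valid and equivalent.
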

\begin{proof}
	The eigenvalue equation \eqref{eq:L0=0} gives  
	$$P_0\,\mc{T}=\mc{T}\,P_0=0 $$
	for $\mc{T}=\wc{K}_{\vec 0},\wc U,\wc V,B,\wc{L}_{\vec 0}$. From the definition \eqref{eq:Kk} of $\wc{K}_{\vec 0}$ we see that this operator is ``off-diagonal'' with respect to position, in the sense that $\ipc{\delta_x\otimes F}{\wc{K}_{\vec 0} \delta_x \otimes G}= 0$ for any $x$ and any $F,G\in L^2(\Omega;\C^{\vec p})$.  Thus 
	$P_1\wc{K}_0P_1=0.$ 
	The definitions \eqref{eq:U}, \eqref{eq:V} of $\wc U$, $\wc V$ imply that they vanish on $\delta_{\vec 0}\otimes F$, so
	$$P_1\wc U=P_1\wc V=0,\ \ \wc UP_1=\wc VP_1=0. $$
	
Since $\wc{K}_{\vec 0},\wc U$ are ``non-random'', we have for $j=0,1,2,$ $$P_j\wc{K}_{\vec 0}P_3=0,\ P_3\wc{K}_{\vec 0}P_j=0,\ \ P_j\wc{U}P_3=0,\ P_3\wc{U}P_j=0.$$
Since $\wc V$ is mean zero on $L^2(\Omega)$ and $B\id=B^{\dagger}\id=0$, we have  
\[
	P_3^\perp \wc V P_3^\perp=0,\ \   P_3^\perp B=BP_3^\perp=0 . \qedhere
\]
\end{proof}

\begin{cor}
	On $\wc{H}$ the operator $\wc{L}_{\vec{0}} =  \im \wc{K}_{{\vec 0}} + \im \wc{U} + \im {\lambda} \wc{V}  + B $ has block decomposition
	\begin{align} \label{eq:blockform-L0}
		\wc{L}_{\vec 0}  &=  \begin{pmatrix}
			0 & 0 & 0 & 0 \\
			0 & 0 & \im P_1\wc{K}_{\vec 0}P_2  & 0 \\
			0 & \im P_2\wc{K}_{\vec 0}P_1 & P_2(\im \wc{K}_{\vec 0}+\im \wc{U})P_2 & \im {\lambda} P_2\wc{V}P_3 \\
			0 & 0 & \im {\lambda} P_3\wc{V}P_2 & P_3\wc{L}_{\vec 0}P_3
	\end{pmatrix}.
	\end{align}
\end{cor}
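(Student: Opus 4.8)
The plan is to obtain \eqref{eq:blockform-L0} by simply assembling the four block matrices supplied by Lemma \ref{lem:block-decomp} according to the definition \eqref{def:Lk}, namely $\wc{L}_{\vec 0} = \im\wc{K}_{\vec 0} + \im\wc{U} + \im{\lambda}\wc{V} + B$. Since each of $\wc{K}_{\vec 0}$, $\wc{U}$, $\wc{V}$, $B$ has already been written as a $4\times4$ operator matrix relative to the fixed decomposition $\wc{H}_0\oplus\wc{H}_1\oplus\wc{H}_2\oplus\wc{H}_3$, the linear combination is computed entrywise, and one reads off the result position by position.

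First I would note that the entire first row and first column of $\wc{L}_{\vec 0}$ vanish: this is immediate from \eqref{eq:L0=0}, or equivalently from the fact that every summand has vanishing entries touching $\wc{H}_0$. For the $\wc{H}_1$ block, the $(1,1)$ entry is $0$ because each of $\wc{K}_{\vec 0},\wc{U},\wc{V},B$ has vanishing $(1,1)$ entry, while the $(1,3)$ and $(3,1)$ entries are $0$ because $\wc{K}_{\vec 0}$ and $\wc{U}$ are ``non-random'', $B$ annihilates $\wc{H}_1$, and $\wc{V}$ couples only $\wc{H}_2$ with $\wc{H}_3$; the only surviving $\wc{H}_1$-entries are thus $\im P_1\wc{K}_{\vec 0}P_2$ and $\im P_2\wc{K}_{\vec 0}P_1$. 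In the $\wc{H}_2$ diagonal block only $\wc{K}_{\vec 0}$ and $\wc{U}$ contribute, both with coefficient $\im$, giving $P_2(\im\wc{K}_{\vec 0}+\im\wc{U})P_2$, and the $\wc{H}_2\leftrightarrow\wc{H}_3$ coupling comes entirely from $\wc{V}$, giving $\im{\lambda}P_2\wc{V}P_3$ and $\im{\lambda}P_3\wc{V}P_2$. Finally the $(3,3)$ entry collects all four contributions, $\im P_3\wc{K}_{\vec 0}P_3 + \im P_3\wc{U}P_3 + \im{\lambda}P_3\wc{V}P_3 + P_3BP_3$, which is exactly $P_3\wc{L}_{\vec 0}P_3$ by \eqref{def:Lk}, since the fixed projection $P_3$ distributes over the sum.

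There is no genuine obstacle here: all the content lives in Lemma \ref{lem:block-decomp}, and the corollary is pure bookkeeping. The only line worth a moment's care is the $(3,3)$ entry, where one uses that $P_3\wc{L}_{\vec 0}P_3 = P_3(\im\wc{K}_{\vec 0}+\im\wc{U}+\im{\lambda}\wc{V}+B)P_3$ splits term by term because $P_3$ is a fixed orthogonal projection.
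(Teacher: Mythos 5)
Your proposal is correct and coincides with what the paper does: the corollary is an immediate entrywise summation of the four block matrices from Lemma \ref{lem:block-decomp} with the coefficients $\im$, $\im$, $\im\lambda$, $1$ from \eqref{def:Lk}, and your identification of the $(3,3)$ entry with $P_3\wc{L}_{\vec 0}P_3$ is exactly the right (and only nontrivial) bookkeeping step. Nothing is missing.
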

\subsection{Spectral gap}
With the block decomposition \eqref{eq:blockform-L0}, we are now in a position to prove that $\wc{L}_{\vec 0}$ has a spectral gap.  

\begin{lem}\label{lem:L0gap}
		If ${\lambda} >0$, then $0$ is a non-degenerate eigenvalue of $\wc{L}_{\vec 0}$ and there is $g>0$ such that
		\begin{equation*}
		\sigma(\wc{L}_{\vec 0} ) \ = \ \{0\} \cup \Sigma_{+} 
		\end{equation*}
		with $\Sigma_{+} \subset \{z \ : \ \Re z >g\}.$
	For  $\lambda$ small, there is $c=c(\vec p,\|\wh h\|_\infty,\|u\|_\infty,\gamma,T,b)>0$ such that $g\ge c\lambda^2$. 
\end{lem}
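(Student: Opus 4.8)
The plan is to exploit the block decomposition \eqref{eq:blockform-L0} together with the accretivity/sectoriality and gap hypotheses, and work perturbatively in $\lambda$. Write $\wc{L}_{\vec 0} = \mc{A} + \im\lambda\mc{V}_{\mathrm{off}}$, where $\mc{A}$ is the block-diagonal operator obtained by keeping the $(1,1)$--$(3,3)$ blocks and the upper-left $3\times 3$ part of \eqref{eq:blockform-L0} except the couplings to $\wc{H}_3$ (i.e. dropping the $\im\lambda P_2\wc{V}P_3$ and $\im\lambda P_3\wc{V}P_2$ entries), and $\mc{V}_{\mathrm{off}}$ is the remaining off-diagonal Hilbert--Schmidt-type coupling between $\wc{H}_2$ and $\wc{H}_3$. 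First I would establish a gap for the two decoupled pieces of $\mc{A}$: on $\wc{H}_3$, the operator $P_3\wc{L}_{\vec 0}P_3 = P_3(\im\wc{K}_{\vec 0}+\im\wc{U}+\im\lambda\wc{V}+B)P_3$ has real part $\ge P_3 B P_3 \ge \tfrac1T$ by Assumption \ref{ass:gap} (the gap condition for $B$, which survives the Floquet lift by Proposition \ref{prop:sgext}); hence $\Re\langle f, P_3\wc{L}_{\vec 0}P_3 f\rangle \ge \tfrac1T\|f\|^2$ on $\wc{H}_3$, a gap uniform in $\lambda$. On the finite-dimensional ``non-random'' block $\wc{H}_0\oplus\wc{H}_1\oplus\wc{H}_2$ (which equals $\ell^2(\Z^d)\otimes\C^{\otimes\vec p}\otimes\id$), the relevant operator is $\im P(\wc{K}_{\vec 0}+\wc{U})P$, a bounded \emph{self-adjoint} operator times $\im$, so its spectrum is purely imaginary; the point is to show that $0$ is a simple eigenvalue of $P(\wc{K}_{\vec 0}+\wc{U})P$ restricted to $(\wc{H}_0)^\perp\cap P\mc{H}$, i.e. on $\wc{H}_1\oplus\wc{H}_2$. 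This is where Corollary \ref{cor:KerK0} enters: on $\wc{H}_1 = \delta_{\vec 0}\otimes\{\ora1\}^\perp\otimes\id$ the operator $\wc{K}_{\vec 0}$ has the lower bound $c_0\|\ora w\|^2$, and one must combine this with the hopping structure on $\wc{H}_2$ and the periodic multiplication $\wc{U}$ to conclude that $\ker\bigl(P(\wc{K}_{\vec 0}+\wc{U})P\bigr) = \wc{H}_0$; equivalently, that no nonzero vector in $\wc{H}_1\oplus\wc{H}_2$ is annihilated by both $\wc{K}_{\vec 0}$ and the coupling to $\wc{H}_1$.

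The second step is a spectral perturbation argument for the full $\wc{L}_{\vec 0} = \mc{A}+\im\lambda\mc{V}_{\mathrm{off}}$. We know $\wc{L}_{\vec 0}\,(\delta_{\vec 0}\otimes\ora1\otimes\id)=\wc{L}_{\vec 0}^\dagger(\delta_{\vec 0}\otimes\ora1\otimes\id)=0$ exactly (eq.\ \eqref{eq:L0=0}), so $0$ is always an eigenvalue with $\wc{H}_0$ both in the kernel and the cokernel; hence $0$ is an isolated eigenvalue iff we can separate the rest of the spectrum from it. The key structural fact is that $\mc{A}$ leaves $\wc{H}_0$ invariant and acts as zero there, and on $\wc{H}_0^\perp = \wc{H}_1\oplus\wc{H}_2\oplus\wc{H}_3$ it is a direct sum: on $\wc{H}_3$ a gap $\ge 1/T$, and on $\wc{H}_1\oplus\wc{H}_2$ the operator $\im P(\wc{K}_{\vec 0}+\wc{U})P$ which (by Step 1) is invertible with some $\lambda$-independent gap $\delta_1>0$ away from $0$ on that subspace. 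Thus $\mc{A}$ restricted to $\wc{H}_0^\perp$ is boundedly invertible with $\|\mc{A}|_{\wc{H}_0^\perp}^{-1}\|\le\max(T,\delta_1^{-1})=:C_1$. Now treat the full operator on $\wc{H}_0^\perp$: using the sectoriality bound \eqref{eq:sector-Lk} and a Feshbach/Schur-complement reduction eliminating the $\wc{H}_3$ component (which costs a factor bounded via the $1/T$ gap), one reduces the question to an effective operator on the finite-dimensional $\wc{H}_1\oplus\wc{H}_2$ of the form $\im P(\wc{K}_{\vec 0}+\wc{U})P + O(\lambda^2)$, where the $O(\lambda^2)$ comes from second order in the $\im\lambda\wc{V}$ coupling through $\wc{H}_3$ — first order vanishes because $P_3^\perp\wc{V}P_3^\perp=0$ forces the $\wc{H}_1\oplus\wc{H}_2$ self-coupling of $\wc{V}$ to vanish. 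A standard resolvent/Neumann series estimate then gives that, for $\lambda$ small, $\wc{L}_{\vec 0}$ has no spectrum in a ball of radius $\sim\lambda^2$ around $0$ other than the simple eigenvalue $0$, and that the rest of $\sigma(\wc{L}_{\vec 0})$ stays in $\{\Re z > g\}$ with $g\gtrsim\lambda^2$.

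For the quantitative gap $g\ge c\lambda^2$ I would argue as follows. Consider the resolvent $(\wc{L}_{\vec 0}-z)^{-1}$ for $z$ with small $|z|$, restricted to $\wc{H}_0^\perp$. Write $\wc{L}_{\vec 0}|_{\wc{H}_0^\perp} = \mc{A}_\perp + \im\lambda\mc{W}$ with $\mc{A}_\perp$ the block operator that is $\im P(\wc K_{\vec 0}+\wc U)P$ on $\wc H_1\oplus\wc H_2$ and $P_3\wc L_{\vec 0}P_3$ on $\wc H_3$ (including the diagonal $\im\lambda P_3\wc V P_3$ there), and $\mc{W}$ the $\wc H_2\leftrightarrow\wc H_3$ off-diagonal coupling, $\|\mc W\|\le 2$. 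The quantity to control is $\Re\langle f,(\wc L_{\vec 0}-z)f\rangle$ for $f\in\wc H_0^\perp$, $\|f\|=1$: decompose $f = f_{12}+f_3$. If $f_3$ is not small, the $1/T$ gap on $\wc H_3$ forces $\Re\langle f,\wc L_{\vec 0}f\rangle\gtrsim \|f_3\|^2/T$; if $f_3$ is small, then $f$ is close to $\wc H_1\oplus\wc H_2$ and $\im\langle f_{12},P(\wc K_{\vec 0}+\wc U)P f_{12}\rangle$ is bounded away from $0$ by $\delta_1\|f_{12}\|^2$, while the couplings move the real part by at most $O(\lambda)\|f_3\|\|f_{12}\|$; optimizing, the worst case is $\|f_3\|\sim\lambda$, yielding a lower bound of order $\lambda^2$ on $\max(\Re, \mathrm{dist~from~}0~\mathrm{of~spectrum})$ via the sectoriality \eqref{eq:sector-Lk} which converts the imaginary gap into a real-part gap. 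Tracking constants through this gives $g\ge c\lambda^2$ with $c$ depending only on $\vec p$ (through $c_0$ and $\delta_1$), $\|\wh h\|_\infty$, $\|u\|_\infty$, $\gamma$, $b$, and $T$.

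\textbf{Main obstacle.} The delicate point is Step 1: proving that $\ker\bigl(P(\wc K_{\vec 0}+\wc U)P\bigr)$, as an operator on the \emph{full} non-random space $\wc H_1\oplus\wc H_2$, is exactly $\wc H_0$ — i.e. that nothing in $\wc H_2$ (where the $\C^{\otimes\vec p}$-components are unconstrained and $\wc U$ is a nontrivial periodic multiplier) conspires with $\wc K_{\vec 0}$ and the $P_1\wc K_{\vec 0}P_2$ coupling to produce an extra zero mode. The $\wc H_1$ lower bound from Corollary \ref{cor:KerK0} only controls the $\delta_{\vec 0}$-fiber; extending it to a uniform gap on $\wc H_1\oplus\wc H_2$ requires understanding the interplay between the periodic potential and the hopping — exactly the ``interaction between the periodic part and the hopping terms'' the introduction flags as the new difficulty. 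I expect the resolution to use the non-degeneracy condition \eqref{eq:supp-h-nonden} on $\mathrm{supp}\,h$ together with Lemma \ref{lem:tensor-AqWj} (that the shifts $\mc A_{\vec p}^{-\xi}$ over $\xi\in\mathrm{supp}\,h$ have common fixed space only $\mathrm{span}\{\ora1\}$) to rule out fiberwise zero modes, followed by a connectedness/irreducibility argument in the position variable to propagate this to all of $\wc H_2$.
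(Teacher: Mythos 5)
There is a genuine gap, and it sits exactly where you flagged your ``main obstacle'': Step~1 as stated is false, and the strategy built on it cannot be repaired in the form you propose. The kernel of $P(\wc{K}_{\vec 0}+\wc{U})P$ on $\wc{H}_1\oplus\wc{H}_2$ is not trivial: for any $y\in\vec{p}\Z^d\setminus\{\vec 0\}$ the vector $\delta_y\otimes\ora{1}\otimes\id\in\wc{H}_2$ satisfies $\wc{K}_{\vec 0}(\delta_y\otimes\ora{1}\otimes\id)=\sum_{\xi}h(\xi)\,\delta_{y-\xi}\otimes(I-\mc{A}_{\vec p}^{-\xi})\ora{1}\otimes\id=0$ (since $\mc{A}_{\vec p}^{-\xi}\ora{1}=\ora{1}$), and $\pi_\sigma\wc{U}(\delta_y\otimes\ora{1}\otimes\id)=(u(y-\sigma)-u(-\sigma))\delta_y=0$ by $\vec p$-periodicity. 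So $P(\wc{K}_{\vec 0}+\wc{U})P$ has an infinite-dimensional kernel inside $\wc{H}_2$, and no $\lambda$-independent gap $\delta_1$ exists. (Also $\wc{H}_2$ is infinite-dimensional, not finite-dimensional as you say, and $0$ lies in the continuous spectrum of this commutator even off its kernel.) More fundamentally, $\im P(\wc{K}_{\vec 0}+\wc{U})P$ is skew-adjoint, so it contributes nothing to $\Re\ipc{f}{\wc{L}_{\vec 0}f}$, and sectoriality runs the wrong way: \eqref{eq:sector-Lk} bounds $|\Im|$ by $\gamma\,\Re$ plus the $O(1)$ constant $b'$, so an imaginary-part gap of size $\delta_1\ll b'$ yields no real-part information whatsoever.

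The correct mechanism, and the one the paper uses, is that the real-part gap on $\wc{H}_2$ comes \emph{entirely} from the $O(\lambda^2)$ term you propose to treat as a harmless perturbation. After the Schur complement in $\wc{H}_3$ one obtains $\Gamma_2(z)=P_2\left(\im\wc{K}_{\vec 0}+\im\wc{U}+\lambda^2\wc{V}(\Gamma_3-z)^{-1}\wc{V}\right)P_2$, and its real part is bounded below by $\lambda^2(\tfrac1T-\Re z)\norm{(\Gamma_3-z)^{-1}\wc{V}\varphi}^2$, which is in turn bounded below by $c\,\lambda^2\chi^2\norm{\varphi}^2$ using the non-degeneracy assumption \eqref{eq:BinverseV} on the potential $V$ --- an ingredient your proposal never invokes, but without which the lemma fails (a mode on which $v(\tau_x\cdot)-v(\tau_y\cdot)$ degenerates feels no dissipation). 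The roles are thus reversed relative to your plan: $\im P(\wc{K}_{\vec 0}+\wc{U})P$ is the gapless piece, and $\lambda^2\wc{V}(\Gamma_3-z)^{-1}\wc{V}$ supplies the gap $c_1\sim\lambda^2$ on all of $\wc{H}_2$; a further Schur step then handles $\wc{H}_1$ via $\Gamma_1(z)=P_1\wc{K}_{\vec 0}(\Gamma_2(z)-z)^{-1}P_2\wc{K}_{\vec 0}P_1$, whose real part is $\gtrsim c_1c_0$ by Corollary \ref{cor:KerK0} --- this is the only place the finite-dimensional kernel analysis of $\wc{K}_{\vec 0}$ enters. Your correct observations (the $1/T$ gap on $\wc{H}_3$, the vanishing of the first-order coupling since $P_3^\perp\wc{V}P_3^\perp=0$, the heuristic balance at $\norm{f_3}\sim\lambda$) survive, but the load-bearing step must be rebuilt around the non-degeneracy of $V$ rather than around an alleged kernel or gap property of the periodic commutator.
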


\begin{figure}
	\begin{tikzpicture}[scale=1.2]
	\begin{scope}[thick,font=\scriptsize]
	\draw [->] (-1,0) -- (8,0) node [below left]  {$\Re z$};
	\draw [->] (0,-3.2) -- (0,3.2) node [below left] {$\Im z$};
	
	\end{scope}
	\draw [dashed] (0,1) -- (7,3.1);
	\draw [dashed] (0,-1) -- (7,-3.1);
	\filldraw [red] (0,0) circle (0.05);
	
	\node[below,left] at (0,-0.4) {$\sigma(\wc{L}_{\vec 0})\ni 0 $};
	\node [right]at (2,0.7) {$\sigma(\wc{L}_{\vec 0})\backslash\{0\}\subset  \{z:\ \Re z> g\ \}\cap \mc{N}_{+} $};  
	
	\draw [dashed,red] (1.5,2.5) -- (1.5,-2.5);
	
	\draw [dashed,->] (0.5,-2)--(0,-2);
	\draw [dashed,->] (1,-2)--(1.5,-2);
	\node at (0.75,-2) {$g$};
	\draw[fill=red,opacity=0.1]  (1.5,1.45) -- (7,3.1) -- (7,-3.1) -- (1.5,-1.45) -- cycle;
	
	\end{tikzpicture}
	\caption{Spectral gap of $\wc{L}_{\vec 0}$}\label{fig:gap}
\end{figure}

Before proceeding to the proof of the lemma, we note that the sectoriality of $B$ places further restrictions on $\Sigma_+$. Indeed, $\Re \wc{L}_{\vec 0}  = \Re B \ge 0$ in the sense of quadratic forms. Thus, by the sectoriality of $B$, 
	\begin{multline*}
	\abs{\Im{\langle \Phi, \wc{L}_{\vec 0}  \Phi \rangle}} \ \le \ \norm{\wc{K}_ 0  +\wc{U}+ {\lambda} \wc{V}} + \abs{\Im{\langle \Phi, B \Phi \rangle }} 
	\\ \le  \
	\ 2\|\wh h\|_\infty+2\|u\|_\infty+ 2{\lambda} + \gamma \Re \langle \Phi, \wc{L}_{\vec 0}  \Phi \rangle,
	\end{multline*}
	if $\norm{\Phi}=1$. It follows that the numerical range ${\rm Num}(\wc{L}_{\vec 0}) \ = \ \setb{\ipc{\Phi}{\wc{L}_{\vec 0}\Phi}}{\norm{\Phi}=1}$ is contained in
	\begin{equation}\label{eq:numrange}
	\mc{N}_{+} \ := \ \{ z \  : \ \Re z \ge 0 \text{ and } |\Im z| \le 2\|\wh h\|_\infty+2\|u\|_\infty +2 {\lambda} + \gamma \Re z\}.
	\end{equation}
	Since $\sigma(\wc{L}_{\vec 0} ) \subset {\rm Num}(\wc{L}_{\vec 0} )$, we find that $\Sigma_+ \subset \{\Re z> g\}\cap \mc{N}_+$ \textemdash  see Figure \ref{fig:gap}.

To prove Lem.\ \ref{lem:L0gap}, it suffices to show that  the restriction of $\wc{L}_{\vec 0}$ to $\wc{H}_0^\perp=\wc{H}_1 \oplus \wc{H}_2 \oplus \wc{H}_3$,
\begin{align} \label{eq:blockformA}
\mc{J} \ &= \ \begin{pmatrix}
 0 & \im P_1\wc{K}_{\vec 0}P_2  & 0 \\
 \im P_2\wc{K}_{\vec 0}P_1 & P_2(\im \wc{K}_{\vec 0}+\im \wc{U})P_2 & \im {\lambda} P_2\wc{V}P_3 \\
 0 & \im {\lambda} P_3\wc{V}P_2 & P_3\wc{L}_{\vec 0}P_3 
\end{pmatrix},
\end{align}
has spectrum contained in $\{\Re z >g\}$. 
\begin{lem}\label{lem:L0inverse} There is $g>0$, such that whenever $\Re z<g$,
	\begin{enumerate}
	\item $\Gamma_3 -z$ is boundedly invertible on $\wc{H}_3$, where $\Gamma_3= P_3\wc{L}_{{\vec 0}}P_3$,
	\item $\Gamma_2(z)-z$ is boundedly invertible on $\wc{H}_2$, where \begin{equation}\label{eq:Gamma2}\Gamma_2(z) \ = \ P_2\left(\im \wc{K}_{\vec 0}+\im \wc{U}+{\lambda}^2\wc{V}\,(\Gamma_3-z)^{-1}\, \wc{V}\right)P_2,\end{equation} 
	\item $\mc{J}-z$ is boundedly invertible on $\wc{H}_0^\perp$.
	\end{enumerate}
In particular, 
$\mc{J}$ is boundedly invertible.  Let $\Pi_2$ be the projection onto ${\rm Ker}({P_1\wc{K}_0})\subsetneq\wc{H}_2$. 
If $\Pi_2\wt{\phi}\neq0$ for some $\wt{\phi}\in \wc{H}_2$, then $P_2\mc{J}^{-1}\wt{\phi}\neq0$ and 
		\begin{align} \label{eq:gap-gamma}
	\Re \ipc{\wt{\phi}}{{P_2\mc{J}^{-1}\wt{\phi}} }\ge g\|P_2\mc{J}^{-1}\wt{\phi}\|^2>0. 
		\end{align}
	
\end{lem}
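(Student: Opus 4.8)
The plan is to prove (1)--(3) in turn for $\Re z<g$, choosing $g$ to be a sufficiently small positive multiple of $\lambda^2$ (the constant depending only on $\vec p,\|\wh h\|_\infty,\|u\|_\infty,\gamma,T,b$), and then to read off \eqref{eq:gap-gamma} from the linear system $\mc J\Phi=\wt{\phi}$. For (1): by Lemma~\ref{lem:block-decomp}, $B$ preserves $\wc H_3$ and $\Gamma_3=B|_{\wc H_3}+\im\,P_3(\wc K_{\vec 0}+\wc U+{\lambda}\wc V)P_3$ differs from $B|_{\wc H_3}$ by a bounded $\im\cdot$(self-adjoint) term. Since every element of $\wc H_3$ has vanishing $\mu$-average, the gap condition (Assumption~\ref{ass:gap}) gives $\Re\ipc{\eta}{\Gamma_3\eta}=\Re\ipc{\eta}{B\eta}\ge\tfrac1T\norm\eta^2$, and likewise for $\Gamma_3^\dagger$; hence $\Gamma_3-\tfrac1T$ is maximally accretive, so $\{\Re z<\tfrac1T\}\subset\rho(\Gamma_3)$ with $\norm{(\Gamma_3-z)^{-1}}\le(\tfrac1T-\Re z)^{-1}\le 2T$ once $\Re z\le\tfrac1{2T}$. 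This is (1) for any $g\le\tfrac1{2T}$.

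Step (2) is the crux. On $\wc H_2$ the operator $P_2(\im\wc K_{\vec 0}+\im\wc U)P_2$ is skew-adjoint, and $\wc V\phi=P_3\wc V\phi\in\wc H_3$ for $\phi\in\wc H_2$ (Lemma~\ref{lem:block-decomp}), so from \eqref{eq:Gamma2} one gets $\Re\ipc{\phi}{(\Gamma_2(z)-z)\phi}={\lambda}^2\Re\ipc{\psi}{(\Gamma_3-z)^{-1}\psi}-\Re z\,\norm\phi^2$ with $\psi:=\wc V\phi$. The key estimate I would prove is: there is $c_1>0$, of the required form, with $\Re\ipc{\psi}{(\Gamma_3-z)^{-1}\psi}\ge c_1\norm\phi^2$ for all $\phi\in\wc H_2$ and $\Re z\le\tfrac1{2T}$. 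Setting $\eta=(\Gamma_3-z)^{-1}\psi$, the Step~(1) computation gives $\Re\ipc{\psi}{\eta}\ge\tfrac1{2T}\norm\eta^2$, so it suffices to bound $\norm\eta$ below by a multiple of $\norm\phi$. Since $B$ is boundedly invertible on $\wc H_3$ with $\norm{(B|_{\wc H_3})^{-1}}\le T$ (gap condition and Lemma~\ref{lem:block-decomp}), applying $B^{-1}$ to $\psi=B\eta+\im(\wc K_{\vec 0}+\wc U+{\lambda}\wc V)\eta-z\eta$ gives $\norm{B^{-1}\psi}\le(1+TC_0)\norm\eta$ with $C_0=2\|\wh h\|_\infty+2\|u\|_\infty+2{\lambda}+|z|$. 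Finally, for $\phi\in\wc H_2$ one has $\phi(x,\omega)=\phi(x)\in\C^{\otimes\vec p}$, $\omega$-independent and supported on $x\ne\vec 0$, so $(B^{-1}\wc V\phi)(x,\omega)=\bigl(B^{-1}(v(\tau_x\cdot)-v)\bigr)(\omega)\,\phi(x)$, and the non-degeneracy \eqref{eq:BinverseV} yields $\norm{B^{-1}\psi}^2=\sum_{x\ne\vec 0}\norm{B^{-1}(v(\tau_x\cdot)-v)}^2\norm{\phi(x)}^2\ge\chi^2\norm\phi^2$. Combining these three inputs gives the estimate with $c_1=\chi^2/\bigl(2T(1+TC_0)^2\bigr)$, whence $\Re\ipc{\phi}{(\Gamma_2(z)-z)\phi}\ge(c_1{\lambda}^2-\Re z)\norm\phi^2\ge\tfrac{c_1}{2}{\lambda}^2\norm\phi^2$ for $\Re z\le\tfrac{c_1}{2}{\lambda}^2$, with the same bound for the adjoint; as $\Gamma_2(z)-z$ is bounded, this is (2) for $g\le\tfrac{c_1}{2}{\lambda}^2$ (which for small ${\lambda}$ is $\le\tfrac1{2T}$).

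For (3) I would write $\mc J-z$ as the $3\times3$ operator matrix \eqref{eq:blockformA} and reduce by successive Schur complements: using (1) to eliminate $\wc H_3$ reduces invertibility of $\mc J-z$ to that of the $2\times2$ matrix on $\wc H_1\oplus\wc H_2$ with blocks $-z,\ \im P_1\wc K_{\vec 0}P_2,\ \im P_2\wc K_{\vec 0}P_1,\ \Gamma_2(z)-z$, and using (2) to eliminate $\wc H_2$ reduces it to invertibility on the \emph{finite-dimensional} space $\wc H_1$ of $-z+P_1\wc K_{\vec 0}P_2(\Gamma_2(z)-z)^{-1}P_2\wc K_{\vec 0}P_1$. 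For $\Theta\in\wc H_1$, Lemma~\ref{lem:non-rand-hopp} gives $\xi:=\wc K_{\vec 0}\Theta=P_2\wc K_{\vec 0}P_1\Theta\in\wc H_2$ with $\norm\xi^2\ge c_0\norm\Theta^2$ by \eqref{eq:K0delta0-lower}, and self-adjointness of $\wc K_{\vec 0},P_1,P_2$ gives $\ipc{\Theta}{P_1\wc K_{\vec 0}P_2(\Gamma_2(z)-z)^{-1}P_2\wc K_{\vec 0}P_1\Theta}=\ipc{\xi}{(\Gamma_2(z)-z)^{-1}\xi}$, whose real part is $\ge\tfrac{c_1}{2}{\lambda}^2\norm{\Gamma_2(z)-z}^{-2}\norm\xi^2$ by Step~(2); thus this operator has strictly positive real part on $\wc H_1$ and is invertible, and unwinding gives (3) (the case $z=0$ being the bounded invertibility of $\mc J$). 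For \eqref{eq:gap-gamma}, write $\Phi:=\mc J^{-1}\wt{\phi}=\Phi_1+\Phi_2+\Phi_3$ with $\Phi_j\in\wc H_j$, so $P_2\mc J^{-1}\wt{\phi}=\Phi_2$; the three rows of $\mc J\Phi=\wt{\phi}$ with $\wt{\phi}\in\wc H_2$ give $P_1\wc K_{\vec 0}\Phi_2=0$, $\Phi_3=-\im{\lambda}\,\Gamma_3^{-1}\wc V\Phi_2$, and, after substitution and comparison with \eqref{eq:Gamma2} at $z=0$, $\im P_2\wc K_{\vec 0}P_1\Phi_1+\Gamma_2(0)\Phi_2=\wt{\phi}$. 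Pairing the last identity with $\Phi_2$, the first term vanishes since $P_1\wc K_{\vec 0}\Phi_2=0$, leaving $\ipc{\wt{\phi}}{\Phi_2}=\ipc{\Gamma_2(0)\Phi_2}{\Phi_2}$, so $\Re\ipc{\wt{\phi}}{P_2\mc J^{-1}\wt{\phi}}=\Re\ipc{\Phi_2}{\Gamma_2(0)\Phi_2}\ge c_1{\lambda}^2\norm{\Phi_2}^2\ge g\norm{P_2\mc J^{-1}\wt{\phi}}^2$; and if $\Phi_2=0$ the same identity forces $\wt{\phi}\in{\rm Ran}(P_2\wc K_{\vec 0}P_1)={\rm Ker}(P_1\wc K_{\vec 0}P_2)^\perp$ (closed, as $\wc H_1$ is finite-dimensional), i.e.\ $\Pi_2\wt{\phi}=0$, so $\Pi_2\wt{\phi}\ne 0$ indeed forces $\Phi_2\ne 0$.

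The main obstacle is the lower bound in Step~(2): $\wc V|_{\wc H_2}$ is merely injective, and extracting the quantitative bound $\Re\ipc{\psi}{(\Gamma_3-z)^{-1}\psi}\ge c_1\norm\phi^2$ is exactly what forces the use of the non-degeneracy hypothesis \eqref{eq:BinverseV} together with the fact that $\Gamma_3$ is a bounded perturbation of the boundedly invertible $B|_{\wc H_3}$. Everything else is Schur-complement bookkeeping; the only remaining subtlety is to track constants so that $g$ scales like ${\lambda}^2$ with a constant of the asserted form, which is what feeds into Lemma~\ref{lem:L0gap} and, ultimately, Theorem~\ref{thm:asym}.
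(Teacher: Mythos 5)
Your proposal is correct and follows essentially the same route as the paper: the gap condition gives accretivity of $\Gamma_3-z$, the factorization through $B^{-1}$ combined with the non-degeneracy \eqref{eq:BinverseV} gives the $\lambda^2$-lower bound on $\Re\Gamma_2(z)$, successive Schur complements together with the lower bound of Corollary \ref{cor:KerK0} handle $\wc{H}_1$, and \eqref{eq:gap-gamma} is read off from the row equations of $\mc J\Phi=\wt\phi$ exactly as in the paper. The only deviations (writing the $\Gamma_2$ estimate via $\eta=(\Gamma_3-z)^{-1}\wc V\phi$ rather than $(B^{-1}(\Gamma_3-z))^{-1}B^{-1}\wc V\phi$, and proving $\Phi_2=0\Rightarrow\Pi_2\wt\phi=0$ by contrapositive) are cosmetic.
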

\begin{proof}We  obtain this result by repeated applications of the Schur complement formula. As observed above, we may restrict attention in the sectorial domain $z\in \mc{N}_+$.  Fix $z  \in \mc{N}_{+}$ and consider the equation
	\begin{equation}\label{eq:resolvequ}
	( \mc{J}-z )  \begin{pmatrix} \zeta \\ \phi \\
	\Phi \end{pmatrix}  \ 
	= \ \begin{pmatrix}
	-z & \im P_1\wc{K}_{\vec 0}P_2  & 0 \\
	\im P_2\wc{K}_{\vec 0}P_1 & P_2(\im \wc{K}_{\vec 0}+\im \wc{U})P_2-z & \im {\lambda} P_2\wc{V}P_3 \\
	0 & \im {\lambda} P_3\wc{V}P_2 & P_3\wc{L}_{\vec 0}P_3-z
	\end{pmatrix} \begin{pmatrix} \zeta \\ \phi \\
	\Phi \end{pmatrix}	 \ 
	= \ \begin{pmatrix} \wt{\zeta} \\ \wt{\phi} \\
	\wt{\Phi} \end{pmatrix},
	\end{equation}
	for $(\zeta,\phi, \Phi) \in \wc{H}_1 \oplus \wc{H}_2 \oplus \wc{H}_3$ given
	$(\wt{\zeta}, \wt{\phi},
	\wt{\Phi}) \in \wc{H}_1 \oplus \wc{H}_2 \oplus \wc{H}_3$. 
	By the gap condition \eqref{eq:gap} on $B$,
	\begin{equation*}
	\Re P_3\wc{L}_{\vec 0}P_3 \ = \ \Re ( \im \wc{K}_ 0  +\im \wc{U}+ B + \im {\lambda} P_{3} \wc{V} P_{3} ) \ \ge \ \frac{1}{T} P_{3}.
	\end{equation*}
	Therefore, $\Gamma_3-z=P_3\wc{L}_{{\vec 0}}P_3-z$ is boundedly invertible on $\wc{H}_3$ provided $\Re z < \frac{1}{T}$. For such $z$, we may solve the third equation of \eqref{eq:resolvequ} to obtain
	\begin{equation} \label{eq:A3sln}
	\Phi \ = \ (\Gamma_3-z)^{-1}\wt{\Phi}- (\Gamma_3-z)^{-1}\, \im {\lambda} \wc{V} \phi.
	\end{equation}
	
Using the solution \eqref{eq:A3sln}, we  reduce the second equation of \eqref{eq:resolvequ} to
	\begin{equation}
		\left[\Gamma_2(z)-z\right]\, \phi \ = \ \wt{\phi}-\im P_2 \wc{K}_{\vec 0}\zeta -\im {\lambda} P_2\wc{V}\,(\Gamma_3-z)^{-1}\wt{\Phi} \label{eq:A2}
	\end{equation}
	with $\Gamma_2(z)$ as in  \eqref{eq:Gamma2}. For $\varphi\otimes\id \in \wc{H}_2=L^2(\Z^d\backslash\{\vec 0\};\C^{\otimes\vec{p}})$, we have
\begin{multline}
\Re\ipc{\varphi\otimes\id}{\Gamma_2(z)\,\varphi\otimes\id}_{\wc{H}_2}\\
\begin{aligned}
=& \ {{\lambda}^2}\ipc{P_3
	(\Gamma_3-z)^{-1}\wc{V}\,\varphi\otimes\id}
{\left(\Re B-\Re z\right)P_3
	(\Gamma_3-z)^{-1}\wc{V}\,\varphi\otimes\id}_{\wc{H}} \\
\ge& \  {{\lambda}^2}\left (\frac{1}{T}-\Re z \right )\norm
{(\Gamma_3-z)^{-1}\wc{V}\,\varphi\otimes\id}^2_{\wc{H}_3} \\
=& \ {{\lambda}^2}\left (\frac{1}{T}-\Re z \right )\norm
{\left(B^{-1}(\Gamma_3 - z)\right)^{-1}B^{-1}\wc{V}\,\varphi\otimes\id}^2_{\wc{H}_3}
\end{aligned}
\label{eq:437}
\end{multline}
where the inverse of $B$ is well defined since $\wc{V} \varphi\otimes 1 \in \wc{H}_{3} = {\rm Ran} P_{3}$.  Furthermore, $B^{-1}$ is bounded on $\wc{H}_{3}$, with $\norm{B^{-1}P_{3}} \le T$.  Thus $B^{-1} (\Gamma_3 - z)$ is bounded for $z \in \mc{N}_{+}\cap \{ \Re z < \frac{1}{T}\}$ by, 
\begin{align}
\norm{B^{-1}P_3(\Gamma_3 - z)P_3}_{\wc{H}} &\le 1+ \norm{B^{-1}P_3(\wc{K}_{\vec 0}+\wc{U}+{\lambda}\wc{V})}+ |z| \norm{B^{-1}P_3}\nonumber\\
&\le 1+T(2\|\wh h\|_\infty+2\|u\|_\infty +2{\lambda}+|z|)\nonumber\\
&\le 2+\gamma+4T(\|\wh h\|_\infty+\|u\|_\infty + {\lambda}).\label{eq:440}
\end{align}
Putting \eqref{eq:437}, \eqref{eq:440} and \eqref{eq:BinverseV} together, we obtain
\begin{align*}
\Re\ipc{\varphi\otimes\id}{\Gamma_2(z)\,\varphi\otimes\id}_{\wc{H}_2}
\ \ge \ & {{\lambda}^2}\left (\frac{1}{T}-\Re z \right )
\frac{\norm
	{B^{-1}\wc{V}\,\varphi\otimes\id}^2_{\wc{H}} }
{\norm
	{B^{-1}(\Gamma_3 - z)}^2_{\wc{H}} }\\
 \ge \ &  {{\lambda}^2}\left (\frac{1}{T}-\Re z \right )
\frac{\sum\limits_{\sigma\in \Z_{\vec p} } \sum\limits_{x\neq\vec 0} \chi^2|{\pi_\sigma}\varphi(x)|^2 } 
{\left(2+\gamma+4T(\|\wh h\|_\infty+\|u\|_\infty + {\lambda})\right)^2 }  \\
= \ &\frac{ {\lambda}^2\chi^2 (1-T\Re z) } 
{ T \left(2+\gamma+4T(\|\wh h\|_\infty+\|u\|_\infty + {\lambda})\right)^2 }\, \norm{\varphi\otimes\id}^2_{\wc{H}_2}. 
\end{align*}

Let 
\begin{align}\label{eq:c-1}
{ c_1 } \ =\ \frac{{\lambda}^2\chi^2}{T\left ( \lambda^2 \chi^2 + 2\left(2+\gamma+4T(\|\wh h\|_\infty+\|u\|_\infty + {\lambda})\right)^2 \right)},
\end{align}
so that $\frac{ {\lambda}^2\chi^2  } 
{ T \left(2+\gamma+4T(\|\wh h\|_\infty+\|u\|_\infty + {\lambda})\right)^2 }(1 - T c_1) = 2 c_1$.
Then for $z\in \mc{N}_+ \cap \{\Re z \le c_1 \}$, we have 
\begin{align}\label{eq:gapGamma2}
\Re \Gamma_2(z)- \Re z \ge 2{ c_1 } - \Re z \ge c_1,
\end{align}
implying that $\Gamma_2(z)-z$ is boundedly invertible. Thus, \eqref{eq:A2} can be solved on $\wc{H}_2$ to obtain
\begin{multline}
\phi \ = \ (\Gamma_2(z)-z)^{-1}\wt{\phi}-(\Gamma_2(z)-z)^{-1}\im P_2 \wc{K}_{\vec 0}\zeta \\ -(\Gamma_2(z)-z)^{-1}\im {\lambda} P_2\wc{V}\,(\Gamma_3-z)^{-1}\wt{\Phi}. \label{eq:A2sln}
\end{multline}

Now, the first equation of \eqref{eq:resolvequ} reduces to the following
\begin{multline}
 [\Gamma_1(z)-z] \zeta \ = \ \wt{\zeta} -\im P_1\wc{K}_{\vec 0}\,(\Gamma_2(z)-z)^{-1}\wt{\phi} \\ - {\lambda} P_1\wc{K}_{\vec 0}(\Gamma_2(z)-z)^{-1} P_2\wc{V}\,(\Gamma_3-z)^{-1}\wt{\Phi}, \label{eq:A1}
\end{multline}
where $\Gamma_1(z)=P_1\wc{K}_{\vec 0}(\Gamma_2(z)-z)^{-1} P_2\wc{K}_{\vec 0}P_1$. We will use the same strategy to show that $\Gamma_1(z)-z$ is  invertible.
Take $ \zeta= {\delta_{\vec 0}}\otimes \ora{w} \otimes \id\in \wc{H}_1$.   Recall, by definition of $\wc{H}_1$, that $\ora{w}\perp \ora 1$.  Thus, by \eqref{eq:gapGamma2} and Cor. \ref{cor:KerK0},
\begin{multline}\label{eq:gap-est1}
\Re \ipc{\zeta}{\Gamma_1(z)\zeta}_{\wc{H}_1}
 \\ = \ \ipc{(\Gamma_2(z)-z)^{-1}  \wc{K}_{\vec 0}\zeta }{\left(\Re \Gamma_2(z)-\Re z
	\right) (\Gamma_2(z)-z)^{-1}\,  \wc{K}_{\vec 0}\zeta }_{\wc{H}}  
\\ \ge \  \frac{c_1 c_0} 
{\norm{\Gamma_2(z)-z}^2_{\wc{H}}} \norm{ \zeta }^2_{\wc{H}} .
\end{multline} 
For $z \in \mc{N}_{+}\cap \{ \Re z < \frac{1}{2T}\}$,
\begin{multline}\label{eq:gap-est2}
\|\Gamma_2(z)-z\|_{\wc{H}} \ \le \  2\|\wh{h}\|_{\infty}+2\|u\|_{\infty}+4{\lambda}^2\norm{(P_3\wc{L}_{{\vec 0}}P_3-z)^{-1}}_{\wc{H}_3} +|z| \\
\begin{aligned} 
& \le \  4\|\wh{h}\|_{\infty}+4\|u\|_{\infty}+4{\lambda}^2\left(\frac{1}{T}-\frac{1}{2T} \right)^{-1}  +2 {\lambda} + (\gamma+1) \Re z \\
& = \ 4\|\wh{h}\|_{\infty}+4\|u\|_{\infty}+8T{\lambda}^2+2 {\lambda} + (\gamma+1)(2T)^{-1} , 
\end{aligned}
\end{multline}
by \eqref{eq:Gamma2} and \eqref{eq:gap}.  Putting \eqref{eq:gap-est1} and \eqref{eq:gap-est2} together, we obtain
\begin{multline*}
\Re \ipc{\zeta}{\Gamma_1(z)\zeta}_{\wc{H}_1} \\ \ge  \frac{c_1c_0}{\left(4\|\wh{h}\|_{\infty}+4\|u\|_{\infty}+8T{\lambda}^2+2 {\lambda} + (\gamma+1)(2T)^{-1}\right)^2}\,\norm{\zeta}^2_{\wc{H}} \ =: \ {c_2}\norm{\zeta}^2_{\wc{H}}. 
\end{multline*}

Therefore, $\Re \Gamma_1(z)> \Re z$ on $\wc{H}_1$ provided $z\in \mc{N}_+$ and $\Re z< \min\{ c_1 ,\frac{1}{2T}, c_2\}=:g $.  For such $z$ it follows that  $\Gamma_1(z)-z$ is boundedly invertible and \eqref{eq:A1} can be solved on $\wc{H}_1$.
Therefore, \eqref{eq:resolvequ} is explicitly solvable on $\wc{H}=\wc{H}_1 \oplus \wc{H}_2 \oplus \wc{H}_3$ and $ \mc{J}-z $ is boundedly invertible for all $z\in\{z:|\Re z|<g\}\bigcap \mc{N}_+$.

To prove the second part of Lemma \ref{lem:L0inverse}, it is enough to solve $\mc{J}\Psi=\wt \Psi$ for  $\Psi=(\zeta,\phi,\Phi)$ given $\wt{\Psi}=(0,\wt{\phi},0)$. The three equations are reduced to 
\begin{eqnarray*}
\im P_1\wc{K}_{\vec 0}P_2\ \phi&=0 \\
\im P_2\wc{K}_{\vec 0}P_1\ \zeta+ P_2(\im \wc{K}_{\vec 0}+\im \wc{U})P_2\ \phi+ \im {\lambda} P_2\wc{V}P_3\ \Phi&= \wt{\phi} \\
\im {\lambda} P_3\wc{V}P_2\ \phi+ P_3\wc{L}_{\vec 0}P_3\ \Phi&=0
\end{eqnarray*}

   The first equation implies 
$\phi\in {\rm Ker}(P_1\wc{K}_0)$.
Therefore, $\phi=\Pi_2\phi$, where $\Pi_2$ is the projection onto the kernel of $P_1\wc{K}_0$. As derived in the general case, the second and the third equations imply that 
\begin{equation}\label{eq:2+3}
\im P_2 \wc{K}_{\vec 0}\,P_1\zeta+\Gamma_2\phi=\wt{\phi}.
\end{equation}
If $\xi$ satisfies $P_1\wc{K}_{\vec 0}\xi=0$, then $\ipc{\xi}{\wc{K}_{\vec 0}\,P_1\zeta}=\ipc{P_1\wc{K}_{\vec 0}\xi}{\,P_1\zeta}=0$. Therefore, $\Pi_2\wc{K}_{\vec 0}P_1\zeta=0$. Applying $\Pi_2$ to \eqref{eq:2+3}, we have
\begin{align*} \Pi_2\Gamma_2\Pi_2\,\phi=\Pi_2\wt{\phi}.
\end{align*}
Clearly, if $\Pi_2\wt{\phi}\neq0$, then $\phi=P_2\Psi=P_2\mc{J}^{-1}\wt{\phi}\neq0$. 
Notice that $\ipc{ \wc{K}_{\vec 0}\,P_1\zeta}{\phi}=\ipc{ \wc{K}_{\vec 0}\,P_1\zeta}{\Pi_2\phi}=\ipc{\Pi_2 \wc{K}_{\vec 0}\,P_1\zeta}{\Pi_2\phi}=0$. Eq.  
\eqref{eq:2+3} also implies that 
\begin{equation*}
\Re\ipc{\wt{\phi}}{\phi}=\Re\ipc{\im P_2\wc{K}_{\vec 0}\,P_1\zeta+\Gamma_2\phi}{\phi}
=\Re\ipc{\Gamma_2\phi}{\phi}\ge 2c_1\,\|\phi\|^2\ge g\,\|\phi\|^2>0, 
\end{equation*}
which completes the proof of \eqref{eq:gap-gamma}. \qedhere
\end{proof}
 The spectral gap ${g }$ of $\wc{L}_{{\vec 0}}$ has consequences for the dynamics of the semi-group. 
\begin{lem} \label{lem:L0dynamics} Let $Q_{\vec 0}=$ orthogonal projection onto $\wc{H}_0={\rm span}\, \delta_{\vec 0}\otimes \overrightarrow{1} \otimes 1$  in $L^{2}(\wh{M};\C^{\otimes\vec{p}})$.  Then $\e^{-t \wc{L}_{{\vec 0}}} (1-Q_{\vec 0})$
	is a contraction semi-group on $\ran (1-Q_{\vec 0})$, and for all sufficiently small $\epsilon >0$  there is $C_{\epsilon} > 0$ such that
	\begin{equation}\label{eq:expdecay}
	\norm{\e^{-t \wc{L}_{{\vec 0}}} (1- Q_{\vec 0})}_{L^{2}(\wh{M};\C^{\otimes\vec{p}})} \ \le \ C_{\epsilon} \e^{-t ({g } -\epsilon)}
	\end{equation}	
\end{lem}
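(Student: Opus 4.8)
The plan is to pass from $\wc{L}_{\vec 0}$ to its compression $\mc J$ onto $\ran(1-Q_{\vec 0})=\wc{H}_0^\perp$ and to read off the exponential decay from the spectral gap of Lemma~\ref{lem:L0gap} through analytic–semigroup theory. First I would note, using \eqref{eq:L0=0}, that $\wc{H}_0=\operatorname{span}\{\delta_{\vec 0}\otimes\ora{1}\otimes\id\}$ is contained in $\ker\wc{L}_{\vec 0}\cap\ker\wc{L}_{\vec 0}^{\dagger}$; hence $\wc{H}_0$ reduces $\wc{L}_{\vec 0}$, so $Q_{\vec 0}$ commutes with the resolvent and with $\e^{-t\wc{L}_{\vec 0}}$, the subspace $\wc{H}_0^\perp$ is invariant, and the compression $\mc J:=\wc{L}_{\vec 0}|_{\wc{H}_0^\perp}$ is precisely the $3\times 3$ block operator of \eqref{eq:blockformA}. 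Being the restriction of a maximally accretive operator (Lemma~\ref{lem:semi-Lk}) to a reducing subspace, $\mc J$ is again maximally accretive, which already gives the first assertion of the lemma: on $\wc{H}_0^\perp$ the family $\e^{-t\wc{L}_{\vec 0}}(1-Q_{\vec 0})$ is the $C_0$–contraction semigroup $\e^{-t\mc J}$. It thus remains to prove $\|\e^{-t\mc J}\|_{\wc{H}_0^\perp}\le C_\epsilon\,\e^{-(g-\epsilon)t}$.

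The key step is to show that $-\mc J$ generates an \emph{analytic} semigroup. The Markov generator $B$ is only assumed sectorial (Assumption~\ref{ass:sectoriality}), but its gap (Assumption~\ref{ass:gap}) upgrades this on $\wc{H}_0^\perp$: by Lemma~\ref{lem:block-decomp} we have $B=P_3 B P_3$, so $\ipc{f}{Bf}=\ipc{P_3 f}{B P_3 f}$, and since $\wc{H}_3$ consists of functions with vanishing $\mu$–mean, \eqref{eq:gap} yields $\Re\ipc{P_3 f}{B P_3 f}\ge\tfrac1T\|P_3 f\|^2$; feeding this into \eqref{eq:sector} gives $|\Im\ipc{f}{Bf}|\le(\gamma+bT)\Re\ipc{f}{Bf}$. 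Hence the numerical range of $B|_{\wc{H}_0^\perp}$ lies in a sector of half-angle $\arctan(\gamma+bT)<\pi/2$ with vertex at the origin, and, being also maximally accretive, $B|_{\wc{H}_0^\perp}$ generates a bounded analytic semigroup (the nontrivial kernel $\wc{H}_1\oplus\wc{H}_2$ is harmless; see e.g. \cite[Ch.~IX]{Kato:1995sf}). Since $\mc J=B|_{\wc{H}_0^\perp}+\im(\wc{K}_{\vec 0}+\wc{U}+\lambda\wc{V})|_{\wc{H}_0^\perp}$ and the added term is bounded (Lemma~\ref{lem:hatKUV}), a standard bounded–perturbation argument shows that $-\mc J$ also generates an analytic semigroup.

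To conclude, analytic semigroups are immediately norm–continuous, so the spectral mapping theorem holds for $\e^{-t\mc J}$ and its exponential growth bound equals the spectral bound of $-\mc J$, namely $\omega_0(\e^{-t\mc J})=\sup_{\lambda\in\sigma(-\mc J)}\Re\lambda=-\inf_{z\in\sigma(\mc J)}\Re z$. By Lemma~\ref{lem:L0inverse}, $\mc J$ is boundedly invertible, so $0\notin\sigma(\mc J)$ and therefore $\sigma(\mc J)\subset\sigma(\wc{L}_{\vec 0})\setminus\{0\}=\Sigma_+\subset\{z:\Re z>g\}$ by Lemma~\ref{lem:L0gap}; hence $\omega_0(\e^{-t\mc J})\le-g$. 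By the definition of the growth bound, for every $\epsilon>0$ there is $C_\epsilon\ge1$ with $\|\e^{-t\mc J}\|\le C_\epsilon\,\e^{(\omega_0+\epsilon)t}\le C_\epsilon\,\e^{-(g-\epsilon)t}$; since $\e^{-t\wc{L}_{\vec 0}}(1-Q_{\vec 0})$ vanishes on $\wc{H}_0$ and coincides with $\e^{-t\mc J}$ on $\wc{H}_0^\perp$, this is exactly \eqref{eq:expdecay}.

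I expect the only genuinely delicate point to be the passage from the spectral gap to semigroup decay. Because $\wc{L}_{\vec 0}$ is unbounded, $\Sigma_+$ is unbounded, and a naive Dunford or inverse–Laplace contour estimate for $\e^{-t\mc J}$ would diverge (only logarithmically, but enough to fail); this is precisely why one routes through analyticity — or, equivalently, through a Gearhart--Pr\"uss argument combining the uniform sectoriality of Lemma~\ref{lem:semi-Lk}(2) at large $|\Im z|$, maximal accretivity at $\Re z\ll0$, and compactness in the remaining bounded region (where $\Sigma_+$ is avoided thanks to Lemma~\ref{lem:L0gap}). The one place needing a small, non-routine observation is the claim that \eqref{eq:gap} promotes $B|_{\wc{H}_0^\perp}$ to the generator of an analytic semigroup despite its kernel; the rest is bookkeeping on top of Lemma~\ref{lem:L0gap}.
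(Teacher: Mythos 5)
Your argument is correct and is essentially the one the paper intends: the paper omits the proof of this lemma, referring to ``the standard perturbation theory of analytic semi-groups'' as in \cite{KS2009,Engel:2000nx,Kato:1995sf}, and your route --- reduce by $Q_{\vec 0}$ to the block $\mc{J}$, obtain analyticity of $\e^{-t\mc J}$ from the sectoriality of $B$ plus a bounded perturbation, and then convert the spectral gap of Lemmas \ref{lem:L0gap}--\ref{lem:L0inverse} into the growth bound via the spectral mapping theorem for (eventually norm-continuous) analytic semigroups --- is exactly that standard argument. Your observation that the gap condition \eqref{eq:gap} places the numerical range of $B$ on $\wc{H}_0^\perp$ in a sector with vertex at the origin is a nice refinement but is not strictly needed, since $B$ is already m-sectorial on the full space and quasi-bounded analyticity suffices for the spectral-mapping step.
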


\begin{lem}\label{lem:Lkgap}
	There is $c_0>0$ such that 
	\begin{equation*}
	\norm{\wc{L}_{{\vec k}} - \wc{L}_{\vec{0}}}_{L^{2}(\wh{M};\C^{\otimes\vec{p}})} \ \le \ c_0 |{\vec k}|.
	\end{equation*}
	
	If  $|{\vec k}|$ is sufficiently small, the spectrum of $\wc{L}_{{\vec k}}$ consists of:
	\begin{enumerate}
		\item A non-degenerate eigenvalue $E({\vec k})$ contained in $S_0 = \{ z\, :\, |z| < c_0 |{\vec k}|\}.$ 
		\item The rest of the spectrum is contained in the half plane $ S_1 = \{ z : \Re z > {g } - c_0 |{\vec k}| \}$ such that $S_{0} \cap S_1 = \emptyset$. 
	\end{enumerate}	
Furthermore,	{
	$E({\vec k})$ is $C^{2}$ in a neighborhood of $\vec 0$,
	\begin{equation} \label{eq:E0E'0}
	E({\vec 0}) = 0, \quad \grad E({\vec 0}) = 0. 
	\end{equation}
	Denote $\partial_j=\partial_{k_j}$  and $\varphi_{\vec 0}=\frac{1}{\sqrt{ \otimes \vec p }}\delta_{\vec 0} \otimes \overrightarrow{1}\otimes\id$ for simplicity where $ \otimes \vec p =p_1\cdot p_2\cdots p_d$, then 		
		\begin{align}\label{eq:Eij}
		\partial_i\partial_jE({0}) \ =
		\ipc{\partial_j\wc{K}_{\vec 0}\varphi_{\vec 0}}{{ P_2\mc{J}^{-1}  }\,  \partial_i  \wc{K}_ {\vec 0} \varphi_{\vec 0} } +\ipc{\partial_i\wc{K}_{\vec 0}\varphi_{\vec 0}}{{ P_2\mc{J}^{-1}  }\,  \partial_j  \wc{K}_ {\vec 0} \varphi_{\vec 0} }
		\end{align}
		where 
$P_2,\mc{J}$ and $\mc{J}^{-1}$ are given  in \eqref{eq:blockformA} and Lemma \ref{lem:L0inverse}.  
	}	 
\end{lem}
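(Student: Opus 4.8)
The plan is to regard $\vec k\mapsto \wc L_{\vec k}$ as a $C^2$ family of bounded perturbations of $\wc L_{\vec 0}$, to invoke perturbation theory for isolated eigenvalues (cf.\ \cite{Kato:1995sf}) applied to the simple eigenvalue $0$ of $\wc L_{\vec 0}$ furnished by Lemma~\ref{lem:L0gap}, and then to compute the first two $\vec k$-derivatives of the perturbed eigenvalue explicitly.

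\emph{Step 1 (norm bound and spectral splitting).} Since $\wc U,\wc V,B$ are $\vec k$-independent, $\wc L_{\vec k}-\wc L_{\vec 0}=\im(\wc K_{\vec k}-\wc K_{\vec 0})$, so the Lipschitz bound $\|\wc L_{\vec k}-\wc L_{\vec 0}\|\le\sqrt d\,\|\wh h'\|_\infty|\vec k|$ follows from Lemma~\ref{lem:K'} and the fundamental theorem of calculus; we enlarge $c_0$ below as needed. By \eqref{eq:blockform-L0} the first row and column of $\wc L_{\vec 0}$ vanish, i.e.\ $\wc L_{\vec 0}=0\oplus\mc J$ on $\wc H_0\oplus\wc H_0^\perp$; hence the Riesz projection of $\wc L_{\vec 0}$ at $0$ is the orthogonal projection $Q_{\vec 0}$ onto $\wc H_0=\operatorname{span}\varphi_{\vec 0}$, the eigenvalue $0$ is algebraically simple (Lemma~\ref{lem:L0gap}), and the reduced resolvent of $\wc L_{\vec 0}$ at $0$ is $0\oplus\mc J^{-1}$, with $\mc J^{-1}$ bounded by Lemma~\ref{lem:L0inverse}. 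A uniform bound $\|(\wc L_{\vec 0}-z)^{-1}\|\le M$ for $\Re z\le g/2$ is obtained by combining the elementary bound $1/|z|$ on the $\wc H_0$-block, the Schur-complement construction in the proof of Lemma~\ref{lem:L0inverse} on the $\wc H_0^\perp$-block (the Schur complements remaining boundedly invertible while $\Re z$ stays away from $g$), and uniform sectoriality (Lemma~\ref{lem:semi-Lk}) plus accretivity to control large $|\Im z|$ and $\Re z\to-\infty$. A Neumann series around $(\wc L_{\vec 0}-z)^{-1}$ then shows, for $|\vec k|$ small and $c_0$ large enough, that $z\in\rho(\wc L_{\vec k})$ whenever $|z|\ge c_0|\vec k|$ and $\Re z\le g-c_0|\vec k|$, i.e.\ $\sigma(\wc L_{\vec k})\subseteq S_0\cup S_1$ with $S_0\cap S_1=\emptyset$; moreover the Riesz projection $Q(\vec k)=\frac1{2\pi\im}\oint_{|z|=g/2}(\wc L_{\vec k}-z)^{-1}\di z$ is rank one and $C^2$ in $\vec k$, and $\sigma(\wc L_{\vec k})\cap S_0=\{E(\vec k)\}$ is algebraically simple with $E(\vec k)=\tr(\wc L_{\vec k}Q(\vec k))$ of class $C^2$ and $E(\vec 0)=0$.

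\emph{Step 2 (derivatives of $E$).} Normalize $\phi_{\vec k}:=Q(\vec k)\varphi_{\vec 0}/\langle\varphi_{\vec 0},Q(\vec k)\varphi_{\vec 0}\rangle$, so that $\phi_{\vec 0}=\varphi_{\vec 0}$, $\langle\varphi_{\vec 0},\phi_{\vec k}\rangle=1$, $\wc L_{\vec k}\phi_{\vec k}=E(\vec k)\phi_{\vec k}$, and $\phi_{\vec k}$ is $C^2$ in $\vec k$ in both the $L^2$- and the $B$-graph norm (the latter because $B\phi_{\vec k}=\wc L_{\vec k}\phi_{\vec k}-\im(\wc K_{\vec k}+\wc U+\lambda\wc V)\phi_{\vec k}$ is $C^2$). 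Pairing the eigenvalue equation with $\varphi_{\vec 0}$ and using $\wc L_{\vec 0}^\dagger\varphi_{\vec 0}=0$ from \eqref{eq:L0=0} yields $E(\vec k)=\im\langle\varphi_{\vec 0},(\wc K_{\vec k}-\wc K_{\vec 0})\phi_{\vec k}\rangle$. Differentiating once at $\vec 0$ and using $\wc K_{\vec 0}-\wc K_{\vec 0}=0$ gives $\partial_{k_j}E(\vec 0)=\im\langle\varphi_{\vec 0},(\partial_{k_j}\wc K_{\vec 0})\varphi_{\vec 0}\rangle$, which vanishes because $(\partial_{k_j}\wc K_{\vec 0})\varphi_{\vec 0}\in\wc H_2$ by \eqref{eq:K0'-delta0} and $\varphi_{\vec 0}\in\wc H_0\perp\wc H_2$; this is \eqref{eq:E0E'0}. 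Differentiating the eigenvalue equation once more at $\vec 0$ (and using $E(\vec 0)=\nabla E(\vec 0)=0$) gives $\wc L_{\vec 0}\,\partial_{k_j}\phi_{\vec 0}=-\im(\partial_{k_j}\wc K_{\vec 0})\varphi_{\vec 0}$; since the right side lies in $\wc H_0^\perp$ and $\langle\varphi_{\vec 0},\partial_{k_j}\phi_{\vec 0}\rangle=\partial_{k_j}1=0$, we conclude $\partial_{k_j}\phi_{\vec 0}=-\im\mc J^{-1}(\partial_{k_j}\wc K_{\vec 0})\varphi_{\vec 0}$. Differentiating $E$ twice at $\vec 0$ then leaves $\partial_{k_i}\partial_{k_j}E(\vec 0)=\im\langle\varphi_{\vec 0},(\partial_{k_i}\partial_{k_j}\wc K_{\vec 0})\varphi_{\vec 0}\rangle+\im\langle\varphi_{\vec 0},(\partial_{k_j}\wc K_{\vec 0})\partial_{k_i}\phi_{\vec 0}\rangle+\im\langle\varphi_{\vec 0},(\partial_{k_i}\wc K_{\vec 0})\partial_{k_j}\phi_{\vec 0}\rangle$; the first term vanishes by \eqref{eq:K0''-delta0} (again the vector is in $\wc H_2\perp\wc H_0$), and in the last two we move $\partial_{k_i}\wc K_{\vec 0},\partial_{k_j}\wc K_{\vec 0}$ onto the first slot — legitimate since each $\partial_{k_\ell}\wc K_{\vec 0}$ is self-adjoint, being a norm limit of the self-adjoint operators $\wc K_{\vec k}$ (Lemma~\ref{lem:hatKUV}) — and insert $P_2$ for free because $(\partial_{k_\ell}\wc K_{\vec 0})\varphi_{\vec 0}\in\wc H_2=\ran P_2$. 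The resulting $\im\cdot(-\im)=1$ bookkeeping produces exactly \eqref{eq:Eij}.

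\emph{Main obstacle.} The norm bound and the appeal to perturbation theory are routine; the care is concentrated in two places. First, $\wc L_{\vec 0}$ is unbounded (it contains $B$), so ``reduced resolvent'' must be realized through the block-diagonal splitting $\wc L_{\vec 0}=0\oplus\mc J$ and the invertibility of $\mc J$ (Lemma~\ref{lem:L0inverse}), and one must verify that the perturbed eigenvector is $C^2$ in the graph norm so that $\wc L_{\vec 0}\,\partial_{k_j}\phi_{\vec 0}$ is meaningful. Second, and most importantly, one must observe that the abstract second-order formula collapses to the clean \eqref{eq:Eij}: the single structural fact $(\partial_{k_j}\wc K_{\vec 0})\varphi_{\vec 0}\in\wc H_2$ simultaneously kills $\nabla E(\vec 0)$, kills the ``diagonal'' term $\langle\varphi_{\vec 0},(\partial_{k_i}\partial_{k_j}\wc K_{\vec 0})\varphi_{\vec 0}\rangle$, and converts $\mc J^{-1}$ into $P_2\mc J^{-1}$ — which is what ties the Hessian of $E$ to the operator $P_2\mc J^{-1}$ and the lower bound \eqref{eq:gap-gamma} of Lemma~\ref{lem:L0inverse}, the ingredient needed downstream for positive-definiteness of $\vec D^0$.
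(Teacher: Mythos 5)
Your proposal is correct and follows essentially the same route as the paper: the spectral splitting and $C^2$ regularity of $E(\vec k)$ come from standard analytic perturbation theory off the simple eigenvalue $0$ of $\wc{L}_{\vec 0}$, and the Hessian formula \eqref{eq:Eij} is obtained by differentiating the eigenvalue equation twice at $\vec k=\vec 0$, using that $(\partial_{k_j}\wc{K}_{\vec 0})\varphi_{\vec 0}\in\wc{H}_2$ (which kills $\grad E(\vec 0)$ and the second-derivative diagonal term) and solving $\wc{L}_{\vec 0}\,\partial_{k_j}\phi_{\vec 0}=-\im(\partial_{k_j}\wc{K}_{\vec 0})\varphi_{\vec 0}$ through the block form to get $P_2\partial_{k_j}\phi_{\vec 0}=-\im P_2\mc{J}^{-1}(\partial_{k_j}\wc{K}_{\vec 0})\varphi_{\vec 0}$, exactly as the paper does. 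The only differences are cosmetic (the normalization $\ipc{\varphi_{\vec 0}}{\phi_{\vec k}}=1$ and the Feynman--Hellmann-style starting identity), and your justification of the self-adjointness of $\partial_{k_\ell}\wc{K}_{\vec 0}$ should read ``norm limit of difference quotients of self-adjoint operators,'' which is what you clearly intend.
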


	\begin{remark}\label{rem:ReD}
		Let $\vec D:=\left(\vec{D}_{i,j}\right)_{d\times d}=\left(\partial_i\partial_jE(\vec 0)\right)_{d\times d}$. It is clear from \eqref{eq:Eij} that $\vec{D}$ is symmetric.   Furthermore, for any $\vec k\in\T^d$, in view of the expression of $\partial_i\wc{K}_{\vec 0}$ in \eqref{eq:K0'-delta0}, $0\neq\sum_{i}  k_i \partial_i\wc{K}_{\vec 0}\varphi_{\vec 0}\in \ell^2(\Z^d)\otimes \ora{1}\otimes\id$. It is non-zero due the non-degeneracy of $h$. Therfore, by \eqref{eq:gap-gamma} in Lemma \ref{lem:L0inverse},
		$$\Re \ipc{\vec{k}}{\vec{D} \vec{k}} \ 
		= \ 2  \Re \ipc{   \sum_{i}  k_i \partial_i\wc{K}_{\vec 0}\varphi_{\vec 0} } 
		{ { P_2\mc{J}^{-1}  }\sum_{i}  k_i \partial_i\wc{K}_{\vec 0}\varphi_{\vec 0} } 
		\ > \ 2g\norm{ { P_2\mc{J}^{-1}  }\sum_{i}  k_i \partial_i\wc{K}_{\vec 0}\varphi_{\vec 0}}^2>0.$$
	  	In the next section, we will relate the matrix element of $\vec D$ with limits of diffusively scaled moments. From the real valued moments, we will see that $\partial_i\partial_jE(\vec 0)\in\R$ and then $\vec D$ is positive definite. 
\end{remark}

Similar to Lemma \ref{lem:L0dynamics}, dynamical information about the semi-group $\e^{-t \wc{L}_{{\vec k}}}$ follows from the spectral gap of $\wc{L}_{\vec k}$ in Lemma \ref{lem:Lkgap}:
\begin{lem}\label{lem:Lkdynamics} If $\epsilon$ is sufficiently small, then there is $C_{\epsilon } <\infty$ such that
	\begin{equation*}
	\norm{\e^{-t \wc{L}_{{\vec k}}}(1- Q_{{\vec k}})}_{L^{2}(\wh{M};\C^{\otimes\vec{p}})} \le C_{\epsilon}\e^{-t ({g } - \epsilon - c_0 |{\vec k}|)}
	\end{equation*}
	for all sufficiently small ${\vec k}$.
\end{lem}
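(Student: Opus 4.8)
The plan is to obtain this exactly as Lemma~\ref{lem:L0dynamics} is obtained from Lemma~\ref{lem:L0gap}, now using the spectral gap of $\wc{L}_{{\vec k}}$ supplied by Lemma~\ref{lem:Lkgap} in place of that of $\wc{L}_{\vec 0}$; the one new ingredient is that every constant must be controlled \emph{uniformly} for small ${\vec k}$. Fix a small $\epsilon>0$. By Lemma~\ref{lem:Lkgap} there is $\delta_0>0$ (depending on $\epsilon$) so that for $|{\vec k}|<\delta_0$ the spectrum of $\wc{L}_{{\vec k}}$ splits as $\{E({\vec k})\}\cup\Sigma_+({\vec k})$ with $E({\vec k})\in\{|z|<c_0|{\vec k}|\}$ and $\Sigma_+({\vec k})\subset\{\Re z>g-c_0|{\vec k}|\}$; let $Q_{{\vec k}}$ be the Riesz projection onto the eigenvalue $E({\vec k})$. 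Since ${\vec k}\mapsto\wc{L}_{{\vec k}}=\wc{L}_{\vec 0}+\im\,(\wc{K}_{{\vec k}}-\wc{K}_{\vec 0})$ is norm continuous with $\|\wc{L}_{{\vec k}}-\wc{L}_{\vec 0}\|\le c_0|{\vec k}|$ (Lemma~\ref{lem:K'}), standard analytic perturbation theory gives that ${\vec k}\mapsto Q_{{\vec k}}$ is continuous, $Q_{{\vec k}}\to Q_{\vec 0}$, $\sup_{|{\vec k}|<\delta_0}\|1-Q_{{\vec k}}\|<\infty$, the subspace $\ran(1-Q_{{\vec k}})$ is $\e^{-t\wc{L}_{{\vec k}}}$--invariant, and $\sigma\big(\wc{L}_{{\vec k}}|_{\ran(1-Q_{{\vec k}})}\big)=\Sigma_+({\vec k})$.

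Next I set up the sectorial--semigroup contour representation on that invariant subspace. Put $a=a({\vec k}):=g-\epsilon-c_0|{\vec k}|$ and shrink $\delta_0$ so that $c_0\delta_0<\tfrac12\min\{\epsilon,\,g-\epsilon\}$; then for $|{\vec k}|<\delta_0$ the point $a({\vec k})$ lies strictly to the right of $\{|z|<c_0|{\vec k}|\}\ni E({\vec k})$ and strictly to the left of $\Sigma_+({\vec k})$. Let $\Gamma=\Gamma({\vec k})$ be the contour made of the vertical segment $\{a+\im y:|y|\le Y_0\}$, $Y_0:=b'+\gamma g+1$ (independent of ${\vec k}$), joined to the two rays from $a\pm\im Y_0$ into the right half-plane at angles $\pm\theta'$, $\theta'\in(\arctan\gamma,\tfrac{\pi}{2})$. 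By the uniform sectoriality in Lemma~\ref{lem:semi-Lk}, ${\rm Num}(\wc{L}_{{\vec k}})$ lies in the sector $\mc N:=\{\Re z\ge 0,\ |\Im z|\le b'+\gamma\Re z\}$ for every ${\vec k}$; hence $\Gamma$ winds around $\Sigma_+({\vec k})$, stays in $\{\Re z\ge a\}$, and along its two rays stays outside $\mc N$ with $\dist(z,\mc N)\ge c(1+|z|)$ for a fixed $c>0$. Since $\wc{L}_{{\vec k}}|_{\ran(1-Q_{{\vec k}})}$ is sectorial with spectrum $\Sigma_+({\vec k})$, the functional calculus for sectorial operators gives
\[
\e^{-t\wc{L}_{{\vec k}}}\,(1-Q_{{\vec k}}) \ = \ \frac{1}{2\pi\im}\int_{\Gamma({\vec k})}\e^{-tz}\,(z-\wc{L}_{{\vec k}})^{-1}(1-Q_{{\vec k}})\,\di z,
\]
the integral converging because $|\e^{-tz}|=\e^{-t\Re z}$ decays along the rays while $(z-\wc{L}_{{\vec k}})^{-1}(1-Q_{{\vec k}})$ is analytic off $\Sigma_+({\vec k})$ (the pole at $E({\vec k})$ being annihilated by $1-Q_{{\vec k}}$) and is $O(|z|^{-1})$ there.

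It remains to estimate the integral uniformly in ${\vec k}$. The part of $\Gamma$ meeting $\overline{\mc N}$ lies in the fixed compact set $K:=\{\,g-\epsilon-c_0\delta_0\le\Re z\le g-\epsilon,\ |\Im z|\le b'+\gamma g\,\}$, on which, by the choice of $\delta_0$, $K\cap\sigma(\wc{L}_{{\vec k}})=\emptyset$ for all $|{\vec k}|<\delta_0$; since $(z,{\vec k})\mapsto(z-\wc{L}_{{\vec k}})^{-1}$ is jointly continuous where defined it is bounded on $K\times\{|{\vec k}|\le\delta_0\}$, so this part contributes $O(\e^{-ta})$. On the rays one has $\|(z-\wc{L}_{{\vec k}})^{-1}(1-Q_{{\vec k}})\|\le\|1-Q_{{\vec k}}\|\,\dist(z,\mc N)^{-1}=O((1+|z|)^{-1})$ by sectoriality, and since $\Re z=a+s\cos\theta'$ along them (arclength $s$) that part contributes $O\big(\e^{-ta}\int_0^\infty\e^{-ts\cos\theta'}(1+s)^{-1}\,\di s\big)=O(\min\{1,t^{-1}\}\e^{-ta})$. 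Summing the two, $\|\e^{-t\wc{L}_{{\vec k}}}(1-Q_{{\vec k}})\|\le C_\epsilon\,\e^{-t(g-\epsilon-c_0|{\vec k}|)}$ for $t\ge1$, and for $0<t\le1$ the same bound holds trivially since $\e^{-t\wc{L}_{{\vec k}}}$ is a contraction (Lemma~\ref{lem:semi-Lk}), $\sup_{\vec k}\|1-Q_{{\vec k}}\|<\infty$, and $\inf_{[0,1]}\e^{-ta}>0$. The only genuine obstacle is the bookkeeping of uniformity in ${\vec k}$ — that the projections $Q_{{\vec k}}$ and the resolvent on the bounded piece of $\Gamma$ stay controlled as ${\vec k}\to\vec 0$ — which is handled by norm continuity of ${\vec k}\mapsto\wc{L}_{{\vec k}}$ and analytic perturbation theory; everything else is the routine sectorial contour estimate already carried out for Lemma~\ref{lem:L0dynamics}.
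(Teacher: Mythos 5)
Your argument is correct and is exactly the route the paper has in mind: the paper omits the proof of Lemma \ref{lem:Lkdynamics}, appealing to the standard perturbation theory of analytic semigroups as in \cite{Engel:2000nx,Kato:1995sf} and \cite{KS2009}, and your contour representation of $\e^{-t\wc{L}_{\vec k}}(1-Q_{\vec k})$ over a path pushed to $\Re z = g-\epsilon-c_0|\vec k|$, combined with the uniform sectoriality of Lemma \ref{lem:semi-Lk} and the spectral splitting of Lemma \ref{lem:Lkgap}, is precisely that standard argument. The extra bookkeeping you supply (uniform bounds on $\|1-Q_{\vec k}\|$ and on the resolvent over the compact piece of the contour, plus the separate treatment of $0<t\le 1$ via contractivity) is the right way to make the $\vec k$-uniformity explicit.
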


Notice that $ \otimes \vec p =p_1\cdot p_2\cdots   p_d$. The case where $d=1$ and $\otimes \vec p=p_1=1$ is equivalent to the free case considered in \cite{KS2009}, where the above lemmas were proved. { The proof follows from the standard perturbation theory of analytic semi-groups\textemdash see for instance \cite{Engel:2000nx, Kato:1995sf}. There are no essential differences in the proof when $\otimes \vec p>1$. We omit the proofs for Lemma \ref{lem:L0dynamics}-Lemma \ref{lem:Lkdynamics} here. We only sketch the proofs for \eqref{eq:E0E'0} and \eqref{eq:Eij}, which plays the most important role for the explicit expression of the diffusion constant in the next section.}

\begin{proof} [Proof of {\eqref{eq:E0E'0} and \eqref{eq:Eij}}] Write $\partial_j=\partial_{k_j}$ for short. 
	Let $E({\vec k})$  be the non-degenerate eigenvalue of $\wc{L}_{\vec k}$, and the associated normalized eigenvector $\varphi_{{\vec k}}$. Let $Q_{\vec k}$ be the orthogonal projection onto $\varphi_{{\vec k}}$.  Clearly $E(\vec 0)=0$, $\varphi_{\vec 0}=\frac{1}{\sqrt  \otimes \vec p } \delta_{\vec 0}\otimes \overrightarrow{1}\otimes \id$ and $\wc{L}_{{\vec 0}}\varphi_{\vec 0}=\wc{L}^\dagger_{\vec 0}\varphi_{\vec 0}=0$. Since 
	\begin{equation}
	\wc{L}_{{\vec k}}\varphi_{{\vec k}} \ = \ E({\vec k}) \varphi_{\vec k},
	\end{equation}
	direct computation shows 
	\begin{align}
	&\partial_j \wc{L}_{{\vec k}}\,\varphi_{{\vec k}}+ \wc{L}_{{\vec k}}\partial_j \varphi_{\vec k}\ = \ \partial_j  E({\vec k}) \varphi_{{\vec k}}+E({\vec k}) \partial_j \varphi_{\vec k}\label{eq:Lk=Ek'}\\
\Longrightarrow&{\partial_j}\wc{L}_{\vec{ 0 }}\varphi_{\vec { 0 }}+ \wc{L}_{{ 0 }}{\partial_j}\varphi_{\vec  0 }\ = \ {\partial_j} E({\vec  0 }) \varphi_{\vec { 0 }}.  \label{eq:L0=E0'}
	\end{align}
Notice that $\partial_j \wc{L}_{{\vec 0}}=\im \partial_j \wc{K}_{\vec 0}$ maps $\wc{H}_0=\ran Q_{\vec 0}$ to $\wc{H}_2$, therefore, $Q_{\vec 0}\partial_j \wc{L}_{{\vec 0}}=0$ and 
\begin{align*}
 \partial_j  E(\vec 0)=\ipc{\varphi_{\vec 0}}{  \partial_j  \wc{L}_ {\vec 0} \varphi_{\vec 0} }+\ipc{\varphi_{\vec 0}}{\wc{L}_{\vec 0}   \partial_j  \varphi_ {\vec 0} } 
=\ipc{Q_{\vec 0}\varphi_{\vec 0}}{  \partial_j  \wc{L}_ {\vec 0} \varphi_{\vec 0} }+\ipc{\wc{L}^\dagger_{\vec 0} \varphi_{\vec 0}}{  \partial_j  \varphi_ {\vec 0} }
=0.
\end{align*}
Differentiating \eqref{eq:Lk=Ek'} again, we have 
\begin{align}
&\partial_i\partial_j\wc{L}_{{\vec k}}\varphi_{{\vec k}}+\partial_j\wc{L}_{{\vec k}}\partial_i\varphi_{\vec k}+\partial_i\wc{L}_{{\vec k}}\partial_j\varphi_{\vec k}+\wc{L}_{{\vec k}}\partial_i\partial_j\varphi_{{\vec k}}  \nonumber \\
=& \partial_i\partial_jE({\vec k}) \varphi_{{\vec k}}+\partial_j E({\vec k}) \partial_i\varphi_{\vec k}
+\partial_i E({\vec k}) \partial_j\varphi_{\vec k}+E({\vec k}) \partial_i\partial_j\varphi_{{\vec k}}.\label{eq:Lk=Ek''}
\end{align}
Evaluating \eqref{eq:Lk=Ek''} at $\vec \vec k=\vec 0 $ and using $\grad E(\vec 0)=0$, we have that 
\begin{align*}
&\partial_i\partial_j\wc{L}_{{{0}}}\varphi_{\vec {{0}}}+\partial_j\wc{L}_{\vec{{0}}}\partial_i\varphi_{\vec {0}}+\partial_i\wc{L}_{\vec{{0}}}\partial_j\varphi_{\vec {0}}+\wc{L}_{{{0}}}\partial_i\partial_j\varphi_{\vec 0}= \partial_i\partial_jE({{\vec 0}}) \varphi_{\vec 0}.
\end{align*}
We also have $Q_{\vec 0}\partial_i\partial_j\wc{L}_{\vec 0}=0$ for the same reason as for $ Q_0\partial_j  \wc{L}_{{\vec 0}}$. Notice that  $  \partial_j  \wc{L}_ {\vec 0}=\im  \partial_j  \wc{K}_{\vec 0}=-\partial_j \wc{L} ^\dagger_ 0$ 
and $  \partial_j  \wc{L}_ {\vec 0} \varphi_{\vec 0}\in \ell^{2}(\Z^d\backslash \{\vec 0 \})\otimes \ora 1\otimes \id$ because of \eqref{eq:K0'-delta0}. Corollary \ref{cor:KerK0} implies $ \partial_j  \wc{L}_ {\vec 0} \varphi_{\vec 0}\in {\rm Ker}(P_1\wc{K}_{\vec 0})=\ran (\Pi_2)\subsetneq \wc{H}_2$.  Therefore, 
\begin{align*}
\partial_j\partial_jE({{\vec 0}})
 =&\ipc{\varphi_{\vec 0}}{\partial_j\wc{L}_{\vec{{0}}}\partial_i\varphi_{\vec {0}} }+
 \ipc{\varphi_{\vec 0}}{\partial_i\wc{L}_{\vec{{0}}}\partial_j\varphi_{\vec {0}} }\\
  =&{\rm i}\ipc{\partial_j\wc{K}_{\vec 0 }\,\varphi_{\vec 0}}{\partial_i\varphi_{\vec {0}} }+{\rm i}
 \ipc{\partial_i\wc{K}_{\vec 0}\, \varphi_{\vec 0}}{\partial_j\varphi_{\vec {0}} } \\
 =&{\rm i}\ipc{P_2\partial_j\wc{K}_{\vec 0 }\,\varphi_{\vec 0}}{P_2\partial_i\varphi_{\vec {0}} }+{\rm i}
 \ipc{P_2\partial_i\wc{K}_{\vec 0}\, \varphi_{\vec 0}}{P_2\partial_j\varphi_{\vec {0}} } .
\end{align*}

It remains to solve 
$${\partial_j}\wc{L}_{\vec{ 0 }}\varphi_{\vec { 0 }}+ \wc{L}_{{ 0 }}{\partial_j}\varphi_{\vec  0 }\ =0$$
i.e.,
\begin{equation}\label{eq:partial}
{\rm i}{\partial_j}\wc{K}_{\vec{ 0 }}\varphi_{\vec { 0 }}+ \wc{L}_{{ 0 }}{\partial_j}\varphi_{\vec  0 }\ =0
\end{equation}
 for $P_2\partial_i\varphi_{\vec {0}}$. Recall the block form of $\wc{L}_{{\vec 0}}$ in \eqref{eq:blockform-L0} and $\mc{J}$ in \eqref{eq:blockformA}.  The key fact $\partial_j  \wc{K}_ {\vec 0} \varphi_{\vec 0}=\Pi_2\partial_j  \wc{K}_ {\vec 0} \varphi_{\vec 0}\in \wc{H}_2$ reduces eq. \eqref{eq:partial} to what we have considered in the second part of Lemma \ref{lem:L0inverse}: 
\begin{align*}
\begin{pmatrix}
0 & 0 & 0 & 0 \\
0 & 0 & \im P_1\wc{K}_{\vec 0}P_2  & 0 \\
0 & \im P_2\wc{K}_{\vec 0}P_1 & P_2(\im \wc{K}_{\vec 0}+\im \wc{U})P_2 & \im {\lambda} P_2\wc{V}P_3 \\
0 & 0 & \im {\lambda} P_3\wc{V}P_2 & P_3\wc{L}_{\vec 0}P_3
\end{pmatrix}\, 
\begin{pmatrix}
\ast \\
\ast \\
P_2 \partial_j \varphi_{\vec 0}\\
\ast
\end{pmatrix}=\begin{pmatrix}
0 \\
0 \\
-  {\rm i}\partial_j  \wc{K}_ {\vec 0} \varphi_{\vec 0}\\
0
\end{pmatrix}. 
\end{align*}
As derived in Lemma \ref{lem:L0inverse}:
	\begin{align*}
P_2  \partial_j  \varphi_ {\vec 0}
=-{\rm i}P_2\mc{J}^{-1}\,\partial_j  \wc{K}_ {\vec 0} \varphi_{\vec 0},
	\end{align*}
	where $P_2$ is the projection onto $\wc{H}_2$. 
 Therefore, 
	\begin{align*}
	\partial_j\partial_jE({{\vec 0}})=&{\rm i}\ipc{\partial_j\wc{K}_{\vec 0 }\,\varphi_{\vec 0}}{  -{\rm i}P_2\mc{J}^{-1}\,\partial_i  \wc{K}_ {\vec 0} \varphi_{\vec 0} }+{\rm i}
	\ipc{\partial_i\wc{K}_{\vec 0}\, \varphi_{\vec 0}}{ -{\rm i}P_2\mc{J}^{-1}\,\partial_j  \wc{K}_ {\vec 0} \varphi_{\vec 0} } \\
	=&\ipc{\partial_j\wc{K}_{\vec 0 }\,\varphi_{\vec 0}}{ P_2\mc{J}^{-1}\,\partial_i  \wc{K}_ {\vec 0} \varphi_{\vec 0} }+
	\ipc{\partial_i\wc{K}_{\vec 0}\, \varphi_{\vec 0}}{ P_2\mc{J}^{-1}\,\partial_j  \wc{K}_ {\vec 0} \varphi_{\vec 0} } ,
	\end{align*}
which gives \eqref{eq:Eij}. 
\end{proof}

\section{Proof of the main results}\label{sec:MainResults}
\subsection{Central limit theorem}
We first prove  \eqref{eq:genCLT} for bounded continuous $f$ and normalized $\psi_0\in\ell^2(\Z^d)$.  The extension to quadratically bounded $f$  follows from  some standard arguments combining \eqref{eq:genCLT} for bounded continuous $f$ and diffusive scaling for second moments, Lemma \ref{lem:lim-Mij}.  We refer readers to Sec.\ 4.5 in \cite{Schenker:2015} for more details about this extension. We omit the proof of the extension here. 

To prove  \eqref{eq:genCLT} for bounded continuous $f$, it suffices, by Levy's Continuity Theorem and a limiting argument, to prove 
\begin{equation}\label{eq:CLT-1}
\lim_{t\rightarrow \infty} \sum_{x\in \Z^d} \e^{\im {\vec k}\cdot   \frac{x}{\sqrt{t}}} \Ev{|\psi_t(x)|^2} \ = \ \e^{-\frac{1}{2}\ipc{\vec k}{\vec D\vec k}},
\end{equation}
where $\psi_t(x)\in\ell^2({\Z^d})$ is the solution to eq.\ \eqref{eq:genE} with initial condition $\psi_0\in\ell^2(\Z^d)$.  As pointed out in Sec.\ 4.2, \cite{Schenker:2015},  it is enough to establish eq.\ \eqref{eq:CLT-1} for $\psi_0\in\ell^1(\Z^d)$; it then extends to all of $\psi_0\in\ell^2(\Z^d)$ by a limiting argument. So throughout this section, we assume that 
	\begin{align}\label{eq:inital-l1}
	\norm{\psi_0}_{\ell^2}=1,\ \ \ {\rm and }\ \ \  \norm{\psi_0}_{\ell^1}:=\sum_{x\in\Z^d}|\psi_0(x)|<\infty. 
	\end{align}    
We also denote for simplicity
\begin{align}\label{eq:simp-notation}
{\varphi}_{\vec 0}:={\varphi}_{\vec 0}(x,\omega)=\frac{1}{\sqrt{\otimes \vec p}}{\delta_{\vec 0}}\otimes \overrightarrow{1} \otimes\id,\ {\Phi}_{\vec k}:={\Phi}_{\vec k}(x,\omega)=\sqrt{\otimes \vec p}\cdot\wh{\rho}_{0;\vec k}(x)\otimes\id,
\end{align}
where $\overrightarrow{1},\wh{\rho}_{0;\vec k}(x)\in\C^{\otimes\vec{p}}$ are defined  in \eqref{eq:rho-hat}. Recall that for any $\sigma\in\Z_{\vec p}$
\begin{align*}
\pi_\sigma\overrightarrow{1}=1,\qquad  \pi_\sigma\wh{\rho}_{0;\vec 0}(x)=
\sum\limits_{n\in {\vec p}\Z^d+\sigma}   \psi_0(x-n) \overline{\psi_0(-n)}.
\end{align*}
By \eqref{eq:Fourier-PFK}, we have
\begin{equation*}
\sum_{x\in\Z^d } \e^{\im \frac{\vec k}{\sqrt t} x } \Ev{|\psi_t(x)|^2 } \ = \ \ipc{ {\varphi}_{\vec 0}\ }{\e^{-t \wc{L}_{\nicefrac{\vec k}{\sqrt t}}} \, {\Phi}_{\frac{\vec k}{\sqrt t}} }_{L^2(\wh{M};\C^{\otimes\vec{p}})}. \end{equation*}

Letting $Q_{{\vec k}}$ denote the Riesz projection onto the eigenvector of $\wc{L}_{{\vec k}}$  near zero, we have

\begin{align}
\sum_{x\in\Z^d } \e^{\im \frac{\vec k}{\sqrt t} x } \Ev{|\psi_t(x)|^2 }=&
\ipc{ {\varphi}_{\vec 0}\ }{\e^{-t \wc{L}_{\nicefrac{\vec k}{\sqrt t}}} Q_{\frac{\vec k}{\sqrt t}} {\Phi}_{\frac{\vec k}{\sqrt t}} }  +
\ipc{ {\varphi}_{\vec 0}\ }{\e^{-t \wc{L}_{\nicefrac{\vec k}{\sqrt t}}} \left(1-Q_{\frac{\vec k}{\sqrt t}}\right) {\Phi}_{\frac{\vec k}{\sqrt t}} } \nonumber \\
=&\e^{-t E({\frac{\vec k}{\sqrt t}})}\ipc{ {\varphi}_{\vec 0}\ }{ Q_{\frac{\vec k}{\sqrt t}} {\Phi}_{\frac{\vec k}{\sqrt t}} } +
\ipc{ {\varphi}_{\vec 0}\ }{\e^{-t \wc{L}_{\nicefrac{\vec k}{\sqrt t}}} \left(1-Q_{\frac{\vec k}{\sqrt t}}\right) {\Phi}_{\frac{\vec k}{\sqrt t}} }. \label{eq:1-Q-pf}
\end{align}
By Lemma \ref{lem:Lkdynamics}, 
the second term in \eqref{eq:1-Q-pf} is exponentially small in the large $t$ limit,
\begin{align}
\abs{\ipc{ {\varphi}_{\vec 0}\ }{\e^{-t \wc{L}_{\nicefrac{\vec k}{\sqrt t}}} \left(1-Q_{\frac{\vec k}{\sqrt t}}\right) {\Phi}_{\frac{\vec k}{\sqrt t}} }}\le& 
  \  \norm{(1 - Q_{\frac{\vec k}{\sqrt{t}}}) \e^{- t  \wc{L}_{ \nicefrac{\vec k}{\sqrt{t}}}}}\cdot \norm{{\varphi}_{\vec 0}} \cdot  \norm{  {\Phi}_{\frac{\vec k}{\sqrt t}}} \nonumber \\
  \le & \  C_{\epsilon}  \e^{- t  ({g }-\epsilon - c \frac{|\vec k|}{\sqrt t})} \cdot \norm{{\varphi}_{\vec 0}} \cdot  \norm{  {\Phi}_{\frac{\vec k}{\sqrt t}}}. \label{eq:decay-pf}
\end{align}
Direct computation shows that 
	\begin{align*}
	\lim_{t\to\infty}\norm{  {\Phi}_{\frac{\vec k}{\sqrt t}}}^2_{L^2(\wh{M};\C^{\otimes\vec{p}})}
	=&(\otimes \vec p)\norm{\wh{\rho}_{0;\vec 0}}^2_{\ell^2(\Z^d;\C^{\otimes\vec{p}})}\\
	\le &(\otimes \vec p) \sum_{\sigma\in \Z_{\vec p} }\sum_{x\in\Z^d}\left|{\sum_{ n\in {\vec p}\Z^d+\sigma }   \psi_0(x-n) \overline{\psi_0(-n)}}\right|^2 \\
		\le & (\otimes \vec p)\norm{\psi_0}^2_{\ell^2}\sum_{\sigma\in \Z_{\vec p} }\left(\sum_{ n\in {\vec p}\Z^d+\sigma }\left| {\psi_0(-n)}\right|\right)^{2}\\
		 	\le & (\otimes \vec p)\norm{\psi_0}^2_{\ell^2}\cdot \norm{\psi_0}^2_{\ell^1}<\infty .
	\end{align*}
Therefore, in \eqref{eq:decay-pf},  $\abs{\ipc{ {\varphi}_{\vec 0}\ }{\e^{-t \wc{L}_{\frac{\vec k}{\sqrt t}}} \left(1-Q_{\frac{\vec k}{\sqrt t}}\right) {\Phi}_{\frac{\vec k}{\sqrt t}} }}\longrightarrow 0$ as $t\to\infty$. 

Regarding the first term in \eqref{eq:1-Q-pf}, we have by Taylor's formula, 
\begin{equation*}
E\left(\frac{\vec k}{\sqrt t}\right) \ = \ \frac{1}{2}\sum_{i,j}\partial_j\partial_j E(\vec 0)\frac{k_i}{\sqrt t}\frac{k_j}{\sqrt t} \ + \ o(\frac{1}{t})=\ \frac{1}{2t}\sum_{i,j}\partial_j\partial_j E(\vec 0){k_i}{k_j} \ + \ o(\frac{1}{t}),
\end{equation*}
since $E(\vec 0)=\grad E(\vec 0)=0$. Thus, 
\begin{equation}\label{eq:taylor}
\e^{- t E({\vec k}/\sqrt{t})} \ = \ \e^{-t  \, \frac{1}{2t}\sum_{i,j}\partial_j\partial_j E(\vec 0){k_i}{k_j}} + o(1)=\e^{-\frac{1}{2}\sum_{i,j}\partial_j\partial_j E(\vec 0){k_i}{k_j}} + o(1).
\end{equation}
Direct compuation shows that 
\begin{align*}
\ipc{{\varphi}_{\vec 0}}{{\Phi}_{\vec 0}}_{L^2({\wh{M};\C^{\otimes\vec{p}}})}
=&\ipc{ {\delta_{\vec 0}}\otimes \overrightarrow{1} \otimes\id\ }{ \wh{\rho}_{0;\vec 0}\otimes\id }_{L^2({\wh{M};\C^{\otimes\vec{p}}})} \\
=&  \sum_{\sigma\in \Z_{\vec p} }\sum_{ n\in {\vec p}\Z^d+\sigma }   \psi_0(-n) \overline{\psi_0(-n)}=  \norm{\psi_0}^2_{\ell^2(\Z^d;\C)}=1. 
\end{align*}
Thus, 
\begin{align}\label{eq:Q0Phi_0}
Q_{\vec 0}{\Phi}_{\vec 0}={\rm Proj}_{{\varphi}_{\vec 0}}{\Phi}_{\vec 0}
= \ipc{{\varphi}_{\vec 0}}{{\Phi}_{\vec 0}}\cdot \frac{{\varphi}_{\vec 0}}{\norm{{\varphi}_{\vec 0}}^2}= {\varphi}_{\vec 0}.
\end{align}
Putting together everything, we have 
\begin{align*}
 \lim_{t\to\infty} \sum_{x\in\Z^d } \e^{\im \frac{\vec k}{\sqrt t} x } \Ev{|\psi_t(x)|^2 }=&
 \lim_{t\to\infty} \e^{-t E\left({\frac{\vec k}{\sqrt t}}\right)}\ipc{ {\varphi}_{\vec 0}\ }{ Q_{\frac{\vec k}{\sqrt t}} {\Phi}_{\frac{\vec k}{\sqrt t}} }\\
 =& \e^{-\frac{1}{2}\sum_{i,j}\partial_j\partial_j E(\vec 0){k_i}{k_j}} \ipc{ {\varphi}_{\vec 0}\ }{ Q_{\vec 0} {\Phi}_{\vec 0} } 
  = \e^{-\frac{1}{2}\sum_{i,j}\partial_j\partial_j E(\vec 0){k_i}{k_j}}. 
\end{align*}
Therefore, \eqref{eq:CLT-1} holds true with $\vec D_{i,j}=\partial_j\partial_j E(\vec 0)$ for any normalized $\psi_{0}\in\ell^2(\Z^d)$. \qed

\subsection{Diffusive scaling and reality of the diffusion matrix}
We proceed to prove the diffusive scaling \eqref{eq:genDS} under the assumption that 
\begin{align}
\sum_{x} \abs{\psi_0(x)}^2=1,\ \ \sum_{x} |x|^2 \abs{\psi_0(x)}^2 < \infty. \label{eq:initial}
\end{align}
 Similar to \eqref{eq:inital-l1}, it is enough to establish the results for $x\psi_0\in\ell^1(\Z^d)$; it then extends to all of $x\psi_0\in\ell^2(\Z^d)$ by a limiting argument. We assume that 
 \begin{align}\label{eq:initial-xl1}
\sum_{x} |x| \abs{\psi_0(x)} < \infty.
 \end{align}
We continue to use the notation in \eqref{eq:simp-notation}. Also, $\ipc{\cdot}{\cdot}$ will stand for $\ipc{\cdot}{\cdot}_{L^2(\wh{M};\C^{\otimes \vec p})}$ unless otherwise specified. We also denote $\partial_i=\partial_{k_i},i=1,\cdots,d$ for short. 

As pointed out in Sec. 4.4 in \cite{Schenker:2015}, $\sum_{x} (1 + |x|^2) |\psi_t(x)|^2 \le \e^{Ct}$ for each $t >0$.  Thus the second moments of the position
\begin{align} \label{eq:Mij}
M_{i,j}(t) \ := \  \sum_{x\in \Z^d} x_i x_j \Ev{\abs{\psi_t(x)}^2}
\end{align}
are well defined and finite. The main task of this section is to show that $M_{i,j}(t) \sim \vec{D}_{i,j}t$, where $\vec{D}_{i,j}=\partial_i\partial_jE(\vec 0)$ are given in \eqref{eq:Eij}.  More precisely, 
\begin{lem} \label{lem:lim-Mij}
Let $P_2\mc{J}^{-1}$ be as in Lemma \ref{lem:L0inverse} . Suppose the initial value $\psi_0$ satisfies \eqref{eq:initial}, then for all $1\le i,j\le d$, 
	\begin{align*}
	\lim_{t\to\infty}\frac{1}{t}M_{i,j}(t)=&\ipc{\partial_j\wc{K}_{\vec 0}\varphi_{\vec 0}}{P_2\mc{J}^{-1}\,  \partial_i  \wc{K}_ {\vec 0} \varphi_{\vec 0} } 
	+\ipc{\partial_i\wc{K}_{\vec 0}\varphi_{\vec 0}}{P_2\mc{J}^{-1}\,  \partial_j  \wc{K}_ {\vec 0} \varphi_{\vec 0} } 
	=\partial_i\partial_jE(\vec 0).
	\end{align*}
	As a consequence, $\partial_i\partial_jE(\vec 0)\in\R$ and ${\vec D}=\left(\partial_i\partial_jE(\vec 0)\right)_{d\times d}$ is positive definite. In particular, 
	\begin{align*}
	\lim_{t\to\infty}\frac{1}{t}\sum_{x\in \Z^d} |x|^2\,\Ev{\abs{\psi_t(x)}^2}
	=2\sum_{i=1}^d \ipc{\partial_i\wc{K}_{\vec 0}\varphi_{\vec 0}}{P_2\mc{J}^{-1}\,  \partial_i  \wc{K}_ {\vec 0} \varphi_{\vec 0} } 
	=\tr {\vec D}\in(0,\infty).
	\end{align*}
\end{lem}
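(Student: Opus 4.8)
The plan is to compute the second moments $M_{i,j}(t)$ by differentiating the Floquet-transformed Pillet formula \eqref{eq:Fourier-PFK} twice in $\vec k$ at $\vec k = \vec 0$, then extract the leading-order asymptotics in $t$ using the spectral gap of $\wc L_{\vec k}$. Concretely, writing $G_t(\vec k) := \sum_{x} \e^{\im \vec k\cdot x}\Ev{|\psi_t(x)|^2} = \ipc{\varphi_{\vec 0}}{\e^{-t\wc L_{\vec k}}\Phi_{\vec k}}$, the assumption \eqref{eq:initial-xl1} (that $x\psi_0 \in \ell^1$) ensures $\Phi_{\vec k}$ is $C^2$ in $\vec k$ with values in $\ell^{\infty;1}$, so that $G_t$ is twice differentiable and $M_{i,j}(t) = -\partial_i\partial_j G_t(\vec 0)$. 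First I would justify this differentiation under the sum/inner product, using the $C^2$ regularity of $\vec k\mapsto \e^{-t\wc L_{\vec k}}$ from Lemma \ref{lem:semi-Lk} on $\ell^{\infty;1}$ together with the bound $\sum_x(1+|x|^2)|\psi_t(x)|^2 \le \e^{Ct}$ quoted from \cite{Schenker:2015}.

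Next I would split $G_t(\vec k)$ via the Riesz projection as in \eqref{eq:1-Q-pf}:
\begin{equation*}
G_t(\vec k) = \e^{-tE(\vec k)}\ipc{\varphi_{\vec 0}}{Q_{\vec k}\Phi_{\vec k}} + \ipc{\varphi_{\vec 0}}{\e^{-t\wc L_{\vec k}}(1-Q_{\vec k})\Phi_{\vec k}}.
\end{equation*}
By Lemma \ref{lem:Lkdynamics} the second term and all its $\vec k$-derivatives up to order two are exponentially small in $t$ (uniformly for $\vec k$ near $\vec 0$), so they do not contribute to $\lim_{t\to\infty} \tfrac1t M_{i,j}(t)$. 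For the first term, I differentiate the product $\e^{-tE(\vec k)}\ipc{\varphi_{\vec 0}}{Q_{\vec k}\Phi_{\vec k}}$ twice at $\vec k = \vec 0$, using $E(\vec 0)=0$, $\grad E(\vec 0)=0$ from \eqref{eq:E0E'0}, and $Q_{\vec 0}\Phi_{\vec 0} = \varphi_{\vec 0}$, $\ipc{\varphi_{\vec 0}}{\Phi_{\vec 0}}=1$ from \eqref{eq:Q0Phi_0} and the computation preceding it. The key point is bookkeeping of powers of $t$: terms where both derivatives hit $\ipc{\varphi_{\vec 0}}{Q_{\vec k}\Phi_{\vec k}}$ contribute $O(1)$; a single derivative on $\e^{-tE(\vec k)}$ produces a factor $-t\partial_i E(\vec k) \to 0$ at $\vec k=\vec 0$; and the one surviving term proportional to $t$ is $-t\,\partial_i\partial_j E(\vec 0)\cdot \ipc{\varphi_{\vec 0}}{Q_{\vec 0}\Phi_{\vec 0}} = -t\,\partial_i\partial_j E(\vec 0)$. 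Hence $\tfrac1t M_{i,j}(t) = \partial_i\partial_j E(\vec 0) + o(1)$. Combining this with the explicit expression \eqref{eq:Eij} for $\partial_i\partial_j E(\vec 0)$ from Lemma \ref{lem:Lkgap} gives the stated formula.

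Once the identity $\lim_{t\to\infty}\tfrac1t M_{i,j}(t) = \partial_i\partial_j E(\vec 0)$ is established, reality and positive-definiteness are immediate: $M_{i,j}(t)$ is real for every $t$ (it is $\sum_x x_ix_j\Ev{|\psi_t(x)|^2}$), so its limit is real, whence $\partial_i\partial_j E(\vec 0)\in\R$ and the symmetric matrix $\vec D = (\partial_i\partial_j E(\vec 0))$ is real symmetric; positive-definiteness then follows from Remark \ref{rem:ReD}, where $\Re\ipc{\vec k}{\vec D\vec k} > 2g\norm{P_2\mc J^{-1}\sum_i k_i\partial_i\wc K_{\vec 0}\varphi_{\vec 0}}^2 > 0$ for $\vec k\neq\vec 0$ by the non-degeneracy of $h$ and \eqref{eq:gap-gamma}, and since $\vec D$ is now known to be real this lower bound applies to $\ipc{\vec k}{\vec D\vec k}$ itself. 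The trace statement follows by summing over $i=j$ and using symmetry of $\vec D$ to combine the two terms in the formula, yielding $\tr\vec D = 2\sum_i \ipc{\partial_i\wc K_{\vec 0}\varphi_{\vec 0}}{P_2\mc J^{-1}\partial_i\wc K_{\vec 0}\varphi_{\vec 0}} \in (0,\infty)$. The main obstacle I anticipate is the rigorous justification of differentiating $G_t(\vec k)$ twice under the sum while keeping uniform (in $t$, after dividing by $t$) control of error terms — in particular showing that the cross terms involving $\partial_i$ of the $(1-Q_{\vec k})$ piece, and the terms with one derivative landing on $\e^{-tE(\vec k)}$, are genuinely $o(t)$; this requires the $\ell^{\infty;1}$-regularity of $\vec k\mapsto \e^{-t\wc L_{\vec k}}$ and the $C^2$-dependence of $E(\vec k)$, $Q_{\vec k}$, together with the moment bound, and is the step where following \cite[Sec.~4.4]{Schenker:2015} closely will be necessary.
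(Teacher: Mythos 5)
Your route is genuinely different from the paper's. The paper never differentiates the Riesz projection $Q_{\vec k}$ or the splitting $\Phi_{\vec k}=Q_{\vec k}\Phi_{\vec k}+(1-Q_{\vec k})\Phi_{\vec k}$ in $\vec k$: it computes $-\partial_i\partial_j\ipc{\varphi_{\vec 0}}{\e^{-t\wc{L}_{\vec k}}\Phi_{\vec k}}\big|_{\vec k=\vec 0}$ by applying the Duhamel formula \eqref{eq:expLk-deriv} twice, obtaining the explicit decomposition $M_{i,j}=N_1+\dots+N_5$ of Lemma \ref{lem:M=N1-5}; the gap estimates of Lemma \ref{lem:L0dynamics} show $N_1,\dots,N_4$ are bounded uniformly in $t$, and the linear growth is read off from the double time integral $N_5$ via $\tfrac1t\int_0^t\int_0^s\e^{-(s-r)\wc{L}_{\vec 0}}(1-Q_{\vec 0})\,\di r\,\di s\to\mc{J}^{-1}$, only afterwards being identified with $\partial_i\partial_jE(\vec 0)$ through \eqref{eq:Eij}. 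Your product-rule computation on $\e^{-tE(\vec k)}\ipc{\varphi_{\vec 0}}{Q_{\vec k}\Phi_{\vec k}}$ extracts the coefficient $-t\,\partial_i\partial_jE(\vec 0)$ more directly, and your treatment of reality, positive definiteness (via Remark \ref{rem:ReD}) and the trace identity coincides with the paper's.

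There is, however, a gap exactly where you anticipated one, plus one incorrect assertion. The incorrect assertion: the second $\vec k$-derivative at $\vec 0$ of $\ipc{\varphi_{\vec 0}}{\e^{-t\wc{L}_{\vec k}}(1-Q_{\vec k})\Phi_{\vec k}}$ is \emph{not} exponentially small in $t$ \textemdash\ for instance the contribution $\ipc{\varphi_{\vec 0}}{\e^{-t\wc{L}_{\vec 0}}\partial_i\partial_j[(1-Q_{\vec k})\Phi_{\vec k}]\big|_{\vec 0}}$ is a $t$-independent constant, and the single-Duhamel term containing $\partial_i\partial_j\wc{K}_{\vec 0}$ converges to a nonzero constant. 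These terms are $O(1)$, which still yields $o(t)$, but verifying this requires a Duhamel expansion that essentially reproduces $N_2$--$N_4$. The genuine gap is the regularity of $\Phi_{\vec k}$: by \eqref{eq:rho-hat}, $\pi_\sigma\partial_i\partial_j\wh{\rho}_{0;\vec k}(x)$ involves $\sum_n n_in_j\psi_0(x-n)\overline{\psi_0(-n)}$, and neither \eqref{eq:initial} nor \eqref{eq:initial-xl1} gives $|x|^2\psi_0\in\ell^1$, so $\Phi_{\vec k}$ need not be $C^2$ with values in $\ell^{\infty;1}$ or in $L^2(\wh{M};\C^{\otimes\vec p})$. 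Only the specific pairing $\ipc{\varphi_{\vec 0}}{\partial_i\partial_j\Phi_{\vec 0}}=-\sum_n n_in_j|\psi_0(n)|^2$ is finite under \eqref{eq:initial}, and the paper's $N_1$--$N_5$ decomposition is arranged precisely so that the second derivative of $\Phi_{\vec k}$ appears only in this pairing and $Q_{\vec k}$ is never differentiated. To repair your argument you would either need a stronger hypothesis on $\psi_0$ (removed afterwards by a limiting argument) or a reorganization of the derivatives that lands $\partial_i\partial_j\Phi$ only on $\varphi_{\vec 0}$ \textemdash\ at which point you have essentially rederived the paper's decomposition.
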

By \eqref{eq:Fourier-PFK}, we have 
\begin{align}\label{eq:Mij-1}
M_{i,j} (t) \ = \ - \left .  \partial_{i} \partial_{j} \sum_{x\in \Z^d} \e^{\im \vec{k}\cdot x}  \Ev{\abs{\psi_t(x)}^2} \right |_{\vec{k}=\vec 0}=\ - \left .  \partial_{i} \partial_{j} \ipc{{\varphi}_{\vec 0} }{\e^{-t \wc{L}_{{\vec k}}} \,{\Phi}_{\vec k} }\right |_{\vec{k}=\vec 0}.
\end{align}
The following decomposition of $M_{i,j}$ are essentially contained in \cite{Schenker:2015}. We sketch the proof in Appendix \ref{app:A} for readers' convenience. 
\begin{lem}\label{lem:M=N1-5}
	For all $1\le i,j\le d$ and $t\in\R^+$, $M_{i,j}=\sum\limits_{n=1}\limits^5N_n$, where 
	\begin{align}
	N_1=&-\ipc{ {\varphi}_{\vec 0}\ }{ \partial_{i} \partial_{j} {\Phi}_{\vec 0} }; \label{eq:N1}\\
	N_2=&\int_{ 0}^{t}\left[\ipc{\partial_i \wc{L} ^\dagger_{\vec 0} {\varphi}_{\vec 0}\ }{	\, \e^{-s \wc{L}_{\vec 0}}\, (1-Q_{\vec 0}) \,   \partial_j {\Phi}_{\vec 0} }\, 
	     +\ipc{\partial_j \wc{L} ^\dagger_{\vec 0} {\varphi}_{\vec 0}\ }{	\, \e^{-s \wc{L}_{\vec 0}}\, (1-Q_{\vec 0}) \,   \partial_i {\Phi}_{\vec 0} }\right]\, \di s;  \label{eq:N2}\\
	N_3=&\int_{ 0}^{t}\ipc{ \partial_i\partial_j\,  \wc{L}  ^\dagger_{\vec 0}\, {\varphi}_{\vec 0}\ }{  
		\, \e^{-s \wc{L}_{\vec 0}}\, (1-Q_{\vec 0}) \,   {\Phi}_{\vec 0} }\, \di s; \label{eq:N3}
	\end{align}
	\begin{align}
	N_4=-\int_{ 0}^{t}\int_{ 0}^{s} & \left[\ipc{\, \partial_i \wc{L} ^\dagger_{\vec 0}\, {\varphi}_{\vec 0}\, }
	{ \e^{-(s-r) \wc{L}_{\vec 0}}(1-Q_{\vec 0}) \partial_j \wc{L} _{\vec 0}\, \e^{-r \wc{L}_{\vec 0}}\, (1-Q_{\vec 0}){\Phi}_{\vec 0} } \right .  ;\\
	& \left . +\ipc{\, \partial_j \wc{L} ^\dagger_{\vec 0}\, {\varphi}_{\vec 0}\, }
	{ \e^{-(s-r) \wc{L}_{\vec 0}}(1-Q_{\vec 0}) \partial_i \wc{L} _{\vec 0}\, \e^{-r \wc{L}_{\vec 0}}\, (1-Q_{\vec 0}){\Phi}_{\vec 0} }\right]\, \di r\, \di s \label{eq:N4}\\
	N_5=-\int_{ 0}^{t}\int_{ 0}^{s} &\left[\ipc{\, \partial_i \wc{L} ^\dagger_{\vec 0}\, {\varphi}_{\vec 0}\, }
	{ \e^{-(s-r) \wc{L}_{\vec 0}}(1-Q_{\vec 0}) \partial_j \wc{L} _{\vec 0}\, Q_{\vec 0} {\Phi}_{\vec 0} } \right . ; \\
	& \left . +\ipc{\, \partial_j \wc{L} ^\dagger_{\vec 0}\, {\varphi}_{\vec 0}\, }
	{ \e^{-(s-r) \wc{L}_{\vec 0}}(1-Q_{\vec 0}) \partial_i \wc{L} _{\vec 0} \, Q_{\vec 0}{\Phi}_{\vec 0} }\right]\, \di r\, \di s . \label{eq:N5}
	\end{align}
\end{lem}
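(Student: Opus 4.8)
The starting point is the Fourier–space identity \eqref{eq:Mij-1},
$$
M_{i,j}(t) \ = \ -\,\partial_i\partial_j\,\ipc{{\varphi}_{\vec 0}}{\e^{-t\wc{L}_{{\vec k}}}\,{\Phi}_{\vec k}}\Big|_{{\vec k}=\vec 0},
$$
which I would expand by the product rule. By Lemma \ref{lem:semi-Lk} the map ${\vec k}\mapsto\e^{-t\wc{L}_{\vec k}}$ is $C^2$ into the bounded operators on $L^2(\wh{M};\C^{\otimes\vec p})$; under the standing reduction \eqref{eq:initial-xl1} the map ${\vec k}\mapsto{\Phi}_{\vec k}$ is smooth in the relevant norms, because differentiating \eqref{eq:rho-hat} in ${\vec k}$ only brings down factors of $n$ which are controlled by the moment hypotheses on $\psi_0$; these routine estimates, which also justify the differentiation under integral signs below, are the content of Appendix \ref{app:A}. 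Leibniz's rule distributes $\partial_i\partial_j$ over the three factors ${\varphi}_{\vec 0}$ (constant), $\e^{-t\wc{L}_{\vec k}}$, and ${\Phi}_{\vec k}$, producing: (a) the term in which both derivatives land on ${\Phi}_{\vec k}$; (b) the two terms in which one lands on $\e^{-t\wc{L}_{\vec k}}$ and the other on ${\Phi}_{\vec k}$; and (c) the term in which both land on $\e^{-t\wc{L}_{\vec k}}$. Term (a) is immediate: since $\wc{L}_{\vec 0}^{\dagger}{\varphi}_{\vec 0}=0$ by \eqref{eq:L0=0} we have $\e^{-t\wc{L}_{\vec 0}^{\dagger}}{\varphi}_{\vec 0}={\varphi}_{\vec 0}$, so (a) equals $-\ipc{{\varphi}_{\vec 0}}{\partial_i\partial_j{\Phi}_{\vec 0}}=N_1$.

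For (b) and (c) I would use the Duhamel expansion
$$
\partial_j\e^{-t\wc{L}_{\vec k}} \ = \ -\int_0^t\e^{-(t-s)\wc{L}_{\vec k}}\,(\partial_j\wc{L}_{\vec k})\,\e^{-s\wc{L}_{\vec k}}\,\di s,
$$
and, differentiating it once more and inserting the same formula for the inner derivative,
\begin{multline*}
\partial_i\partial_j\e^{-t\wc{L}_{\vec k}} \ = \ -\int_0^t\e^{-(t-s)\wc{L}_{\vec k}}\,(\partial_i\partial_j\wc{L}_{\vec k})\,\e^{-s\wc{L}_{\vec k}}\,\di s\\
+\int_0^t\!\int_0^s\e^{-(t-s)\wc{L}_{\vec k}}\,(\partial_i\wc{L}_{\vec k})\,\e^{-(s-r)\wc{L}_{\vec k}}\,(\partial_j\wc{L}_{\vec k})\,\e^{-r\wc{L}_{\vec k}}\,\di r\,\di s\ +\ (i\leftrightarrow j).
\end{multline*}
Evaluating at ${\vec k}=\vec 0$ and pairing with ${\varphi}_{\vec 0}$ on the left, the leftmost semigroup factor $\e^{-\cdot\,\wc{L}_{\vec 0}^{\dagger}}$ acting on ${\varphi}_{\vec 0}$ collapses to the identity, and the leftmost derivative, $(\partial_i\wc{L}_{\vec 0})^{\dagger}=\partial_i\wc{L}_{\vec 0}^{\dagger}$ or $(\partial_i\partial_j\wc{L}_{\vec 0})^{\dagger}=\partial_i\partial_j\wc{L}_{\vec 0}^{\dagger}$, may be transposed onto ${\varphi}_{\vec 0}$.

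The remaining manipulations rest on two facts about the Riesz projection $Q_{\vec 0}$ onto the zero eigenvector. First, since the simple eigenvalue $0$ of $\wc{L}_{\vec 0}$ has the \emph{same} right and left eigenvector ${\varphi}_{\vec 0}$ (because $\wc{L}_{\vec 0}{\varphi}_{\vec 0}=\wc{L}_{\vec 0}^{\dagger}{\varphi}_{\vec 0}=0$), $Q_{\vec 0}$ is the orthogonal projection onto ${\varphi}_{\vec 0}$ and commutes with $\e^{-s\wc{L}_{\vec 0}}$. Second, $\partial_i\wc{L}_{\vec 0}^{\dagger}{\varphi}_{\vec 0}=-\im\,\partial_i\wc{K}_{\vec 0}{\varphi}_{\vec 0}$ and $\partial_i\partial_j\wc{L}_{\vec 0}^{\dagger}{\varphi}_{\vec 0}=-\im\,\partial_i\partial_j\wc{K}_{\vec 0}{\varphi}_{\vec 0}$ lie in $\ell^2(\Z^d\setminus\{\vec 0\})\otimes\ora 1\otimes\id$ by \eqref{eq:K0'-delta0}, \eqref{eq:K0''-delta0}, hence are orthogonal to ${\varphi}_{\vec 0}$ and killed by $Q_{\vec 0}$ (cf.\ Cor.\ \ref{cor:KerK0}). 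Using these, one may insert a factor $(1-Q_{\vec 0})$ immediately to the right of the leftmost semigroup factor — and, in the nested double integrals, also immediately after the middle semigroup factor — without changing the value. Doing this in term (b) produces $N_2$; doing it in the $(\partial_i\partial_j\wc{L}_{\vec 0})$–piece of (c) produces $N_3$. In each nested double–integral piece of (c) one then writes $\e^{-r\wc{L}_{\vec 0}}{\Phi}_{\vec 0}=\e^{-r\wc{L}_{\vec 0}}(1-Q_{\vec 0}){\Phi}_{\vec 0}+Q_{\vec 0}{\Phi}_{\vec 0}$, using $\e^{-r\wc{L}_{\vec 0}}Q_{\vec 0}{\Phi}_{\vec 0}=Q_{\vec 0}{\Phi}_{\vec 0}$ since $Q_{\vec 0}{\Phi}_{\vec 0}\in\wc{H}_0=\ker\wc{L}_{\vec 0}$: the first summand gives $N_4$ and the second gives $N_5$. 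Carrying the overall minus sign of \eqref{eq:Mij-1} through and summing the $i\leftrightarrow j$ contributions gives $M_{i,j}=\sum_{n=1}^{5}N_n$.

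The main obstacle is purely organizational: (i) arranging the second–order Duhamel expansion into precisely the three shapes above; (ii) justifying each insertion of $(1-Q_{\vec 0})$ or replacement $\e^{-r\wc{L}_{\vec 0}}Q_{\vec 0}{\Phi}_{\vec 0}=Q_{\vec 0}{\Phi}_{\vec 0}$ via the two facts about $Q_{\vec 0}$; and (iii) bookkeeping the signs and the symmetrization in $i,j$. The only analytic point — differentiating under the $t,s,r$ integrals — is supplied by the $C^2$ regularity of Lemma \ref{lem:semi-Lk} and the reduction \eqref{eq:initial-xl1}. Since all of this mirrors the corresponding argument of \cite{Schenker:2015}, it is relegated to Appendix \ref{app:A}.
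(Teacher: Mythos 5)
Your proposal is correct and follows essentially the same route as the paper's Appendix~\ref{app:A}: expand \eqref{eq:Mij-1} by the product rule, apply the Duhamel formula \eqref{eq:expLk-deriv} twice, collapse the outer semigroup using $\wc{L}_{\vec 0}^{\dagger}\varphi_{\vec 0}=0$, and insert $Q_{\vec 0}$, $(1-Q_{\vec 0})$ via $\e^{-t\wc{L}_{\vec 0}}Q_{\vec 0}=Q_{\vec 0}$ and $Q_{\vec 0}\partial_j\wc{L}_{\vec 0}=0$. The only cosmetic difference is that you write the second-order Duhamel expansion directly in symmetrized $(i\leftrightarrow j)$ form, whereas the paper keeps the two nested integrals in their original variables and performs the change of variables afterwards.
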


Combining the above decomposition and the contraction property of $\e^{-t \wc{L}_{\vec 0}}$ in Lemma \ref{lem:L0dynamics}, we have the following convergence of $N_n$, which implies Lemma \ref{lem:lim-Mij} immediately.
\begin{lem}\label{lem:lim-N} Let $M_{i,j}=\sum\limits_{n=1}\limits^5N_n$ be given as in Lemma \ref{lem:M=N1-5}. Then 
 \begin{align}
		& \lim_{t\to\infty}\frac{1}{t}\abs{N_n}=0,\ n=1,\cdots,4. \label{eq:M12-lim}\\
	 &\lim_{t\to\infty}\frac{1}{t}N_5	= \ipc{\,  \partial_i  \wc{K}_{\vec 0}\, {\varphi}_{\vec 0}\, } 
		{  P_2\mc{J}^{-1}   \partial_j  \wc{K}_{\vec 0}\, {\varphi}_{\vec 0}} + \ipc{\,  \partial_j  \wc{K}_{\vec 0}\, {\varphi}_{\vec 0}\, } 
		{  P_2\mc{J}^{-1}   \partial_i  \wc{K}_{\vec 0}\, {\varphi}_{\vec 0}} . \label{eq:M3-lim} 
		\end{align}
\end{lem}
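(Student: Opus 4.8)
\textbf{Plan of proof for Lemma \ref{lem:lim-N}.}

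The plan is to estimate each of the five terms $N_n$ from the decomposition in Lemma \ref{lem:M=N1-5} separately. For $N_1$ in \eqref{eq:N1}, there is no time integration at all: $N_1 = -\ipc{\varphi_{\vec 0}}{\partial_i\partial_j\Phi_{\vec 0}}$ is independent of $t$ (its finiteness uses the assumption \eqref{eq:initial-xl1} that $x\psi_0\in\ell^1$, since each $\partial_{k}$ brings down a factor of position from the Floquet transform \eqref{eq:rho-hat}). Hence $\frac1t|N_1|\to 0$ trivially. For $N_2$ and $N_3$ in \eqref{eq:N2}, \eqref{eq:N3}, the key point is that the integrand contains the factor $\e^{-s\wc{L}_{\vec 0}}(1-Q_{\vec 0})$, which by Lemma \ref{lem:L0dynamics} decays exponentially, $\norm{\e^{-s\wc{L}_{\vec 0}}(1-Q_{\vec 0})}\le C_\eps \e^{-s(g-\eps)}$. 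Since $\partial_i\wc{L}^\dagger_{\vec 0}\varphi_{\vec 0}=\im\partial_i\wc{K}_{\vec 0}\varphi_{\vec 0}$ and $\partial_i\Phi_{\vec 0}$ are fixed vectors in $L^2(\wh M;\C^{\otimes\vec p})$ (again finite by the $\ell^1$-type assumptions on $\psi_0$), the $s$-integrals $\int_0^t(\cdots)\,\d s$ converge to finite limits as $t\to\infty$, so dividing by $t$ gives $0$.

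For $N_4$ in \eqref{eq:N4}, there are two decaying semigroup factors: the inner $\e^{-r\wc{L}_{\vec 0}}(1-Q_{\vec 0})$ decays like $\e^{-r(g-\eps)}$ and the outer $\e^{-(s-r)\wc{L}_{\vec 0}}(1-Q_{\vec 0})$ like $\e^{-(s-r)(g-\eps)}$, while $\partial_j\wc{L}_{\vec 0}=\im\partial_j\wc{K}_{\vec 0}$ is bounded. Thus the double integral $\int_0^t\int_0^s(\cdots)\,\d r\,\d s$ is bounded uniformly in $t$: integrating $\e^{-(s-r)(g-\eps)}\e^{-r(g-\eps)}=\e^{-s(g-\eps)}$ over $0\le r\le s$ gives a factor $s\e^{-s(g-\eps)}$, whose integral over $s\in[0,\infty)$ is finite. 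Hence $\frac1t|N_4|\to 0$. The term $N_5$ is the only one that survives after division by $t$ and it is where the actual work lies. Here the inner factor is $\partial_j\wc{L}_{\vec 0}\,Q_{\vec 0}\Phi_{\vec 0} = \im\,\partial_j\wc{K}_{\vec 0}\,\varphi_{\vec 0}$ (using $Q_{\vec 0}\Phi_{\vec 0}=\varphi_{\vec 0}$ from \eqref{eq:Q0Phi_0}), a \emph{fixed, nonzero} vector which by \eqref{eq:K0'-delta0} and Corollary \ref{cor:KerK0} lies in $\ran\Pi_2\subset\wc{H}_2$. The outer factor is still $\e^{-(s-r)\wc{L}_{\vec 0}}(1-Q_{\vec 0})$, so the $r$-integration is trivial and produces a factor $s$, while the $s$-integration against the decaying semigroup — after changing variables — yields $\int_0^t\int_0^s \e^{-(s-r)\wc{L}_{\vec 0}}(1-Q_{\vec 0})\,\d r\,\d s$ acting on a fixed vector.

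The main obstacle, and the crux of the argument, is to evaluate $\lim_{t\to\infty}\frac1t\int_0^t\int_0^s\e^{-(s-r)\wc{L}_{\vec 0}}(1-Q_{\vec 0})\,\d r\,\d s$ applied to $\partial_j\wc{K}_{\vec 0}\varphi_{\vec 0}$. Writing $u=s-r$, the inner integral gives $\int_0^s\e^{-u\wc{L}_{\vec 0}}(1-Q_{\vec 0})\,\d u$, which by the spectral gap (Lemma \ref{lem:L0dynamics}) converges strongly as $s\to\infty$ to $\wc{L}_{\vec 0}^{-1}(1-Q_{\vec 0})$, i.e. to the bounded operator $\mc{J}^{-1}$ on $\wc{H}_0^\perp$ (Lemma \ref{lem:L0inverse}) — more precisely to $P_2\mc{J}^{-1}$ when composed with the projections appearing in the inner product, since $\partial_i\wc{K}_{\vec 0}\varphi_{\vec 0}\in\wc{H}_2$. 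Then the outer Cesàro average $\frac1t\int_0^t(\cdots)\,\d s$ of a convergent integrand converges to the same limit. This replaces $\int_0^t\int_0^s\e^{-(s-r)\wc{L}_{\vec 0}}(1-Q_{\vec 0})\,\d r\,\d s$ by $t\cdot\wc{L}_{\vec 0}^{-1}(1-Q_{\vec 0})+o(t)$ in the relevant matrix element, so that $\frac1t N_5\to \ipc{\partial_i\wc{K}_{\vec 0}\varphi_{\vec 0}}{P_2\mc{J}^{-1}\partial_j\wc{K}_{\vec 0}\varphi_{\vec 0}}+\ipc{\partial_j\wc{K}_{\vec 0}\varphi_{\vec 0}}{P_2\mc{J}^{-1}\partial_i\wc{K}_{\vec 0}\varphi_{\vec 0}}$, which is \eqref{eq:M3-lim}. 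One must be careful to justify interchanging the limit with the inner product and to control the rate of convergence of $\int_0^s\e^{-u\wc{L}_{\vec 0}}(1-Q_{\vec 0})\,\d u\to\wc{L}_{\vec 0}^{-1}(1-Q_{\vec 0})$ uniformly enough for the Cesàro averaging; both follow from the exponential bound \eqref{eq:expdecay} together with the fact that $\partial_j\wc{K}_{\vec 0}\varphi_{\vec 0}\perp\wc{H}_0$, so $(1-Q_{\vec 0})$ acts as the identity on it and $\mc{J}^{-1}$ is genuinely bounded there.
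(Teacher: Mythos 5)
Your proposal is correct and follows essentially the same route as the paper: $N_1$ is $t$-independent, $N_2$--$N_4$ are bounded uniformly in $t$ via the exponential decay of $\e^{-s\wc{L}_{\vec 0}}(1-Q_{\vec 0})$ from Lemma \ref{lem:L0dynamics}, and $N_5$ reduces, via $Q_{\vec 0}\Phi_{\vec 0}=\varphi_{\vec 0}$ and $\partial_j\wc{K}_{\vec 0}\varphi_{\vec 0}\in\ran\Pi_2\subset\ran(1-Q_{\vec 0})$, to evaluating $\lim_{t}\frac1t\int_0^t\int_0^s\e^{-(s-r)\wc{L}_{\vec 0}}\,\d r\,\d s=\mc{J}^{-1}$ on $\wc{H}_2$. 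The only cosmetic difference is that you justify this last limit directly (change of variables, absolute convergence of the Laplace transform at $0$, then Ces\`aro averaging) where the paper cites the standard contour-integral argument from \cite{KS2009,Schenker:2015}; the two justifications are equivalent and both rest on the spectral gap.
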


\begin{proof}
	
\noindent {\bf Case $n=1$:} Note that $ \partial_i \partial_j   {\Phi}_{\vec 0}=\sqrt{\otimes \vec p}\, \left .  \partial_i \partial_j   \wh{\rho}_{0;{\vec k}}\right|_{\vec k=\vec 0 }\otimes\id$.	Direct computation by \eqref{eq:rho-hat} shows 
\begin{align}
\pi_{\sigma}\, \partial_i \partial_j   \wh{\rho}_{0;{\vec 0}}(x) =-
\sum\limits_{n\in {\vec p}\Z^d+\sigma} n_in_j \psi_0(x-n) \overline{\psi_0(-n)}. \label{eq:rho''}
\end{align}

Therefore, by \eqref{eq:N1}
\begin{align*}
\abs{N_1}
=\abs{\ipc{ {\delta_{\vec 0}}\otimes \overrightarrow{1}\otimes\id\ }{ \, 	 \partial_i \partial_j   \wh{\rho}_{0;{\vec 0}}\otimes\id }}
=&\left|\sum\limits_{n\in \Z^d} n_in_j |\psi_0(n)|^2 \right|
\le \sum\limits_{n\in \Z^d} |n|^2 |\psi_0(n)|^2.
\end{align*}
Clearly, $\abs{N_1}$ is uniformly bounded in $t$ by \eqref{eq:initial}, which implies $\lim\limits_{t\to\infty}\frac{1}{t}|N_1(t)|=0.$\\

\noindent{\bf Case $n=2$:}  By \eqref{eq:rho-hat} and the same computation as in \eqref{eq:rho''}, we have $\partial_j {\Phi}_{\vec 0}=\left . \partial_j \wh{\rho}_{0;{\vec k}} \right|_{\vec k=\vec 0 }\otimes\id$ with 
\begin{align}
\pi_\sigma\,	\partial_j\,  \wh{\rho}_{0;{\vec 0}}(x) =-\im 
\sum\limits_{n\in {\vec p}\Z^d+\sigma} n_j\, \psi_0(x-n) \overline{\psi_0(-n)}.
\label{eq:rho'}
\end{align}
By \eqref{eq:initial}, \eqref{eq:initial-xl1} and direct computation, we obtain
	\begin{align*}
\norm{\partial_j \wh{\rho}_{0;\vec 0}}^2_{\ell^2(\Z^d;\C^{\otimes\vec{p}})}
	\le & \sum_{\sigma\in \Z_{\vec p} }\sum_{x\in\Z^d}\left|{\sum_{ n\in {\vec p}\Z^d+\sigma }  n_j\, \psi_0(x-n) \overline{\psi_0(-n)}}\right|^2 \nonumber\\
	\le &  \norm{\psi_0}^2_{\ell^2}\sum_{\sigma\in \Z_{\vec p} }\left(\sum_{ n\in {\vec p}\Z^d+\sigma }|n|\left| {\psi_0(-n)}\right|\right)^{2} \nonumber\\
	\le &  \norm{\psi_0}^2_{\ell^2}\cdot \norm{x\psi_0}^2_{\ell^1}<\infty. 
	\end{align*}
 By Lemma \ref{lem:K'}, 
	\begin{align*}
	\norm{\partial_j \wc{L} _{\vec 0}}_{L^2(\wh{M};\C^{\otimes\vec{p}})}=\norm{ \partial_j  \wc{K}_{\vec 0}}_{L^2(\wh{M};\C^{\otimes\vec{p}})}\le \|\wh h'\|_{\infty}. 
	\end{align*}
By Lemma \ref{lem:L0dynamics}, we have 
\begin{align*}
   &\int_{ 0}^{t}\abs{\ipc{ \partial_i \wc{L} ^\dagger_{\vec 0}{\varphi}_{\vec 0}\ }{\, \e^{-s \wc{L}_{\vec 0}}\, (1-Q_{\vec 0}) \,   \partial_j {\Phi}_{\vec 0} }}\, \di s \\
\le &   \norm{ \partial_i  \wc{K}_{\vec 0}{\varphi}_{\vec 0}} \cdot \norm{\partial_j {\Phi}_{\vec 0}} \cdot 
C_{\epsilon} 
\int_{ 0}^{t}\e^{-s ({g } -\epsilon)} \, \di s  \\
\le &  \|\wh h'\|_{\infty}\cdot\sqrt{\otimes \vec p }  \cdot \norm{\partial_j \wh{\rho}_{0;\vec 0}} 
 \cdot \frac{C_{\epsilon}}{{g }-\epsilon}<\infty.
\end{align*}
Therefore, 
$\lim\limits_{t\to\infty}\frac{1}{t}|N_2(t)|=0$. \\

\noindent {\bf Case $n=3$:} $N_3$ can be estimated exact in the same way as $N_2$. Again by Lemma \ref{lem:K'}, we have 
\begin{align*}
\norm{ \partial_i \partial_j   \wc{L}_{\vec 0}}_{L^2(\wh{M};\C^{\otimes\vec{p}})}=\norm{ \partial_i \partial_j   \wc{K}_{\vec 0}}_{L^2(\wh{M};\C^{\otimes\vec{p}})}\le \|\wh{h}''\|_{\infty}.
\end{align*}
By Lemma \ref{lem:L0dynamics}, we have 
\begin{align*}
\sup_t|N_{3}(t)|\le&  \sup_t\int_{ 0}^{t}\abs{\ipc{ \partial_i \partial_j  \, \wc{L} ^\dagger_{\vec 0}{\varphi}_{\vec 0}\ }{\, \e^{-s \wc{L}_{\vec 0}}\, (1-Q_{\vec 0}) \,   {\Phi}_{\vec 0} }}\, \di s\\
\le &  \|\wh{h}''\|_{\infty}\cdot \sqrt{\otimes \vec p } \cdot \norm{\wh{\rho}_{0;\vec 0}}_{\ell^2} \cdot 
 \frac{C_{\epsilon}}{{g }-\epsilon}<\infty ,
\end{align*}
which gives $\lim\limits_{t\to\infty}\frac{1}{t}\abs{N_{3}(t)}=0$. \\

\noindent {\bf Case $n=4$:} $N_4$ can be estimated by applying  Lemma \ref{lem:L0dynamics} twice:
\begin{align*}
&\sup_t\int_{0}^{t} \int_{0}^{s} \ipc{\, \partial_i \wc{L} ^\dagger_{\vec 0}\, {\varphi}_{\vec 0}\, }
{   \e^{-(s-r) \wc{L}_{\vec 0} }\, (1-Q_{\vec 0})\partial_j \wc{L} _{0}\, \e^{-r \wc{L}_{\vec 0}}(1-Q_{\vec 0})\,{\Phi}_{0}} 
\di r\, \di s \\
\le &   \norm{ \partial_i  \wc{K}_{\vec 0}{\varphi}_{\vec 0}} \cdot \norm{\partial_j \wc{L} _{0}}\cdot \norm{{\Phi}_{0}} \cdot 
C^2_{\epsilon} \sup_t
\int_{0}^{t}\int_{0}^{s} \e^{-(s-r) ({g } -\epsilon)}\,\e^{-r ({g } -\epsilon)} \, \di r\,\di s  \\
\le & \|\wh{h}'\|^2_{\infty}\cdot \sqrt{\otimes \vec p } \cdot \norm{\wh{\rho}_{0;\vec 0}}_{\ell^2} \cdot 
C^2_{\epsilon} \cdot \left(\frac{1}{{g }-\epsilon}
+\frac{1}{({g }-\epsilon)^2}\right)<\infty,
\end{align*}
and thus, $\lim\limits_{t\to\infty}\frac{1}{t}\abs{N_{4}(t)}=0$.  \\

\noindent {\bf Case $n=5$:} It remains to estimate $\frac{1}{t}N_5$. Recall we obtained  $Q_{\vec 0}{\Phi}_{0}= {\varphi}_{\vec 0}$ in \eqref{eq:Q0Phi_0}. This 
\begin{align*}
&\frac{1}{t}  \int_{0}^{t} \int_{0}^{s} \ipc{\, \partial_i \wc{L} ^\dagger_{\vec 0}\, {\varphi}_{\vec 0}\, }
{  \e^{-(s-r) \wc{L}_{\vec 0} }\, (1-Q_{\vec 0})\partial_j \wc{L} _{0}\, \e^{-r \wc{L}_{\vec 0}}\, Q_{\vec 0}\,{\Phi}_{0}} 
\di r\, \di s \\
=&-    \ipc{\,  \partial_i  \wc{K}_{\vec 0}\, {\varphi}_{\vec 0}\, } 
{  \left(\frac{1}{t} \int_{0}^{t}\int_{0}^{s} \e^{-(s-r)  \wc{L}_{\vec 0} }\,\di r\, \di s \right)   \partial_j  \wc{K}_{\vec 0}\, {\varphi}_{\vec 0}}, 
\end{align*}
since $\partial_j  \wc{K}_{\vec 0}{\varphi}_{\vec 0}\in  \ran (1-Q_{\vec 0})$, $(1-Q_{\vec 0})\partial_j  \wc{K}_{\vec 0}{\varphi}_{\vec 0}=\partial_j  \wc{K}_{\vec 0}{\varphi}_{\vec 0}$. 

Since $\Re \wc{L}_{\vec 0}\ge 0$, by a standard contour integral argument, the following formula was obtained in \cite{KS2009, Schenker:2015}
	\begin{align}
	\lim_{t\to\infty}\frac{1}{t} \int_{0}^{t}\int_{0}^{s} \Pi_2\e^{-(s-r)  \wc{L}_{\vec 0} }\Pi_2\,\di r\, \di s= \Pi_2\left((1-Q_{\vec 0})\wc{L}_{\vec 0}(1-Q_{\vec 0})\right)^{-1}\Pi_2=\Pi_2\mc{J}^{-1}\Pi_2, 
	\end{align}
	where $\mc{J}^{-1}$ is as in Lemma \ref{lem:L0inverse} .
Recall that $\partial_i  \wc{K}_{\vec 0}\in {\rm Ran}(\Pi_2)\subseteq {\rm Ran}(P_2)$. Thus
\begin{align*}
\lim_{t\to\infty}\frac{1}{t}N_5
=&   \ipc{\,  \partial_i  \wc{K}_{\vec 0}\, {\varphi}_{\vec 0}\, } 
{  \Pi_2\mc{J}^{-1}\Pi_2  \partial_j  \wc{K}_{\vec 0}\, {\varphi}_{\vec 0}}+   \ipc{\,  \partial_j  \wc{K}_{\vec 0}\, {\varphi}_{\vec 0}\, } 
{\Pi_2\mc{J}^{-1}\Pi_2  \partial_i  \wc{K}_{\vec 0}\, {\varphi}_{\vec 0}} \\
=&   \ipc{\,  \partial_i  \wc{K}_{\vec 0}\, {\varphi}_{\vec 0}\, } 
{  P_2\mc{J}^{-1}  \partial_j  \wc{K}_{\vec 0}\, {\varphi}_{\vec 0}}+   \ipc{\,  \partial_j  \wc{K}_{\vec 0}\, {\varphi}_{\vec 0}\, } 
{ P_2\mc{J}^{-1}  \partial_i  \wc{K}_{\vec 0}\, {\varphi}_{\vec 0}} \\
=&\partial_i\partial_jE(\vec 0),
\end{align*}
where the last line follows from the formula of $\partial_i\partial_jE(\vec 0)$ in \eqref{eq:Eij}. 
\end{proof}

\subsection{Limiting behavior of $\vec{D}(\lambda)$ for small $\lambda$}
The following lemma can be found in \cite{Schenker:2015}. It will be the main tool for us to study the asymptotic behavior of $\vec{D}(\lambda)$. 
\begin{lem}[Lemma D.1, \cite{Schenker:2015}]\label{lem:limres} Let $A$ and $R$ be bounded operators on a Hilbert space $\mc{H}$.  If $A $ is normal, $\Re A \ge 0$ and $\Re R \ge c > 0$, then for any $\phi,\psi\in \mc{H}$, 
	$$\lim_{\eta \rightarrow 0} \ipc{\phi}{\left ( {\eta}^{-1} A + R \right )^{-1} \psi}_{\mc{H}} \ = \ \ipc{\Pi \phi}{ \left ( \Pi R \Pi \right )^{-1} \Pi \psi}_{\ran \Pi}$$
	where $\Pi=$ projection onto the kernel of $A$. 
\end{lem}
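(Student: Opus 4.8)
The plan is to evaluate the limit directly via a weak-compactness argument, using only two facts: the lower bound $\Re R \ge c$ forces uniform resolvent bounds, and normality of $A$ gives $\ker A = \ker A^{*} = \ran\Pi$. First I would set $T(\eta) := \eta^{-1}A + R$ and observe that, since $\Re A \ge 0$ and $\Re R \ge c$, one has $\Re\ipc{f}{T(\eta)f} \ge c\norm{f}^{2}$ for every $f \in \mc{H}$; the same estimate holds for $T(\eta)^{*} = \eta^{-1}A^{*} + R^{*}$, so $T(\eta)$ is boundedly invertible with $\norm{T(\eta)^{-1}} \le 1/c$ uniformly in $\eta > 0$. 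Applying the identical reasoning to $\Pi R\Pi$, whose real part dominates $c\Pi$ on $\ran\Pi$, shows that $\Pi R\Pi$ is boundedly invertible on $\ran\Pi$ with inverse of norm at most $1/c$, so the right-hand side of the claimed identity is well defined.

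Next, fix $\psi \in \mc{H}$ and put $f_{\eta} := T(\eta)^{-1}\psi$, so $\norm{f_{\eta}} \le \norm{\psi}/c$. From $T(\eta)f_{\eta} = \psi$ I would extract two relations. Rearranging gives $A f_{\eta} = \eta(\psi - R f_{\eta})$, and since the bracketed vector is bounded uniformly in $\eta$ this yields $\norm{A f_{\eta}} \to 0$ as $\eta \to 0$. Applying the projection $\Pi$ to $T(\eta)f_{\eta} = \psi$ and using $\Pi A = 0$ — which holds because $\ran\Pi = \ker A^{*}$ for normal $A$, hence $A^{*}\Pi = 0$ — gives the $\eta$-independent identity $\Pi R f_{\eta} = \Pi\psi$.

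To identify the limit, suppose $f_{\eta_{n}} \rightharpoonup f$ weakly along some sequence $\eta_{n} \to 0$; such a sequence exists because $\{f_{\eta}\}$ is bounded and bounded sets in a Hilbert space are weakly sequentially precompact. Since $A$ and $R$ are bounded, they are weakly continuous, so $A f_{\eta_{n}} \rightharpoonup A f$; but $A f_{\eta_{n}} \to 0$ in norm, hence $A f = 0$, i.e. $f \in \ker A = \ran\Pi$ and $\Pi f = f$. Likewise $\Pi R f_{\eta_{n}} \rightharpoonup \Pi R f$, so $(\Pi R\Pi)(\Pi f) = \Pi \psi$, and by invertibility of $\Pi R\Pi$ on $\ran\Pi$ this forces $f = (\Pi R\Pi)^{-1}\Pi\psi$, independently of the chosen sequence. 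A routine argument then upgrades this to $T(\eta)^{-1}\psi \rightharpoonup (\Pi R\Pi)^{-1}\Pi\psi$ for the full limit $\eta \to 0$. Pairing against $\phi$ and noting that $(\Pi R\Pi)^{-1}\Pi\psi$ already lies in $\ran\Pi$ gives $\ipc{\phi}{T(\eta)^{-1}\psi} \to \ipc{\phi}{(\Pi R\Pi)^{-1}\Pi\psi} = \ipc{\Pi\phi}{(\Pi R\Pi)^{-1}\Pi\psi}$, which is the assertion.

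I do not expect a serious obstacle: the only delicate point is deducing $A f = 0$ from $\norm{A f_{\eta}} \to 0$ together with $f_{\eta_{n}} \rightharpoonup f$, and this is immediate from weak continuity of the bounded operator $A$. The conceptual content is just that the singular term $\eta^{-1}A$ drives the limit vector into $\ker A$, while the surviving equation $\Pi R\Pi f = \Pi\psi$ determines it. An alternative, more computational route is to take the Schur complement of $T(\eta)$ relative to $\mc{H} = \ran\Pi \oplus (\ran\Pi)^{\perp}$: the lower-right block $\eta^{-1} A|_{(\ran\Pi)^{\perp}} + (1-\Pi) R (1-\Pi)$ has real part $\ge c(1-\Pi)$, so it is uniformly boundedly invertible, and one shows its inverse tends to $0$ strongly as $\eta \to 0$ (essentially by the argument above); substituting into the Schur complement and passing to the limit recovers the same result, but the compactness argument is shorter and avoids commutation issues.
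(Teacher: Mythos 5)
Your argument is correct, and it is self-contained. Note that the paper you are being compared against does not actually prove this lemma: it is imported verbatim as Lemma D.1 of \cite{Schenker:2015}, so there is no in-paper proof to match. Your route --- uniform accretivity giving $\norm{T(\eta)^{-1}}\le 1/c$, the two identities $Af_\eta=\eta(\psi-Rf_\eta)\to 0$ and $\Pi Rf_\eta=\Pi\psi$, and then a weak-compactness/subsequence argument to pin the weak limit down to $(\Pi R\Pi)^{-1}\Pi\psi$ --- is sound at every step, including the use of $\Pi A=0$ (valid since normality gives $\ker A=\ker A^{*}$, hence $A^{*}\Pi=0$) and the upgrade from subsequential to full convergence. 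Two small remarks, neither of which is a gap: the limit must of course be read as $\eta\to 0^{+}$, since for $\eta<0$ the operator $\eta^{-1}A+R$ need not be accretive (this is consistent with the application in the paper, where $\eta=\lambda^{2}$); and normality enters your proof only through $\ker A\subseteq\ker A^{*}$, which in fact already follows from $\Re A\ge 0$ alone (if $Af=0$ then $\ipc{f}{(A+A^{*})f}=0$ and positivity of $A+A^{*}$ forces $(A+A^{*})f=0$, whence $A^{*}f=0$), so your argument is if anything slightly more general than the stated hypotheses require. Your alternative Schur-complement sketch is essentially the style of argument used elsewhere in the paper (Lemma 5.5), but the compactness route you chose is shorter and equally rigorous.
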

\begin{remark}\label{rem:limres}
	A similar statement holds for a family of bounded operators $R_{\eta}$ such that $\Re R_{\eta}\ge c>0$ and $\lim_{\eta\to0}R_{\eta}=R_0$ in the strong operator topology and $R_0\ge c>0$, i.e., 
	$$\lim_{\eta \rightarrow 0} \ipc{\phi}{\left ( {\eta}^{-1} A + R_{\eta} \right )^{-1} \psi}_{\mc{H}} \ = \ \ipc{\Pi \phi}{ \left ( \Pi R_0 \Pi \right )^{-1} \Pi \psi}_{\ran \Pi}.$$
	 
\end{remark}
In view of Lemma \ref{lem:L0inverse}, we want to have the block form of the above lemma. 
\begin{lem}\label{lem:Ablock}
	Let $A$  be a bounded self-adjoint operator on a Hilbert space $\mc{H}=\mc{H}_1\oplus \mc{H}_2$ with the following block form:
	\begin{align}
	A= \begin{pmatrix}
	0 & A_{2} \\
	A_{2}^{\dagger} & A_{3}
	\end{pmatrix}, \ \ A_{3}^{\dagger}=A_{3}.\label{eq:A}
	\end{align}
	Let ${\Pi}=$ projection onto the kernel of $A$, $\Pi_2=$ projection onto the kernel of $A_{2}$ and ${\wt \Pi=}$ projection onto the kernel of $\Pi_2A_{3}\Pi_2$. For any $\varphi=\Pi_2\varphi$,  
	\begin{align*}
	\ {\Pi} \varphi=0\ \ {\textrm {if and only if}}\ \ {{\wt \Pi} \varphi}=0. 
	\end{align*}	

\end{lem}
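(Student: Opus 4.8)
The plan is to characterize $\ker A$ in terms of $\ker A_2$ and $\ker(\Pi_2 A_3 \Pi_2)$, working entirely with the $2\times 2$ block structure in \eqref{eq:A}. First I would fix $\varphi \in \mc{H}_1$ with $\varphi = \Pi_2\varphi$, i.e. $A_2^\dagger$... wait, $A_2 : \mc{H}_2 \to \mc{H}_1$ and $A_2^\dagger : \mc{H}_1 \to \mc{H}_2$; the statement says $\Pi_2$ is the projection onto $\ker A_2 \subseteq \mc{H}_2$, so I should read $\varphi \in \mc{H}_2$ with $A_2\varphi = 0$. A vector $\Psi = (\zeta, \varphi) \in \mc{H}_1 \oplus \mc{H}_2$ lies in $\ker A$ iff $A_2\varphi = 0$ and $A_2^\dagger \zeta + A_3\varphi = 0$. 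So the statement $\Pi\varphi = 0$ should be interpreted as: there is no $\zeta$ making $(\zeta,\varphi)$... no — more precisely $\Pi$ is projection onto $\ker A$, and I want to compare $\Pi$ applied to $(0,\varphi)$ with $\wt\Pi\varphi$. Let me restate: given $\varphi = \Pi_2\varphi$, one shows $(0,\varphi) \perp \ker A$ iff $\varphi \perp \ker(\Pi_2 A_3 \Pi_2)$.

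\begin{proof}
Since $A$ is self-adjoint, a vector $\Psi = (\zeta,\varphi) \in \mc{H}_1\oplus\mc{H}_2$ belongs to $\ker A$ if and only if $A_2\varphi = 0$ and $A_2^\dagger\zeta + A_3\varphi = 0$; equivalently $\varphi = \Pi_2\varphi$ and $A_2^\dagger\zeta = -A_3\varphi$. Fix $\varphi \in \ran\Pi_2$, so $\varphi = \Pi_2\varphi$. We must show $\Pi(0,\varphi) = 0$ iff $\wt\Pi\varphi = 0$.

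Suppose first that $\wt\Pi\varphi = 0$, i.e. $\varphi \perp \ker(\Pi_2 A_3 \Pi_2)$. Let $\Psi = (\zeta,\phi) \in \ker A$ be arbitrary; then $\phi = \Pi_2\phi$ and $A_2^\dagger\zeta = -A_3\phi$. Applying $\Pi_2$ and using $\Pi_2 A_2^\dagger = (A_2\Pi_2)^\dagger = 0$ (as $A_2\Pi_2 = 0$ by definition of $\Pi_2$), we get $\Pi_2 A_3\Pi_2\phi = \Pi_2 A_3\phi = -\Pi_2 A_2^\dagger\zeta = 0$, so $\phi \in \ker(\Pi_2 A_3\Pi_2)$. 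Hence $\ipc{(0,\varphi)}{\Psi} = \ipc{\varphi}{\phi} = 0$ since $\varphi \perp \ker(\Pi_2 A_3\Pi_2)$. As $\Psi \in \ker A$ was arbitrary, $(0,\varphi) \perp \ker A$, i.e. $\Pi(0,\varphi) = 0$.

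Conversely, suppose $\Pi(0,\varphi) = 0$, i.e. $(0,\varphi) \perp \ker A$. Let $\phi \in \ker(\Pi_2 A_3\Pi_2)$; we claim $\phi$ can be completed to an element of $\ker A$. Indeed $\phi = \Pi_2\phi$ (so $A_2\phi = 0$) and $\Pi_2 A_3\phi = 0$, which means $A_3\phi \in (\ran\Pi_2)^\perp = \overline{\ran A_2^\dagger}$. We must find $\zeta$ with $A_2^\dagger\zeta = -A_3\phi$; this requires $A_3\phi \in \ran A_2^\dagger$, not merely its closure. One handles this by working on the closed range: on $\mathcal{H}_2 \ominus \ker A_2 = \overline{\ran A_2^\dagger}$ the operator $A_2^\dagger$ has dense range, and since $-A_3\phi$ lies in this subspace, for any $\veps > 0$ there is $\zeta_\veps$ with $\norm{A_2^\dagger\zeta_\veps + A_3\phi} < \veps$; passing to the limit in the orthogonality relation $\ipc{\varphi}{\phi} = \ipc{(0,\varphi)}{(\zeta_\veps,\phi)}$ and using $(0,\varphi) \perp \ker A$ together with a standard approximation of $(\zeta_\veps,\phi)$ by elements of $\ker A$ forces $\ipc{\varphi}{\phi} = 0$. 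Hence $\varphi \perp \ker(\Pi_2 A_3\Pi_2)$, i.e. $\wt\Pi\varphi = 0$.
\end{proof}

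The main obstacle is the range-versus-closed-range issue in the converse direction: in infinite dimensions $\ran A_2^\dagger$ need not be closed, so an element $\phi$ of $\ker(\Pi_2 A_3\Pi_2)$ need not literally extend to a vector in $\ker A$, only to an approximate one. In the application (Lemma \ref{lem:L0inverse} and the decomposition \eqref{eq:blockform-L0}), the relevant operator $A_2$ is $P_1\wc{K}_{\vec 0}P_2$, whose range lives in the finite-dimensional space $\wc{H}_1$, so $\ran A_2^\dagger$ is automatically closed and the approximation step is unnecessary; I expect the clean proof to invoke this finite-dimensionality directly rather than the limiting argument sketched above.
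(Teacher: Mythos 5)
Your proposal is correct in substance but takes a genuinely different route from the paper's. The paper proves the equivalence by evaluating $\lim_{\eta\to 0}\ipc{\varphi}{(I+\im\,\eta^{-1}A)^{-1}\varphi}$ in two ways: once directly via Lemma \ref{lem:limres}, giving $\norm{\Pi\varphi}^2$, and once after two successive Schur complements (first with respect to $\mc{H}_1\oplus\mc{H}_2$, then with respect to $\ran\Pi_2\oplus\ran\Pi_2^\perp$) followed by another application of Lemma \ref{lem:limres}/Remark \ref{rem:limres}, giving $\ipc{\wt\Pi\varphi}{R^{-1}\wt\Pi\varphi}$ for an operator $R$ with $\Re R\ge 1$ on $\ran\wt\Pi$; equating the two expressions yields the ``if and only if.'' Your direct description of the kernel --- $(\zeta,\phi)\in\ker A$ iff $\phi=\Pi_2\phi$ and $A_2^\dagger\zeta=-A_3\phi$, whence $P_2\ker A\subseteq \ker(\Pi_2A_3\Pi_2)\cap\ran\Pi_2$, with the reverse inclusion holding whenever $\ran A_2^\dagger$ is closed --- is more elementary and makes transparent what the lemma actually asserts. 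Your forward direction (using $\Pi_2A_2^\dagger=(A_2\Pi_2)^\dagger=0$) is complete as written. What the paper's heavier machinery buys is reuse: the same resolvent-limit/Schur computation is then run quantitatively in the proof of \eqref{eq:genAsym} to extract $\vec D^0$.

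The one genuine soft spot is the approximation argument you sketch for the converse direction, and you are right to distrust it: it cannot be repaired, because without closedness of $\ran A_2^\dagger$ the lemma is simply false. For instance, take $\mc{H}_1=\ell^2(\N)$, $\mc{H}_2=\ell^2(\N)\oplus\ell^2(\N)$, $A_2(a,b)=Sb$ with $S=\mathrm{diag}(1/n)$, and $A_3(a,b)=(\ipc{v}{b}e,\ipc{e}{a}v)$ with $v\notin\ran S$; then $\Pi_2A_3\Pi_2=0$ so $\wt\Pi=I$, while $\ker A=\{(0,(a,0)):a\perp e\}$, so $\varphi=(e,0)$ satisfies $\Pi\varphi=0$ but $\wt\Pi\varphi\neq 0$. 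Thus the ``standard approximation of $(\zeta_\veps,\phi)$ by elements of $\ker A$'' you invoke does not exist in general (it would require a spectral gap of $A$ at $0$). Your fallback --- that in the application $A_2=P_1\wc{K}_{\vec 0}P_2$ has finite rank since $\dim\wc{H}_1<\infty$, so $\ran A_2^\dagger$ is closed and one solves $A_2^\dagger\zeta=-A_3\phi$ exactly --- is the correct fix and should be stated as a hypothesis rather than an afterthought. It is only fair to note that the paper's own proof leans on the same hypothesis implicitly: the bounded inverse $\left(\Pi_2^{\perp}A_{2}^{\dagger}A_{2}\Pi_2^{\perp}\right)^{-1}$ appearing in \eqref{eq:lim-2}, and the strong convergence $R_\eta\to R_0$ demanded by Remark \ref{rem:limres}, both fail without closed range.
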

\begin{proof}
	For any $\varphi \in\mc{H}$,  direct application of Lemma \ref{lem:limres} to $I+\im\, \eta^{-1}  A$ gives 
	\begin{align}\label{eq:lim-1}
	\lim_{\eta \to 0} \ipc{\varphi}{(I+\im\,\eta^{-1}A)^{-1}\varphi}= \ipc{{\Pi}\varphi}{({\Pi}\,I\,{\Pi})^{-1}{\Pi}\varphi}=\norm{{\Pi}\varphi}^2. 
	\end{align}
Let $P_1,P_2$ be the projection onto $\mc{H}_1, \mc{H}_2$ correspondingly and consider  $\varphi\in\ran(\Pi_2)$. By the block form of $A$ and Schur's formula, we have
	\begin{align}
	\ipc{\varphi}{(I+\im\,\eta^{-1}A)^{-1}\varphi}
	=&\ipc{\varphi}{(P_2+\im \eta^{-1}A_{3}+\eta^{-2}A_{2}^{\dagger}A_{2} )^{-1}\, \varphi}\nonumber\\
	=&\ipc{\varphi}{\Pi_2(P_2+\im\eta^{-1}A_{3}+\eta^{-2}A_{2}^{\dagger}A_{2} )^{-1}\Pi_2\, \varphi}. \label{eq:Schur-1}
	\end{align}
If we apply Schur's formula one more time with respect to the decomposition $\mc{H}_2=\ran(\Pi_2)\oplus \ran (\Pi_2^{\perp})$ and notice that $\Pi_2A_{2}^{\dagger}=A_{2}\Pi_2=0$, then we have 
	\begin{align*}
	\ipc{\varphi}{(I+\im\,\eta^{-1}A)^{-1}\varphi}
=\ipc{\varphi}
{\left(\,\im\eta^{-1}\Pi_2A_{3}\Pi_2+\,\Pi_2+\wt{A}\, \right)^{-1}\, \varphi}
	\end{align*}
	where $\wt{A}=
	\Pi_2A_{3}\Pi_2^{\perp}
	\left(\eta^2\Pi_2^{\perp}+\im	\eta\Pi_2^{\perp}A_{3}\Pi_2^{\perp}+\Pi_2^{\perp}A_{2}^{\dagger}A_{2}\Pi_2^{\perp} \right)^{-1}
	\Pi_2^{\perp}A_{3}\Pi_2$. By \eqref{eq:A}, $\Re \wt{A}\ge0$ on $\ran(\Pi_2)$, which implies $\Re (\Pi_2+\wt{A})\ge1>0$ on  $\ran(\Pi_2)$. 

	Therefore, by Lemma \ref{lem:limres} and Remark \ref{rem:limres}, we have that 
	\begin{align}
	&\lim_{\eta \to 0} \ipc{\varphi}{(I+\im\,\eta^{-1}A)^{-1}\varphi}\nonumber \\
	=&\lim_{\eta \to 0} \ipc{\varphi}
	{\left(\,\im\eta^{-1}\Pi_2A_{3}\Pi_2+\,\Pi_2+\wt{A}\, \right)^{-1}\, \varphi} \nonumber\\
	=& \ipc{{{\wt \Pi} \varphi}}
	{\left(\,\wt\Pi +\wt\Pi \,
		\Pi_2A_{3}\Pi_2^{\perp}
		\left(\Pi_2^{\perp}A_{2}^{\dagger}A_{2}\Pi_2^{\perp} \right)^{-1}
		\Pi_2^{\perp}A_{3}\Pi_2\wt\Pi\, \right)^{-1}\, {{\wt \Pi} \varphi}}, \label{eq:lim-2}
	\end{align}
	where ${\wt \Pi=}$ projection onto the kernel of $\Pi_2A_{3}\Pi_2$.

	Putting \eqref{eq:lim-1} and \eqref{eq:lim-2} together, we have that 
	$$\ipc{{{\wt \Pi} \varphi}}
	{\left(\,\wt\Pi+\wt\Pi \,
		\Pi_2A_{3}\Pi_2^{\perp}
		\left(\Pi_2^{\perp}A_{2}^{\dagger}A_{2}\Pi_2^{\perp} \right)^{-1}
		\Pi_2^{\perp}A_{3}\Pi_2\,\wt\Pi \right)^{-1}\, {{\wt \Pi} \varphi}}=\norm{{\Pi}\varphi}^2,$$
which completes the proof of Lemma \ref{lem:Ablock}. 
\end{proof}

Now we can proceed to prove \eqref{eq:genAsym} in Theorem \ref{thm:genCLT}. As showed in eq. \eqref{eq:Eij} and Lemma \ref{lem:lim-Mij}, the diffusion matrix $\vec{D}(\lambda)$ is independent of the initial condition $\psi_0\in\ell^2(\Z^d)$. To study the asymptotic behavior of $\vec D(\lambda)$, it is enough to consider $
\psi_0(x)=\delta_{\vec 0}$, where we assume the ballistic motion holds in \eqref{eq:balli-gen}. 
\begin{proof}[Proof of \eqref{eq:genAsym}]
	 We are going  to apply Lemma \ref{lem:Ablock} to $A$ acting on $\wc{H}_1\oplus \wc{H}_2$ given by:
\begin{align}
A= P(\wc{K}_{\vec 0}+\wc{U})P= \begin{pmatrix}
0 & P_1\wc{K}_{\vec 0}P_2 \\
P_2\wc{K}_{\vec 0}P_1 & P_2(\wc{K}_{\vec 0}+\wc{U} )P_2
\end{pmatrix},
\end{align}
where $\wc{H}_i,P_i,i=1,2$ are as in \eqref{eq:directsum} and $P=P_1+P_2$.
 Let ${\Pi}=$ projection onto the kernel of $P(\wc{K}_{\vec 0}+\wc{U})P$, $\Pi_2=$ projection onto the kernel of $P_1\wc{K}_{\vec 0}P_2$ and ${\wt \Pi=}$ projection onto the kernel of $\Pi_2 (\wc{K}_{\vec 0}+\wc{U} )\Pi_2$. 
 
Let $\wt{\phi}_j=\partial_j\wc{K}_{\vec 0}\varphi_{\vec 0},j=1\cdots,d$, which are given as in \eqref{eq:K0'-delta0}. Recall that $\wt{\phi}_j\in {\rm Ker}(P_1\wc{K}_{\vec 0}P_2)$, therefore $\wt{\phi}_j=\Pi_2\wt{\phi}_j$. Let $M_{i,j}$ be as in \eqref{eq:Mij} and $\wh{\rho}_{0;{\vec k}}$ be as in \eqref{eq:rho-hat}.
By the decomposition in Lemma \eqref{lem:M=N1-5} at $\lambda=0$, one can  check that 
\footnote{This formula was obtained in \cite{Schenker:2015}, Sec. 4.7, where there is no error term $O(\eta^2)$. In \cite{Schenker:2015}, the choice $\psi_0=\delta_{\vec 0}$ implies that  $M_{j,j}(0)=0$ and $\wh{\rho}_{0;{\vec 0}}=\delta_{\vec 0}\otimes \ora{1}$ and the proof is relatively simple.  In the general $\vec p$-periodic case, the initial condition $\delta_{\vec 0}$ no longer provides the simplified expressions of $M_{j,j}(0)$ and $\wh{\rho}_{0;{\vec 0}}$. We need the correction term for small $\eta$. 
The proof for the general case is essentially based on the same strategy for Lemma \ref{lem:lim-N}; we omit the details here.  }
\begin{align}\label{eq:wtMjj}
2\ipc{\wt{\phi}_j}{\left(P+\eta^{-1}\im(\wc{K}_{\vec 0}+\wc{U})\right)^{-1}\wt{\phi}_j}=\eta^3\int_0^{\infty}e^{-\eta\,t}\,  M_{j,j}(t)\, \di t\,+ O(\eta^2).
\end{align}


When $\lambda=0$, $\wc{L}_{\vec 0}=\im(\wc{K}_{\vec 0}+\wc{U})$ is the unperturbed periodic operator on $\ell^2({\Z^d;\C^{\otimes \vec p}})$. Setting $\eta=2T^{-1}$ in \eqref{eq:balli-gen}, there is a $c>0$ such that for all $j$ and $\eta$ small,
\begin{align}\label{eq:ballistic}
\eta^3\int_0^{\infty}e^{-\eta\,t}\,  M_{j,j}(t)\, \di t=
\frac{8}{T^3}\int_0^{\infty}\,e^{-\frac{2t}{T}}\,\sum_{x\in \Z^d} x_j^2\, \Ev{\abs{\psi_t(x)}^2}\, \di t\,\ge c>0.
\end{align}
Put \eqref{eq:lim-1}, \eqref{eq:wtMjj} and \eqref{eq:ballistic} together, we have 
\begin{align*}
\norm{\Pi\wt{\phi}_j}^2=\lim_{\eta\to 0}\ipc{\wt{\phi}_j}{\left(P+\eta^{-1}\im(\wc{K}_{\vec 0}+\wc{U})\right)^{-1}\wt{\phi}_j} >0.
\end{align*}
Therefore, ${{ \Pi} \wt\phi}_j\neq0$ and Lemma \ref{lem:Ablock} implies that ${{\wt \Pi} \wt\phi}_j\neq0$. 

Recall that $\wc{L}_{\vec{0}} =  \im \wc{K}_{{\vec 0}} + \im \wc{U} + \im {\lambda} \wc{V}  + B$ and
 $\Gamma_2=P_2\left(\im \wc{K}_{\vec 0}+\im \wc{U}+{\lambda}^2\wc{V}\,(P_3\wc{L}_{{\vec 0}}P_3)^{-1}\, \wc{V}\right)P_2$  as in \eqref{eq:Gamma2}. Let 
 $R_{\lambda}=\Pi_2\wc{V}\left(P_3\wc{L}_{\vec0}P_3\right)^{-1}\wc{V}\Pi_2$ and $R_0=\Pi_2\wc{V}\left(P_3(\im \wc{K}_{\vec 0}+\im \wc{U})P_3\right)^{-1}\wc{V}\Pi_2$. Then $\Pi_2\Gamma_2\Pi_2=\im\,\Pi_2P_2(\wc{K}_{\vec 0}+\wc{U} )P_2\Pi_2+\lambda^2R_{\lambda}$ and $\lim_{\lambda\to0}R_{\lambda}=R_0$ (in the strong operator topology). Applying Lemma \ref{lem:limres} (and Remark \ref{rem:limres})  to $\Pi_2\Gamma_2\Pi_2$ on $\ran(\Pi_2)$, we obtain that, for any $1\le i,j\le d$,
\begin{align*}
\lim_{\lambda\to0}\lambda^2\ipc{\wt{\phi}_i}{\left(\Pi_2\Gamma_2\Pi_2\right)^{-1}\,\wt{\phi}_j}=&
\lim_{\lambda\to0}\ipc{\wt{\phi}_i}{\left(\im\,\lambda^{-2}\Pi_2(\wc{K}_{\vec 0}+\wc{U} )\Pi_2+R_{\lambda}\right)^{-1}\,\wt{\phi}_j}\\
=&\ipc{{{\wt \Pi} \wt\phi}_i}{\left(\wt\Pi R_{0}\wt\Pi\right)^{-1}\,\wt \Pi\wt{\phi}_j}. 
\end{align*}
In particular, $\lim_{\lambda\to0}\lambda^2\ipc{\wt{\phi}_j}{\left(\Pi_2\Gamma_2\Pi_2\right)^{-1}\,\wt{\phi}_j}
=\ipc{{{\wt \Pi} \wt\phi}_j}{\left(\wt\Pi R_{0}\wt\Pi\right)^{-1}\,{{\wt \Pi} \wt\phi}_j}>0$. 

By Lemma \ref{lem:L0inverse}  and \eqref{eq:Eij}, we have 
\begin{align*}
\lim_{\lambda\to0}\lambda^2\partial_i\partial_jE(\vec 0)=\ipc{{{\wt \Pi} \wt\phi}_j}{\left(\wt\Pi R_{0}\wt\Pi\right)^{-1}\,{{\wt \Pi} \wt\phi}_i}+\ipc{{{\wt \Pi} \wt\phi}_i}{\left(\wt\Pi R_{0}\wt\Pi\right)^{-1}\,{{\wt \Pi} \wt\phi}_j}=:\vec D^0_{ij}. 
\end{align*}
Let $\vec D^0:=(\vec D^0_{ij})_{d\times d}$. Then $\lim_{\lambda\to0}\lambda^2 \vec D=\vec D^0$ and $\ipc{\vec k}{\vec D^0 \vec k}>0$ for any $\vec0\neq \vec k\in\R^d$ by the same argument for $\vec D$. As a consequence, 
\begin{align*}
\lim_{\lambda\to0}\lambda^2 \tr \vec D=\tr \vec D^0>0. 
\end{align*} 
This completes the proof of Theorem \ref{thm:asym}. 
\end{proof}
\appendix
\section{Decomposition of the second moments and the proof of Lemma \ref{lem:M=N1-5}}\label{app:A}

The following facts will be used to simplify the expression of the second order partial derivative. Note that $
\wc{L}_{\vec 0} \, \varphi_{\vec 0}
=\wc{L}^{\dagger}_{\vec 0} \, \varphi_{\vec 0}\ = \ 0,
$ implies that $\e^{-t \wc{L}_{\vec 0}}$ and $\e^{-t \wc{L}^\dagger_{\vec 0}} $ act trivially on $\varphi_{\vec 0}$ for any $t$, i.e.,
\begin{align}
\e^{-t \wc{L}_{\vec 0}}\varphi_{\vec 0}=\e^{-t \wc{L}^\dagger_{\vec 0}}\varphi_{\vec 0}=\varphi_{\vec 0} \label{eq:exptL}
\end{align}
and 
\begin{align}
\e^{-t \wc{L}_{\vec 0}}Q_{\vec 0}=\e^{-t \wc{L}^\dagger_{\vec 0}}Q_{\vec 0}=Q_{\vec 0}. \label{eq:exptL-Q0}
\end{align}
On the other hand, recall the formula for differentiating a semi-group,
\begin{equation} \label{eq:expLk-deriv}
\partial_j \left(\e^{-t \wc{L}_{\vec k}} \right)   \ = \  -\int_{ 0}^{t} \e^{-(t-s) \wc{L}_{\vec k} }\, \partial_j \wc{L} _{\vec k}\, \e^{-s \wc{L}_{\vec k}}\, \di s. 
\end{equation}	
By \eqref{eq:Kk} and \eqref{def:Lk}, we have $
\partial_j \wc{L} _{\vec 0}=\im \,	 \partial_j  \wc{K}_{\vec 0}=-\partial_j\,\wc{L}^{\dagger}_{\vec 0}.
$
Because $\partial_j  \wc{K}_{\vec 0}$ maps $\wc{H}_0\oplus\wc{H}_1$ to $\wc{H}_2$, we also have that 
\begin{align} \label{eq:Q0L'}
Q_{\vec 0}\partial_j \wc{L} _{\vec 0}=Q_{\vec 0}\partial_j \wc{L} ^{\dagger}_{\vec 0}=0;
\end{align}
\begin{align}
\partial_i	\partial_j  \wc{L}  _{\vec 0}=\im 	 \partial_i	\partial_j  \wc{K}_{\vec 0}=-\partial_i	\partial_j  \wc{L}  _{\vec 0}^{\dagger} \ {\rm and }\ 
Q_{\vec 0}\partial_i	\partial_j  \wc{L}  _{\vec 0}=Q_{\vec 0}\partial_i	\partial_j \wc{L}  ^{\dagger}_{\vec 0}=0.
\end{align}
	Direct computation from \eqref{eq:Mij-1} gives
	\begin{align}
	M_{i,j} (t) =& - \left .  \partial_{i} \partial_{j} \ipc{{\varphi}_{\vec 0} }{\e^{-t \wc{L}_{{\vec k}}} \,{\Phi}_{\vec k} }\right |_{\vec{k}=\vec 0} \nonumber\\
	=& -\ipc{ {\varphi}_{\vec 0}\ }{\e^{-t \wc{L}_{\vec 0}} \, \partial_{i} \partial_{j} {\Phi}_{\vec 0} } \label{eq:N1-pf}\\
	&-\ipc{ {\varphi}_{\vec 0}\ }{\left(\partial_{i}\e^{-t \wc{L}_{\vec 0}}\right)_{|{\vec k}=\vec 0} \,  \partial_{j} {\Phi}_{\vec 0} }-\ipc{ {\varphi}_{\vec 0}\ }{\left(\partial_{j}\e^{-t \wc{L}_{\vec 0}}\right)_{|{\vec k}=\vec 0} \,  \partial_{i} {\Phi}_{\vec 0} }\label{eq:N2-pf}\\
	&-\ipc{ {\varphi}_{\vec 0}\ }{\left(\partial_{i}\partial_{j}\e^{-t \wc{L}_{\vec 0}}\right)_{|{\vec k}=\vec 0} \,   {\Phi}_{\vec 0} } . \label{eq:N345-pf}
	\end{align}
	Clearly, \eqref{eq:N1-pf} gives the expression for $N_1$ in \eqref{eq:N1}. Now let's proceed to simplify the expression in \eqref{eq:N2-pf}. By the differential formula \eqref{eq:expLk-deriv}, we obtain
	\begin{align*}
\ipc{ {\varphi}_{\vec 0}\ }{ \left(\partial_i\, \e^{-t \wc{L}_{\vec k}} \right) _{|{{\vec k}=0}}\, \partial_j {\Phi}_{\vec 0} }
	=&   \ipc{ {\varphi}_{\vec 0}\ }{\    
		\left(-\int_{ 0}^{t} \e^{-(t-s) \wc{L}_{\vec 0} }\, \partial_i  \wc{L} _{\vec 0}\, \e^{-s \wc{L}_{\vec 0}}\, \di s\right)
		\,  \partial_j  {\Phi}_{\vec 0} }  \\
	=& 
	-  \int_{ 0}^{t}\ipc{ {\varphi}_{\vec 0}\ }{\    
		\partial_i  \wc{L} _{\vec 0}\, \e^{-s \wc{L}_{\vec 0}}\, (1-Q_{\vec 0}) \,   \partial_j  {\Phi}_{\vec 0} }\, \di s  ,
	\end{align*}
	where we use the fact by \eqref{eq:exptL-Q0} that
$
\ipc{ {\varphi}_{\vec 0}\ }{\    
	\partial_i  \wc{L} _{\vec 0}\, \e^{-s \wc{L}_{\vec 0}}\, Q_{\vec 0}
	\,  \partial_j  {\Phi}_{\vec 0} }
=0. 
$ This gives the expression for $N_2$ in \eqref{eq:N2}. 


	Simplifying \eqref{eq:N345-pf} requires applying \eqref{eq:expLk-deriv} twice. Differentiating \eqref{eq:expLk-deriv} again yields, 
	\begin{align*}
	\qquad \left .\partial_i\partial_j \left(\e^{-t \wc{L}_{\vec k}} \right)\right|_{\vec k=\vec 0 } 
	=&-\int_{ 0}^{t} \e^{-(t-s) \wc{L}_{\vec 0} }\, \partial_i \partial_j   \wc{L}  _{\vec 0}\, \e^{-s \wc{L}_{\vec 0}}\, \di s \\
	&+\int_{ 0}^{t} \left(\int_{ 0}^{t-s} \e^{-(t-s-r) \wc{L}_{\vec 0} }\, \partial_i \wc{L} _{\vec 0}\, \e^{-r \wc{L}_{\vec 0}}\, \di r \right)\, \partial_j \wc{L} _{\vec 0}\, \e^{-s \wc{L}_{\vec 0}}\, 
	\di s\\
	&+\int_{ 0}^{t} \e^{-(t-s) \wc{L}_{\vec 0} }\, \partial_j \wc{L} _{\vec 0}\, \left(\int_{ 0}^{s} \e^{-(s-r) \wc{L}_{\vec 0} }\, \partial_i \wc{L} _{\vec 0}\, \e^{-r \wc{L}_{\vec 0}}\, \di r\right)\, \di s .
	\end{align*}
Therefore, 
\begin{align}- \ipc{ {\varphi}_{\vec 0}\ }{   \left(\partial_i \partial_j  \e^{-t \wc{L}_{\vec k}} \right) _{|{\vec k=\vec 0 }}\, {\Phi}_{\vec 0} }
= \int_{ 0}^{t}\, \ipc{ {\varphi}_{\vec 0}\ } {\, \e^{-(t-s) \wc{L}_{\vec 0} }\,  \partial_i \partial_j  \,  \wc{L}  _{\vec 0}\, \e^{-s \wc{L}_{\vec 0}}\, {\Phi}_{\vec 0} }\, \di s \label{eq:N3-pf}\\
-  \int_{ 0}^{t}\int_{ 0}^{t-s} \ipc{ {\varphi}_{\vec 0}\, }
{\, \e^{-(t-s-r) \wc{L}_{\vec 0} }\, \partial_i \wc{L} _{\vec 0}\, \e^{-r \wc{L}_{\vec 0}} \partial_j \wc{L} _{\vec 0}\, \e^{-s \wc{L}_{\vec 0}}\, {\Phi}_{\vec 0} }  \di r\, \di s \label{eq:N45-pf-1} \\
-  \int_{ 0}^{t} \int_{ 0}^{s} \ipc{ {\varphi}_{\vec 0}\, }
{\, \e^{-(t-s) \wc{L}_{\vec 0} }\, \partial_j \wc{L} _{\vec 0}\,  \e^{-(s-r) \wc{L}_{\vec 0} }\, \partial_i \wc{L} _{\vec 0}\, \e^{-r \wc{L}_{\vec 0}}\,{\Phi}_{\vec 0}} 
\di r\, \di s . \label{eq:N45-pf-2}
\end{align}

The expression on the right hand side of \eqref{eq:N3-pf} leads to $N_3$ in \eqref{eq:N3} since
\begin{align*}\ipc{ {\varphi}_{\vec 0}\ } {\, \e^{-(t-s) \wc{L}_{\vec 0} }\,  \partial_i \partial_j  \,  \wc{L}  _{\vec 0}\, \e^{-s \wc{L}_{\vec 0}}\, {\Phi}_{\vec 0} }=
\ipc{  \partial_i \partial_j  \,  \wc{L}  ^\dagger_{\vec 0}{\varphi}_{\vec 0}\ }{  
		\, \e^{-s \wc{L}_{\vec 0}}\, (1-Q_{\vec 0}) \,   {\Phi}_{\vec 0} }.
	\end{align*}
	
Expressions for \eqref{eq:N45-pf-1} and \eqref{eq:N45-pf-2} follow from  \eqref{eq:exptL-Q0} and \eqref{eq:Q0L'} by direct computations. For  \eqref{eq:N45-pf-1} we have, 
\begin{align}
-  \int_{ 0}^{t}\int_{ 0}^{t-s}& \ipc{ {\varphi}_{\vec 0}\, }
{\, \e^{-(t-s-r) \wc{L}_{\vec 0} }\, \partial_i \wc{L} _{\vec 0}\, \e^{-r \wc{L}_{\vec 0}} \partial_j \wc{L} _{\vec 0}\, \e^{-s \wc{L}_{\vec 0}}\, {\Phi}_{\vec 0} }  \di r\, \di s  \nonumber \\
 =& -\int_{ 0}^{t}\int_{ 0}^{s} \left[\ipc{\, \partial_i \wc{L} ^\dagger_{\vec 0}\, {\varphi}_{\vec 0}\, }
{ \e^{-(s-r) \wc{L}_{\vec 0}}(1-Q_{\vec 0}) \partial_j \wc{L} _{\vec 0}\, \e^{-r \wc{L}_{\vec 0}}\, (1-Q_{\vec 0}){\Phi}_{\vec 0} } \right .  \label{eq:N4-1}\\
    &\left . +\ipc{\, \partial_i \wc{L} ^\dagger_{\vec 0}\, {\varphi}_{\vec 0}\, }
{ \e^{-(s-r) \wc{L}_{\vec 0}}(1-Q_{\vec 0}) \partial_j \wc{L} _{\vec 0}\, Q_{\vec 0} {\Phi}_{\vec 0} } 
\right]\, \di r\, \di s  . \label{eq:N5-1}
\end{align}
Similarily, for \eqref{eq:N45-pf-2}, 
\begin{align}
-  \int_{ 0}^{t} \int_{ 0}^{s}& \ipc{ {\varphi}_{\vec 0}\, }
{\, \e^{-(t-s) \wc{L}_{\vec 0} }\, \partial_j \wc{L} _{\vec 0}\,  \e^{-(s-r) \wc{L}_{\vec 0} }\, \partial_i \wc{L} _{\vec 0}\, \e^{-r \wc{L}_{\vec 0}}\,{\Phi}_{\vec 0}} 
\di r\, \di s \nonumber \\ 
=& -\int_{ 0}^{t}\int_{ 0}^{s} \left[\ipc{\, \partial_j \wc{L} ^\dagger_{\vec 0}\, {\varphi}_{\vec 0}\, }
{ \e^{-(s-r) \wc{L}_{\vec 0}}(1-Q_{\vec 0}) \partial_i \wc{L} _{\vec 0}\, \e^{-r \wc{L}_{\vec 0}}\, (1-Q_{\vec 0}){\Phi}_{\vec 0} }\right .  \label{eq:N4-2}\\
&\left . +\ipc{\, \partial_j \wc{L} ^\dagger_{\vec 0}\, {\varphi}_{\vec 0}\, }
{ \e^{-(s-r) \wc{L}_{\vec 0}}(1-Q_{\vec 0}) \partial_i \wc{L} _{\vec 0} \, Q_{\vec 0}{\Phi}_{\vec 0} }\right]\, \di r\, \di s . \label{eq:N5-2}
\end{align}
Clearly, 
\begin{align}
N_4=\eqref{eq:N4-1}+\eqref{eq:N4-2},\ \ N_5=\eqref{eq:N5-1}+\eqref{eq:N5-2}. 
\end{align}
This completes the proof of Lemma 5.2.

\section*{Acknowledgement}
The authors would like to thank Ilya Kachkovskiy for useful discussions on ballistic motion of periodic operators. Zak Tilocco and Jeffrey Schenker were supported by the National Science Foundation under Grant No. 1500386 and Grant No. 1411411. Shiwen Zhang was supported in part by NSF grant DMS-1600065, DMS-1758326, and by a post-doctoral fellowship from the MSU Institute for Mathematical and Theoretical Physics.


%
%
%
%
%
%
%
%
%

\end{document}